\numberwithin{equation}{section}
\numberwithin{figure}{section}
\theoremstyle{plain}
\newtheorem{thm}{\protect\theoremname}[section]
\newtheorem{ass}{Assumption}
\newtheorem{conjecture}[thm]{\protect\conjecturename}
\newtheorem{lem}[thm]{\protect\lemmaname}
\newtheorem{prop}[thm]{\protect\propositionname}
\newtheorem{cor}[thm]{\protect\corollaryname}
\theoremstyle{remark}
\newtheorem{rem}[thm]{\protect\remarkname}
\theoremstyle{definition}
\newtheorem{defn}[thm]{\protect\definitionname}
\DeclareMathOperator{\vol}{vol}
\DeclareMathOperator{\var}{Var}
\DeclareMathOperator{\spec}{spec}
\DeclareMathOperator{\tr}{tr}
\DeclareMathOperator{\Span}{span}
\DeclareMathOperator{\Ran}{Ran}
\DeclareMathOperator{\Bin}{Bin}
\DeclareMathOperator{\hess}{Hess}
\DeclareMathOperator{\conv}{conv}
\DeclareMathOperator{\w}{\textbf{w}}
\DeclareMathOperator{\M}{ind}
\newcommand\p{\mathrm{p}}%
\newcommand\bsigma{\boldsymbol{\sigma}}
\newcommand\ba{\boldsymbol{a}}
\newcommand\bH{\boldsymbol{H}}
\newcommand\EL{\E_{\textrm{loops}}}%
\newcommand\LL{L_{\textrm{loops}}}%
\newcommand{\vv}{{\vec{v}\,}}
\newcommand\Vas{\boldsymbol{V}_{\!\!\mathrm{as}}}
\newcommand\Vs{\boldsymbol{V}_{\!\!\mathrm{sym}}}
\providecommand{\conjecturename}{Conjecture}
\providecommand{\corollaryname}{Corollary}
\providecommand{\definitionname}{Definition}
\providecommand{\lemmaname}{Lemma}
\providecommand{\propositionname}{Proposition}
\providecommand{\remarkname}{Remark}
\providecommand{\theoremname}{Theorem}
\providecommand{\conjecturename}{Conjecture}
\providecommand{\corollaryname}{Corollary}
\providecommand{\definitionname}{Definition}
\providecommand{\lemmaname}{Lemma}
\providecommand{\propositionname}{Proposition}
\providecommand{\remarkname}{Remark}
\providecommand{\theoremname}{Theorem}
\begin{document}
\makeatletter
\providecommand*{\dd}{\@ifnextchar^{\DIfF}{\DIfF^{}}}
\def\DIfF^#1{
\mathop{\mathrm{\mathstrut d}}%
\nolimits^{#1}\gobblespace}
\def\gobblespace{
\futurelet\diffarg\opspace}
\def\opspace{
\let\DiffSpace\!
\ifx\diffarg(
\let\DiffSpace\relax
\else
\ifx\diffarg[
\let\DiffSpace\relax
\else
\ifx\diffarg\{
\let\DiffSpace\relax
\fi\fi\fi\DiffSpace}

\global\long\def\at{\left.\right|_{\Omega}}%
\global\long\def\set#1#2{\left\{  #1\,\,:\,\,#2\right\}  }%

\global\long\def\id{\textbf{I}}%

\global\long\def\sym{G_{\Gamma}}%

\global\long\def\A{\mathcal{A}}%

\global\long\def\G{\mathcal{G}}%

%\global\long\def\M{\mathcal{M}}%

\global\long\def\L{\mathcal{L}}%

\global\long\def\D{\mathcal{D}}%

\global\long\def\U{\mathcal{U}}%

\global\long\def\Rv{\mathcal{R}^{\left(v\right)}}%

\global\long\def\R{\mathcal{R}}%

\global\long\def\d{\partial}%

\global\long\def\dg{\partial\Gamma}%

\global\long\def\do{\partial\Omega}%

\global\long\def\E{\mathcal{E}}%

\global\long\def\V{\mathcal{V}}%

\global\long\def\Vint{\mathcal{V}\setminus\partial\Gamma}%

\global\long\def\Vin{V_{\textrm{in}}}%

\global\long\def\la{\lambda}%

%\global\long\def\hess{\mathrm{Hess}}%

%\global\long\def\H{H}%

\global\long\def\Z{\mathbb{Z}}%

\global\long\def\R{\mathbb{R}}%

\global\long\def\C{\mathbb{C}}%

\global\long\def\N{\mathbb{N}}%

\global\long\def\Q{\mathbb{Q}}%

\global\long\def\msing{\Sigma^{\mathrm{sing}}}%

\global\long\def\mreg{\Sigma^{\mathrm{reg}}}%

\global\long\def\mgen{\Sigma^{\G}}%

\global\long\def\mg{\Sigma}%

\global\long\def\lap{\Delta}%

\global\long\def\na{\nabla}%

\global\long\def\opcl#1{\left[#1\right)}%

\global\long\def\clop#1{\left[#1\right)}%

\global\long\def\bs#1{\boldsymbol{#1}}%

\global\long\def\deg#1{\mathrm{deg}(#1)}%

\global\long\def\T{\mathbb{T}}%

\global\long\def\TE{\mathbb{T^{\left|\E\right|}}}%

\global\long\def\BGm{\mu_{\vec{l}}}%

\global\long\def\lv{{\boldsymbol{\ell}}}%
\global\long\def\lve{\ell_{e}}%
\global\long\def\lvj{\ell_{j}}%
\global\long\def\l{\ell}%

\global\long\def\ts{t_{S}}%

\global\long\def\Lv{\vec{L}}%

\global\long\def\av{\vec{\alpha}}%

\global\long\def\kv{{\vec{\kappa}}}%

\global\long\def\xv{{\vec{x}}}%

\global\long\def\Tv{\vec{\kappa}}%

\global\long\def\dL{d_{\vec{L}}}%

\global\long\def\sgn{\mathrm{sgn}}%

\global\long\def\undercom#1#2{\underset{_{#2}}{\underbrace{#1}}}%
%\global\long\def\fr#1{\big(#1\big)_{2\pi}}%
\global\long\def\fr#1{\{#1\}_{2\pi}}%

\global\long\def\diag{\textrm{diag}}%

%\global\long\def\as{\boldsymbol{e}^{\textrm{as}}}%
\global\long\def\as{\boldsymbol{v}}%
%\global\long\def\as{\ba^{(\textrm{as})}}%
\global\long\def\mL{\Sigma_{\textrm{loops}}}
\global\long\def\mregL{\mreg_{\textrm{loops}}}

\global\long\def\ones{{\boldsymbol{1}}}%
\global\long\def\uone{\textrm{u}(1)}%
\global\long\def\Tw{\widetilde{\T}^{E}}
\global\long\def\bx{\boldsymbol{\xi}}%
\title{Universality of nodal count distribution in large metric graphs}
\author{Lior Alon, Ram Band and Gregory Berkolaiko}
\address{$^{1}${\small{}Lior Alon, School of Mathematics, Institute for Advanced
Study, Princeton, NJ 08540, USA. e-mail: lalon@ias.edu.il}}
\address{$^{2}${\small{}Ram Band, Department of Mathematics, Technion--Israel
Institute of Technology, Haifa 32000, Israel. e-mail: ramband@technion.ac.il}}
\address{$^{2}${\small{}Gregory Berkolaiko, Department of Mathematics, Texas A$\&$M university,College Station,
TX 77843-3368, USA. email: berko@math.tamu.edul}}
\begin{abstract}
  An eigenfunction of the Laplacian on a metric (quantum) graph has an
  excess number of zeros due to the graph's non-trivial topology.
  This number, called the nodal surplus, is an integer between 0 and
  the graph's first Betti number $\beta$.  We study the
  distribution of the nodal surplus values in the countably infinite
  set of the graph's eigenfunctions.  We conjecture that this
  distribution converges to Gaussian for any sequence of graphs of
  growing $\beta$.  We prove this conjecture for several special graph
  sequences and test it numerically for a variety of well-known graph
  families.  Accurate computation of the distribution is made possible
  by a formula expressing the nodal surplus distribution as an
  integral over a high-dimensional torus.
\end{abstract}

\maketitle

\tableofcontents

\section{Introduction}

Denoting by $\nu_n$ the number of nodal domains of the $n$-th
Laplacian eigenfunction, Courant's theorem
\cite{Cou_ngwg23,CourantHilbert_volume1} establishes the bound
$\nu_n \leq n$.  Here a ``nodal domain'' is a maximal connected
component of the underlying physical space where the eigenfunction does not vanish.  The theorem, originally stated for planar domains, is extremely robust and remains
valid for Laplacians on manifolds, with or without boundary, as
well as in numerous other settings, see \cite{KelSch_jst20} and
references therein.  It is also valid on discrete and metric graphs
\cite{DavGlaLeySta_laa01,GnuSmiWeb_wrm04}.

Pleijel \cite{Ple_cpam56} strengthened this bound for Dirichlet
Laplacians,\footnote{Only relatively recently it was extended to
  Neumann Laplacians \cite{Lena_afourier19,Pol_pams09}.} proving that
the ratio $\nu_n/n$ is asymptotically bounded away from 1.  More
recently, it was suggested \cite{BluGnuSmi_prl02} that the
distribution of this ratio carries important domain-specific
information.  Bogomolny and Schmit \cite{BS02} made an appealing
quantitative conjecture about the mean and variance of $\nu_n/n$ based
on an analogy with percolation.  Remarkably, extensive numerical
calculations \cite{BeliaevKereta13,Kon_thesis12,Nas_thesis11} revealed
statistically significant deviations from the Bogomolny-Schmit
prediction but only of relative size of less than $5\%$.  Mathematical
progress is being made within the Random Wave Model, proving that the
mean of $\nu_n/n$ is non-zero \cite{BelMcaMui_ptrf20,NazSob_ajm09,NazSod_JMPAG16}
and providing bounds on the variance \cite{NazSod_jmp20}.

In this paper, we focus on the setting of metric graphs.  Here, one
can equivalently study the \emph{number of zeros} of the $n$-th
eigenfunction, which we denote by $\phi_n$.  Assuming the eigefunction does not vanish on vertices (which happens generically \cite{BerLiu_jmaa17}) and for large enough $ n $, the two quantities are directly related by
\begin{equation}
  \label{eq:domain_count_relation}
  \nu_n = \phi_n - \beta + 1,
\end{equation}
where $\beta$ is the number of cycles in the graph (the first Betti
number).  Due to the bounds
\begin{equation}
  \label{eq:nodal_bounds}
  n-\beta \leq \nu_n \leq n,
  \qquad
  n-1 \leq \phi_n \leq n-1 + \beta,
\end{equation}
valid for \emph{generic} eigenfunctions
\cite{Ber_cmp08,BanBerSmi_ahp12}, the ratio $\nu_n/n$ converges
trivially\footnote{In the non-generic case, the behavior of $\nu_n/n$
  becomes highly non-trivial.  Recent progress was made in
  \cite{HofKenMugPlu_21}, showing that any sub-sequential limit of
  $\nu_n/n$ is given as a ratio between the length of a sub-graph and
  the length of the entire graph.  This provides lower and upper
  bounds on the possible limits.} to 1.  In this paper we therefore
focus on the \emph{finer properties} of the number of zeros, namely
the distribution of the difference
$\sigma(n) := \phi_n - (n-1) = \nu_n - n +\beta$ which we call the
``nodal surplus''.  We conjecture that, after a suitable rescaling,
this \emph{distribution converges to a universal limit}, namely
Gaussian, for \emph{any} sequence of graphs with increasing
number of cycles, $\beta$.

The contribution of this paper is three-fold.  First of all, in
Section~\ref{sec:main_results} we formulate a precise version of the
above conjecture (which was already present, less explicitly, in
\cite{GnuSmiWeb_wrm04}) and present supporting numerical evidence.
Second, in Theorem~\ref{thm: sampling simplified} we present a
convenient formula which allowed us to numerically compute the
distribution of the nodal surplus for a given graph to high precision,
\emph{without calculating the spectrum and eigenfunctions}.  In essence, we reduce the problem to integration over a high dimensional torus $ \T^{E}:=\R^{E}/2\pi\Z^{E} $ (where $ E $
denotes the number of edges of the graph).  Third, we
prove the conjecture for several sequences of graphs, in particular,
mandarin graphs (also known as ``pumpkin'' or ``watermelon'') and
flower graphs.  This rigorous evidence for the veracity of the
conjecture comes in addition to the results of our previous work
\cite{AloBanBer_cmp18} where the nodal surplus distribution of graphs
with disjoint cycles was calculated explicitly.

The conjecture we formulate in this paper belongs, in spirit, to the
family of ``quantum chaos'' conjectures and results, such as the
conjecture of universality of spectral statistics
(Bohigas--Giannoni--Schmit conjecture), quantum ergodicity or,
indeed, the universality of nodal fluctuations conjecture of Smilansky
and co-workers
(\cite{BanHarSep_jmaa19,BanOreSmi_incoll08,BerKeaWin_cmp04,GnuAlt_pre05,GnuSmiWeb_wrm04,HarrHudg20,KotSmi_prl97,Tan_jpa01}
is a partial reference list \emph{in the context of metric graphs}).
A distinguishing feature of our conjecture, however, is the absence of
any restrictions on the type of graphs where convergence is expected.
While this may be viewed as overly bold, our wide search has not
revealed any counter-examples.  We choose to view it as fortuitous: the common situation in other ``quantum chaos
on graphs'' conjectures is having a list of
known exceptions. If a conjecture is true without a list of exceptions, then proving it might be easier. 

Allowing ourselves to speculate, the nodal surplus appears to behave
as a sum of weakly correlated small contributions ``localized'' on
individual cycles of the graphs.  When all cycles are spatially
separated \cite{AloBanBer_cmp18}, this is a rigorous statement and,
moreover, the cycle contributions are independent.  On the opposite
side of the spectrum, the graphs such as mandarins (which are shown to
satisfy the conjecture in Theorem~\ref{thm: stowers and mandarins})
have cycles that are all ``bunched together''.  And yet, we manage to
find a sum of uncorrelated variables that approximates the nodal
surplus up to a small correction.  This observation strengthens our belief that the conjecture is true in its full generality, and our
hope that it can be proven.

%%%%%%%%%%%%%%%%%%%%%%%%%%%%%%%%%%%%%%%%%%%%%%%%%%%%%%%%%%%%%%%%%
\section{Definitions and preliminaries}

Let $\Gamma(\E,\V)$ be a (discrete) graph. Here, $\E$ and $\V$ denote
the sets of edges and vertices of cardinalities $E:=|\E|$ and
$V:=|\V|$. We allow multiple edges between a given pair of
vertices. An edge may connect a vertex to itself. Such an edge is
called a $\emph{loop}$ (not to be confused with a $\emph{cycle}$ ---
a closed simple path). The multi-set of edges incident to a given vertex
$v\in\V$ is denoted by $\E_{v}$; it is a multi-set because every loop
appears twice. The degree of a vertex is defined as
$\deg{v}=|\E_{v}|$. We call an edge incident to a vertex of degree
one, a \emph{tail}. Throughout the paper, we always assume that:

\begin{ass}
  \label{ass: graph assumptions}
  $\Gamma$ is connected, there are no vertices of degree two, and both
  $\E$ and $\V$ are finite and non-empty.
\end{ass}

The first Betti number of $\Gamma$, i.e.\ its number of ``independent"
cycles, will be denoted by $\beta$. Formally, $\beta$ is the rank of
the first homology group of $\Gamma$ and is given by
\begin{equation*}
  \beta=E-V+1.
\end{equation*}
Given a graph $\Gamma$ and a positive vector $\lv\in\R_{+}^{E}$, the
metric graph $\Gamma_{\lv}$ is obtained by equipping each edge
$e\in\E$ with a uniform metric $\dd x_e$ such that the total length of
the edge is $\ell_e$.  This makes $\Gamma_{\lv}$ a compact metric
space which can be viewed as a one-dimensional Riemannian manifold
with singularities at the vertices.  A function
$f:\Gamma_\lv\rightarrow\C$ may be specified by its restrictions to
every edge $f|_{e}\colon e\to\C$. Introducing the Sobolev
space
$H^{2}(\Gamma_{\lv}):=\bigoplus_{e\in\E}H^{2}\left(e\right)$,
the Laplacian acts on functions in $H^2(\Gamma_{\lv})$ edge-wise:
\[
(\Delta f)|_{e}=-\frac{\dd^{2}}{\dd^{2}{x_{e}}}f|_{e}.
\]
The Laplacian is self-adjoint when restricted to the family of
functions in $H^{2}(\Gamma_{\lv})$ which satisfy Neumann--Kirchhoff\footnote{Also
  called natural or standard vertex conditions.} vertex conditions
at every $v\in\V$:
\begin{enumerate}
\item $f$ is continuous at $v$. Namely, for every
  $e,e'\in\E_{v}$, \[f|_{e}(v)=f|_{e'}(v).\]
\item The outgoing derivatives of $f|_e$ at $v$, denoted by
  $\partial_{e}f(v)$, satisfy
  \[
    \sum_{e\in\E_{v}}\partial_{e}f(v)=0.
  \]
\end{enumerate}

\begin{rem}
  Under these conditions, two edges connected by a vertex of degree 2
  can be replaced by a single edge. Hence, the only finite connected
  graphs excluded in Assumption \ref{ass: graph assumptions} are the
  ``loop graph'' (when $ V=0 $) or a single point (when $ E=0 $).
\end{rem}

The definition of a $\emph{quantum graph}$ may include different
choices of vertex conditions and the addition of scalar and magnetic
potentials to the Laplacian. A thorough review of quantum graphs may
be found in \cite{ KurasovBook,BerKuc_graphs,GnuSmi_ap06,Mugnolo_book} among other sources. In this paper we
only consider pure Laplacian on graphs with Neumann--Kirchhoff
conditions.

\begin{ass}
  \label{assum:standard}
  The graph $\Gamma_{\lv}$ is a finite metric graph equipped with
  Neumann--Kirchhoff vertex conditions.  We will call such graphs
  \emph{standard graphs}.  When referring to the
  spectrum/eigenvalues/eigenfunctions of $\Gamma_{\lv}$ they should be
  understood as those of the Laplacian.
\end{ass}

A standard graph $\Gamma_{\lv}$ has a discrete non-negative
spectrum. It has a complete set of eigenfunctions
$\left\{ f_{n}\right\} _{n\in\N}$ corresponding to eigenvalues
$k_{n}^{2}$ such that
\[
0=k_{1}< k_{2}\le k_{3}\ldots \nearrow\infty.
\]
For a multiple eigenvalue, there is a freedom in choosing an
orthonormal basis of its eigenspace and the number of zeros of the
$n$-th eigenfunction may depend on this choice.  To avoid any
ambiguity, we will focus on ``generic'' eigenfunctions.

\begin{defn}
  \label{def: generic_index}
  An eigenfunction is called $\emph{generic}$ if it corresponds to a
  simple eigenvalue, and it does not vanish on any vertex. Given
  $\Gamma_{\lv}$, we denote the set of labels of generic eigenfunctions
  by
  \[
    \G:=\set{n\in\N}{f_{n}\,\text{is generic}}.
  \]
\end{defn}

\begin{defn}
  Given $\Gamma_{\lv}$, its \emph{nodal surplus},
  $\sigma:\G\rightarrow\left\{ 0,\ldots \beta\right\} $, is defined by
  \begin{equation}
    \label{eq:nodal_surplus_def}
    \sigma\left(n\right) := \big|\set{x\in\Gamma_{\lv}}{f_{n}\left(x\right)=0}\big|
    - (n-1).
  \end{equation}
\end{defn}

\begin{defn} A vector $\lv\in\R_{+}^{E}$ is said to be
  $\emph{rationally independent}$, if the only solution to
  $\lv\cdot\vec{q}=0$ with $\vec{q}\in\Q^{E}$ is $\vec{q}=0$.
\end{defn}

Large graphs with rationally independent lengths are regarded as a
good paradigm of quantum chaos \cite{GnuSmi_ap06,KotSmi_prl97}. It was
shown in \cite[Theorem 2.1]{AloBanBer_cmp18} that for any $\Gamma$ and any
rationally independent $\lv$, the nodal surplus of $\Gamma_{\lv}$ has a
well-defined distribution.  Namely, the limits
\begin{equation}
  \label{eq:prob_defn}
  P(\sigma=s) := \lim_{N\rightarrow\infty}
  \frac{\big|\sigma^{-1}(s)\cap[N]\big|}{\big|\G\cap[N]\big|}, \qquad
  [N] := \left\{1,\ldots ,N\right\}
\end{equation}
exist for all $s \in \{0, \ldots, \beta\}$, are non-negative, and
their sum is 1.

%%%%%%%%%%%%%%%%%%%%%%%%%%%%%%%%%%%%%%%%%%%%%%%%%%%%%%%%%%%%%%%%%%%%%%%%%%%%
\section{Conjecture and main results}
\label{sec:main_results}

It was suggested by Gnutzmann, Smilansky and Weber in
\cite{GnuSmiWeb_wrm04} that the nodal statistics of large
graphs\footnote{The large graphs limit in \cite{GnuSmiWeb_wrm04} is
  $V\rightarrow\infty$ for ``well connected" graphs.} with rationally
independent edge lengths should exhibit a universal Gaussian
behavior. Recent progress was made in \cite{AloBanBer_cmp18}, showing
that $\sigma$ has a well-defined distribution which is supported
symmetrically on $\{0,1,\ldots \beta\}$ with
$\mathbb{E}\left(\sigma\right)=\frac{\beta}{2}$. To have a continuous
limit, as suggested by \cite{GnuSmiWeb_wrm04}, a sequence of discrete
distributions must be supported on sets of growing cardinality.  We
conjecture that this basic necessary condition is in fact sufficient.

\begin{conjecture}
  \label{conj: Universality} Let
  $\left\{ \Gamma^{\left(\beta\right)}\right\} _{\beta\nearrow\infty}$
  be any sequence of standard graphs, labeled by their first Betti
  numbers. Choosing arbitrary rationally independent edge lengths for
  each $\Gamma^{\left(\beta\right)}$, let
  $\sigma^{\left(\beta\right)}$ denote its nodal surplus random
  variable. Then, in the limit of $\beta\rightarrow\infty$,
  \begin{equation}\label{eq: universality}
    \frac{\sigma^{\left(\beta\right)}-\frac{\beta}{2}}
    {\sqrt{\var\left(\sigma^{(\beta)}\right)}}
    \xrightarrow{\ \mathcal{D}\ }
    N(0,1),
  \end{equation}
  where the convergence is in distribution and $N(0,1)$ is the
  standard normal distribution. Moreover,
  $\var\left(\sigma^{(\beta)}\right)$ has linear growth,
  \begin{equation}
    \frac{\beta}{C}\le\var\left(\sigma^{(\beta)}\right)\le C \beta,
  \end{equation}
  for some constant $C>1$ and for large enough $\beta$.
\end{conjecture}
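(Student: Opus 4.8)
The plan is to turn the spectral question into a question about a single, explicit probability space and then to prove a central limit theorem there; this strategy can be pushed through for structured families (mandarins, flowers), and the obstruction to the general case is isolated at the end.

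\textbf{Reduction to the torus and to an eigenvalue count.}
First I would invoke Theorem~\ref{thm: sampling simplified}: because $\lv$ is rationally independent, the flow generating the spectrum equidistributes on $\T^{E}=\R^{E}/2\pi\Z^{E}$, and $\sigma^{(\beta)}$ has the same law as a deterministic function $\sigma(\bx)$ of a point $\bx$ sampled from normalised Lebesgue measure on $\T^{E}$. This removes the spectrum entirely, so \eqref{eq: universality} becomes a statement about fluctuations of $\sigma(\bx)$ under a fixed measure. Next, via the nodal--magnetic relation (Berkolaiko; Colin de Verdi\`ere) one identifies $\sigma(\bx)$ with the number of negative eigenvalues of a real symmetric $\beta\times\beta$ matrix $H(\bx)$ --- the Hessian, with respect to $\beta$ magnetic fluxes, of the relevant branch of the eigenvalue surface at zero flux --- whose entries are explicit trigonometric polynomials in $\bx$. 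The conjecture is thus equivalent to: the negative-eigenvalue counting function of $H(\bx)$ is asymptotically Gaussian with mean $\beta/2$ and variance of order $\beta$. (The symmetry of the distribution and the mean $\beta/2$ are already known from \cite{AloBanBer_cmp18} and reflect the $\bx\mapsto-\bx$ symmetry of $H$.)

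\textbf{A CLT for the count.}
Write the count as $\sum_{j}\mathbf 1_{\lambda_j(H(\bx))<0}$ and, using the combinatorial structure of a fixed cycle basis, reorganise it as a sum $\sum_{i=1}^{\beta}X_i$ of contributions ``localised'' on the individual independent cycles. I would then try to establish: (i) $\operatorname{Cov}(X_i,X_j)$ decays summably in the combinatorial distance between cycles $i$ and $j$; (ii) consequently $\var(\sigma^{(\beta)})=\var\!\big(\sum_i X_i\big)$ is of order $\beta$, which yields both bounds $\beta/C\le\var(\sigma^{(\beta)})\le C\beta$; (iii) a Lindeberg-- or Stein-type CLT for sums of uniformly bounded, weakly dependent variables then gives \eqref{eq: universality}. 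Equivalently one can run the method of moments and show every cumulant of order $\ge 3$ of the normalised sum is $o\big(\var^{(\text{order})/2}\big)$. It is worth noting \emph{why} this is not a random-matrix statement: if $H(\bx)$ behaved like a Wigner/GOE matrix the count would fluctuate only on scale $\sqrt{\log\beta}$, whereas the conjectured linear variance is precisely the signature of cycle contributions that are weakly correlated rather than strongly repelling, consistent with the disjoint-cycle case of \cite{AloBanBer_cmp18} where they are exactly independent.

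\textbf{The main obstacle.}
Step (i) fails for genuinely ``bunched'' graphs: in the mandarins of Theorem~\ref{thm: stowers and mandarins}, and in flower graphs, all $\beta$ cycles share edges, so the natural dependency graph among the $X_i$ is complete and the $X_i$ are not even asymptotically uncorrelated; a direct Stein-with-dependency-graph argument is therefore hopeless. The resolution --- which is what we actually carry out for mandarins and flowers, and what any general proof must somehow achieve --- is to \emph{replace} the $X_i$ by a different, carefully chosen family of genuinely uncorrelated (or provably weakly dependent) variables whose sum agrees with $\sigma(\bx)$ up to an $L^2$ error that is $o(\sqrt{\beta})$. For the structured families this auxiliary decoupling can be read off from the near-circulant form of $H(\bx)$, after which a classical CLT closes the argument; but producing such a decoupling for an arbitrary graph topology, together with the matching lower bound $\var(\sigma^{(\beta)})\ge\beta/C$ that rules out accidental degeneracies of $H(\bx)$ on a large-measure set of $\bx$, is exactly the point at which the present methods stop and the statement remains a conjecture.

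\textbf{Scope of what is provable now.}
Accordingly, I would not attempt the conjecture in full generality, but (a) prove it for mandarins/flowers along the lines above, where $H(\bx)$ is explicit enough for the decoupling and the variance bounds to be verified by hand, (b) record the disjoint-cycle case from \cite{AloBanBer_cmp18} as the model situation in which the $X_i$ are literally independent Bernoulli-type variables, and (c) supply the torus-integral formula of Theorem~\ref{thm: sampling simplified} as the computational tool that makes the conjecture numerically testable for arbitrary graph families, thereby providing the empirical evidence in lieu of a general proof.
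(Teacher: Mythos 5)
You are right to treat the statement as a conjecture and your overall program --- reduce to the torus, prove the claim for special families, record the disjoint-cycle case, and support the general case numerically --- matches the paper's. However, two points in your reduction are off. First, the law of $\sigma$ is not that of a deterministic function of a Lebesgue-uniform point of $\T^{E}$: it is the pushforward of the Barra--Gaspard measure $\mu_{\lv}$ on the secular manifold (equivalently, a Lebesgue point $\kv$ together with a random eigenbranch chosen with weights $\ba_{n}^{*}\boldsymbol{L}\ba_{n}/\tr\boldsymbol{L}$), so the $\lv$-dependence survives the reduction; the paper controls the quantifier ``for every rationally independent $\lv$'' through the convex decomposition \eqref{eq: Pvec convex in We} into the extreme distributions $\vec{W}_{e}$, the symmetry reduction of Theorem~\ref{thm: symmetry theorem}, and Lemma~\ref{lem: upper bound We}, none of which appear in your plan. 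Second, stowers and flowers have loops, so Theorem~\ref{thm: sampling simplified} does not apply as stated; one needs Theorem~\ref{thm: generalized theorem} and the restriction to $\mgen$.

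The decisive gap is in the part you claim to ``actually carry out'' for mandarins and flowers. The paper never extracts decoupled variables from a ``near-circulant form of $\bH(\kv)$''; for these families it bypasses the Hessian entirely and uses the explicit oracle formula of Lemma~\ref{lem: explicit nodal stat}: up to an error bounded by an absolute constant (and, for mandarins, a possible global reflection $\bsigma\mapsto E-\bsigma$), $\bsigma(\kv)$ simply counts the coordinates $\kappa_{e}$, $e\in\widetilde{\E}$, lying in $(\pi,2\pi)$. The surrogate count $\bx(\kv)$ is then \emph{exactly} $\Bin\big(N_{\beta},\tfrac12\big)$ under the extreme measure $\mu_{e_{1}}$; this is the content of Lemma~\ref{lem: mj and pj} and Lemma~\ref{lem: xie binomial}, which show that the projection forgetting $\kappa_{e_{1}}$ pushes $\mu_{e_{1}}$ restricted to $\mgen$ onto Lebesgue measure on $\T^{E-1}$. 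Note that under $\mu_{\lv}$ for generic $\lv$ the coordinates of a point of $\Sigma$ are \emph{not} independent, so some lemma of this kind is indispensable and is missing from your sketch. Moreover the comparison used is a uniform pointwise bound $|\bsigma-\bx|\le 3$, giving the Kolmogorov--Smirnov sandwich of Proposition~\ref{prop: stowers and mandarins} (with the mandarin reflection absorbed by the symmetry $W_{e,s}=W_{e,\beta-s}$ and a sandwich in $|\cdot-\beta/2|$), rather than an $L^{2}$ approximation of size $o(\sqrt{\beta})$; your $L^{2}$ scheme does not explain how a global flip of $\bsigma$ on part of the manifold would be handled. The variance bounds $\beta/C\le\var(\sigma)\le C\beta$ then fall out of the same sandwich; no covariance-decay, Stein, or cumulant argument is needed --- and, as you observe yourself, none is available for these bunched-cycle families, which is exactly why your step (i)--(iii) programme cannot serve as the proof of Theorem~\ref{thm: stowers and mandarins}.
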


Let us state more explicitly the claim of the Conjecture.  Given a real random variable $X$, let $N(X)$ denote a normal random variable with mean $\mathbb{E}(X)$ and variance $\var(X)$. The Kolmogorov–Smirnov distance between real random variables $X$ and $Y$ is defined by,
\begin{equation}\label{eq: DS dist defined}
d_{KS}(X,Y):=\sup_{t\in\R}|P(X\le t)-P(Y\le t)|.
\end{equation}

Let $G(\beta)$ denote the (infinite) family of discrete graphs with first Betti number $\beta$, and given a graph $\Gamma$ let $\L(\Gamma)$ denote all possible rationally independent lengths. Then, Conjecture \ref{conj: Universality} says:
\begin{enumerate}
    \item In the limit of $\beta\rightarrow \infty$, \begin{equation}\label{eq: conj dks convergence}
        \sup_{\Gamma\in G(\beta)}\sup_{\lv\in\L(\Gamma)}d_{KS}(\sigma,N(\sigma))\rightarrow 0,
    \end{equation}
    where $\sigma$ indicates the nodal surplus random variable of $\Gamma_{\lv}$.
\item There exists $C>0$ such that for large enough $\beta$,
\begin{equation}\label{eq: conj variance growth}
     \frac{\beta}{C}\le\inf_{\Gamma\in G(\beta)}\inf_{\lv\in\L(\Gamma)}\var(\sigma)\le\sup_{\Gamma\in G(\beta)}\sup_{\lv\in\L(\Gamma)}\var(\sigma)\le C \beta.
\end{equation}
\end{enumerate}

The aim of this paper is to provide both numerical and analytical
evidence in support of this conjecture.

%%%%%%%%%%%
\subsection{Analytical evidence in support of Conjecture \ref{conj:
    Universality}}

We prove the conjecture for three families of graphs:
\begin{enumerate}
    \item \emph{\textbf{Graphs with disjoint cycles}} - We say that a metric graph $\Gamma_{\lv}$ has \emph{disjoint cycles} if every two of its cycles (i.e, simple closed paths on $\Gamma_{\lv}$) are disjoint. See figure \ref{fig:TOC} for an example.
    \item \emph{\textbf{Stower graphs}} - A graph all of whose edges are loops is called a \emph{flower}. A graph all of whose edges are tails is called a \emph{star}. We call a graph \emph{stower}, if each of its edges is either a loop or a tail. See Figure \ref{fig: Stowers and mandarins} for an example.
    \item \emph{\textbf{Mandarin graphs}} - We call a graph \emph{mandarin}\footnote{also known as pumpkin or watermelon.} if it has only two vertices, and every edge of the graph is connected to both vertices. In particular, it has no loops. See Figure \ref{fig: Stowers and mandarins} for an example.
\end{enumerate}
\begin{figure}
\includegraphics[height=0.2\paperheight]{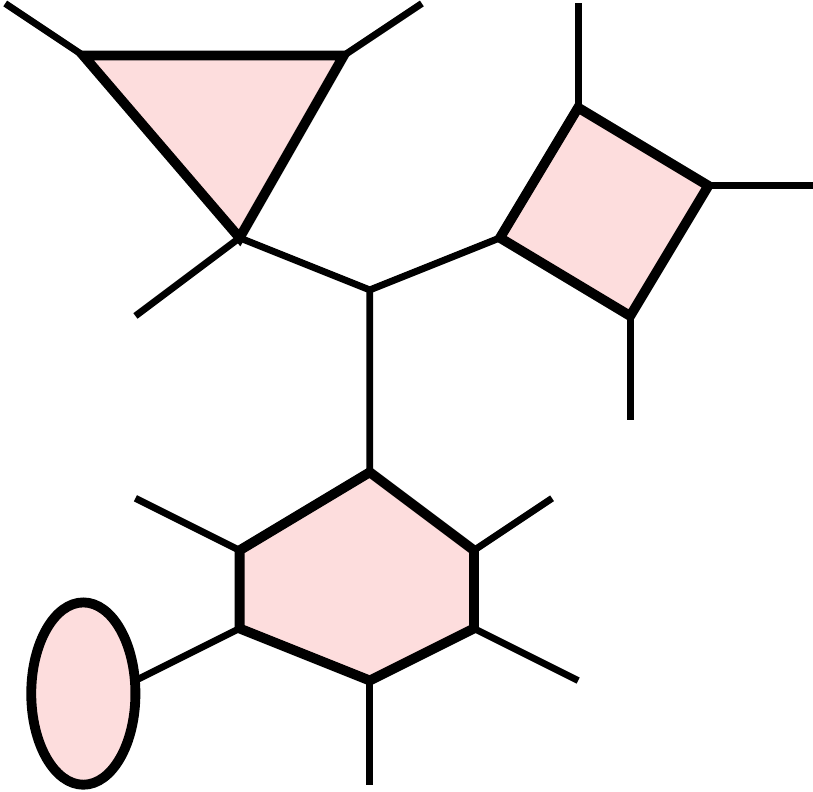}

\caption{\label{fig:TOC} A graph with disjoint cycles. The cycles are emphasized by coloring the faces they bound.}
\end{figure}
\begin{figure}
\includegraphics[width=0.4\paperwidth]{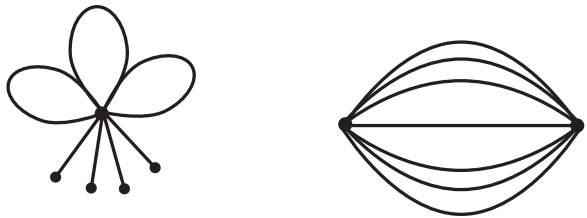}
\caption{\label{fig: Stowers and mandarins} On the left, a stower graph with 3 loops and 4 tails. On the right, a mandarin graph with 7 edges.}
\end{figure}
The nodal surplus distribution for graphs with disjoint cycles was calculated in \cite{AloBanBer_cmp18}:
\begin{thm}\cite{AloBanBer_cmp18}\label{thm: TOC}
If $\Gamma_{\lv}$ has disjoint cycles and rationally independent edge lengths, then its nodal surplus distribution is binomial with parameters $n=\beta$ and $p=\frac{1}{2}$. That is,
\begin{equation}\label{eq: binomial TOC}
    P(\sigma=s) = \binom{\beta}{s}2^{-\beta}.
\end{equation}
\end{thm}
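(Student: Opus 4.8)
The plan is to reduce the nodal count to a spectral--geometric index and then invoke an equidistribution argument, exactly in the spirit of \cite{AloBanBer_cmp18}. For a generic eigenfunction $f_n$ I would introduce a magnetic Schr\"odinger operator on $\Gamma_\lv$ with fluxes $\av=(\alpha_1,\dots,\alpha_\beta)\in\T^\beta$ through a chosen cycle basis, and let $\lambda_n(\av)$ be the corresponding eigenvalue (simple, hence smooth, for $\av$ near $0$). The time-reversal symmetry $\av\mapsto-\av$ forces $\av=0$ to be a critical point, and the nodal--magnetic theorem \cite{Ber_cmp08,BanBerSmi_ahp12} identifies $\sigma(n)$ with the Morse index of $\lambda_n$ at $\av=0$, i.e.\ the number of negative eigenvalues of $\hess\lambda_n(0)$.

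The key structural observation is that when the cycles of $\Gamma_\lv$ are pairwise disjoint, each cycle $C_i$ meets the rest of the graph at a single vertex $u_i$ --- otherwise a path through the complement of $C_i$ would close up with an arc of $C_i$ into a further cycle meeting $C_i$, contradicting disjointness. Taking the $C_i$ themselves as the cycle basis and gauging the flux $\alpha_i$ onto one edge of $C_i$, the influence of the pendant cycle $C_i$ on the remainder of $\Gamma_\lv$ is captured by a single scalar vertex datum at $u_i$ that depends on $\alpha_i$ (and on $k$) but on no other flux. Consequently the secular function of $\Gamma_\lv$ factors through $\beta$ one-flux blocks, and implicit differentiation of the secular equation at $\av=0$ shows that $\hess\lambda_n(0)$ is \emph{diagonal}; hence $\sigma(n)=\sum_{i=1}^\beta\mathbb{1}\!\left[\partial^2_{\alpha_i}\lambda_n(0)<0\right]$.

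A second-order perturbation computation localized on the single cycle $C_i$ then expresses the sign of $\partial^2_{\alpha_i}\lambda_n(0)$ through one angle $\theta_i(n)\in\T$ --- essentially the phase accumulated along $C_i$ at energy $k_n^2$ --- so that the $i$-th indicator equals $\mathbb{1}[\theta_i(n)\in(0,\pi)]$ up to an orientation convention. Finally, because $\lv$ is rationally independent, the vector $(\theta_1(n),\dots,\theta_\beta(n))$ equidistributes on $\T^\beta$ with respect to Haar measure as $n$ ranges over $\G\cap[N]$, $N\to\infty$ (the Barra--Gaspard ergodicity of the secular manifold); since the cycles are disjoint, the arcs closing them inside the $C_i$ involve pairwise disjoint sets of edge lengths, so the $\theta_i(n)$ are moreover asymptotically independent. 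Therefore the $\beta$ indicators become i.i.d.\ $\mathrm{Bernoulli}(1/2)$ and $\sigma$ has the limiting law $\Bin(\beta,\tfrac12)$, which is \eqref{eq: binomial TOC}. (Existence of the limit in \eqref{eq:prob_defn} is already guaranteed by \cite[Theorem 2.1]{AloBanBer_cmp18}, so only the identification of the limit is at stake.)

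The main obstacle is the decoupling step: proving that $\hess\lambda_n(0)$ is \emph{exactly} diagonal. This is precisely where disjointness of the cycles is indispensable --- if two cycles shared a vertex, the mixed second derivatives would generically be nonzero, the surplus would fail to split into independent $\{0,1\}$ contributions, and the law would not be binomial. Turning the ``single scalar vertex datum'' heuristic into a rigorous factorization of the secular function, and carefully tracking the first-order perturbed eigenfunction on each pendant cycle, is the technical heart of the argument; establishing the asymptotic independence of the cycle angles in the equidistribution step is a secondary and comparatively routine point.
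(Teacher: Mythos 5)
You have correctly identified the general machinery (the magnetic/Morse-index characterization of the surplus plus Barra--Gaspard equidistribution), which is indeed the spirit of \cite{AloBanBer_cmp18} --- note that the present paper does not prove Theorem \ref{thm: TOC} at all, it only quotes it. But your proposal has a genuine gap at exactly the point you flag as the ``technical heart'', and the structural lemma you offer in its support is false. It is not true that in a graph with pairwise disjoint cycles each cycle meets the rest of the graph at a single vertex: disjointness only forces the different attachment vertices of a cycle $C$ to lead into branches that are mutually disconnected outside $C$. A chain of cycles (the middle cycle attached at two distinct vertices to a left and a right branch), or a central cycle with three or more branches each containing its own cycle, are graphs with disjoint cycles in which your claim fails; your ``path through the complement closes up'' argument only rules out two attachment vertices that are connected \emph{outside} $C$. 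Consequently the ``single scalar vertex datum'', the asserted factorization of the secular function into $\beta$ one-flux blocks, and the exact diagonality of $\hess\lambda_n(0)$ do not follow. Diagonality can be salvaged when a cycle has at most two attachment vertices (the reflection of the cycle fixing both attachment points is an isometry of the metric graph sending $\alpha_i\mapsto-\alpha_i$ and fixing the other fluxes, so $\lambda$ is even in that flux separately), but for a cycle with three or more attachment vertices no such symmetry exists for generic lengths, and second-order perturbation theory gives mixed terms $\sum_{m\ne n}\langle f_n|\partial_iH|f_m\rangle\langle f_m|\partial_jH|f_n\rangle/(\lambda_n-\lambda_m)$ with no reason to vanish; so the pointwise identity $\sigma(n)=\sum_i\mathbb{1}[\partial^2_{\alpha_i}\lambda_n(0)<0]$ is not established and is doubtful as stated.

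The equidistribution/independence step is also not ``comparatively routine'' as written. Your angle $\theta_i(n)$ is not a function of the edge lengths of $C_i$ alone: it involves the scattering/Dirichlet-to-Neumann data of the rest of the graph at the attachment vertices, hence shared edge lengths, so ``pairwise disjoint sets of edge lengths'' does not by itself give asymptotic independence. What the cited proof (and the present paper's framework) actually uses is the secular manifold $\mg$ with the BG-measure $\mu_{\lv}$: one must show that the per-cycle binary observables, viewed as functions on $\mg$, are jointly uniform with respect to $\mu_{\lv}$ --- compare Lemma \ref{lem: mj and pj}, which handles only a single coordinate projection and already requires a nontrivial argument about fibers of the projection. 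So the skeleton of your plan is right, but both pillars --- the exact per-cycle decoupling of the Morse index and the joint uniformity of the cycle contributions --- are precisely what must be proved, and the route you propose (single attachment vertex, factorized secular function, diagonal Hessian) breaks down on legitimate disjoint-cycle graphs.
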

\begin{cor}
The family of graphs with disjoint cycles satisfies Conjecture \ref{conj: Universality}.
\end{cor}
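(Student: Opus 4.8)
The plan is to observe that Theorem~\ref{thm: TOC} reduces both assertions of Conjecture~\ref{conj: Universality}, in the explicit form \eqref{eq: conj dks convergence}--\eqref{eq: conj variance growth}, to a statement about a single, graph-independent random variable. Write $G_{\mathrm{dc}}(\beta)\subseteq G(\beta)$ for the discrete graphs of first Betti number $\beta$ all of whose cycles are pairwise disjoint. The first step is the observation that, by Theorem~\ref{thm: TOC}, for \emph{every} $\Gamma\in G_{\mathrm{dc}}(\beta)$ and \emph{every} $\lv\in\L(\Gamma)$ the nodal surplus satisfies $\sigma\sim\Bin(\beta,\tfrac12)$. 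Hence the law of $\sigma$ depends neither on $\Gamma$ nor on $\lv$, and the suprema and infima over $\Gamma\in G_{\mathrm{dc}}(\beta)$ and $\lv\in\L(\Gamma)$ appearing in \eqref{eq: conj dks convergence} and \eqref{eq: conj variance growth} are each attained trivially by this one common value.

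Next I would dispose of the variance claim. Since $\var\bigl(\Bin(\beta,\tfrac12)\bigr)=\beta/4$, the choice $C=4$ gives $\beta/C\le\var(\sigma)\le C\beta$ for all $\beta\ge1$, so \eqref{eq: conj variance growth} holds and the variance grows exactly linearly in $\beta$.

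The remaining step is the distributional convergence. Because the Kolmogorov--Smirnov distance is invariant under the affine map $x\mapsto (x-\beta/2)/(\sqrt{\beta}/2)$ and $N(\sigma)$ has mean $\beta/2$ and variance $\beta/4$, we have
\[
d_{KS}\bigl(\sigma,N(\sigma)\bigr)
=\sup_{t\in\R}\left|P\!\left(\tfrac{\sigma-\beta/2}{\sqrt{\beta}/2}\le t\right)-\Phi(t)\right|,
\]
where $\Phi$ is the standard normal CDF. Since $\sigma\sim\Bin(\beta,\tfrac12)$ is a sum of $\beta$ i.i.d.\ Bernoulli$(\tfrac12)$ variables, the de Moivre--Laplace central limit theorem shows this quantity tends to $0$ as $\beta\to\infty$, which is exactly \eqref{eq: conj dks convergence} for the subfamily $G_{\mathrm{dc}}(\beta)$ and is equivalent to the convergence in distribution of the rescaled variable in \eqref{eq: universality} to $N(0,1)$. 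If a rate is desired, Berry--Esseen gives $d_{KS}\bigl(\sigma,N(\sigma)\bigr)\le c\,\rho\,\tau^{-3}\beta^{-1/2}=O(\beta^{-1/2})$, with $\tau^2=\tfrac14$ and $\rho=\mathbb{E}|B-\tfrac12|^3=\tfrac18$ the variance and third absolute central moment of a single Bernoulli$(\tfrac12)$ variable $B$ and $c$ the absolute Berry--Esseen constant. Combining the three steps verifies Conjecture~\ref{conj: Universality} for graphs with disjoint cycles.

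There is no genuine obstacle here: once Theorem~\ref{thm: TOC} is available, the statement collapses to the elementary fact that a centered and rescaled binomial converges to a standard Gaussian. The only point deserving a moment's care is the first step, namely that the two layers of suprema over graphs and over rationally independent length vectors are vacuous precisely because Theorem~\ref{thm: TOC} yields a law independent of both.
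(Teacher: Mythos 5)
Your proposal is correct and follows essentially the same route the paper intends: the corollary is an immediate consequence of Theorem~\ref{thm: TOC}, since the law $\Bin(\beta,\tfrac12)$ is independent of the graph and of the (rationally independent) lengths, its variance $\beta/4$ gives \eqref{eq: conj variance growth}, and de Moivre--Laplace (with Berry--Esseen for a rate, as you note) yields \eqref{eq: conj dks convergence}. Nothing further is needed.
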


In contrast to this family of graphs, the cycles of stower graphs and mandarin graphs are clustered together such that every pair of cycles share an edge or a vertex. These are, in a sense, opposite to the case of disjoint cycles.

\begin{thm}\label{thm: stowers and mandarins}
Both graph families of stowers and mandarins satisfy Conjecture \ref{conj: Universality}.
\end{thm}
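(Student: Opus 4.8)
The plan is to read the result off from Theorem~\ref{thm: sampling simplified}, which presents the nodal surplus as a push-forward: there is an explicit real symmetric matrix-valued function $\vec x\mapsto\hess(\vec x)$ on $\T^{E}$ such that the law of $\sigma$ equals the law of $\M\big(\hess(X)\big)$ with $X$ distributed by the Barra--Gaspard measure $\mu$ governing the equidistribution of $\{k_n\lv\}_{n\in\G}$; here $\hess(\vec x)$ is, in essence, the Hessian with respect to the magnetic fluxes through a chosen basis of $\beta$ cycles, evaluated at zero flux, of the eigenvalue branch selected by $\vec x$. The problem thus becomes a quantitative central limit theorem for the integer $\M(\hess(X))$ --- quantitative because the Conjecture is stated in the Kolmogorov--Smirnov metric --- together with the two-sided linear bound on $\var(\sigma)$. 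For both families the scheme is the same: first show that in natural coordinates $\hess(\vec x)$ is, respectively, diagonal (flowers) or diagonal-plus-rank-one (mandarins), so that $\M(\hess(\vec x))$ is, up to a bounded error, a sum of $\beta$ indicators $\ones_{\{x_e\in(\pi,2\pi)\}}$; then make $\mu$ explicit and recognise this sum as a binomial random variable up to a bounded error; then finish with the de~Moivre--Laplace bound.

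For stowers I would first reduce to flowers. A stower has a single vertex of degree $>2$, carrying $\beta$ loops and some tails; attaching a tail adds a flux-independent, generically non-vanishing term to the secular function, which changes neither the basis of cycles nor the numerator of $\hess(\vec x)$ --- it only scales its denominator by a positive factor --- so $\sigma$ has the same law as for the flower obtained by pruning the tails. For a flower, a direct computation gives the \emph{separable} secular function $F(k,\vec\alpha)=\sum_{j=1}^{\beta}\frac{\cos\alpha_j-\cos k\ell_j}{\sin k\ell_j}$, which at $\vec\alpha=0$ is $\sum_j\tan(k\ell_j/2)=0$; since mixed flux-derivatives of a separable function vanish, $\hess(\vec x)$ is diagonal with $j$-th entry of the sign of $1/\sin(k\ell_j)$, whence $\M(\hess(\vec x))=\#\{j:\ x_j\bmod 2\pi\in(\pi,2\pi)\}$. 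The Barra--Gaspard measure is the flow-flux measure $|\lv\cdot\nabla F|/|\nabla F|$ times surface measure on $\{\sum_j\tan(x_j/2)=0\}$, and the change of variables $y_j=\tan(x_j/2)$ turns this into the hyperplane $\{\sum_j y_j=0\}\subset\R^{\beta}$ carrying the measure $\sum_{j}\frac{\ell_j}{\ell_1+\dots+\ell_\beta}$ times the law in which $(y_k)_{k\neq j}$ are i.i.d.\ standard Cauchy and $y_j$ is determined by the constraint; since the signs of all but one coordinate are then i.i.d.\ fair coins, $\sigma^{(\beta)}$ equals $\Bin(\beta-1,\tfrac12)$ up to a shift in $\{0,1\}$. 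The de~Moivre--Laplace theorem now yields $d_{KS}\big(\sigma^{(\beta)},N(\sigma^{(\beta)})\big)=O(\beta^{-1/2})\to0$ and $\var(\sigma^{(\beta)})=\tfrac{\beta}{4}+O(\sqrt\beta)$, which is the Conjecture for this family.

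For mandarins I would run the same argument, with the extra feature that the secular set splits. Eliminating the two vertex values from the Kirchhoff conditions on a mandarin with edges $1,\dots,\beta+1$ gives $\big|\sum_e e^{i\alpha_e}/\sin k\ell_e\big|^{2}=\big(\sum_e\cot k\ell_e\big)^{2}$, whose zero set at $\vec\alpha=0$ is the union of the two ``flower-type'' hypersurfaces $\{\sum_e\tan(x_e/2)=0\}$ and $\{\sum_e\cot(x_e/2)=0\}$ in $\T^{\beta+1}$. Differentiating twice in the $\beta$ independent fluxes shows $\hess(\vec x)=D(\vec x)+(\text{rank one})$, with $D(\vec x)$ diagonal of entry-signs $\varepsilon(\vec x)\,\sgn\big(1/\sin k\ell_e\big)$, where a short sign analysis gives $\varepsilon(\vec x)=+1$ on the $\tan$-component and $-1$ on the $\cot$-component; interlacing gives $\big|\M(\hess(\vec x))-\M(D(\vec x))\big|\le1$, and $\M(D(\vec x))$ is $W:=\#\{e\le\beta:\ x_e\bmod 2\pi\in(\pi,2\pi)\}$ on the first component and $\beta-W$ on the second. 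The involution $x_e\mapsto\pi-x_e$ interchanges the two components, preserves $\mu$, and fixes $W$, while $x_e\mapsto-x_e$ fixes each component, preserves $\mu$, and sends $W\mapsto\beta-W$; hence the two components contribute the same law for $\sigma^{(\beta)}$ (up to the bounded error), and that law is symmetric about $\beta/2$. Restricted to one component the measure $\mu$ is exactly the flower-type flow-flux measure on $\{\sum_{e=1}^{\beta+1}\tan(x_e/2)=0\}$, so the previous paragraph applies with $\beta+1$ ``petals'' and gives $\sigma^{(\beta)}=\Bin(\beta,\tfrac12)$ up to a bounded error, and therefore the same conclusions.

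The algebraic inputs --- separability of the flower's secular function, the diagonal-plus-rank-one form and the sign of $\varepsilon(\vec x)$ for the mandarin, the interlacing bound --- are finite computations. The real work is the probabilistic part, and specifically doing it \emph{quantitatively}: one must write the Barra--Gaspard measure in closed form, carry out the $y_e=\tan(x_e/2)$ change of variables carefully (the secular functions are not globally smooth, so one works with the associated trigonometric polynomials and checks that the normals coincide on the relevant hypersurfaces), identify the resulting reference measure as the Cauchy-type measure above, and then deduce the $O(\beta^{-1/2})$ Kolmogorov--Smirnov bound and the $\tfrac{\beta}{4}+O(\sqrt\beta)$ variance from the de~Moivre--Laplace theorem. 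I expect the step most likely to hide difficulties to be this identification of $\mu$ --- establishing that, on each component of the secular set and in the right coordinates, all but one of the coordinates become genuinely i.i.d.; everything downstream of that is standard.
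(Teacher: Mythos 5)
Your overall skeleton is the same as the paper's: represent the nodal surplus by the Morse-index oracle $\bsigma$ on the secular manifold, show that for these two families $\bsigma$ equals, up to an $O(1)$ error, the number of coordinates lying in $(\pi,2\pi)$, show that this count is binomial under the Barra--Gaspard measure, and finish by sandwiching $\sigma$ between shifted binomials and applying the binomial CLT (the paper's Proposition~\ref{prop: stowers and mandarins} is exactly such a $\pm3$ sandwich). The two places where you genuinely deviate are not problematic: you propose to re-derive the explicit oracle formulas (separable secular function for flowers, the two components $\sum\tan=0$, $\sum\cot=0$ and a diagonal-plus-rank-one Hessian with interlacing for mandarins), whereas the paper imports these as Lemma~\ref{lem: explicit nodal stat} from earlier work; and your identification of the measure via $y_e=\tan(\kappa_e/2)$ (a mixture, over a distinguished edge, of laws in which the remaining coordinates are i.i.d.\ Cauchy) is equivalent to the paper's route, namely the convex decomposition $\mu_{\lv}=\tfrac1L\sum_e\ell_e\mu_e$ together with Lemma~\ref{lem: mj and pj}, which says that the projection omitting the distinguished coordinate pushes $\mu_e$ (restricted to $\mgen$, normalized) to the uniform measure on $\T^{E-1}$; uniform angles become i.i.d.\ Cauchy under your substitution. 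Your symmetry argument for the two mandarin components plays the role of the paper's use of the symmetry $W_{e,s}=W_{e,\beta-s}$.

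There is, however, one incorrect step: the claimed exact reduction of stowers to flowers. Pruning the tails leaves the \emph{oracle} unchanged (it still counts loop coordinates in $(\pi,2\pi)$), but it changes the secular manifold and the BG measure, and the law of $\sigma$ depends on both. Concretely, for a stower the distribution is the convex combination \eqref{eq: Pvec convex in We} of two distinct extreme distributions: for a tail edge $e_2$, Lemma~\ref{lem: mj and pj} makes $\omega_{e_2}$ \emph{exactly} $\Bin(\beta,\tfrac12)$ (all loop coordinates are among the projected, uniform ones), while for a loop edge $e_1$ the count involves the omitted coordinate and $\omega_{e_1}$ is only within $\pm1$ of $\Bin(\beta-1,\tfrac12)$ and is in general not binomial. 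Hence the stower's law varies with the tail lengths and does not coincide with the flower's (length-independent) law, so ``$\sigma$ has the same law as for the pruned flower'' is false as stated. The defect is local and repairable by your own machinery --- run the mixture argument on the stower itself, treating tail-indexed and loop-indexed components separately, which is precisely what the paper does (choice of $e_1$ and $\widetilde{\E}$ in Definition~\ref{def: xi}, Lemma~\ref{lem: xie binomial}, and the symmetry Theorem~\ref{thm: symmetry theorem} reducing to one loop and one tail representative). Relatedly, for flowers and stowers you should not forget the loop-supported (non-generic) eigenfunctions: they carry positive spectral density, and the restriction to $\mgen$ with the normalization $\mu_e(\mgen)=c_e/2$ (the $c_e$ weights of Proposition~\ref{prop:W_clean}) is needed before your Cauchy identification is legitimate; this is part of the ``measure identification'' work you flagged, and the paper supplies it through Lemmas~\ref{lem: mreg} and \ref{lem: mreg and Sigma decomposition}.
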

We prove Theorem \ref{thm: stowers and mandarins} in Section \ref{subsec: proof for stowers and mandarins}.

\begin{rem}
  Theorem \ref{thm: stowers and mandarins} can be extended to include
  mandarin graphs with an added tail, but the proof is cumbersome and
  we do not include it here. Additionally, the methods used in the
  proof of Theorem \ref{thm: TOC} can be extended to graphs obtained
  by concatenation of small graphs.
\end{rem}

%%%%%%%%%%%
\subsection{Efficient computation of the nodal surplus distribution}
The next theorem provides an efficient way to evaluate the nodal surplus distribution for large graphs. Given a graph $\Gamma$ with $E$ edges, its so-called bond scattering matrix $S$, explicitly defined in \eqref{eq: S matrix},  is a $2E\times2E$ constant real orthogonal matrix.

\begin{defn}
\label{def:The-unitary-evolution}The \emph{unitary evolution matrix} associated to  $\kv\in\T^{E}:=\R^{E}/2\pi\Z^{E}$ is defined by
\begin{equation}
    \label{eq:Udef}
    U_{\kv}:=e^{i\hat{\kappa}}S,
\end{equation}
where $e^{i\hat{\kappa}}$ is the diagonal matrix
\[
e^{i\hat{\kappa}}:=\diag\left(e^{i\kappa_{1}},e^{i\kappa_{2}}\ldots e^{i\kappa_{E}},e^{i\kappa_{1}},e^{i\kappa_{2}},\ldots e^{i\kappa_{E}}\right).
\]
The eigenvalues and (orthonormal) eigenvectors of $U_{\kv}$ will be denoted by $e^{i\theta_{n}}(\kv)$ and $\ba_{n}(\kv)$, for $1\le n\le2E$. When $\kv$ is understood from the context, we simply write $e^{i\theta_{n}}$ and $\ba_{n}$.
\end{defn}

Given $\kv$ and an eigenpair $(e^{i\theta_{n}},\ba_{n})$ of
$U_{\kv}$, we construct $\bH_n(\kv)$, an $E\times E$ real symmetric matrix
whose signature is related to the nodal
surplus, see Sections \ref{subsec: nodal magnetic} and \ref{subsec:hessian} for details. To be able to define $\bH_n(\kv)$, we first introduce the self-adjoint matrix $ g(e^{-i\theta_{n}}U_{\kv}) $ given by 
\begin{equation*}
  g(e^{-i\theta_{n}}U_{\kv})
  = \sum_{m\le 2E~;~e^{i\theta_{m}}\ne e^{i\theta_{n}}}
  \cot\left(\frac{\theta_{n}-\theta_{m}}{2}\right)\ba_{m}\ba_{m}^{*}.  
\end{equation*}
The map $ g $ which sends unitary matrices to self-adjoint matrices is
defined in Section \ref{subsec:hessian}. We will remark that $g(M)$
agrees with the inverse Cayley transform of $M$ whenever defined. Next we define the matrices $Z_{j}$  for $ j=1,2,\ldots,E $. Each $Z_{j}$ is a $2E$-dimensional matrix, with
\begin{equation}
    \label{eq:Zmatrix_def}
    (Z_{j})_{j,j}=-(Z_{j})_{j+E,j+E}=1,
\end{equation}
and zero in all other entries. We define $\bH_n(\kv)$ by
\begin{equation}
  \label{eq:Hmatrix_def}
  \left(\bH_n(\kv)\right)_{j,j'}
  :=
  \ba_{n}^{*}Z_j g\left(e^{-i\theta_{n}}U_{\kv}\right)
  Z_{j'}\ba_{n},
  \qquad j,j' \in \{1, \ldots, E\}.
\end{equation}

\newcommand{\idxH}{I^+_s\big(\bH_n(\kv)\big)}
% Used to be \delta_{j,\M\left(\bH_n(\kv)\right)}

\begin{thm}
  \label{thm: sampling simplified}
  Let $\Gamma$ be a graph with $E$ edges and first Betti number
  $\beta$. Assume $\Gamma$ has no loops. Then for any rationally
  independent $\lv\in \R_{+}^{E}$, the nodal surplus distribution of
  $\Gamma_{\lv}$ is given by
  \begin{equation}
    P\left(\sigma=s\right)
    = \frac{1}{(2\pi)^{E}} \int_{\T^{E}}
    \sum_{n=1}^{2E} \idxH
    \frac{\ba_{n}^{*}\boldsymbol{L}\ba_{n}}{\tr\boldsymbol{L}}
    \dd \kv,
    \label{eq: nodal stat in the simplified thm}
  \end{equation}
  where
  $\boldsymbol{L}:=\diag\left(l_{1},l_{2}\ldots
    l_{E},l_{1},l_{2},\ldots l_{E}\right)$ and
  the indicator function $I^+_s(H)$ is equal to 1 if the matrix $H$ has exactly $s$ positive eigenvalues and 0 otherwise.
  % $\M\left(A\right)$
  % denotes the number of strictly negative eigenvalues of a matrix $A$.
\end{thm}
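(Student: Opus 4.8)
The plan is to combine the "nodal-magnetic" connection with an ergodic (equidistribution) argument on the torus $\T^E$. First I would recall the setup from Sections~\ref{subsec: nodal magnetic} and~\ref{subsec:hessian} (assumed available): adding a magnetic potential $\av\in\T^\beta$ to the Laplacian perturbs the eigenvalues $k_n^2(\av)$, and the nodal surplus of the $n$-th eigenfunction of the non-magnetic graph equals the Morse index (number of negative eigenvalues) of the Hessian of $k_n^2$ at the critical point $\av=0$ — this is the Berkolaiko/Colin de Verdi\`ere nodal-magnetic theorem. The matrix $\bH_n(\kv)$ defined in~\eqref{eq:Hmatrix_def} is, up to the positive scalar $\tr\boldsymbol{L}$ and a change of variables, a concrete representative of that Hessian expressed through the secular (unitary evolution) formalism: the eigenvalue condition is $\det(U_{\kv}-I)=0$, the magnetic parameters enter $\kv$ linearly, and the matrices $Z_j$ encode $\partial/\partial\kappa_j$ of $U_{\kv}$ while $g(e^{-i\theta_n}U_{\kv})$ supplies the second-order (resolvent-type) term that appears when one differentiates an implicitly-defined eigenvalue branch twice. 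So the first block of work is to verify that $I^+_s\big(\bH_n(\kv)\big)$ is exactly the indicator that the $n$-th branch contributes a point with nodal surplus $s$, and that $\ba_n^*\boldsymbol{L}\ba_n/\tr\boldsymbol{L}$ is the correct spectral-density weight — this is the standard weight converting counting in $n$ (Weyl counting of the $k$-eigenvalues) into an average over the secular manifold, as in the Barra–Gaspard / Berkolaiko–Winn measure.

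Second, I would make the equidistribution step precise. By~\cite[Theorem 2.1]{AloBanBer_cmp18} the nodal surplus has a well-defined distribution~\eqref{eq:prob_defn} for rationally independent $\lv$; the standard mechanism is that the points $\{k_n\lv \bmod 2\pi\}_{n}$ equidistribute on $\T^E$ with respect to a density proportional to $\sum_n \langle \text{weight}\rangle$, because $\lv$ rationally independent makes the flow $k\mapsto k\lv$ minimal on the torus and the secular equation $\det(e^{i\hat k}S - I)=0$ cuts out a hypersurface whose ``arc-length-in-$k$'' measure pushes forward to the weight $\ba_n^*\boldsymbol{L}\ba_n/\tr\boldsymbol{L}\, \dd\kv$. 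Concretely: along the $n$-th branch, $\dd k$ restricted to the secular variety relates to $\dd\kv$ on $\T^E$ through a Jacobian equal (up to $\tr\boldsymbol{L}$) to $\ba_n^*\boldsymbol{L}\ba_n$, since differentiating $\theta_n(k\lv)=0\ (\mathrm{mod}\ 2\pi)$ gives $\sum_j \ell_j (Z_j$-type term$)$. Hence
\[
P(\sigma=s)=\lim_{N\to\infty}\frac{|\sigma^{-1}(s)\cap[N]|}{|\G\cap[N]|}
= \frac{1}{(2\pi)^E}\int_{\T^E}\sum_{n=1}^{2E} I^+_s\big(\bH_n(\kv)\big)\,\frac{\ba_n^*\boldsymbol{L}\ba_n}{\tr\boldsymbol{L}}\,\dd\kv,
\]
where the generic (simple eigenvalue, non-vanishing on vertices) condition is used to discard a measure-zero subset of $\T^E$ on which eigenvalues of $U_{\kv}$ collide or $\bH_n$ is singular; the ``no loops'' hypothesis guarantees $S$ has the clean structure (no vertices of degree~2 after reduction, every edge genuinely two-sided) for which the $Z_j$/$\bH_n$ formulas were derived.

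Third, I would handle the bookkeeping that the sum over $n\in\{1,\dots,2E\}$ (the $2E$ eigenbranches of $U_{\kv}$) correctly reproduces the sum over spectral points: each positive eigenvalue $k_n^2$ of the graph corresponds, under $k\mapsto e^{ik\hat\lv}$, to one point on each of finitely many relevant branches, and the weights are normalized so that $\sum_{n}\ba_n^*\boldsymbol{L}\ba_n/\tr\boldsymbol{L}$ integrates to the mean spectral density, making $\sum_s P(\sigma=s)=1$ automatic; one also checks the branch with $\theta_n$ near $0$ (the "$k=0$" artifact) carries no nodal surplus and contributes to $s=0$ consistently, or is excluded by genericity.

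The main obstacle I expect is the second step done rigorously: justifying the interchange of the long-time (large-$N$) Birkhoff average with the torus integral uniformly, i.e.\ controlling the contribution of $\kv$ near the singular set $\Sigma^{\mathrm{sing}}$ (eigenvalue crossings of $U_{\kv}$, where $\bH_n$ blows up and $I^+_s$ is discontinuous) and near the secular-variety singularities, and verifying that the weight $\ba_n^*\boldsymbol L\ba_n$ is exactly the Jacobian of the change of variables with no missing lower-order terms. This is the same technical heart as in~\cite{BanBerSmi_ahp12, BanOreSmi_incoll08, AloBanBer_cmp18}, so I would lean on those arguments (Rauzy/Weyl equidistribution plus a careful Sard-type estimate on the measure of a neighborhood of $\Sigma^{\mathrm{sing}}$) to close it.
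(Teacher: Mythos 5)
Your overall architecture coincides with the paper's: encode the nodal surplus as the Morse index of the magnetic Hessian, expressed through the matrix $\bH$ of \eqref{eq:Hmatrix_def} (this is the content of Sections \ref{subsec: nodal magnetic}--\ref{subsec:hessian}, which you assume), and use rational independence of $\lv$ to replace the Ces\`aro average \eqref{eq:prob_defn} by an integral against the Barra--Gaspard measure $\mu_{\lv}$ on the secular manifold, giving $P(\sigma=s)=\mu_{\lv}\big(\bsigma^{-1}(s)\big)$. Up to that point your plan matches Theorem \ref{thm: nodal statistics} combined with Proposition \ref{prop: Hessian in terms of U}.

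The genuine gap is the passage from this hypersurface integral to the integral over all of $\T^{E}$ with the sum over the $2E$ eigenbranches, i.e.\ the content of Theorem \ref{thm: BGmeasure as spectral measures}, which is new in this paper and not available in the literature you propose to lean on (only the special case \cite[Thm.~3.4]{BerWin_tams10} exists). Your justification --- that $\dd k$ along the flow $k\mapsto\fr{k\lv}$ relates to the hypersurface measure through a Jacobian $\ba_{n}^{*}\boldsymbol{L}\ba_{n}$, obtained by differentiating $\theta_{n}(\fr{k\lv})=0$ in $k$ --- only re-derives the density of $\mu_{\lv}$ on $\mg$ (the $\hat{n}\cdot\lv$ density of Definition \ref{def: BG} together with the Colin de Verdi\`ere normal-vector formula); it does not by itself produce a Lebesgue integral in $\kv$ over the full torus. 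The extra dimension comes from fibering $\T^{E}$ over $\mg$ along the \emph{diagonal} direction $\kv\mapsto\kv+t\ones$, not along $\lv$, and making that rigorous requires: a globally continuous labeling of the eigenphases, which provably does not exist on $\T^{E}$ (Proposition \ref{prop: continuous eigenvalues}) and must be constructed on the cut cube $X$; the layer decomposition $\mg=\bigcup_{n}\mg_{n}$ with the Jacobian identity \eqref{eq: alternating sum}; and the Radon--Nikodym derivative of $\mu_{\lv}$ with respect to $\mu_{\ones}$. Your proposed ``Sard-type estimate near $\msing$'' addresses a different and easier issue. A secondary inaccuracy: the no-loops hypothesis has nothing to do with the structure of $S$ or the $Z_{j}$'s; it is needed because loops carry eigenfunctions supported on them, so $\mgen$ is no longer of full measure in $\mg$ and the formula must be renormalized and restricted as in Theorem \ref{thm: generalized theorem} --- harmless for the loop-free statement, but the reason you give is not the right one.
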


The ``no loops" assumption was introduced for readability only; the
complete version of this result appears as Theorem \ref{thm:
  generalized theorem} in section \ref{subsec: proof of main
  theorem}. We will also show that choice of the eigenvalue basis in
the definition of $\bH_n$ does not alter the integral in \eqref{eq:
  nodal stat in the simplified thm}.

\begin{rem}
  \label{rem:index_connections}
  The Cayley transform of a scattering matrix is related to the
  Dirichlet-to-Neumann map (see \cite[eq.~(5.4.8)]{BerKuc_graphs} in
  the context of metric graphs).  The appearance of an index such as
  $I^+_s\big(\bH_n(\kv)\big)$ suggests links to the
  Cox--Jones--Marzuola formula for the nodal surplus in terms of the
  Morse index of the DtN map \cite{CoxJonMar_iumj17}.
\end{rem}

Theorem~\ref{thm: sampling simplified} is built upon two major ingredients:
\begin{enumerate}
    \item  A method, originated in \cite{BarGas_jsp00} by Barra and Gaspard, that replaces certain spectral averages with integration over a hypersurface $\Sigma$ in the torus $\T^{E}$. In Theorem \ref{thm: BGmeasure as spectral measures} we show that the integration over the implicitly defined $\Sigma$ can be replaced by an explicit integration over the torus $\T^{E}$, in the spirit of \cite{BerWin_tams10,GnuSmi_ap06}.
    \item A connection between the nodal surplus and eigenvalue stability with respect to certain perturbations, observed in \cite{Ber_apde13,ColindeVerdiere_graphes,BerWey_ptrsa14}, and its formulation as a function on the hypersurface $\Sigma$, \cite{Ban_ptrsa14,AloBanBer_cmp18}. In sections \ref{subsec: nodal magnetic} and \ref{subsec:hessian} we compute this function explicitly in terms of the data used in the integral of Theorem \ref{thm: BGmeasure as spectral measures}.
\end{enumerate}

%%%%%%%%%%%
\subsection{The polytope of nodal surplus distributions of a given
  graph}\label{subsec: We and convex hull}
Theorem \ref{thm: sampling simplified} gives a computationally efficient way to calculate the nodal surplus for a given graph and a given choice of edge lengths.  On the other hand, Conjecture \ref{conj: Universality} claims convergence for any sequence of graphs and almost any choice of lengths.  We will now explain how to cover all choices of lengths for a given graph by testing only a few selected choices.

We note that the expression in the right-hand-side of (\ref{eq: nodal stat in the simplified thm}) is well-defined for every non-zero $\lv\in\R_{\ge0}^{E}$. In particular, given $e$ we define $W_{e,s}$ by taking $\lv$ with $\lve=1$ and zero elsewhere,
\begin{equation}
  \label{eq: Wej def}
  W_{e,s} := \frac{1}{\left(2\pi\right)^{E}} \int_{\T^{E}}
  \sum_{n=1}^{2E} \idxH
  \frac{|(\ba_{n})_{e}|^2+|(\ba_{n})_{e+E}|^2}{2}\dd \kv.
\end{equation}
This way, (\ref{eq: nodal stat in the simplified thm}) may be written as a convex combination of $W_{e,s}$'s with weights given by the normalized edge lengths.  Namely,
\begin{align}
  \label{eq: P convex in W}
  P(\sigma = s) = \frac{1}{L}\sum_{e\in\E}\lve W_{e,s},
  \qquad
  L := \sum_{e\in \E} \ell_e,
\end{align}
for rationally independent $\lv$.  The nodal surplus distribution of
$\Gamma_\lv$ is characterized by the vector
$\vec{P_\lv}\in \R_{\ge0}^{\beta+1}$ of probabilities $P(\sigma=s)$,
$s=0,\ldots ,\beta$.  The set of all such vectors for a given graph
$\Gamma$ (and varying $\lv$) satisfies
\begin{equation}\label{eq: convex hull}
  \overline{\set{\vec{P_\lv}}{\lv\in\L(\Gamma)}}
  = \conv\set{\vec{W}_e}{e \in \E},
\end{equation}
where $\conv$ is the convex hull and $\vec{W}_{e}=(W_{e,0},W_{e,1},\ldots, W_{e,\beta})$ and the closure is with respect to the standard topology on $ \R^{\beta+1} $. Thus to understand the set of all nodal surplus distributions of a given graph we only need to understand the extreme points $\vec{W}_e$.  Furthermore,
this task is often reduced by the presence of symmetries in the underlying (discrete) graph $\Gamma$.  It will be shown in Theorem \ref{thm: symmetry theorem} that if $e_1$ and $e_2$ are related by a symmetry, then $\vec{W}_{e_1}=\vec{W}_{e_2}$.  All of the above remains valid for graphs with loops after a small modification of equation \eqref{eq: P convex in W}, see equation \eqref{eq: Pvec convex in We}.

We bound the $\lv$ dependence in
Conjecture~\ref{conj: Universality} and equations \eqref{eq: conj dks
  convergence}-\eqref{eq: conj variance growth} in terms of the
extreme points of the convex hull. Introducing new random variables
$\omega_{e}$ supported on $\{0,1,\ldots \beta\}$ with
probabilities given by the vector $\vec{W}_e$, namely
$P(\omega_{e}=s):=W_{e,s}$, we have the following.

\begin{lem}\label{lem: upper bound We}
Given a graph $\Gamma$, let $\varepsilon:=\sqrt{\frac{\max_{e\in\E}\var(\omega_{e})}{\min_{e\in\E}\var(\omega_{e})}}-1$. Then, for any $\lv\in\L(\Gamma)$,
\begin{align}
    \label{eq: variance bounds} \min_{e\in\E}\var(\omega_{e})\le & \var(\sigma)\le\max_{e\in\E}\var(\omega_{e}),\quad\text{ and }\\
   \label{eq:upper bound}\sup_{\lv\in\L(\Gamma)}d_{KS}(\sigma,N(\sigma))\le & \max_{e\in\E}d_{KS}(\omega_{e},N(\omega_{e}))+\varepsilon.
\end{align}
\end{lem}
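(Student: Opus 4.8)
The plan is to treat the nodal surplus $\sigma$ of $\Gamma_\lv$ as a finite mixture of the single-edge random variables $\omega_e$ and to propagate this structure through both the variance identity and the Kolmogorov--Smirnov distance. Writing $p_e:=\ell_e/L$, equation \eqref{eq: P convex in W} says precisely that $P(\sigma=s)=\sum_{e\in\E}p_e\,P(\omega_e=s)$, i.e.\ $\sigma$ is distributed as the mixture of the $\omega_e$ with weights $p_e$. The first thing I would record is that all the $\omega_e$ share the common mean $\beta/2$: since $\mathbb{E}(\sigma)=\beta/2$ for every rationally independent $\lv$ by \cite{AloBanBer_cmp18}, the mixture identity gives $\sum_{e\in\E}\ell_e\big(\mathbb{E}(\omega_e)-\tfrac{\beta}{2}\big)=0$ for all $\lv\in\L(\Gamma)$; the left-hand side is a linear function of $\lv$ vanishing on a dense subset of $\R_{+}^{E}$, hence vanishing identically, so $\mathbb{E}(\omega_e)=\beta/2$ for every $e$. (The same argument applied to $P(\sigma=s)$ itself recovers $W_{e,s}=W_{e,\beta-s}$, though this is not needed.)

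For the variance bounds \eqref{eq: variance bounds} I would then compute, using the mixture identity and the common mean,
\[
\var(\sigma)=\mathbb{E}\big[(\sigma-\tfrac{\beta}{2})^2\big]=\sum_{e\in\E}p_e\,\mathbb{E}\big[(\omega_e-\tfrac{\beta}{2})^2\big]=\sum_{e\in\E}p_e\,\var(\omega_e),
\]
a convex combination of $\{\var(\omega_e)\}_{e\in\E}$, which therefore lies between $\min_e\var(\omega_e)$ and $\max_e\var(\omega_e)$.

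For \eqref{eq:upper bound} I would insert an auxiliary random variable $M$ whose law is the mixture, with weights $p_e$, of the Gaussian laws of $N(\omega_e)$, and split $d_{KS}(\sigma,N(\sigma))\le d_{KS}(\sigma,M)+d_{KS}(M,N(\sigma))$. The first term is controlled because $d_{KS}$ contracts under mixing with a fixed weight vector: for every $t$,
\[
\big|P(\sigma\le t)-P(M\le t)\big|\le\sum_{e\in\E}p_e\big|P(\omega_e\le t)-P(N(\omega_e)\le t)\big|\le\max_{e\in\E}d_{KS}\big(\omega_e,N(\omega_e)\big).
\]
For the second term, let $\Phi$ be the standard normal CDF and $v_e:=\var(\omega_e)$; since the Gaussians $N(\omega_e)$ and $N(\sigma)$ are all centred at $\beta/2$, we have $P(M\le t)=\sum_e p_e\,\Phi\big((t-\tfrac{\beta}{2})/\sqrt{v_e}\big)$ and $P(N(\sigma)\le t)=\Phi\big((t-\tfrac{\beta}{2})/\sqrt{\var\sigma}\big)$, so it suffices to bound $\big|\Phi(u/\sqrt{v_e})-\Phi(u/\sqrt{\var\sigma})\big|$ for each $e$ and $u$ and then average over $e$. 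Here I would use the elementary estimate $\sup_{u\in\R}\big|\Phi(u)-\Phi(ru)\big|\le r-1$ for $r\ge1$, which follows from $\Phi(ru)-\Phi(u)=\int_u^{ru}\varphi\le(r-1)\,u\,\varphi(u)\le r-1$ for $u\ge0$ (with $\varphi$ the standard normal density, using $\sup_{u\ge0}u\varphi(u)=(2\pi e)^{-1/2}<1$) together with the symmetry $\Phi(-u)=1-\Phi(u)$. Since $v_e$ and $\var\sigma=\sum_e p_e v_e$ both lie in $[\min_e v_e,\max_e v_e]$, the ratio of the larger to the smaller of $\{v_e,\var\sigma\}$ is at most $\max_e v_e/\min_e v_e$, whence $\big|\Phi(u/\sqrt{v_e})-\Phi(u/\sqrt{\var\sigma})\big|\le\sqrt{\max_e v_e/\min_e v_e}-1=\varepsilon$; averaging over $e$ keeps the bound $\varepsilon$, so $d_{KS}(M,N(\sigma))\le\varepsilon$. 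Adding the two estimates proves the bound for each $\lv$, and since the right-hand side does not depend on $\lv$ it also bounds the supremum over $\lv\in\L(\Gamma)$.

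The mixture-contraction of $d_{KS}$ and the variance computation are routine. The one genuinely quantitative ingredient, and the place where I expect to spend the most care, is the Gaussian-scaling estimate $\sup_u|\Phi(u)-\Phi(ru)|\le r-1$ and the reduction of the mixture-versus-single-Gaussian distance to it; one should also note that in the degenerate case where $\min_e\var(\omega_e)=0$ while some $\var(\omega_e)>0$ the right-hand side of \eqref{eq:upper bound} is $+\infty$, so there is nothing to prove there.
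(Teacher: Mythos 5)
Your proof is correct and follows essentially the same route as the paper's (Appendix~\ref{sec: appendix proof of upper bound}): the mixture identity with common mean $\beta/2$ gives $\var(\sigma)=\sum_e p_e\var(\omega_e)$ and hence \eqref{eq: variance bounds}, and the KS bound combines the averaging/contraction property of $d_{KS}$ with the estimate that two Gaussians with equal means and standard deviations $s_1\le s_2$ are at KS distance at most $s_2/s_1-1$, which is exactly the paper's computation. The only cosmetic differences are that you insert the Gaussian mixture $M$ as an intermediate (the paper instead compares each $\omega_e$ first to $N(\omega_e)$ and then $N(\omega_e)$ to $N(\sigma)$, averaging at the start), and that you derive the common mean from $\mathbb{E}(\sigma)=\beta/2$ for all $\lv\in\L(\Gamma)$ plus a density argument rather than from the symmetry $W_{e,s}=W_{e,\beta-s}$ of \eqref{eq: Wej symmetry}; also, for graphs with loops the mixture weights should be $\lve c_e$ as in \eqref{eq: Pvec convex in We} rather than $\lve$, which changes nothing in your argument.
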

This lemma is proved in Appendix \ref{sec: appendix proof of upper bound}. It follows that Conjecture \ref{conj: Universality} holds if the following two sufficient conditions hold:
\begin{enumerate}
	\item the upper bound in \eqref{eq:upper bound} converges to zero in $\beta$ uniformly on $G(\beta)$, and
	\item for every graph in $G(\beta)$ and any edge $ e $ of that graph,  $\var(\omega_{e})$ is of order $\beta$.
\end{enumerate}
We remark that these are actually necessary and sufficient conditions, however, we will not show that here. To show it, one needs to provide a lower bound on the supremum in equation \eqref{eq:upper bound}, which can be done but the proof is cumbersome and we only need the upper bound in this work.

%\pagebreak

%%%%%%%%%%%
\subsection{Numerical evidence in support of Conjecture \ref{conj: Universality}}
%To provide supportive evidence for Conjecture \ref{conj:
%	Universality}, we construct a numerical experiment in which for any given discrete graph $ \Gamma $ of a given first Betti number $ \beta $, we calculate the two properties $ \var(\omega_{e}),~~ d_{KS}(\omega_{e},N(\omega_{e}))$ for every edge $ e $. From these we deduce a parameter
%	\[\max_{e\in\E}d_{KS}(\omega_{e},N(\omega_{e}))+\sqrt{\frac{\max_{e\in\E}\var(\omega_{e})}{\min_{e\in\E}\var(\omega_{e})}},\]
%which we simply call the ``\emph{upper bound}" due to Lemma \ref{lem: upper bound We} which states that this parameter is an upper bound to $ d_{KS}(\sigma,N(\sigma)) $ for the nodal surplus distribution $ \sigma $ of $ (\Gamma,\lv) $ for any rationally independent $ \lv $. 

%%
\subsubsection{Setting}\label{subsec: exp setting}

To provide supportive evidence for Conjecture \ref{conj:
  Universality}, we compute, numerically, the bounds in
\eqref{eq:upper bound} and \eqref{eq: variance bounds} for $ 26 $ different discrete graphs with first Betti number ranging between $ \beta=6 $ to $ \beta=55 $. These $ 26 $ graphs that were chosen are
\begin{enumerate}
\item Complete graphs on $n$ vertices for $n\in\{5,6\ldots 12\}$.
\item Periodic ladder graphs, see Figure~\ref{fig:ladder and Z2}, of
  $n$ steps for $n\in\{6,10,14,18,22\}$.
\item Periodic square lattices of $n^{2}$ vertices (Cayley graphs of
  $\Z^{2}/n\Z^{2}$, see Figure~\ref{fig:ladder and Z2}) for
  $n\in\{4,5,6,7\}$.
\item Random $5$-regular graphs. We choose one graph at random
  (uniformly) out of all possible $5$-regular graphs of $n$ vertices
  for each $n\in\{12,14,16,18,20\}$.
\item Random Erd\H{o}s-R\'enyi graphs with $n$ vertices: each pair of
  vertices is connected by an edge with probability $0.75$
  independently of all others. We choose one random graph for each
  $n\in\{9,10,11,12\}$.
\end{enumerate}

\begin{rem}\label{rem: random graphs}
  We emphasize that in our experiments we sampled one graph from each
  class of random graphs of a given size.  The distribution of a
  nodal-like quantity over an ensemble of random graphs of fixed size
  is a related but distinct problem, and we do not address it here.
  We also remark that the first Betti number of an Erd\H{o}s-R\'enyi
  graph of size $n$ is random (but, in an appropriate sense, growing
  with $n$). Thus, the first Betti number assigned to each such graph,
  both in Figure \ref{fig:KS distance} and Figure \ref{fig:Variances},
  was calculated after the random graph was generated.
\end{rem}
\begin{rem}
Computations of another family of graphs, which we do not present here, was suggested by the authors of \cite{KolSar20} due to their constant spectral gap property. It includes the graphs in \cite[fig.~13,e,f]{KolSar20} and their $\mathcal{T}$ iteration (defined in \cite{KolSar20}). These graphs were investigated and their data (variance and KS-distance) agreed with the data presented in Figure \ref{fig:Variances} and Figure \ref{fig:KS distance}.
\end{rem}

\begin{figure}
    \centering
    \includegraphics[width=0.6\paperwidth]{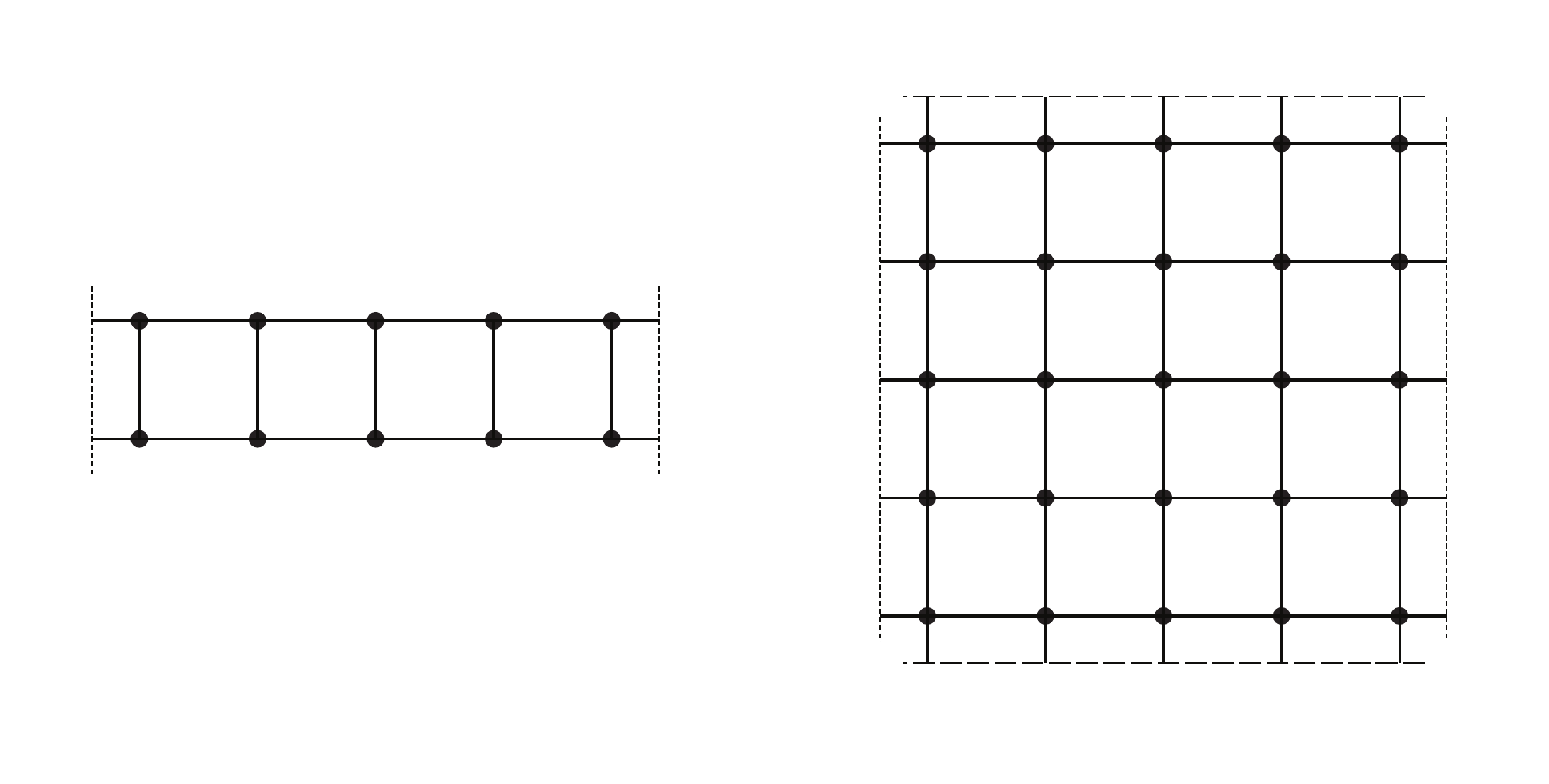}
    \caption{(left) Periodic ladder graph with $n=5$. (right) periodic
      square lattices of $n^{2}=5^2$ vertices. Dotted and dashed lines
      indicates gluing.}
    \label{fig:ladder and Z2}
\end{figure}

Given each of these graphs we compute its $\omega_{e}$'s as follows. We approximate the integral in \eqref{eq: Wej def} as
\[P(\omega_{e}=s)\approx\frac{\sum_{i=1}^{N}h(\kv_{i})}{N},\]
with
\begin{equation*}
  h(\kv) := \sum_{n=1}^{2E}
  \idxH
  \frac{|(\ba_{n})_{e}|^2+|(\ba_{n})_{e+E}|^2}{2},
\end{equation*}
by randomly sampling $N=\frac{10^6}{2E}$ points
$\kv_{i}\in[0,2\pi]^{E}$ with uniform distribution. Heuristically,
each sample provides the data of $2E$ eigenfunctions (see, for
example, Theorem \ref{thm: BGmeasure as spectral measures} and Remark \ref{rem: average over orbits}), and so sampling $N=\frac{10^6}{2E}$ points provides the data of $10^{6}$ eigenfunctions. For a rigorous estimate of accuracy, we may apply Bernstein inequalities\footnote{We apply Bernstein inequality for bounded zero-mean independent  random variables. We consider the random variables  $X_{i}:=f(\kv_{i})-P(\omega_{e}=s)$, which have zero-mean and are bounded by $|X_{i}|\le 1$.}, by which the approximation error can be quantified as follows:
\[
  \forall\delta>0,\qquad
  P\left(\left|P(\omega_{e}=s)-\frac{\sum_{i=1}^{N}h(\kv_{i})}{N}\right|>\delta\right)
  \le 2 e^{-N\frac{\delta^{2}}{2(1+\frac{\delta}{3})}}.
\]
Another simplification is possible due to Theorem \ref{thm: symmetry theorem}, which states that $\omega_{e}=\omega_{e'}$ if $e$ and $e'$ are related by a symmetry of the underlying discrete graph. In particular, in complete graphs and square lattices, every pair of edges can be related by a symmetry and so there is only one $\omega_e$ to compute.

We may estimate the running time of our algorithm as $O(N E^{3})$. To
see that, note that we have $N$ iterations. In every iteration we
first compute the eigenvalues and eigenvectors of $U_{\kv}$ at the
cost $O(E^{3})$. Given the eigenvalues and eigenvectors of $U_{\kv}$, generating the $2E$ matrices $\bH_n(\kv)$ is $O(E^{3})$ and computing their signature is $O(E^{3})$.

\subsubsection{Experimental results}\label{subsec: exp result}
Our numerical results strongly support the conjecture. Figure
\ref{fig:KS distance} shows the upper bound \eqref{eq:upper bound} on
$d_{KS}(\sigma,N(\sigma))$. The upper bound is presented as a function
of $\beta$, and a uniform decrease can be observed as $\beta$
grows. Figure \ref{fig:Variances} shows $\var(\omega_{e})$ for
all graphs and all edges. The variance is presented as a function of
$\beta$, and its linear growth is evident. All values obtained in the
experiment satisfy
\[
  \frac{\beta}{10}\le\var(\omega_{e})\le\frac{\beta}{5}.
\]
In particular, all graphs in the experiment have variance smaller than $\beta/4$ which is the variance for graphs with disjoint
cycles, by Theorem \ref{thm: TOC}. We believe that $ \frac{\beta}{4} $ should be the upper bound for all graphs.

\begin{figure}
  \includegraphics[width=0.5\paperwidth]{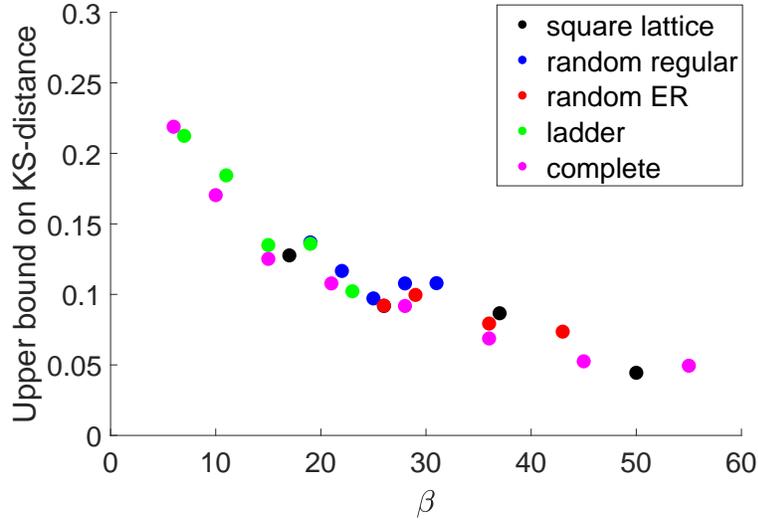}
  \caption{\label{fig:KS distance}The upper bound, by \eqref{eq:upper
      bound}, on $d_{KS}(\sigma,N(\sigma))$, as a function of the
    first Betti number $\beta$.  Each point corresponds to a single
    graph. The colors of points indicate their family: (black) square
    lattices, (blue) 5-regular graphs, (red) Erd\H{o}s-R\'enyi graphs,
    (green) ladder graphs and (magenta) complete graphs.}
\end{figure}

\begin{figure}
  \includegraphics[width=0.5\paperwidth]{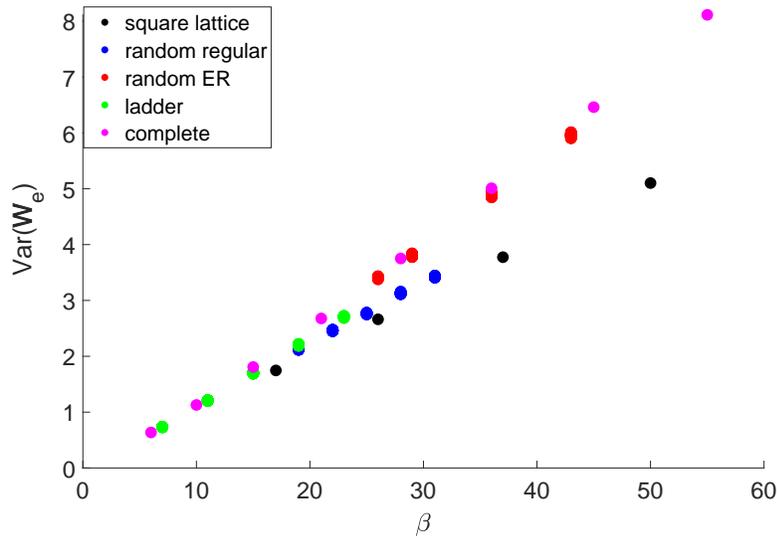}
  \caption{\label{fig:Variances}The variance,
    $\var{(\omega_{e})}$, as a function of the first Betti
    number $\beta$. The points representing $\var{(\omega_{e})}$,
    are plotted for each edge $e$ of every graph.  In many cases, the differences
    between d$\var{(\omega_{e})}$ for different edges are barely visible. The
    colors of points indicate their family: (black) square lattices,
    (blue) 5-regular graphs, (red) Erd\H{o}s-R\'enyi graphs, (green)
    ladder graphs and (magenta) complete graphs.}
\end{figure}

The convergence of the $\omega_{e}$'s to Gaussian and the
linear growth of their variance can also be seen in appendix \ref{sec:
  appendix experiment}.  There we show the full probability
distribution of the ``worst'' $\omega_{e}$ for each graph
considered.  More precisely, we plot the $\omega_{e}$ which
maximizes
$d_{KS}\left(\omega_{e},N(\omega_{e})\right)$ among
all edges of the graph.

\begin{rem}
  In this experiment we cover the nodal statistics of $26$ graphs,
  with $\beta$ ranging from $6$ to $55$ and $E$ ranging from $10$ to
  $98$.  The efficiency of our algorithm using Theorem \ref{thm:
    sampling simplified} is a major improvement over the direct
  computation of eigenfunctions and their zeros. For comparison,
  computing the first $10^{6}$ eigenfunctions of a graph with $E=10$
  can take more than a day (for a given choice of edge lengths), while
  our algorithm will compute the nodal surplus statistics (for
  all rationally independent edge lengths) in a few minutes.
\end{rem}

%%%%%%%%%%%%%%%%%%%%%%%%%%%%%%%%%%%%%%%%%%%%%%%%%%%%%%%%%%%%%%%%%%%%
\section{Secular averages}
\label{sec:secular}

In this section we establish a new analytic expression for computing
certain spectral averages of standard graphs with rationally
independent lengths.  In section~\ref{sec:secular_manifold} we recall
that the spectrum of a standard graph is given as the roots of the
characteristic (secular) equation or, equivalently, the intersection
times of the \emph{secular manifold} by a flow.  Barra and Gaspard
observed in \cite{BarGas_jsp00} that since the flow is ergodic, some
spectral averages may be computed by integrating over the secular
manifold with a suitable measure, which we review in
section~\ref{sec:BGidea}.  In section~\ref{sec:torus_average} we adapt
this approach for more efficient numerical computation by replacing
the integral over the (implicitly defined) secular manifold with an
integral over the embedding torus. The main result of this
section is Theorem~\ref{thm: BGmeasure as spectral measures}.

%%%%%%%%%%%
\subsection{The secular manifold}
\label{sec:secular_manifold}

The discrete graphs $\Gamma$ we consider are undirected. An undirected
graph $\Gamma$ can be made into a bi-directed graph by replacing each
edge $e$ with two directed edges (sometimes called bonds) of opposite
orientation.  Given a directed edge $b$ we denote its reversed version
by $\hat{b}$. We number the directed edges by
$\{1,2,\ldots E,\hat{1},\hat{2}\ldots \hat{E}\}$. Let $i$ and $j$ be
two directed edges such that $i$ terminates at a vertex $v$ and $j$
originates from a vertex $u$.  If $u=v$ we write
$i\xrightarrow[v]{}j$, otherwise we write
$i\not\rightarrow{}j$.  The
\emph{bond scattering matrix}, $S$, is a $2E\times2E$ real orthogonal
matrix defined as follows:
\begin{equation}\label{eq: S matrix}
S_{j,i}=\begin{cases}
\frac{2}{\deg v} & i\xrightarrow[v]{}j\,\,\,\text{and}\,\,i\ne\hat{j}\\
\frac{2}{\deg v}-1 & i\xrightarrow[v]{}j\,\,\,\text{and}\,\,i=\hat{j}\\
0 & i\not\rightarrow{}j
\end{cases}.
\end{equation}
Recall our definition of the unitary evolution matrix, $U_{\kv}:=e^{i\hat{\kappa}}S$, associated to a point $\kv\in\T^{E}$ (see Definition \ref{def:The-unitary-evolution}).
\begin{defn}\label{def: Sigma}
The \emph{secular manifold }is the torus subset $\mg\subset\T^{E}$
for which $U_{\kv}$ has eigenvalue $1$. Namely,
\[
\mg:=\set{\kv\in\T^{E}}{\det\left(1-U_{\kv}\right)=0}.
\]
Its set of regular points is defined as,
\[
\mreg:=\set{\kv\in\mg}{\nabla\det\left(1-U_{\kv}\right)\ne0},
\]
and the set of singular points is $\msing:=\mg\setminus\mreg$.
\end{defn}

\begin{rem}
  \label{rem:Sigma_sets_properties}
  The sets $\mg, \mreg$ and $\msing$ can be
  characterized as the sets for which $\dim(\ker(1-U_{\kv}))\ge 1$,
  $\dim(\ker(1-U_{\kv}))= 1$ and $\dim(\ker(1-U_{\kv})) > 1$
  correspondingly, \cite{BerWin_tams10, CdV_ahp15}.  The regular part
  $\mreg$ is an $E-1$ dimensional oriented smooth\footnote{Even
    real analytic \cite{AloBan19}.} Riemannian manifold, \cite{CdV_ahp15}.  
\end{rem}

We use the notation $\fr{\xv}$ to denote the
remainder, modulo $2\pi$, in each coordinate.  The secular manifold
$\mg$ can be used to compute the spectrum of the standard graphs
$\Gamma_{\lv}$ for any choice of $\lv\in\R_{+}^{E}$.

\begin{lem}
  \label{lem: secular equation and isomorphism}
  \cite{Bel_laa85,KotSmi_ap99,BarGas_jsp00,CdV_ahp15} Let $\Gamma$ be a graph
  and let $\mg$ be its secular manifold. Given a fixed choice of
  $\lv\in\R_{+}^{E}$, the spectrum of $\Gamma_{\lv}$ is characterized
  by
  \begin{enumerate}
  	\item $ k^{2} $  is an eigenvalue of $ \Gamma_{\lv} $ if and only if $ \fr{k\lv} $ lies in $ \mg $.
  	\item A non-zero eigenvalue $k^2\ne0$ is simple if $\fr{k\lv}$ lies in $ \mreg $, and has multiplicity if $\fr{k\lv}$ lies in $ \msing$.
  	\item The singular set $ \msing $ has positive co-dimension in $ \mg $, $ \dim(\msing)\le E-2 $, and so the regular set $ \mreg $ has full measure in $ \mg $.
  \end{enumerate}
\end{lem}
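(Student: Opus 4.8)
The plan is to recast the Laplace eigenvalue equation on $\Gamma_{\lv}$ as the fixed-point equation $U_{k\lv}\vec{a}=\vec{a}$, following the plane-wave method of Kottos--Smilansky \cite{KotSmi_ap99} (see also \cite{BerKuc_graphs}). Fix $k>0$. On each directed edge $b$, parametrized so that $x_b$ runs from $0$ at the origin vertex of $b$ to $\ell_b$ at its terminus, a solution of $-f''=k^{2}f$ is a combination of $e^{ikx_b}$ and $e^{-ikx_b}$; recording for every directed edge $b$ the amplitude $a_b$ of the wave emanating from the origin vertex of $b$ yields a linear isomorphism $f\mapsto \vec{a}=(a_1,\dots,a_E,a_{\hat1},\dots,a_{\hat E})\in\C^{2E}$ between the space of edge-wise solutions and $\C^{2E}$ (the two exponentials are linearly independent on $[0,\ell_b]$ for every $\ell_b>0$, so no degeneracy occurs). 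Continuity together with the Kirchhoff derivative condition at a vertex $v$ is, by a direct computation with the vertex scattering matrix $\tfrac{2}{\deg v}J-\id$ (the blocks assembled into $S$ in \eqref{eq: S matrix}), equivalent to ``outgoing amplitudes at $v$ equal the scattering matrix times incoming amplitudes at $v$'', while propagating each wave along its edge contributes the phase $e^{ik\ell_e}$ recorded by the diagonal factor $e^{i\hat{\kappa}}$ with $\kappa_e=k\ell_e$. Hence the Neumann--Kirchhoff conditions at all vertices simultaneously are equivalent to $\vec{a}=e^{i\hat{\kappa}}S\,\vec{a}=U_{k\lv}\vec{a}$.

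Parts (1)--(2) then follow. The isomorphism above restricts to an isomorphism between the $k^{2}$-eigenspace of $\Delta$ and $\ker(1-U_{k\lv})$, so for $k>0$ the multiplicity of $k^{2}$ equals $\dim\ker(1-U_{k\lv})$. Since $e^{i\hat{\kappa}}$, and hence $U_{\kv}$, depends on $\kv\in\R^{E}$ only modulo $2\pi$ in each coordinate, $\det(1-U_{k\lv})=\det(1-U_{\fr{k\lv}})$; therefore for $k>0$, $k^{2}$ is an eigenvalue iff $\fr{k\lv}\in\mg$, proving (1) in that range, and it is simple iff $\dim\ker(1-U_{\fr{k\lv}})=1$. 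By Remark~\ref{rem:Sigma_sets_properties} this last condition is exactly $\fr{k\lv}\in\mreg$, while $\dim\ker\ge 2$ is exactly $\fr{k\lv}\in\msing$, which gives (2). The excluded value $k=0$ is handled separately: on a connected standard graph $\ker\Delta$ is spanned by the constants and is one-dimensional, while $U_{\vec 0}=S$ has the all-ones vector as an eigenvector for eigenvalue $1$ (a one-line check from \eqref{eq: S matrix}), so $\fr{0\cdot\lv}=\vec 0\in\mg$ as well, completing (1).

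For part (3) one invokes the analytic geometry of $\mg$. Writing $p(\kv):=\det(1-U_{\kv})$, a short manipulation of $\overline{p(\kv)}=\det(1-U_{\kv}^{*})$ shows that $e^{-i(\kappa_1+\dots+\kappa_E)}p(\kv)$ is, up to a fixed unimodular constant, a real-valued trigonometric polynomial; it is not identically zero (the secular equation is non-trivial), so $\mg$ is the zero set of a non-constant real-analytic function on $\T^{E}$, whose smooth $(E-1)$-dimensional locus is exactly $\mreg$ (Remark~\ref{rem:Sigma_sets_properties}, \cite{CdV_ahp15,BerWin_tams10,AloBan19}). The singular set $\msing$ is the subset of $\mg$ on which all $(2E-1)\times(2E-1)$ minors of $1-U_{\kv}$ vanish, hence a proper real-analytic subset of $\mg$ — proper because $\mreg\ne\emptyset$, e.g.\ by applying (1)--(2) to a simple eigenvalue, which exists generically \cite{BerLiu_jmaa17}. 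Using that $\mg$ is a pure $(E-1)$-dimensional real-analytic set whose double-eigenvalue stratum is contained in a set of codimension at least $2$ in $\T^{E}$ (the structural input of \cite{CdV_ahp15,BerWin_tams10}), one concludes $\dim\msing\le E-2$, and hence $\mreg$ has full $(E-1)$-dimensional measure in $\mg$.

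\textbf{The main difficulty.} Steps (1)--(2) are standard quantum-graph bookkeeping. The real content is (3): the bound $\dim\msing\le E-2$ is \emph{not} formal — a real-analytic hypersurface can have its critical locus of full codimension-one dimension — and genuinely requires the structure theory of the secular manifold, namely that $\mg$ is a real-analytic hypersurface in $\T^{E}$ whose set of multiple eigenvalues has codimension at least $2$. In the write-up I would import this from \cite{CdV_ahp15,BerWin_tams10} (real-analyticity from \cite{AloBan19}) rather than reprove it; the part one \emph{can} recover cheaply is the Remark's identification $\mreg=\{\kv:\dim\ker(1-U_{\kv})=1\}$, via the adjugate formula for $\nabla p$, which at a one-dimensional kernel spanned by a unit vector $\ba$ has components proportional to $|(\ba)_e|^{2}+|(\ba)_{e+E}|^{2}$, summing to $1\ne 0$.
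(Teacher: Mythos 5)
Your proposal is correct and follows essentially the same route as the paper, which offers no proof of its own but attributes parts (1)--(2) to the standard bond-scattering correspondence of \cite{Bel_laa85,KotSmi_ap99,BarGas_jsp00} (exactly the plane-wave argument you reconstruct, including the separate treatment of $k=0$) and parts (2)--(3) to \cite[Thm.~1.1]{CdV_ahp15}, which is precisely the structural input you import for the codimension bound $\dim(\msing)\le E-2$. Your added justification of Remark~\ref{rem:Sigma_sets_properties} via the adjugate of $1-U_{\kv}$ is consistent with how that fact is used later in the paper and introduces no gap.
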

Part (1) of the lemma, served as the motivation for constructing the secular manifold in \cite{BarGas_jsp00}, and was already deduced implicitly in \cite{Bel_laa85,KotSmi_ap99}. Parts (2) and (3) can be found in \cite[Theorem 1.1]{CdV_ahp15}.
\begin{figure}[h!]
  \begin{subfigure}{0.2\textwidth}
\includegraphics[width=1\textwidth]{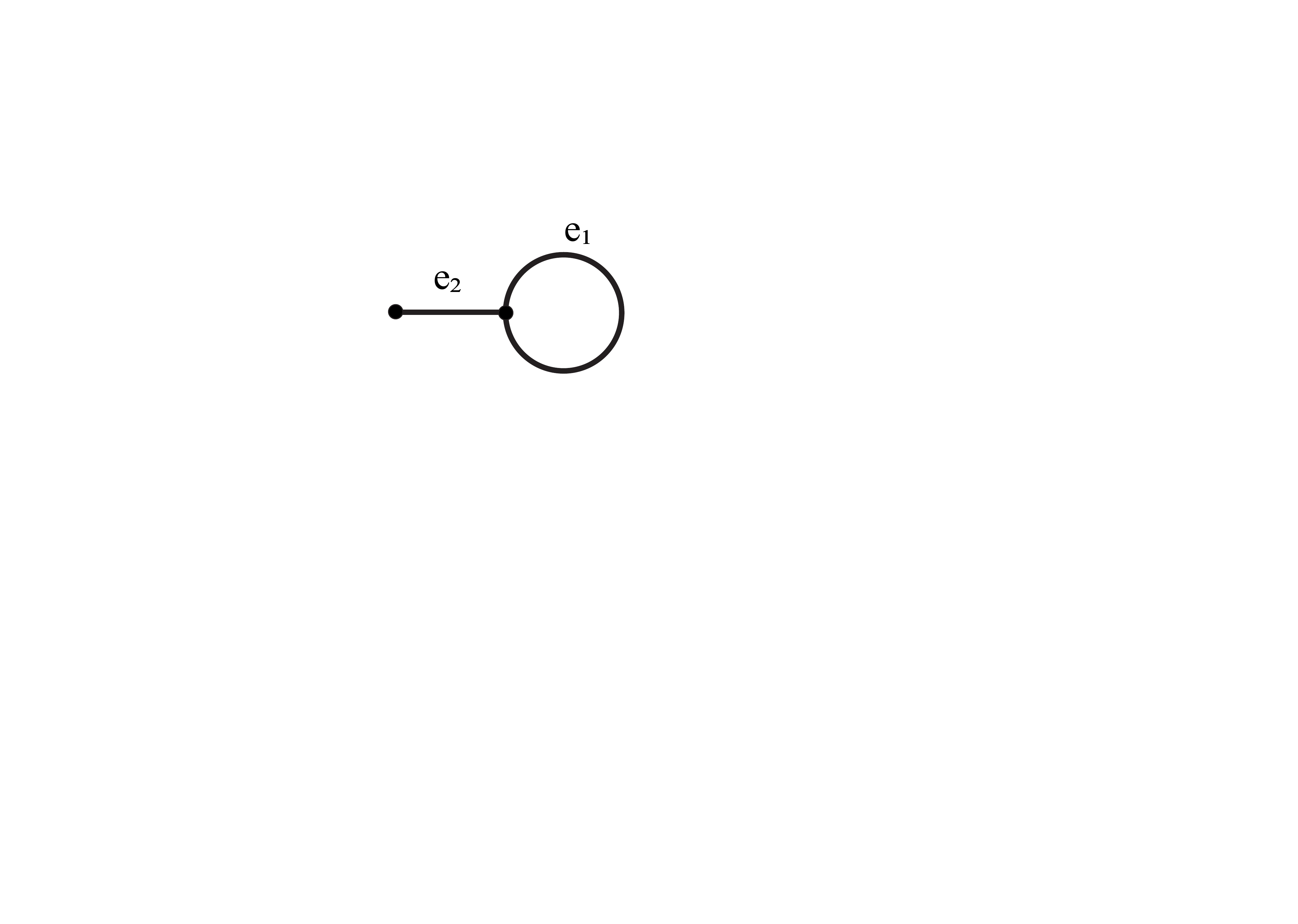}
\end{subfigure}
\begin{subfigure}{0.75\textwidth}
\includegraphics[width=1\textwidth]{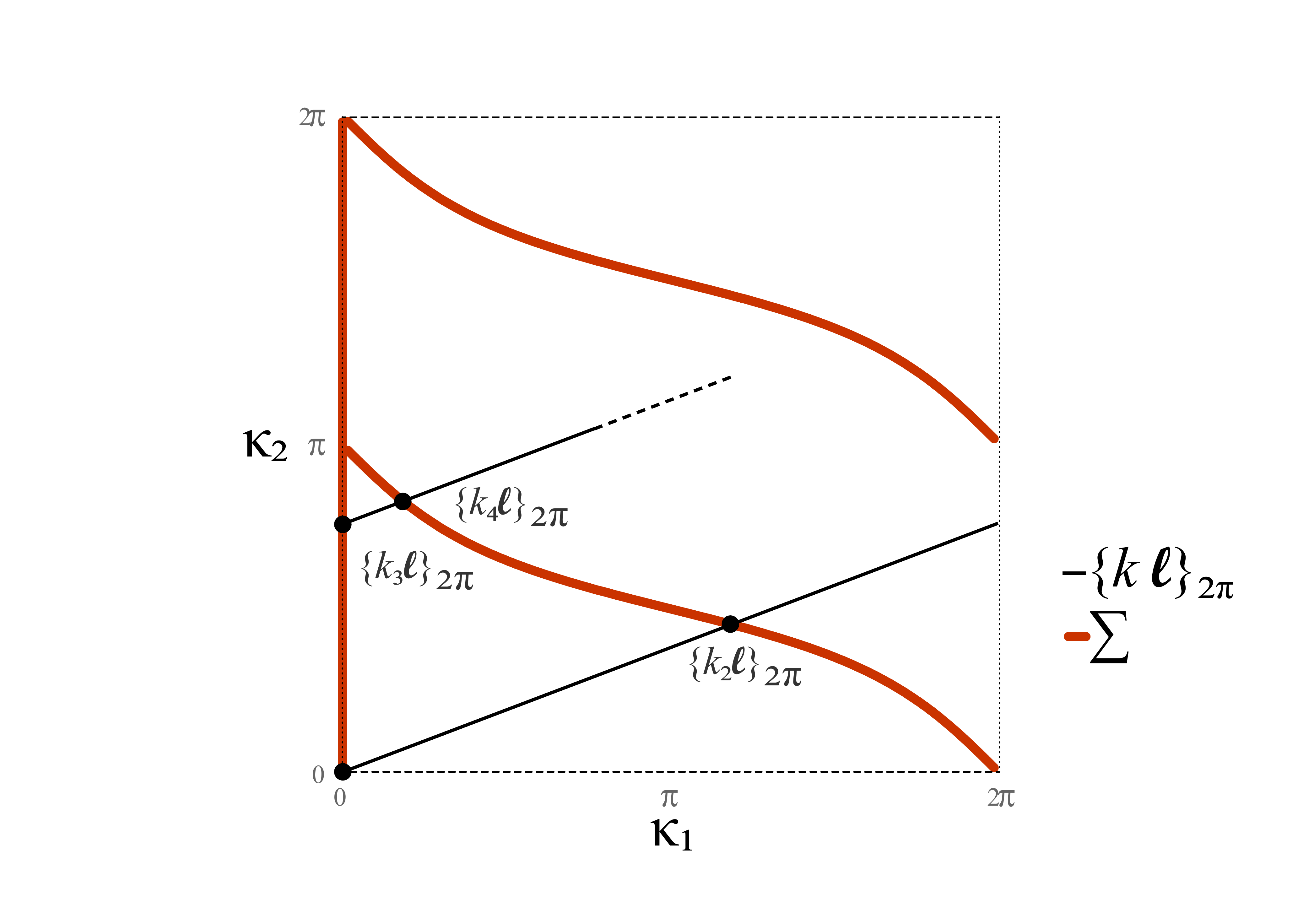}
\end{subfigure}

\caption{(left) a graph $\Gamma$ with two edges. (right) the secular manifold $\mg$ and the flow $k\mapsto\fr{k\lv}$ intersecting it. There are two singular points $\msing=\{(0,\pi),(0,2\pi)\}$ (and their identifications).}

\label{fig: flow on secman}
\end{figure}
The lemma is illustrated in Figure \ref{fig: flow on secman}.
The (discrete) graph $\Gamma$ determines the secular manifold
$\mg$, while the lengths $\lv$ determine the linear flow on
the torus, $k\mapsto \fr{k\lv}\in\T^{E}$. The (square-root)
eigenvalues of $\Gamma_{\lv}$, $\{k_{n}\}_{n\in\N}$, are the ``hitting
times" for which $\fr{k\lv}\in\mg$.  This simple but fruitful viewpoint was first introduced, in the
context of quantum graphs, by Barra and Gaspard \cite{BarGas_jsp00}.

%%%%%%%%%%%
\subsection{The Barra-Gaspard method}
\label{sec:BGidea}

A \emph{spectral observable} is a function on the eigenpairs $ (k_{n},f_{n}) $
of $\Gamma_{\lv}$, which we denote by
$\tilde{h}(\Gamma_{\lv},k_n,f_n)$.
\begin{defn}
	We say that the spectral observable $\tilde{h}$ has an \emph{oracle function} $ h_{\lv}:\mg\to\R $, if for
	every eigenpair $(k_n,f_{n})$ of $\Gamma_{\lv}$,
	\[
	\tilde{h}(\Gamma_{\lv},k_n,f_n) = h_\lv\big(\fr{k_{n}\lv}\big).
	\]
\end{defn}
 
Examples of observables having an oracle function include the eigenvalues spacing $ n\mapsto (k_{n+1}-k_{n}) $ discussed in \cite{BarGas_jsp00}, eigenfunction
statistics \cite{BerWin_tams10,CdV_ahp15}, band widths in the
continuous spectrum of periodic
graphs \cite{BanBer_prl13,ExnTur_jpa17} as well as
the nodal surplus of $f_n$ \cite{Ban_ptrsa14,AloBanBer_cmp18} and Neumann surplus \cite{AloBan19}.
Surprisingly,\footnote{An oracle depending only on $\kv$ has no access
  to the label $n$ of the eigenfunction which enters the definition of
  the nodal surplus.  The label is highly sensitive to the changes in
  the edge lengths $\lv$.  Remarkably, taking the difference in
  \eqref{eq:nodal_surplus_def} erases this dependence on $\lv$.}
the oracles for the latter two cases (nodal surplus and Neumann surplus) do not depend on $\lv$.

If $\tilde{h}$ has an oracle function $h_\lv$, then its spectral average
\[
  \left\langle \tilde{h} \right\rangle
  :=
  \lim_{N\rightarrow\infty}\frac{1}{N}\sum_{n=1}^{N}
  \tilde{h}(\Gamma_{\lv},k_n,f_n)
  = \lim_{N\rightarrow\infty}\frac{1}{N}\sum_{n=1}^{N}
  h_{\lv}\big(\fr{k_{n}\lv}\big),
\]
can be replaced by an integral of $h_{\lv}$ over $\mg$ with the
appropriate measure called the Barra-Gaspard (BG) measure.

\begin{defn}
  \label{def: BG}
  \cite{BarGas_jsp00,BerWin_tams10,CdV_ahp15} Let $\Gamma$ be a graph
  with secular manifold $\mg$ and let $\mreg$ be its regular
  part. Denote the volume form\footnote{$ \mreg $ is an orientable Riemannian manifold (with metric inherited by the flat metric on $ \T^{E} $) and as such has a standard volume form.} (or volume measure) of $\mreg$ by $\dd s$ and let
  $\hat{n}$ be the normal vector field of $\mreg$, chosen to have
  non-negative entries\footnote{The normal can be chosen to have
    non-negative entries by \cite[Theorem 1.1]{CdV_ahp15}.}. Then, for
  any $\lv\in\R_{\ge 0}^{E}\setminus\{0\}$ the BG-measure $\mu_{\lv}$
  on $\mg$ is defined by $\mu_{\lv}(\mg\setminus\mreg)=0$ and
  its density on $\mreg$
  \begin{equation}
    \label{eq: BG ds}
    \dd\mu_{\lv}:=\frac{\pi}{L}\frac{1}{\left(2\pi\right)^{E}}(\hat{n}\cdot\lv) \dd s,
  \end{equation}
  where $L=\sum_{e\in\E}\lve$ is the total length. In terms of
  differential forms, the density can be written as
  \begin{equation}
    \label{eq: BG differetial forms}
    \dd \mu_{\lv}=\frac{\pi}{L}\frac{1}{\left(2\pi\right)^{E}}\sum_{j=1}^{E}(-1)^{j-1}\lvj\,d\kappa_{1}\wedge d\kappa_{2}\ldots \wedge\widehat{d\kappa_{j}}\ldots \wedge d\kappa_{E},
  \end{equation}
  where $\widehat{d\kappa_{j}}$ indicates that $d\kappa_{j}$ is omitted.
\end{defn}
\begin{figure}[h!]
  \centering

  \includegraphics[width=0.7\textwidth]{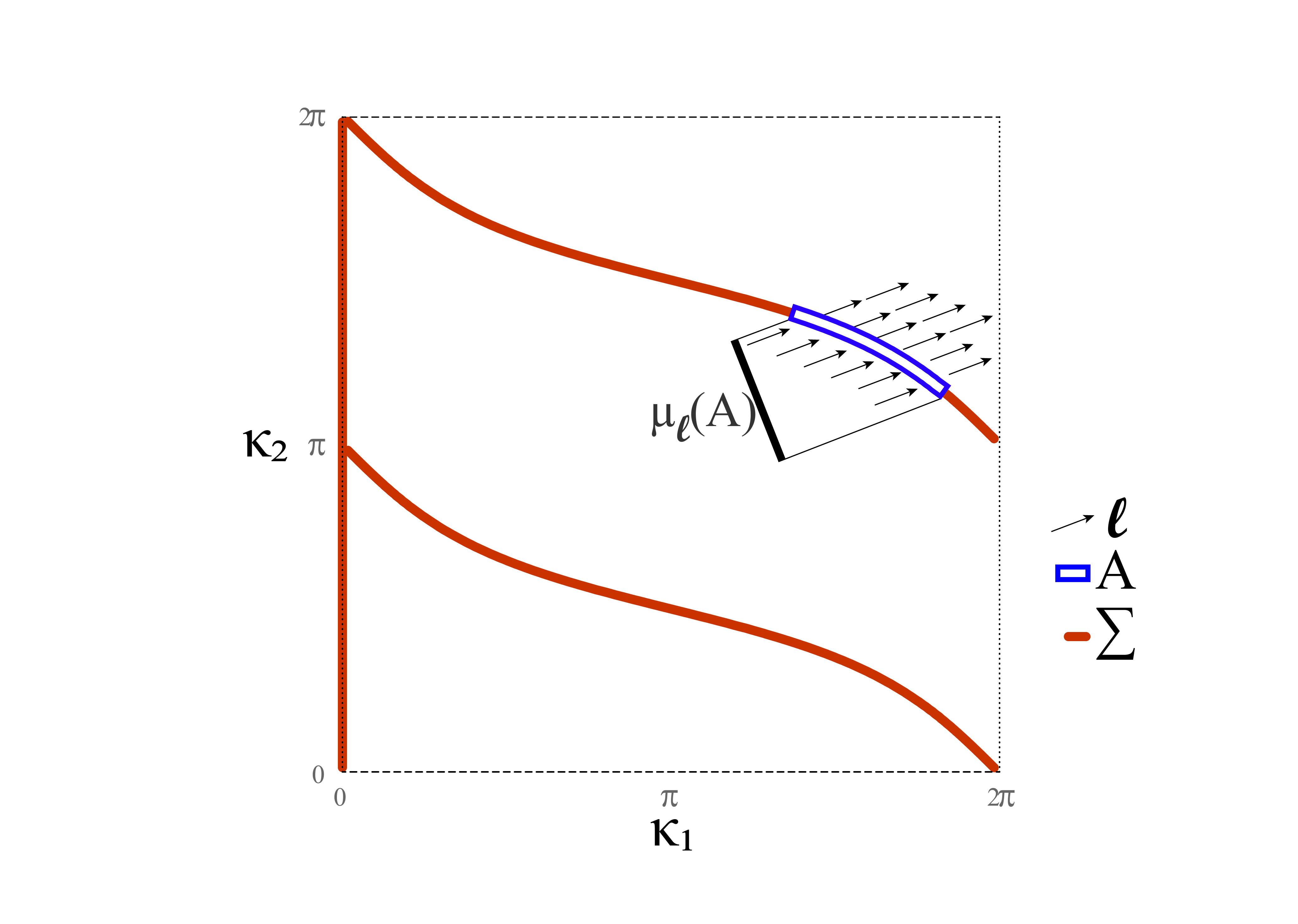}

\caption{Illustration of a set $A\subset\mg$ and its BG-measure $\mu_{\lv}(A)$. The arrows indicate the constant vector field $\lv$.}

\label{fig: BG-measure}
\end{figure}
\begin{rem}
  \label{rem:meaning_BGmeasure}
  Up to a normalizing factor, the measure of $A \subset \mg$ is the
  flux across $A$ of the constant vector field in the direction $\lv$.
  Equivalently, it is the ``cross section'' (in physics terminology) of $A$ as seen from the direction
  $\lv$, see Figure \ref{fig: BG-measure}.  The
  normalization is chosen so that the BG-measures are Borel
  probability measures on $\mg$.
\end{rem}

The next theorem states that for rationally
independent $\lv$, the discrete measures that average over the first $ N $ points of $ \left\{ \fr{k_{n}\lv} \right\}_{n\in\N} \subset \mg$ converge to $\mu_\lv$ when $ N\to\infty $.

\begin{thm}
  \label{thm: equidistirbution}
  \cite{BarGas_jsp00,BerWin_tams10,CdV_ahp15} Let $\Gamma$ be a graph
  with secular manifold $\mg$. Let $\Gamma_{\lv}$ be a standard
  graph with (square-root) eigenvalues $k_{n}$ for $n\in\N$. If $\lv$
  is rationally independent, then for any Riemann
  integrable\footnote{By Riemann integrable we mean that its
    discontinuity set has measure zero.} $h:\mg\rightarrow\C$,
  \begin{equation}\label{eq: equidistribution}
    \lim_{N\rightarrow\infty}\frac{1}{N}\sum_{n=1}^{N}
    h\big(\fr{k_{n}\lv}\big)
    =\int_{\mg}h \dd \mu_{\lv}.
\end{equation}
\end{thm}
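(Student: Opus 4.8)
The plan is to recognize the left-hand side of \eqref{eq: equidistribution} as an ergodic average of the linear flow $\varphi_{t}(\kv):=\fr{\kv+t\lv}$ on $\T^{E}$ and then to transport the resulting integral over $\T^{E}$ to an integral over $\mg$ by a Kac-type flow-box argument. Recall from Lemma~\ref{lem: secular equation and isomorphism} that the numbers $k_{n}\ge 0$ are precisely the times at which $\varphi_{t}(0)$ lies in $\mg$, that all but countably many such hits land in $\mreg$, and that $\dim\msing\le E-2$. A small but useful preliminary observation is that the flow meets $\mreg$ \emph{uniformly transversally}: since $\lv$ has strictly positive entries and the chosen unit normal $\hat n$ of $\mreg$ has non-negative entries (Remark~\ref{rem:Sigma_sets_properties}), one has $\hat n\cdot\lv\ge (\min_{e}\lve)/\sqrt{E}>0$ everywhere on $\mreg$.

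First I would fix a Riemann integrable $h\colon\mg\to\C$ and, for small $\varepsilon>0$, form the flow-adapted tube
\[
  T_{\varepsilon}:=\set{\varphi_{t}(\kv)}{\kv\in\mreg,\ 0\le t<\varepsilon},
\]
together with the function $\tilde h_{\varepsilon}$ on $T_{\varepsilon}$ that is constant along flow lines and equals $h(\kv)$ on the slice issuing from $\kv\in\mreg$. Uniform transversality implies that for $\varepsilon$ small the tube is embedded away from an arbitrarily small collar of $\msing$, so the coarea formula yields
$\frac{1}{(2\pi)^{E}}\int_{T_{\varepsilon}}\tilde h_{\varepsilon}
=\frac{\varepsilon}{(2\pi)^{E}}\int_{\mreg}h\,(\hat n\cdot\lv)\,\dd s+O(\varepsilon^{2})$,
the error absorbing curvature of $\mreg$ and the $\msing$-collar, the latter being negligible because $\dim\msing\le E-2$.

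Next I would invoke unique ergodicity of $\varphi_{t}$ (equivalently, Weyl equidistribution for the rationally independent direction $\lv$): for every continuous $g$ and every starting point, $\frac1K\int_{0}^{K}g(\varphi_{t}(0))\,\dd t\to (2\pi)^{-E}\int_{\T^{E}}g$. Applying this to $g=\tilde h_{\varepsilon}$, after a routine continuity approximation (legitimate since $h$ is Riemann integrable), gives $\frac1K\int_{0}^{K}\tilde h_{\varepsilon}(\varphi_{t}(0))\,\dd t\to (2\pi)^{-E}\int_{T_{\varepsilon}}\tilde h_{\varepsilon}$. On the other hand, each transversal crossing at a time $k_{n}\le K$ contributes exactly $\varepsilon\,h(\fr{k_{n}\lv})$ to $\int_{0}^{K}\tilde h_{\varepsilon}(\varphi_{t}(0))\,\dd t$, save for the two boundary crossings near $0$ and $K$ and for the rare events of two crossings within flow-time $\varepsilon$; hence $\int_{0}^{K}\tilde h_{\varepsilon}(\varphi_{t}(0))\,\dd t=\varepsilon\sum_{k_{n}\le K}h(\fr{k_{n}\lv})+o(K)$. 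Combining, dividing by $\varepsilon$, and letting first $K\to\infty$ and then $\varepsilon\to 0$ gives
\[
  \lim_{K\to\infty}\frac{1}{K}\sum_{k_{n}\le K}h(\fr{k_{n}\lv})
  =\frac{1}{(2\pi)^{E}}\int_{\mreg}h\,(\hat n\cdot\lv)\,\dd s .
\]
Taking $h\equiv 1$ recovers Weyl's law $N(K)/K\to L/\pi$ (equivalently, the statement that $\mu_{\lv}$ of \eqref{eq: BG ds} is a probability measure); dividing the displayed identity by this limit converts $\frac1K\sum_{k_{n}\le K}$ into $\frac1N\sum_{n=1}^{N}$ and, after comparison with \eqref{eq: BG ds}, produces exactly $\int_{\mg}h\,\dd\mu_{\lv}$.

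The main obstacle is the bookkeeping near $\msing$: one must show, \emph{uniformly in $K$}, that the crossings of $\varphi_{t}(0)$ occurring inside a thin collar of $\msing$, together with the "double crossings within time $\varepsilon$", form a vanishing fraction of all crossings up to time $K$. This follows from $\dim\msing\le E-2$, the uniform transversality on $\mreg$, and the real-analyticity of $\mg$ (Remark~\ref{rem:Sigma_sets_properties}), but making these estimates rigorous and uniform is the technical heart of the argument; cf.\ \cite{BarGas_jsp00,BerWin_tams10,CdV_ahp15}.
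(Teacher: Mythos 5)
The paper does not prove Theorem~\ref{thm: equidistirbution}: it is imported from \cite{BarGas_jsp00,BerWin_tams10,CdV_ahp15}, so there is no internal proof to compare against. Your proposal follows the same route as those references: view $k\mapsto\fr{k\lv}$ as a linear flow on $\T^{E}$, uniquely ergodic because $\lv$ is rationally independent; thicken $\mg$ into a flow box of width $\varepsilon$ so that the time average of the thickened observable collects $\varepsilon\, h(\fr{k_{n}\lv})$ per crossing; recognize the resulting surface integral, with Jacobian $\hat n\cdot\lv$, as the unnormalized BG density of Definition~\ref{def: BG}; and normalize by the $h\equiv 1$ case (Weyl's law) to pass from $\frac1K\sum_{k_{n}\le K}$ to $\frac1N\sum_{n\le N}$. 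Your uniform transversality bound $\hat n\cdot\lv\ge \min_{e}\lve/\sqrt{E}$ is correct (though the non-negativity of $\hat n$ comes from the footnote to Definition~\ref{def: BG} and \cite{CdV_ahp15}, not Remark~\ref{rem:Sigma_sets_properties}), and the normalization bookkeeping is consistent with \eqref{eq: BG ds}.

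What you have is a correct outline rather than a complete proof, and the steps you defer are precisely where the cited proofs spend their effort. Concretely: (i) the tube $T_{\varepsilon}$ is not globally embedded and $\tilde h_{\varepsilon}$ is ill-defined on overlaps, so you must show, uniformly in $K$ before sending $\varepsilon\to 0$, that crossings separated by less than $\varepsilon$ in flow time and crossings inside a collar of $\msing$ form a vanishing fraction of all crossings; this uses the codimension bound $\dim\msing\le E-2$ of Lemma~\ref{lem: secular equation and isomorphism} together with a flux-through-a-thin-set estimate, and it is not automatic. (ii) The left-hand side of \eqref{eq: equidistribution} sums over eigenvalues \emph{with multiplicity}, whereas your crossing count is geometric: a multiple eigenvalue is a single hit of $\msing$, so you additionally need that the proportion of $n\le N$ with $\fr{k_{n}\lv}\in\msing$ tends to zero. (iii) Applying unique ergodicity to $\tilde h_{\varepsilon}$ requires checking that it is Riemann integrable on $\T^{E}$ (its discontinuity set is the flow image of the discontinuity set of $h$, plus the tube boundary and the $\msing$ collar); this is true but belongs to the same bookkeeping. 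None of these is a wrong turn --- you explicitly flag them as the technical heart --- but as written the argument is a faithful reconstruction of the Barra--Gaspard scheme with its hardest estimates cited rather than carried out.
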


Although we are motivated by Theorem \ref{thm: equidistirbution} which only applies to rationally independent $\lv$'s, a prominent role will be played by the following measures with rational $\lv$.

\begin{defn}\label{def: mue}
    \label{defn:specialBGmeasures}
    Given $e\in\E$, let $\mu_{e}$ denote the BG-measure corresponding to $\lv=(0,\ldots,1,\ldots,0)$, with 1 in the $e$-th position.  We also use $\mu_\ones$ to denote the BG-measure corresponding to $\lv=\ones:=(1,\ldots,1)$.
\end{defn}
As can be seen immediately from Definition \ref{def: BG},
for any non-zero lengths $\lv$,
\begin{equation}\label{eq: BG mue}
    \mu_{\lv}=\frac{1}{L}\sum_{e\in\E}\lve \mu_{e}.
\end{equation}
In other words, the set of all BG-measures is the convex hull of the
$\mu_e$'s, and so has the structure of a finite convex polytope (with
$ n\le E $ vertices). We may therefore interpret
$\mu_\ones=\frac{1}{E}\sum_{e\in\E}\mu_{e}$ as 
the average BG-measure (or the middle point of the polytope).
We can also say that the BG-measures
with rationally independent $\lv$ are dense in the set of BG-measures (in the same way that the rationally independent points of a polytope in $ \R^{n} $ are dense).

%%%%%%%%%%%%
\subsection{BG method as a torus integral}
\label{sec:torus_average}

Theorem \ref{thm: equidistirbution} provides an analytic tool to
investigate spectral averages. However, an integral over an implicit
high-dimensional hypersurface, such as $\mg$, may be harder to
compute than the spectral average itself. For this reason we introduce
Theorem \ref{thm: BGmeasure as spectral measures} which describes the
BG-measure in terms of the unitary evolution matrices, $ U_{\kv} $ for $ \kv\in{\T^{E}} $. Using this theorem, integrals against a BG-measure can be evaluated efficiently by sampling random points uniformly on
$\T^{E}$. This is a new result, generalizing, in particular,
\cite[Thm.~3.4]{BerWin_tams10}.

In what comes next, rather than talking about eigenvalues of unitary matrices 
$z = e^{it}$, where $ e^{it}\in S^{1}:=\set{z\in\C}{|z|=1} $, it will be more
convenient to use their \emph{eigenphases} $t \in \T :=
\R/2\pi\Z$.  We will also use the following notational convention.
\begin{defn}
  \label{def: measures and phi}
  Consider the \emph{diagonal action} of the abelian group $\T$ on
  $\T^{E}$ defined by
  \begin{equation}
    \label{eq:diag_action_def}
    t \colon \kv \mapsto \fr{\kv+t\ones}, \qquad t\in \T,
  \end{equation}
  where $\ones:=(1,1,\ldots )$.  With a slight abuse of notation, we will
  denote this action simply as $\kv + t$.  We denote the orbit of
  $\kv$ under this action by $\kv+\T$.
\end{defn}

Combining the above notation with Definition
\ref{def:The-unitary-evolution} gives
\begin{equation}
  \label{eq:Ukv-t}
  U_{\kv-t} = e^{-i t}U_{\kv}.
\end{equation}
In particular, $e^{it}$ is an eigenvalue of $U_{\kv}$ if and only if 1
is an eigenvalue of $U_{\kv-t}$, yielding
\begin{equation}
  \label{eq:ukv}
  e^{i t} \text{ is an eigenvalue of } U_{\kv} \iff \kv-t \in\mg.
\end{equation}
\begin{thm}
  \label{thm: BGmeasure as spectral measures} Let $\Gamma$ be a (discrete) graph
  and let $\lv\in\R_{\ge 0}^{E}\setminus\{0\}$. At every $\kv\in\T^{E}$, denote the eigenphases and (orthonormal) eigenvectors of $ U_{\kv} $ by $\theta_{n}(\kv)$ and $\ba_{n}(\kv)$ for $n=1,2,\ldots,2E$. Then for any measurable $h:\mg\to\C$
  we have
  \begin{equation}
    \label{eq: integral BG measure}
    \int_{\mg}h \dd \mu_{\lv}
    = \int_{\T^{E}}\sum_{n=1}^{2E} h\big(\kv-\theta_{n}\big)
      \frac{\ba_{n}^{*}\boldsymbol{L}\ba_{n}}{\tr(\boldsymbol{L})}
      \frac{\dd \kv}{(2\pi)^E},
  \end{equation}
  where the $\kv$ dependence of $\theta_{n}$ and $\ba_{n}$ is omitted for brevity and $\boldsymbol{L}:=\emph{\diag}(\lv,\lv)$.
\end{thm}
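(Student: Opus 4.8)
The plan is to reduce the claimed torus formula \eqref{eq: integral BG measure} to the defining description of the BG-measure in Definition~\ref{def: BG} by ``co-area''-type slicing along the diagonal $\T$-action. The key observation is the identity \eqref{eq:Ukv-t}, $U_{\kv-t}=e^{-it}U_{\kv}$, which says that the eigenphases of $U_{\kv}$ along an orbit $\kv+\T$ move rigidly: the point $\kv$ lies on $\mg$ shifted by $\theta_n(\kv)$, i.e.\ $\kv-\theta_n(\kv)\in\mg$ for each $n$ (this is \eqref{eq:ukv}). Thus the map $(\kv,n)\mapsto \kv-\theta_n(\kv)$ sends the torus (with the finite fibre $\{1,\dots,2E\}$) onto $\mg$, and each orbit $\kv+\T$ meets $\mreg$ in exactly the $2E$ points $\kv-\theta_n(\kv)$ for generic $\kv$ (since along a regular orbit $U$ has $2E$ distinct eigenphases and the orbit crosses $\mg$ transversally $2E$ times per period $2\pi$). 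So the right-hand side of \eqref{eq: integral BG measure} is really an integral over $\mg$ of $h$ against the pushforward of the measure $(2\pi)^{-E}\,\frac{\ba_n^*\bs L\ba_n}{\tr\bs L}\,\dd\kv$, and the whole statement is the claim that this pushforward equals $\mu_\lv$.

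The concrete steps I would carry out: \textbf{(i)} Parametrise $\T^{E}$ near a regular orbit by $(\xi,t)$ where $\xi$ ranges over a local chart of $\mreg$ and $t\in\T$ parametrises the orbit through $\xi$; the Jacobian of $\kv\mapsto(\xi,t)$ is computed from the fact that the orbit direction is $\ones$, so the density $\dd\kv$ factors as $(\hat n\cdot\ones)^{-1}\dd s\,\dd t$ — wait, more precisely one gets $\dd\kv = \dd s\,\dd t$ up to the cosine of the angle between $\ones$ and $\hat n$, i.e.\ a factor $(\hat n\cdot\ones)$ appears when integrating over the orbit; I would track this carefully. \textbf{(ii)} Express the weight $\ba_n^*\bs L\ba_n$ in terms of geometry on $\mg$: by the Hellmann--Feynman / eigenvalue-perturbation formula applied to $U_{\kv}=e^{i\hat\kappa}S$, the derivative of the eigenphase $\theta_n$ in the direction $e_j$ of the $j$-th edge coordinate is $(\ba_n)_j$-weighted, and more precisely $\partial_{\kappa_j}\theta_n = |(\ba_n)_j|^2 + |(\ba_n)_{j+E}|^2$ (this is the standard computation: $\partial_{\kappa_j}U_{\kv} = i Z_j' U_{\kv}$ with $Z_j'$ the diagonal projector onto coordinates $j,j+E$, hence $\partial_{\kappa_j}\theta_n = \ba_n^* Z_j' \ba_n$). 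Consequently $\na\theta_n = $ the vector of these quantities, and the gradient of $\det(1-U_{\kv-\theta})$ along the orbit relates $\hat n$ to $\na\theta_n$: in fact $\hat n$ is proportional to $\na(\text{the implicit eigenphase function})$, so $\hat n \propto \na\theta_n$ and $\hat n\cdot\lv \propto \sum_j \ell_j\big(|(\ba_n)_j|^2+|(\ba_n)_{j+E}|^2\big) = \ba_n^*\bs L\ba_n$. \textbf{(iii)} Also $\hat n\cdot\ones \propto \sum_j\big(|(\ba_n)_j|^2+|(\ba_n)_{j+E}|^2\big) = \|\ba_n\|^2 = 1$, which both fixes the normalisation constant and shows the angle factor from step (i) is exactly accounted for. \textbf{(iv)} Assemble: $\sum_n h(\kv-\theta_n)\,\frac{\ba_n^*\bs L\ba_n}{\tr\bs L}\,\frac{\dd\kv}{(2\pi)^E}$ integrated over a single orbit becomes $\sum_{\text{crossings}} h \cdot \frac{(\hat n\cdot\lv)}{(\hat n\cdot\ones)\tr\bs L}\cdot\frac{\dd s}{(2\pi)^E}$, and since $\tr\bs L = 2L$ and the $t$-integral over the period contributes a factor dual to $\dd t$, after matching constants this is exactly $\int h\,\dd\mu_\lv$ with $\dd\mu_\lv = \frac{\pi}{L}\frac1{(2\pi)^E}(\hat n\cdot\lv)\dd s$ as in \eqref{eq: BG ds}. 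I would also invoke Lemma~\ref{lem: secular equation and isomorphism}(3) / Remark~\ref{rem:Sigma_sets_properties} to discard $\msing$ and the non-generic $\kv$ (measure zero), so that the slicing is valid a.e.

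The main obstacle I expect is the bookkeeping of constants and orientations in steps (ii)--(iii): getting the precise proportionality $\hat n = \na\theta_n / \|\na\theta_n\|$ with the correct sign (using that $\hat n$ is chosen with non-negative entries, consistent with $\partial_{\kappa_j}\theta_n\ge 0$ since each $|(\ba_n)_j|^2+|(\ba_n)_{j+E}|^2\ge0$), and verifying that the factor $\pi$ in \eqref{eq: BG ds} emerges correctly — it should come from the length $2\pi$ of the orbit together with the relation between $\det(1-U)$ and the eigenphase function near a simple root (a factor of $1/2$ from $\sum\theta_n$ conventions or from $\tr\bs L = 2L$ versus $L$). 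A secondary technical point is justifying the change of variables globally: the orbit foliation of $\T^{E}$ is not a product globally, but since we only need the identity for measurable $h$ and the exceptional set has measure zero, a partition-of-unity / local argument suffices, or one can phrase the whole thing via the coarea formula for the submersion $\kv\mapsto$ (its image in $\T^E/\T$) composed with the eigenphase map. Modulo these constant-chasing issues the argument is a direct disintegration of Lebesgue measure on $\T^E$ along the $\T$-action, pushed to $\mg$ via \eqref{eq:ukv}.
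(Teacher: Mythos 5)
Your proposal is correct in substance and the constants do close, but it is organized quite differently from the paper's proof, so a comparison is in order. The paper argues in two steps: it first proves the identity for $\lv=\ones$ by building a global layer decomposition $\mg=\bigcup_{n}\mg_{n}$, with each layer parametrized by the cut cube $X=\opcl{0,2\pi}^{E-1}$ via $\kv_{n}(\xv)=(\xv,0)-\theta_{n}(\xv)$; the continuity of the labels $\theta_{n}$ on $X$ is supplied by the appendix result on continuous orderings of eigenvalues of a unitary family, and the pushforward of Lebesgue measure is computed through the determinant identity \eqref{eq: alternating sum}. Only afterwards does it pass to general $\lv$ via the Radon--Nikodym derivative of $\mu_{\lv}$ with respect to $\mu_{\ones}$, quoting \cite[Thm.~1.1]{CdV_ahp15} for $\hat n_{e}\propto|\ba_{e}|^{2}+|\ba_{\hat e}|^{2}$. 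You instead disintegrate $\dd\kv$ along the diagonal orbits in one step, with base $\mreg$ and the $\lv$-weight handled at once; your Hellmann--Feynman computation $\partial_{\kappa_j}\theta_{n}=|(\ba_{n})_{j}|^{2}+|(\ba_{n})_{j+E}|^{2}$ is a legitimate independent derivation of the Colin de Verdi\`ere normal formula, and the bookkeeping you were worried about works out: the Jacobian of $(\bx,t)\mapsto\bx+t\ones$ is $|\hat n\cdot\ones|$, which cancels against the normalization of $\hat n$ since $\hat n\cdot\ones\propto\|\ba\|^{2}=1$, the $t$-integral contributes $2\pi$, and $2\pi/\tr(\boldsymbol{L})=\pi/L$ reproduces the density $\frac{\pi}{L}\frac{1}{(2\pi)^{E}}(\hat n\cdot\lv)\,\dd s$ of \eqref{eq: BG ds}, with the sum over $n$ exactly accounting for the $2E$-fold covering of $\T^{E}$ by $(\bx,t)$.

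What your route buys is that it bypasses the global labeling machinery entirely; what the paper's route buys is rigor at precisely the point you leave vague. To write the right-hand side of \eqref{eq: integral BG measure} as an integral at all, the summed integrand $\sum_{n}h(\kv-\theta_{n})\,\ba_{n}^{*}\boldsymbol{L}\ba_{n}$ must be measurable on $\T^{E}$, and your ``partition of unity / coarea'' remark does not by itself yield a measurable global choice of eigenphases and eigenvectors (recall the appendix shows a continuous ordering does \emph{not} exist on all of $\T^{E}$). This is repairable with little effort --- either by working with the basis-independent spectral-projection form \eqref{eq: integral BG measure alt}, which is manifestly measurable away from the measure-zero set of eigenvalue collisions, or by importing the paper's layer construction --- and you must also, as you indicate, discard $\msing$ and the non-transversal orbits using Lemma~\ref{lem: secular equation and isomorphism}(3). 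With that one technicality supplied, your argument is a complete and somewhat more economical proof of the theorem.
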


\begin{rem}\label{rem: average over orbits}
  The integrand in the right-hand-side of \eqref{eq: integral BG measure} is a weighted average of $h$ over the intersection points $\{\kv+\T\}\cap\mg=\{\kv-\theta_{n}\}_{n=1}^{2E}$, see Figure \ref{fig: T orbit} for example. Implicit in \eqref{eq: integral BG measure} is the claim that the
  integrand is measurable on $\T^E$.
\end{rem}

\begin{rem}
  Consider the spectral decomposition of $U_\kv$,
  \[
    U_{\kv} = \sum_{e^{i\theta} \in \spec(U_{\kv})} e^{i\theta} P_\theta,
  \]
  where the sum is over \emph{distinct} eigenvalues $e^{i\theta}$ of $U_\kv$ and
  $P_\theta$ is the spectral projection, i.e. the orthogonal projection onto the
  eigenspaces $ \ker(e^{i\theta}-U_{\kv}) $. In this form, Theorem \ref{thm: BGmeasure as spectral measures} can be stated as
  \begin{equation}
    \int_{\mg}h \dd \mu_{\lv}
    \label{eq: integral BG measure alt}
    = \int_{\T^{E}} \sum_{e^{i\theta} \in \spec(U_{\kv})}
    h\big(\kv-\theta(\kv)\big)
    \frac{\tr\big(P_\theta(\kv)\boldsymbol{L}\big)}{\tr(\boldsymbol{L})}
    \frac{\dd \kv}{(2\pi)^E}
  \end{equation}
  which highlights its independence of the numbering of eigenphases and
  the choice of the eigenvectors.
\end{rem}

%\begin{rem}
 % Theorem~\ref{thm: equidistirbution} requires a Riemann integrable
 % function (due to its use of Weyl equidistribution) whereas
 % Theorem~\ref{thm: BGmeasure as spectral measures} requires only a
 % measurable function.
%\end{rem}
The proof of Theorem \ref{thm: BGmeasure as spectral measures} will be
partitioned into two steps. The first step would be to establish the
Theorem for one particular BG-measure, $\mu_{\textbf{1}}$ with
$\lv=\textbf{1}:=(1,1,1\ldots )$. The second step would be to extend
this result to every $ \mu_{\lv} $ by calculating the Radon-Nikodym
derivative of $ \mu_{\lv} $ with respect to $ \mu_{\textbf{1}}$.

To study the integral over $\Sigma$, we will partition $\Sigma$ into
``layers'', with the $n$-th layer defined as the set of $\kv \in \T^E$
where the $n$-th eigenvalue of $U_\kv$ is equal to 1.  However, as can
be seen from Proposition~\ref{prop: continuous eigenvalues}, the eigenvalues cannot be ordered counterclockwise continuously throughout $\T^E$.  This fact necessitates the following
construction.

Let $X = \opcl{0,2\pi}^{E-1}$ which we will identify with the subset
$\{\kv\in\T^E: \kappa_E=0\}$ of $\T^E$ via the mapping
$\xv\in X \mapsto (\xv,0)\in\T^E$.  Note that we do not identify the
sides of the cube $X$.  In a slight abuse of notation we will write
$U_\xv$ with $\xv \in X$ instead of $U_{(\xv,0)}$ and will use
a similar shorthand for the eigenphases and eigenvectors of
$U_{(\xv,0)}$.

We define the
\emph{diagonal projection} $\p: \T^E \to X$ by
\begin{equation}
  \label{eq:diag_projection}
  \p : (\kappa_1, \kappa_2, \ldots , \kappa_E) \mapsto
  (\kappa_1-\kappa_E, \kappa_2-\kappa_E, \ldots ,
  \kappa_{E-1}-\kappa_E),
\end{equation}
see Figure \ref{fig: T orbit} for illustration.  By definition of the
matrix $U_\kv$, see~\eqref{eq:Udef}, we have
\begin{equation}
  \label{eq:diagonal_proj_U}
  U_\kv = e^{i \kappa_E} U_{\p(\kv)},
\end{equation}
Therefore, $U_\kv$ and $U_{\p(\kv)}$ share the same eigenspaces and
their eigenphases are related by a shift by $\kappa_E$.  More
precisely, $\ba$ is an eigenvector of $U_{\p(\kv)}$ with eigenvalue
$e^{i\theta}$ if and only if $\ba$ is an eigenvector of $U_\kv$ 
with eigenvalue $e^{i(\theta + \kappa_E)}$.

\begin{lem}\label{lem: structure}
  Consider a graph $\Gamma$ of $E$ edges. There exist continuous
  functions $\theta_n \colon X \to \T$, $n=1,\ldots ,E$, such that the
  spectrum of $U_\xv$ for any $\xv\in X$ is given by
  $\spec(U_\xv) = \left\{ e^{i\theta_n(\xv)} \right\}_{n=1}^E$.
  When $e^{i\theta_n(\xv)}$ is a simple eigenvalue, $\theta_n(\xv)$
  and its corresponding eigenvector $\ba_n(\xv)$ are analytic in some
  neighborhood of $\xv$.
\end{lem}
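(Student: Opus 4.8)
The plan is to build the functions $\theta_n$ as continuous branches of the eigenphases of $U_\xv$ obtained by analytic continuation along paths in the cube $X$, using that $U_\xv$ is a continuous (indeed real-analytic) family of $E\times E$ unitary matrices (restricting to the relevant block structure dictated by \eqref{eq: S matrix} and \eqref{eq:Udef}). The key point that makes the construction work on the cube $X = \opcl{0,2\pi}^{E-1}$ — as opposed to the full torus $\T^E$ — is that $X$ is simply connected (it is a convex set), so there is no monodromy obstruction. First I would invoke the standard perturbation theory of Rellich/Kato: for a real-analytic family of normal matrices on a simply connected analytic parameter domain, the eigenvalues can be chosen as globally analytic functions, with analytic eigenprojections, even through crossings. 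Since $U_\xv$ is unitary, its eigenvalues lie on the unit circle, so writing each analytic eigenvalue branch as $e^{i\theta_n(\xv)}$ determines $\theta_n \colon X \to \T$; because the exponential map $\R \to S^1$ is a covering and $X$ is simply connected, a continuous (even analytic, away from crossings) lift $\theta_n$ exists. At points $\xv$ where $e^{i\theta_n(\xv)}$ is simple, the Rellich theory additionally gives that $\theta_n$ and a normalized eigenvector $\ba_n$ are analytic in a neighborhood, which is the last claim.

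More explicitly, the steps I would carry out are: (1) record that $\xv \mapsto U_\xv$ is real-analytic on a neighborhood of $X$ in $\R^{E-1}$, and that each $U_\xv$ is unitary hence diagonalizable with spectrum on $S^1$; (2) apply the Rellich-type theorem on analytic parametrization of eigenvalues along a real-analytic one-parameter path, then promote to the multiparameter simply connected domain $X$ by the standard argument that analytic continuation along homotopic paths agrees (here all paths in the convex set $X$ are homotopic), giving well-defined global analytic functions $\lambda_1, \ldots, \lambda_E \colon X \to S^1$ whose multiset of values at each $\xv$ is $\spec(U_\xv)$; (3) lift each $\lambda_n$ through the covering $t \mapsto e^{it}$ to get $\theta_n \colon X \to \T$ (choosing the lift to be continuous; the lift is analytic wherever $\lambda_n$ is); (4) at a point where $e^{i\theta_n(\xv)}$ is simple, use the resolvent / Riesz-projection formula $P_n(\xv) = \frac{1}{2\pi i}\oint (\zeta - U_\xv)^{-1}\,d\zeta$ around $\lambda_n(\xv)$, which is analytic in $\xv$ near such a point since the contour can be kept away from the rest of the spectrum, and extract a locally analytic normalized eigenvector $\ba_n(\xv)$ from the range of $P_n(\xv)$ (e.g. by $P_n(\xv)\ba_n(\xv_0)$ normalized, for $\xv$ near $\xv_0$).

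The main obstacle — and the only place some care is genuinely needed — is step (2): Rellich's theorem is classically stated for a single analytic parameter, and the passage to several parameters requires that the parameter domain be simply connected so that the globally chosen branches do not get permuted by going around a loop. This is exactly why the lemma is stated on the cube $X$ (with sides \emph{not} identified) rather than on $\T^E$ — on $\T^E$ one really does encounter nontrivial monodromy, which is the content of the "eigenvalues cannot be ordered counterclockwise continuously throughout $\T^E$'' remark preceding the lemma (cf.\ Proposition~\ref{prop: continuous eigenvalues}). So the proof should emphasize that $X$ is convex, hence simply connected, and that this is what licenses the global continuous/analytic choice; the remaining analytic-perturbation-theory facts and the Riesz-projection argument for simple eigenvalues are standard. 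A minor additional remark: the lengths vector $\lv$ plays no role here, only the combinatorial matrix $S$ does, so the lemma is purely a statement about the family $\kv \mapsto U_\kv$ restricted to the diagonal-projection slice.
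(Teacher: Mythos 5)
Your handling of the purely local claims---analyticity of $\theta_n$ and $\ba_n$ near a point where the eigenvalue is simple, via Riesz projections---is fine, and it is the same classical perturbation-theory input the paper invokes via \cite{Rellich1969perturbation}. The genuine gap is your step (2). Rellich's theorem on analytic parametrization of eigenvalues through crossings is strictly a \emph{one-parameter} result; for two or more parameters it is false that a real-analytic Hermitian or unitary family admits analytic (or even $C^1$) eigenvalue branches and analytic eigenprojections through crossings. The standard counterexample is $A(x,y)=\left(\begin{smallmatrix} x & y\\ y & -x\end{smallmatrix}\right)$ (or its unitary version $e^{iA(x,y)}$), with eigenvalues $\pm\sqrt{x^{2}+y^{2}}$: no analytic choice of branches exists near the origin, and the eigenprojections do not even extend continuously there. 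Worse, your globalization mechanism---``analytic continuation along homotopic paths agrees because $X$ is simply connected''---does not apply and is in fact false: the monodromy theorem needs function elements that are analytic on neighborhoods, which the branches are not at crossings, and in the example continuing the eigenvalue from $(-1,0)$ to $(1,0)$ along the straight segment through the origin (where the Rellich branches are $\pm t$) ends at the opposite value from continuing along a path that detours around the origin, even though $\R^{2}$ is simply connected. So path-independence of analytic continuation of eigenvalue branches genuinely fails, and your construction of global $\lambda_n$ breaks down exactly at the crossing set, which is nonempty here (it projects from $\msing$).

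What simple connectivity of $X$ actually buys---and the route the paper takes---is a global \emph{continuous} (not analytic) labeling. The paper's proof is a short reduction to Proposition \ref{prop: continuous eigenvalues}: one tracks the whole cyclically ordered spectrum (the eigenphases together with the phase-gap functions) along paths; the only possible monodromy of such a labeling around a closed loop is a cyclic permutation, and it is excluded by the winding number of $\det(U_{\xv})$, which is trivial on a simply connected domain. This ordered selection is continuous through crossings but deliberately not analytic there, which is precisely why the lemma asserts analyticity only near simple eigenvalues. (A minor slip in your setup: $U_{\xv}$ is $2E\times 2E$, not $E\times E$; this does not affect the main point.) To repair your argument, replace step (2) by an appeal to Proposition \ref{prop: continuous eigenvalues} (or reproduce its winding-number argument), and keep your step (4) for the local analyticity claim.
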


% The lemma is illustrated in Figures \ref{fig: XxT} - \ref{fig: T orbit}.

\begin{proof}%[Proof of Lemma \ref{lem: structure}]
  The family $U_{\xv}$ is analytic in $\xv$, and the domain $X$ is
  simply connected, therefore the lemma is a
  direct consequence of Propostion~\ref{prop: continuous eigenvalues}
  of Appendix~\ref{sec: appendix numbering eigenvalues}, as well as
  classical results of perturbation theory \cite{Rellich1969perturbation}.
\end{proof}

Using relation \eqref{eq:diagonal_proj_U}, we extend the functions
$\theta_n$ from $X$ to the whole of $\T^E$,
\begin{equation}
  \label{eq:theta_def}
  \theta_n: \T^E\to\T
  \qquad
  \theta_n(\kv) := \theta_n\big(\p(\kv)\big) + \kappa_E,
  \qquad
  \kv = (\kappa_1,\ldots ,\kappa_E).
\end{equation}
Consequently, the spectrum of $U_\kv$ for any $\kv\in\T^E$ is given by
  \begin{equation}
    \label{eq:spec_Uk}
    \spec(U_\kv) = \left\{ e^{i\theta_n(\kv)} \right\}_{n=1}^E.
  \end{equation}
Note that $\theta_n$ are continuous only on $\T^E \setminus
\partial\Omega$, where $ \Omega $ is the embedding of $ X\times\T $ into $ \T^{E} $ (see Figure \ref{fig: XxT} for example) and 
\begin{equation*}
  \partial\Omega := \{ \kv\in \T^E \colon \kappa_j = \kappa_E \text{
    for some } j, \ 1\leq j \leq E-1\}.
\end{equation*}

\begin{figure}[h!]
  \centering
  \includegraphics[width=0.7\textwidth]{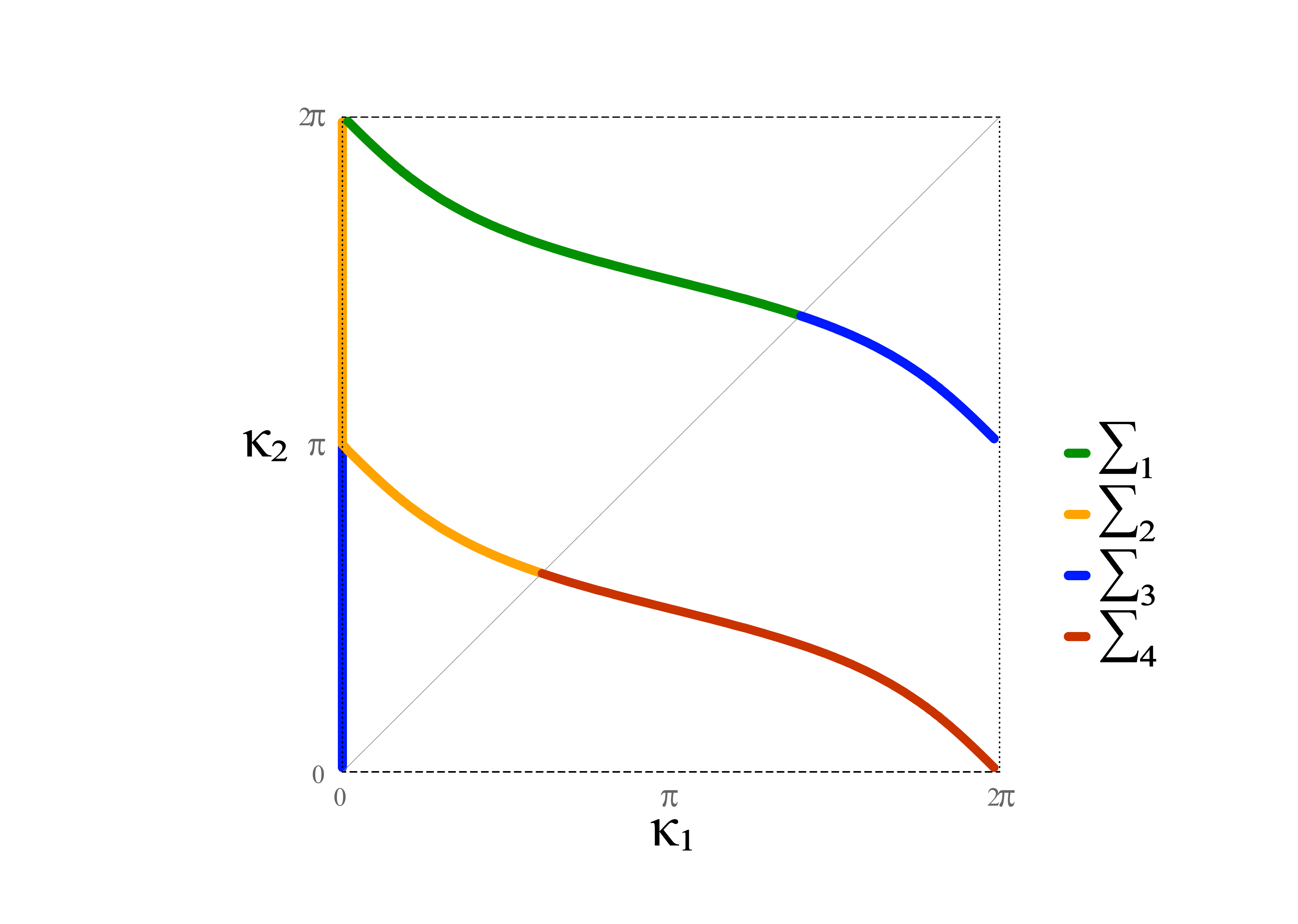}
\caption{A secular manifold colored according to its layers $\mg_{n}$. Here $ \partial\Omega $ is the diagonal line $ \kappa_{1}=\kappa_{2} $. The coloring is discontinuous on $ \partial\Omega $ and the singular points $ \msing=\{(0,\pi),(0,2\pi)\} $ (and their identifications).}
\label{fig: layers structure}
\end{figure}

\begin{figure}[h!]
  \centering
  \includegraphics[width=1\textwidth]{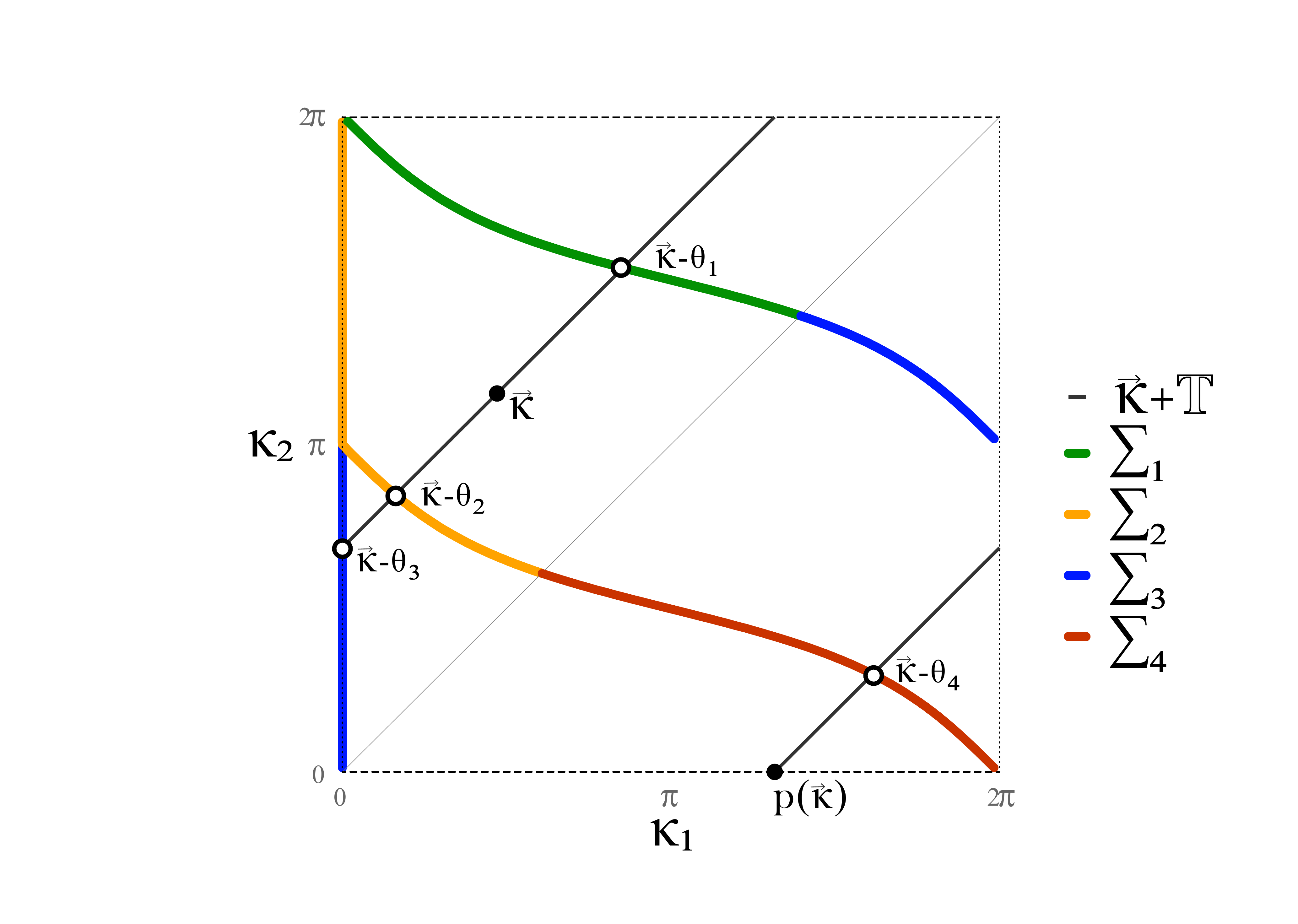}
\caption{A point $\kv\in\T^{2}$ and its diagonal orbit $\kv+\T$. The
  (embedding of the) point $\p(\kv)$ is presented and the intersection
  points $\{\kv+\T\}\cap\mg=\{\kv-\theta_{1},\ldots,\kv-\theta_{4}\}$
  are shown to match the layer structure numbering. }
\label{fig: T orbit}
\end{figure}

\begin{figure}[h!]
  \centering
  \includegraphics[width=1\textwidth]{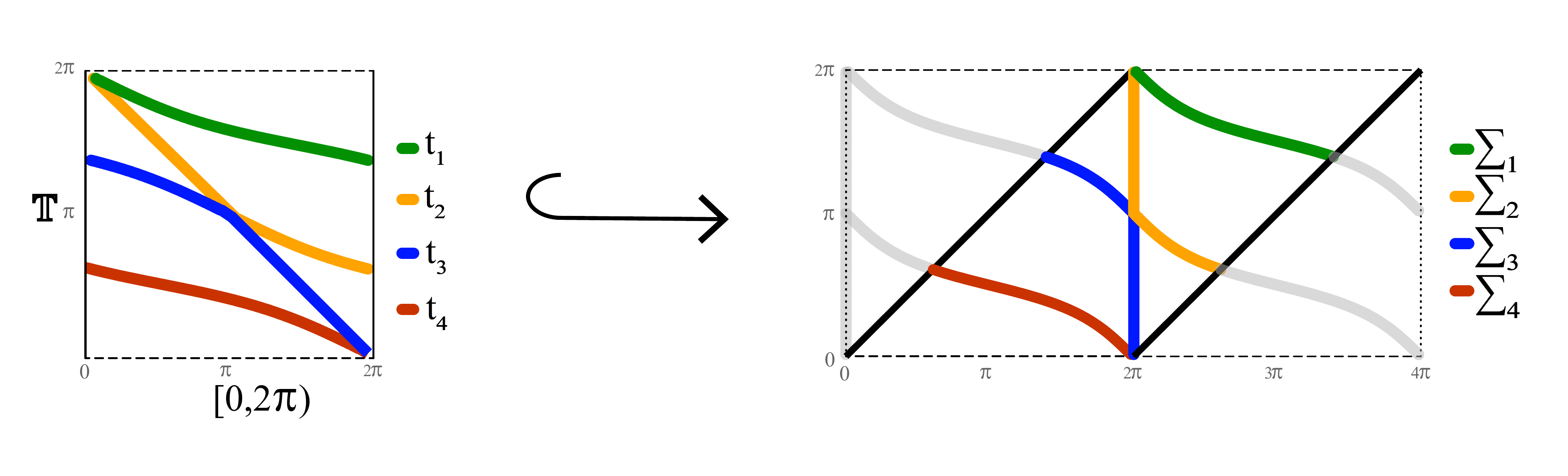}
\caption{The embedding $[0,2\pi)\times\T\hookrightarrow \T^2$. (left) $[0,2\pi)\times\T$. The graphs of the functions $t_{n}(\xv)=-\theta_{n}(\xv)$ are displayed in different colors. (right) Illustration of $ \Omega $, the embedding of $[0,2\pi)\times\T$ into $\T^{2}$, with black margins indicating the boundary $ \partial\Omega $. The layers $\mg_{n}$, coming from the graphs of $t_{n}$, are displayed in different colors. }
\label{fig: XxT}
\end{figure}

\begin{lem}\label{lem: layers are cubes}
  With $\theta_n: \T^E\to \T$ defined by \eqref{eq:theta_def}, let
  \begin{equation}
    \label{eq:Sigman_def}
    \mg_{n}:=\set{\kv\in\T^{E}}{e^{i\theta_{n}(\kv)}=1},\qquad n=1,2,\ldots,2E.    
  \end{equation}
  Then $\mg_{n}$ have the following properties.
  \begin{enumerate}
  \item
    \begin{equation}
      \label{eq:unionSigma_n}
      \Sigma = \bigcup_{n=1}^{2E}\Sigma_{n}, \qquad\mbox{and}\qquad
      \msing = \bigcup_{1\le{m}<n\le{2E}}\Sigma_{n}\cap\Sigma_{m}.
    \end{equation}
  \item The map 
    \begin{equation}
      \label{map_kvn_def}
      \kv_{n}(\xv) := (\xv,0) - \theta_{n}(\xv),
    \end{equation}
    is a bijection $X \to \mg_{n}$ which is is differentiable except
    possibly for a set of positive co-dimension in $X$.
  \item For any measurable $h$ on $ \mg_{n} $,
    \begin{equation}
      \label{eq:integral1_sigman}
      \int_{\mg_{n}}h\dd\mu_{\textbf{1}}
      =\frac{1}{2E}\int_{\xv\in X}h(\kv_{n}(\xv))\frac{\dd \xv}{(2\pi)^{E-1}}.
    \end{equation}
    Namely, the push forward of the Lebesgue measure on
    $X=\opcl{0,2\pi}^{E-1}$ by $ \kv_{n} $ is the restriction of the
    BG-measure $ \mu_{\textbf{1}} $ to $ \mg_{n} $, up to normalization.    
  \end{enumerate}
\end{lem}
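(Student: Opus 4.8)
The plan is to prove the three items in order, with item (3) being the payload and items (1)--(2) the structural scaffolding.

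First I would establish item (1). For the union, observe that $\kv\in\mg$ iff $1\in\spec(U_\kv)$, and by \eqref{eq:spec_Uk} this holds iff $e^{i\theta_n(\kv)}=1$ for some $n$, i.e. iff $\kv\in\mg_n$ for some $n$. For the singular set, recall from Remark \ref{rem:Sigma_sets_properties} that $\kv\in\msing$ iff $\dim\ker(1-U_\kv)\ge 2$. If $\kv\in\mg_n\cap\mg_m$ with $n\neq m$, then $e^{i\theta_n(\kv)}=e^{i\theta_m(\kv)}=1$, and since the functions $\theta_1,\ldots,\theta_{2E}$ enumerate the $2E$ eigenphases of $U_\kv$ counted with multiplicity (this is the content of Proposition \ref{prop: continuous eigenvalues} transported through \eqref{eq:theta_def}), having two distinct indices landing on the eigenphase $0$ forces the multiplicity of the eigenvalue $1$ to be at least $2$, so $\kv\in\msing$. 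Conversely, if $\kv\in\msing$ then at least two of the $\theta_n(\kv)$ equal $0\pmod{2\pi}$, placing $\kv$ in some $\mg_n\cap\mg_m$. One caveat to handle carefully: the ``layers'' $\mg_n$ are defined globally on $\T^E$ via \eqref{eq:theta_def}, but $\theta_n$ is only continuous on $\T^E\setminus\partial\Omega$; on $\partial\Omega$ the enumeration by index may jump, yet the multiset $\{e^{i\theta_n(\kv)}\}$ still coincides with $\spec(U_\kv)$, which is all that the set-theoretic identities \eqref{eq:unionSigma_n} require.

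Next, item (2). Fix $n$. The map $\kv_n(\xv) = (\xv,0)-\theta_n(\xv)$, using the diagonal action of Definition \ref{def: measures and phi}, sends $X$ into $\T^E$; I must check its image is exactly $\mg_n$ and that it is a bijection. Surjectivity onto $\mg_n$: given $\kv\in\mg_n$, set $\xv=\p(\kv)\in X$ (using the diagonal projection \eqref{eq:diag_projection}); by \eqref{eq:theta_def}, $\theta_n(\kv)=\theta_n(\xv)+\kappa_E$, and since $e^{i\theta_n(\kv)}=1$ we get $\theta_n(\xv)=-\kappa_E\pmod{2\pi}$, whence $(\xv,0)-\theta_n(\xv) = (\xv,0)+\kappa_E\ones \pmod{2\pi}$; a direct coordinate check against \eqref{eq:diag_projection} shows this equals $\kv$. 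Injectivity: if $\kv_n(\xv)=\kv_n(\xv')$ then applying $\p$ and using that $\p$ kills the diagonal direction gives $\xv=\xv'$ (the shift $\theta_n(\xv)\ones$ lies in the kernel of $\p$). Differentiability away from a set of positive codimension: $\theta_n$ is analytic near any $\xv$ where $e^{i\theta_n(\xv)}$ is a simple eigenvalue of $U_\xv$, by Lemma \ref{lem: structure}; the bad set is where $e^{i\theta_n(\xv)}$ is degenerate, which is contained in $\p(\msing)$ together with the image of $\partial\Omega$, and both have dimension $\le E-2$ by Lemma \ref{lem: secular equation and isomorphism}(3) and the fact that $\partial\Omega$ is a finite union of hyperplanes.

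Finally, item (3), the main obstacle. I need to compute the pushforward of normalized Lebesgue measure on $X$ under $\kv_n$ and identify it with $\mu_{\ones}|_{\mg_n}$ scaled by $1/(2E)$. The strategy is a change-of-variables / coarea computation: parametrize $\mg_n$ by $\xv\mapsto\kv_n(\xv)$ and compute the Jacobian factor relating $\dd\xv$ to the Riemannian volume $\dd s$ on $\mreg\cap\mg_n$, then compare with the density $\dd\mu_{\ones} = \frac{\pi}{E}\frac{1}{(2\pi)^E}(\hat n\cdot\ones)\dd s$ from \eqref{eq: BG ds} (note $L=E$ and $\tr\boldsymbol L = 2E$ when $\lv=\ones$). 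The key geometric identity I expect to need is that $\mg_n$ is (on its regular part) a graph over $X$ in the diagonal direction: writing $t_n(\xv):=-\theta_n(\xv)$, the layer is $\{(\xv,0)+t_n(\xv)\ones\}$, so its tangent space is spanned by $\partial_{x_j}[(\xv,0)+t_n(\xv)\ones]$, and the normal $\hat n$ is proportional to a vector determined by $\nabla t_n$; one shows $\hat n\cdot\ones$ times the surface element of this graph equals (up to the constant $\pi$ coming from the eigenvalue-crossing rate $|\tfrac{d\theta_n}{dt}|$, which is $1$ here because of \eqref{eq:Ukv-t}) exactly $\dd\xv$. The factor $\pi$ versus the factor $2$: here is where I must be careful — the rate at which the diagonal flow crosses $\mg$ relates $\dd\mu_{\ones}$ to a $1/\pi$ normalization, and summing the flux contribution over $\xv$ reconstructs the measure; I would verify the constants by testing on $h\equiv 1$, where the left side is $\mu_{\ones}(\mg_n)$ and the right side is $\frac{1}{2E}$, consistent with $\sum_{n=1}^{2E}\mu_{\ones}(\mg_n)=1$ once one accounts for the measure-zero overlaps $\msing$. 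The cleanest route may be to avoid computing $\hat n$ explicitly: instead use that the diagonal orbit $\kv+\T$ crosses $\mg$ transversally at generic $\kv$ (because $\nabla\det(1-U_\kv)\neq 0$ on $\mreg$ and the diagonal direction is not tangent there — a fact traceable to \cite{CdV_ahp15}), apply the coarea formula to the projection $\mg\to X$ along $\T$-orbits which is exactly $\p$ restricted to $\mg_n$, and match the coarea Jacobian with $(\hat n\cdot\ones)$ from \eqref{eq: BG ds}. The delicate points throughout are (a) justifying that the positive-codimension bad set contributes nothing to either integral, which follows since $\mu_{\ones}$ is absolutely continuous with respect to $\dd s$ and Lebesgue measure on $X$ is pushed to a measure charging no set of dimension $\le E-2$, and (b) getting the normalization constant right, for which the sanity check $h\equiv 1$ is decisive.
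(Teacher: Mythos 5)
Your proposal is correct and takes essentially the same route as the paper: parts (1)--(2) are argued identically (spectral characterization of $\Sigma$ and $\msing$, inverse of $\kv_{n}$ given by $\p$, bad set $\p(\msing\cap\Sigma_{n})$ of positive codimension), and for part (3) your key claim that the pullback of $(\hat{n}\cdot\ones)\,\dd s$ under the graph parametrization $\kv_{n}$ equals $\dd\xv$ is exactly the paper's Jacobian identity \eqref{eq: alternating sum}, which it proves by the same column-of-ones determinant computation your flux calculation amounts to. The only differences are cosmetic: the paper works from the differential-form expression \eqref{eq: BG differetial forms} rather than the normal-vector density \eqref{eq: BG ds}, and the normalization constant falls out of that computation directly, so no $h\equiv1$ sanity check is needed.
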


See Figure \ref{fig: layers structure} for an example of the layer
structure given by \eqref{eq:unionSigma_n}, and Figure \ref{fig: XxT} for the construction of the Layer structure by embedding $ X\times\T $ into $ \T^{E} $.

\begin{proof}
  The first part follows immediately from \eqref{eq:spec_Uk} and Remark \ref{rem:Sigma_sets_properties}.

  To show the second part we combine \eqref{eq:Sigman_def} and
  \eqref{eq:theta_def}: the latter can be rewrtten as
  \begin{equation*}
    \theta_n\big( (\xv,0) + \kappa_E \big) = \theta_n(\xv) + \kappa_E,
  \end{equation*}
  which is equal to 0 if and only if
  $\kappa_E = -\theta_n(\xv) \in \T$. The inverse of $\kv_{n}$ is
  the projection $\p$: $\p(\kv_{n}(\xv)) = \xv$.  The map $\kv_{n}$
  may not be differentiable at $\xv$ if $ \theta_{n}(\xv) $ is not differentiable there, which can only happen if $e^{i\theta_n(\xv)}$ is
  a multiple eigenvalue.  The set of such $\xv$ is obtained by
  projecting $\msing \cap \Sigma_n$ to $X$ by $\p$. Positive
  co-dimension of $ \p(\msing \cap \Sigma_n) $ in $ X $ follows from
  \[\dim(\p(\msing \cap \Sigma_n))\le\dim(\msing \cap \Sigma_n)\le\dim(\msing)\le E-2,\]
  where the last inequality is stated in Lemma~\ref{lem: secular equation and
    isomorphism}.
	
  To prove~\eqref{eq:integral1_sigman} we consider $ \kv_{n}(\xv) $ as
  a parameterization of $\mg_{n}$ by $  X $. We claim
  that wherever it is differentiable, the parameterization
  $\kv=\kv_{n}(\xv)$ satisfies the identity
  \begin{equation}
    \label{eq: alternating sum}
    \sum_{j=1}^{E}(-1)^{j}\left|\frac{
        \partial\left(\kappa_1,\kappa_2,\ldots ,\widehat{\kappa_j},\ldots ,\kappa_E
        \right)}{\partial\left(x_1,x_2,\ldots ,x_{E-1}\right)}\right|=1.
  \end{equation}
  We will now prove \eqref{eq:integral1_sigman} assuming \eqref{eq:
    alternating sum} and will provide the proof of \eqref{eq:
    alternating sum} later.

  The density of the BG measure $\mu_{\textbf{1}}$ (as described in  \eqref{eq: BG differetial forms}) can be written in terms of $\kv=\kv_{n}(\xv)$ as  
  \begin{align*}
    d\mu_{\textbf{1}}
    &= \frac{\pi}{E}\frac1{(2\pi)^{E}}\sum_{j=1}^{E}(-1)^{j-1}d\kappa_1\wedge
      d\kappa_2 \wedge \cdots \widehat{d\kappa_j} \cdots\wedge d\kappa_E\\
    &= \frac{1}{2E}\frac1{(2\pi)^{E-1}}\sum_{j=1}^{E}(-1)^{j-1}
      \left|\frac{
      \partial\left(\kappa_1,\kappa_2,\ldots ,\widehat{\kappa_j},\ldots ,\kappa_E
      \right)}{\partial\left(x_1,x_2,\ldots ,x_{E-1}\right)}\right|
      dx_1 \wedge dx_2\wedge \cdots \wedge dx_{E-1}\\
    &=\frac1{2E}\frac1{(2\pi)^{E-1}}dx_1\wedge dx_2\wedge\cdots\wedge dx_{E-1},
  \end{align*}
  where in the last equality we used the identity \eqref{eq:
    alternating sum}. This establishes~\eqref{eq:integral1_sigman}. 

	Let us prove identity \eqref{eq: alternating sum}. Let $ D\kv_{n}(\xv) $ be the $ E\times E-1 $ matrix of derivatives of $\kv_{n}(\xv)$, and construct $ M $, an $ E\times E $ matrix, by adding to $ D\kv_{n}(\xv) $ an auxiliary column of ones, so that
	\[\det(M)=\sum_{j=1}^{E}(-1)^{j}\left|\frac{
		\partial\left(\kappa_1,\kappa_2,\ldots ,\widehat{\kappa_j},\ldots ,\kappa_E
		\right)}{\partial\left(x_1,x_2,\ldots ,x_{E-1}\right)}\right|,\]
	by expansion of the determinant over the column of ones. To prove identity \eqref{eq: alternating sum} we need to show that $ \det(M)=1 $. The
	derivatives of $\kv_{n}(\xv)$ are
	\begin{equation}
		\frac{\d \kappa_{i}}{\d x_{j}}=\delta_{i,j}+\frac{\d
			\theta_{n}(\xv)}{\d x_{j}},
		\qquad i=1,\ldots ,E,\quad j=1,\ldots ,E-1,
	\end{equation}
so $ M $ is given by 
	\begin{equation}
		\label{eq:Mdef}
		M =
		\begin{pmatrix}
			1+ \frac{\partial \theta_n}{\partial x_1} & \frac{\partial \theta_n}{\partial x_2}
			& \cdots & 1 \\[4pt]
			\frac{\partial \theta_n}{\partial x_1} & 1+\frac{\partial \theta_n}{\partial x_2}
			& \cdots & 1 \\
			\vdots & \vdots & & \vdots \\
			\frac{\partial \theta_n}{\partial x_1} & \frac{\partial \theta_n}{\partial x_2}
			& \cdots & 1
		\end{pmatrix}.
	\end{equation}
	To see that the determinant of this matrix is 1 subtract the last row from all others.
\end{proof}

We may now use Lemma \ref{lem: layers are cubes} to prove Theorem \ref{thm: BGmeasure as spectral measures}.

\begin{proof}[Proof of Theorem \ref{thm: BGmeasure as spectral
    measures}]
  We start with the special case $\lv = (1,\ldots,1)$ when
  $\boldsymbol{L}=\diag(\lv,\lv)$ is the identity matrix, and
  so the factor 
  \[
    \frac{\ba_{n}(\kv)^{*}\boldsymbol{L}\ba_{n}(\kv)}{\tr(\boldsymbol{L})}\equiv\frac{1}{2E},
  \]  
  is independent of $\kv$, due to $\|\ba_{n}\|=1$. In this case,
  equation \eqref{eq: integral BG measure} reduces to
  \begin{equation}\label{eq:integralBGmeasure1again}
    \int_{\mg}h \dd \mu_{\textbf{1}}
    = \int_{\T^{E}}\frac{1}{2E}\sum_{n=1}^{2E} h\big(\kv-\theta_{n}\big)
    \frac{\dd \kv}{(2\pi)^E}.
  \end{equation}
  Focusing on one term in the sum on the right hand side, we have
  \begin{equation}
    \label{eq:torus_int_step}
    \frac{1}{2E}\int_{\T^{E}} h\big(\kv-\theta_{n}\big)
    \frac{\dd \kv}{(2\pi)^E}
    = \frac{1}{2E}\int_{t\in\T}\int_{\xv\in X}
    h\Big(\kv(\xv,t)-\theta_{n}\big(\kv(\xv,t)\big)\Big)
    \frac{\dd \xv\dd t}{(2\pi)^E},
  \end{equation}
  where we made the substitution $\kv(\xv,t):=(\xv,0)+t$ which has
  Jacobian equal to $1$ (the latter is an easy observation).  For
  every $ t\in\T $ and $ \xv\in X$ we have
  \begin{equation*}
    \kv(\xv,t)-\theta_{n}(\kv(\xv,t))
    = (\xv,0) + t - \big(\theta_{n}(\xv)+t\big)
    = (\xv,0) - \theta_{n}(\xv) = \kv_{n}(\xv),
  \end{equation*}
  where $\kv_{n}(\xv)$ was introduced in Lemma~\ref{lem: layers are
    cubes}.  In particular, the integrand of the right hand side of
  \eqref{eq:torus_int_step} is independent of $t$.  Integrating $t$
  out, we obtain
  \begin{equation*}
    \frac{1}{2E}\int_{\T^{E}} h\big(\kv-\theta_{n}\big)
    \frac{\dd \kv}{(2\pi)^E}
    = \frac{1}{2E} \int_{\xv\in X}
    h\big(\kv_n(\xv)\big)
    \frac{\dd \xv}{(2\pi)^{E-1}}
    = \int_{\mg_{n}}h \dd \mu_{\textbf{1}},
  \end{equation*}
  where we applied Lemma~\ref{lem: layers are cubes} in the last
  step.

  Finally, the summation of $\mg_{n}$ integrals is an integral over
  $\mg$ since the layers cover $\mg$ and intersect on $\msing$
  (Lemma~\ref{lem: layers are cubes}) which has measure zero
  (Lemma \ref{lem: secular equation and isomorphism}).
  To conclude, we have established \eqref{eq: integral BG measure} for $\mu_{\textbf{1}}$.

  To prove \eqref{eq: integral BG measure} for $ \mu_{\lv} $ with
  arbitrary $\lv$, of total length $ L=\sum_{j=1}^{E}\ell_{j} $, we
  use
  \begin{equation}
    \label{eq:RadonNikodym_sub}
    \int_{\mg}h \dd \mu_{\lv}
    = \int_{\mreg}h \dd \mu_{\lv}
    = \int_{\mreg}h
    \left|\frac{\hat{n}\cdot\lv}{\hat{n}\cdot\ones}\right|
    \frac{E}{L} d\mu_{\textbf{1}}.
  \end{equation}
  The first equality is due to $ \mreg $ being a full measure subset, by Lemma \ref{lem: secular equation and isomorphism}. In the second equality we used Definition \ref{def: BG} to compute the Radon--Nikodym
  derivative of $\mu_{\lv}$ with respect to $\mu_{\ones}$ on
  $\mreg$.  According to \cite[Thm.~1.1]{CdV_ahp15}, the normal vector
  $\hat{n}$ at a regular point $\kv\in \mreg$ is related to
  $\ba$, the normalized eigenvector of eigenvalue 1 of
  $U_\kv$, by
  \[
    \hat{n}_{e}
    \propto \left|\ba_{e}\right|^2
    + \left|\ba_{\hat{e}}\right|^2.
  \]
  It follows, using the normalization $\|\ba\|=1$, that
  \[
    \left|\frac{\hat{n}\cdot\lv}{\hat{n}\cdot\ones} \right|
    = \frac{\sum_{e} \ell_e \left(\left|\ba_{e}\right|^2
        + \left|\ba_{\hat{e}}\right|^2\right)}
    {\sum_e \left(\left|\ba_{e}\right|^2
        + \left|\ba_{\hat{e}}\right|^2\right)}
    =\frac{\ba^*\boldsymbol{L}\ba}{\|\ba\|^{2}}= \ba^*\boldsymbol{L}\ba.
  \]
  Recall that if $ \kv\in\mg_{n}\cap\mreg $ then the normalized eigenvector of the simple eigenvalue $ 1 $ is $ \ba_{n}(\kv) $ which is analytic for every $ \kv\notin\partial\Omega $ around that point by Lemma \ref{lem: structure}. Finally, we represent $2L = \tr(\boldsymbol{L})$, so that
  \[\int_{\mg}h \dd \mu_{\lv}
    =2E\int_{\mreg}h(\kv)\frac{\ba_{n}(\kv)^{*}\boldsymbol{L}\ba_{n}(\kv)}{\tr(\boldsymbol{L})}
    d\mu_{\textbf{1}}.\] The theorem now follows by
  applying~\eqref{eq:integralBGmeasure1again} to the measurable
  function
  \[\kv\mapsto
    h(\kv)\frac{\ba_{n}(\kv)^{*}\boldsymbol{L}\ba_{n}(\kv)}{\tr(\boldsymbol{L})},\]
  and replacing $ \ba_{n}(\kv-\theta_{n}) $ with $ \ba_{n}(\kv) $, see
  \eqref{eq:diagonal_proj_U} and the discussion following it.
\end{proof}

%%%%%%%%%%%%%%%%%%%%%%%%%%%%%%%%%%%%%%%%%%%%%%%%%%%%%%%%%%%%%%%%%%%%%%%%%%%%%%%
\section{The nodal surplus distribution}\label{sec: nodal surplus distribution}

The purpose of this section is to prove Theorem \ref{thm: sampling
  simplified} and its generalization to graphs with loops, Theorem
\ref{thm: generalized theorem}.  These theorems are established in
section~\ref{subsec: proof of main theorem} by applying Theorems~\ref{thm:
  equidistirbution} and \ref{thm: BGmeasure as spectral measures} to a
suitably defined oracle function
$\bsigma:\mg\rightarrow\{0,1,\ldots \beta\}$ which encodes the nodal
surplus on the secular manifold.  More precisely, the function
$\bsigma$ has the following
properties:
\begin{enumerate}
\item It is Riemann integrable.
\item For any $\lv\in\R_{+}^{E}$ and any $k_{n}$ eigenvalue of
  $\Gamma_{\lv}$ with $n\in\G$, the nodal surplus equals:
  \[
    \sigma(n)=\bsigma\big(\fr{k_{n}\lv}\big).
  \]
\end{enumerate}
Such a function was constructed in \cite{AloBanBer_cmp18,Ban_ptrsa14},
but in section~\ref{subsec: nodal magnetic} we give an alternative
definition in terms of the Hessian of an eigenphase $\theta$ of
$U_{\kv}$ and, in section \ref{subsec:hessian}, express the Hessian
in terms of the corresponding eigenvector $\ba$ and the inverse
Cayley transform of $e^{-i\theta}U_{\kv}$.

%%%%%%%%%%%%
\subsection{The nodal surplus oracle function}
\label{subsec: nodal magnetic}

In this section we describe the nodal surplus function $\bsigma:\mg\rightarrow{\{0,1,\ldots \beta\}}$, originally introduced in \cite{AloBanBer_cmp18,Ban_ptrsa14}, and express it in terms of the layers and eigenphases, introduced in Lemma \ref{lem: layers are cubes}.  This will make $\bsigma$ compatible with Theorem \ref{thm: BGmeasure as spectral measures}.

To give a preview, the function $\bsigma$ is constructed as follows.
We introduce an extra set of parameters $\vec\alpha$ in the definition
of the unitary evolution matrix $U = U_{\kv;\av}$, see
Definition~\ref{def:The-unitary-evolution-alpha} below.  According to
a theorem known as the ``Nodal--Magnetic Connection'', established in
\cite{Ber_apde13,Col_apde13,BerWey_ptrsa14} and quoted below as
Theorem~\ref{thm: nodal magnetic}, the nodal surplus can be recovered
by counting the negative eigenvalues of the matrix of second
derivatives of the eigenphase $\theta$ of $U$ as a function of
$\vec\alpha$.  We remark that the word ``magnetic'' appears in the
name of the theorem because the parameters $\vec\alpha$ can be
understood as the fluxes in a magnetic Schr\"odinger operator.  It can
equivalently be viewed as twisting the graph Laplacian by the
characters of its homology group \cite{phiSar87}.  However, neither
interpretation will be particularly important in our calculations
below.

To put this discussion on a rigorous basis, recall that the nodal
surplus is defined only for generic eigenfunctions, see Definition
\ref{def: generic_index}.  We now describe the submanifold
$\mgen \subset \mg$ corresponding to such eigenfunctions.

\begin{lem}\cite{AloBanBer_cmp18}\label{lem: mreg}
  Given a graph $\Gamma$, there is a sub-manifold $\mgen\subset\mreg$
  such that for every $\lv\in\R_{+}^{E}$, the index set $\G$ of
  $\Gamma_{\lv}$ is characterized by
  \begin{equation}
    \label{eq: mreg}
    n\in\G \iff \fr{k_{n}\lv}\in\mgen.
  \end{equation}
  If $ \Gamma $ has no loops, then $ \mgen $ has full measure in
  $ \mg $. If $ \Gamma $ has loops, then
  \begin{equation}
    \label{eq:total_measure_generic}
    \mu_{\lv}\left(\mgen\right)=1-\frac{\LL}{2 L},
  \end{equation}
  where $L$ is the total length of $\Gamma_{\lv}$ and $\LL$ is the
  total length of its loops.
\end{lem}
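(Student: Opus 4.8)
The plan is to describe $\mgen$ explicitly inside $\mreg$ and then compute the $\mu_\lv$-measure of its complement. Recall from Lemma~\ref{lem: secular equation and isomorphism} that, for $\lv\in\R_+^E$, a number $k^2\ne 0$ is a simple eigenvalue of $\Gamma_\lv$ precisely when $\fr{k\lv}\in\mreg$, and then the eigenfunction is built from $\ba(\kv)$, the eigenvector of $U_{\kv}$ with eigenvalue $1$ (a one-dimensional eigenspace when $\kv\in\mreg$). The first step is to read off the vertex values of this eigenfunction: writing $f|_e$ on each edge as a superposition of two counter-propagating plane waves and imposing the Neumann--Kirchhoff scattering encoded by $S$ in \eqref{eq: S matrix}, one obtains the elementary identity
\[
  f(v)=\frac{2}{\deg v}\sum_{b\ \text{terminating at }v}(\ba(\kv))_b .
\]
Accordingly, set (independently of $\lv$)
\[
  \mgen:=\set{\kv\in\mreg}{\textstyle\sum_{b\ \text{terminating at }v}(\ba(\kv))_b\ne 0\ \text{for every }v\in\V}.
\]
Since this condition depends only on the line $\ker(1-U_\kv)$, which varies real-analytically on $\mreg$ (Remark~\ref{rem:Sigma_sets_properties}), $\mgen$ is a well-defined open submanifold of $\mreg$. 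Property \eqref{eq: mreg} is then immediate: $n\in\G$ iff $k_n^2$ is simple and $f_n$ is nonzero at every vertex, iff $\fr{k_n\lv}\in\mreg$ and every vertex sum is nonzero, iff $\fr{k_n\lv}\in\mgen$.

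It remains to estimate $\mu_\lv(\mg\setminus\mgen)$. Since $\msing$ is $\mu_\lv$-null (Lemma~\ref{lem: secular equation and isomorphism}), we only have to control, for each $v\in\V$, the relatively closed real-analytic ``bad set'' $B_v:=\set{\kv\in\mreg}{\sum_{b\ \text{terminating at }v}(\ba(\kv))_b=0}$. The key dichotomy is the following. If $v$ carries a loop $e$, then on the coordinate hyperplane $\{\kappa_e=0\}$ the vector supported on the two directed copies of $e$ with entries $(1,-1)$, and zero on all other bonds, is an eigenvector of $U_\kv$ with eigenvalue $1$ — this is checked directly from the entries of $S$ at $v$ in \eqref{eq: S matrix} — and for generic values of the remaining coordinates it is the only such eigenvector; hence $\{\kappa_e=0\}$ lies in $\mreg$ up to a null set, and there the vertex sum at $v$ is $1+(-1)=0$, so $\{\kappa_e=0\}\subseteq B_v$ modulo a null set. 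On the other hand, for every vertex $v$ the remaining part of the bad set, $B_v\setminus\bigcup_{e\ \text{loop at }v}\{\kappa_e=0\}$, is a \emph{proper} real-analytic subvariety of $\mreg$ and hence $\mu_\lv$-null: one has to show that away from the loop hyperplanes the functional $\ba\mapsto\sum_{b\ \text{terminating at }v}a_b$ does not vanish identically on any connected component of $\mreg$, which is exactly the genericity-of-eigenfunctions statement of \cite{BerLiu_jmaa17}, as used in \cite{AloBanBer_cmp18}. Combining the two cases,
\[
  \mg\setminus\mgen=\bigcup_{e\ \text{a loop}}\{\kappa_e=0\}\qquad\text{up to a }\mu_\lv\text{-null set.}
\]

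If $\Gamma$ has no loops this union is empty, so $\mgen$ has full $\mu_\lv$-measure, which is the first assertion. If $\Gamma$ does have loops, then distinct loop hyperplanes meet in codimension $\ge 2$ and are hence pairwise $\mu_\lv$-null, so $\mu_\lv(\mg\setminus\mgen)=\sum_{e\ \text{a loop}}\mu_\lv(\{\kappa_e=0\})$; and from the BG density \eqref{eq: BG ds}, noting that the unit normal of $\{\kappa_e=0\}$ with non-negative entries is the $e$-th coordinate vector, so that $\hat n\cdot\lv=\ell_e$ there, and that the surface area of $\{\kappa_e=0\}$ is $(2\pi)^{E-1}$, we get
\[
  \mu_\lv(\{\kappa_e=0\})=\frac{\pi}{L}\,\frac{1}{(2\pi)^{E}}\,\ell_e\,(2\pi)^{E-1}=\frac{\ell_e}{2L}.
\]
Summing over loops yields $\mu_\lv(\mg\setminus\mgen)=\frac{\LL}{2L}$, i.e.\ \eqref{eq:total_measure_generic}. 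The step I expect to be the main obstacle is the ``proper subvariety'' claim — excluding any hidden full-dimensional component of the bad sets other than the loop hyperplanes — which is precisely where the genericity of eigenfunctions enters; everything else is the plane-wave bookkeeping for the vertex-value formula and the elementary BG-density computation. This argument is carried out in detail in \cite{AloBanBer_cmp18}.
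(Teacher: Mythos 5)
Your reconstruction is correct and follows essentially the same route as the source the paper relies on: the paper itself gives no proof of this lemma but quotes it from \cite{AloBanBer_cmp18}, and your decomposition of the non-generic set into the loop hyperplanes (the set $\mL$ of Lemma \ref{lem: mreg and Sigma decomposition}) plus a lower-dimensional ``bad'' subvariety, with the latter controlled by the genericity results of \cite{BerLiu_jmaa17}, is exactly the argument of \cite[Prop.~A.1]{AloBanBer_cmp18} that the paper invokes. The only step you do not carry out yourself --- that the vertex-sum functionals do not vanish on any full-measure subset of $\mreg\setminus\mL$ --- is deferred to the same references the paper cites, so this is not a gap relative to the paper; your explicit BG-flux computation $\mu_{\lv}(\{\kappa_e=0\})=\ell_e/(2L)$ correctly recovers \eqref{eq:total_measure_generic}.
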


\begin{defn}
\label{def:The-unitary-evolution-alpha} For a given $\av\in\T^{E}$ we define
\begin{equation}
e^{i\check{\alpha}}:=\diag\left(e^{i\alpha_{1}},e^{i\alpha_{2}}\ldots e^{i\alpha_{E}},e^{-i\alpha_{1}},e^{-i\alpha_{2}},\ldots e^{-i\alpha_{E}}\right)
\end{equation}
and further denote
\begin{equation}
  \label{eq: Ualpha}
  U_{\kv;\av} := e^{i\check{\alpha}}U_{\kv}
  = e^{i\check{\alpha}}e^{i\hat{\kappa}}S,
\end{equation}
such that the unitary evolution matrix defined previously in \eqref{eq:Udef} is given as $U_{\kv}:=U_{\kv;0}$.
\end{defn}

\begin{rem}
  Notice that the $ j $-th and $ (j+E) $-th diagonal elements of
  $ e^{i\check{\alpha}} $ are conjugate rather than equal, in contrast
  to $ e^{i\hat{\kappa}} $.
\end{rem}

Consider the eigenphases of $ U_{\kv} $
for every $ \kv\in\T^{E} $, as defined in \eqref{eq:theta_def}
\[\theta_{m}:\T^{E}\to\T, \qquad m=1,2,\ldots,2E.\]
Notice that $ \det(U_{\kv;\av})=\det(U_{\kv;0})=\det(U_{\kv}) $, since
$ \det(e^{i\check{\alpha}})=1 $. By fixing $ \kv\in\T^{E} $ and
applying Proposition \ref{prop: continuous eigenvalues}, we may deduce
that a labeling of the eigenphases of $U_{\kv}=U_{\kv;0}$ extends
continuously as a function of $\av \in \T^E$ to a labeling of
eigenphases of $U_{\kv;\av}$. This defines,
\[\theta_{m}:\T^{E}\times\T^{E}\to\T,\qquad m=1,2,\ldots,2E,\]
such that for any fixed $ \kv\in\T^{E} $, $ \theta_{m}(\kv;\av) $ is a continuous function of $ \av $ and $ \theta_{m}(\kv;0)=\theta_{m}(\kv) $. Moreover, for any $ \kv $ such that $ \theta_{m}(\kv)$ is simple, the extension $ \theta_{m}(\kv,\av) $ is analytic in $ \av $ around $ \av=0 $. This is a standard result of perturbation theory \cite{Rellich1969perturbation} and the fact that $ U_{\kv,\av} $ is analytic in $ \av $. We denote the Hessian of $\theta_{m}$ with respect to $\av$ at the point $(\kv;0)$ by $\hess_{\av}\theta_{m}(\kv)$. Recall the notation $\M(A)$ for the ``Morse index'', the number of strictly negative eigenvalues of $A$.  We can now state the main result of this subsection.

\begin{thm}\label{thm: nodal statistics}
Define the function $\bsigma:\mreg\rightarrow\Z$ by
\begin{equation}\label{eq:sigma_def_new}
  \bsigma(\kv):=\M\big(-\hess_{\av}\theta_{m}(\kv)\big),
\end{equation}
where $m=m(\kv)$ is determined by the condition $\kv\in \mg_{m}\cap\mreg$.
Then,
\begin{enumerate}
\item \label{enu: Sigma and sigma} On $\mgen$, $\bsigma$ gives the
  nodal surplus: for every $\lv\in\R_{+}^{E}$ and every $n\in\G$ of
  $\Gamma_{\lv}$,
  \[
    \sigma(n)= \bsigma\big(\fr{k_{n}\lv}\big).
  \]
\item \label{enu:Riemann_integrable}
  The indicator functions of $\mgen$ and $\mgen \cap
  \bsigma^{-1}(s)$ on $\Sigma$ are Riemann integrable.
\item \label{enu: Sigma and pj} If $\lv\in\R_{+}^{E}$ is rationally
  independent, then the nodal surplus distribution of $\Gamma_{\lv}$
  is given by
  \begin{equation}
    \label{eq: Pj with loops}
    P(\sigma=s) = \frac{\mu_{\lv}(\mgen\cap\bsigma^{-1}(s))}{\mu_{\lv}\left(\mgen\right)},\qquad s=0,1,2,\ldots,\beta.
  \end{equation}
  If we further assume that $\Gamma$ has no loops, this can be
  simplified,
  \begin{equation}
    \label{eq: Pj without loops}
    P(\sigma=s)=\mu_{\lv}(\bsigma^{-1}(s)).
  \end{equation}
\end{enumerate}
\end{thm}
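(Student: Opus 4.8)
The plan is to establish the three assertions in order, the first carrying the conceptual weight. Fix $\lv\in\R_{+}^{E}$, a generic index $n\in\G$ of $\Gamma_{\lv}$, and put $\kv^{*}:=\fr{k_{n}\lv}$; by Lemma~\ref{lem: mreg} we have $\kv^{*}\in\mgen\subset\mreg$, so $1$ is a simple eigenvalue of $U_{\kv^{*}}$, there is a unique layer index $m=m(\kv^{*})$ with $\theta_{m}(\kv^{*})=0$, and (by the analytic perturbation theory recalled after Definition~\ref{def:The-unitary-evolution-alpha}, see \cite{Rellich1969perturbation}) the eigenphase branch of $U_{\kv;\av}$ through $1$ at $(\kv^{*},0)$ is real-analytic in $(\kv,\av)$ there; its $\av$-Hessian at $(\kv^{*},0)$ is the matrix denoted $\hess_{\av}\theta_{m}(\kv^{*})$ in \eqref{eq:sigma_def_new}. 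For part (1) I would invoke the Nodal--Magnetic Connection (Theorem~\ref{thm: nodal magnetic}, \cite{Ber_apde13,Col_apde13,BerWey_ptrsa14}). If the version quoted there already expresses $\sigma(n)$ through the $\av$-Hessian of this eigenphase branch, part (1) is immediate from \eqref{eq:sigma_def_new}; if instead it is phrased via the $n$-th magnetic eigenvalue $\lambda_{n}(\av)$ --- for which $\av=0$ is a critical point and $\sigma(n)=\M\big(\hess_{\av}\lambda_{n}(0)\big)$ --- I would bridge the two Hessians by differentiating the secular identity $\theta_{m}\big(k_{n}(\av)\lv;\av\big)\equiv0\pmod{2\pi}$ near $\av=0$: one differentiation together with $\nabla_{\av}\lambda_{n}(0)=0$ gives $\nabla_{\av}\theta_{m}(\kv^{*};0)=0$, and a second differentiation at this critical point gives $\hess_{\av}\theta_{m}(\kv^{*})=-\big(\nabla_{\kv}\theta_{m}(\kv^{*})\cdot\lv\big)\,\hess_{\av}k_{n}(0)$, where $\nabla_{\kv}\theta_{m}(\kv^{*})\cdot\lv=\ba_{m}^{*}\boldsymbol{L}\ba_{m}>0$ by the first-order formula $\partial\theta_{m}/\partial\kappa_{e}=|(\ba_{m})_{e}|^{2}+|(\ba_{m})_{e+E}|^{2}$ already used in the proof of Theorem~\ref{thm: BGmeasure as spectral measures}. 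Since $\hess_{\av}\lambda_{n}(0)=2k_{n}\hess_{\av}k_{n}(0)$ at the critical point and $k_{n}>0$ for $n\ge2$ (the single label $n=1$ is degenerate and does not affect the limits below), these Hessians have the same number of negative eigenvalues, so $\bsigma(\kv^{*})=\M\big(-\hess_{\av}\theta_{m}(\kv^{*})\big)=\sigma(n)$, which is part (1).

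Next, for part (2): by part (1) and Lemma~\ref{lem: mreg}, the restriction $\bsigma|_{\mgen}$ coincides with the nodal-surplus oracle of \cite{AloBanBer_cmp18,Ban_ptrsa14}, whose Riemann integrability, together with that of the indicator $\mathbf{1}_{\mgen}$ --- equivalently, the fact that the topological boundary of $\mgen$ in $\mg$ is a null set --- was established there; as $\mathbf{1}_{\mgen\cap\bsigma^{-1}(s)}=\mathbf{1}_{\mgen}\,\mathbf{1}_{\bsigma^{-1}(s)}$ and a product of bounded Riemann-integrable functions is Riemann integrable, this settles (2). Alternatively one argues directly: $\mreg$ is real-analytic \cite{AloBan19} and $\kv\mapsto\hess_{\av}\theta_{m(\kv)}(\kv)$ is real-analytic on $\mreg$; its kernel always contains the fixed $(V-1)$-dimensional space of coboundary (gauge) directions, so $\bsigma$ is the Morse index of the associated real-analytic $\beta\times\beta$ reduced Hessian and is therefore locally constant off a proper real-analytic subvariety, hence off a null set, and Riemann integrability follows once $\partial\mgen$ is known to be null.

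For part (3), let $\lv$ be rationally independent. Since $\sigma$ is defined only on $\G$, since $n\in\G\iff\fr{k_{n}\lv}\in\mgen$ (Lemma~\ref{lem: mreg}), and since $\sigma(n)=\bsigma\big(\fr{k_{n}\lv}\big)$ for $n\in\G$ (part (1)), we get $\sigma^{-1}(s)\cap[N]=\{\,n\le N:\fr{k_{n}\lv}\in\mgen\cap\bsigma^{-1}(s)\,\}$ and $\G\cap[N]=\{\,n\le N:\fr{k_{n}\lv}\in\mgen\,\}$. Dividing by $N$, letting $N\to\infty$, and applying the equidistribution Theorem~\ref{thm: equidistirbution} (legitimate by the Riemann integrability from part (2)) gives $|\sigma^{-1}(s)\cap[N]|/N\to\mu_{\lv}\big(\mgen\cap\bsigma^{-1}(s)\big)$ and $|\G\cap[N]|/N\to\mu_{\lv}(\mgen)$. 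By \eqref{eq:total_measure_generic}, $\mu_{\lv}(\mgen)=1-\LL/(2L)\ge\tfrac12>0$, so taking the ratio and recalling the definition \eqref{eq:prob_defn} of $P(\sigma=s)$ yields \eqref{eq: Pj with loops}; when $\Gamma$ has no loops, $\LL=0$, so $\mu_{\lv}(\mgen)=1$ and $\mgen$ differs from $\mg$ by a null set, which gives \eqref{eq: Pj without loops}.

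Granting Theorem~\ref{thm: nodal magnetic}, the real content of part (1) is the bookkeeping in the implicit differentiation --- isolating the perturbed branch, verifying criticality at $\av=0$, and using the positivity $\nabla_{\kv}\theta_{m}\cdot\lv>0$ --- and part (3) is then a routine equidistribution argument. I expect the main obstacle to be part (2): making rigorous that the Morse-index function $\bsigma$ is Riemann integrable on $\mgen$, which requires either the real-analyticity of $\mreg$ and of the perturbed eigenphase (forcing $\bsigma$ to be locally constant off a null set, with care for possibly identically-degenerate analytic pieces) or a careful reduction to the oracle and the null-boundary statement of \cite{AloBanBer_cmp18,Ban_ptrsa14}.
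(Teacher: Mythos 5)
Your proposal is correct and follows essentially the same route as the paper: part (1) by implicit differentiation of the eigenphase identity $\theta_m(k_n(\av)\lv;\av)=0$ to relate $\hess_{\av}\theta_m$ and $\hess_{\av}k_n$ through a positive factor and then invoking the Nodal--Magnetic Connection, part (2) by deferring (as the paper does) to \cite{AloBanBer_cmp18,Alon,Alon_PhDThesis} for the structure of $\mgen$ and the local constancy/null-boundary facts, and part (3) by the equidistribution Theorem~\ref{thm: equidistirbution} applied to the two indicator functions. The only cosmetic differences are your unnecessary detour through $\lambda_n=k_n^2$ (the quoted Theorem~\ref{thm: nodal magnetic} is already stated for $k_n(\av)$) and your justification of the positive factor via $\nabla_{\kv}\theta_m\cdot\lv=\ba_m^{*}\boldsymbol{L}\ba_m>0$, where the paper instead cites $\partial_k\theta_m>0$ from \cite{GnuSmi_ap06}.
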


%\begin{rem}\label{rem: localy constant and integrabl}
%	It was shown in \cite{AloBanBer_cmp18} that the function $ \bsigma $ is locally constant on $ \mgen $, i.e. constant on connected components of $ \mgen $. In particular, this leads to conclusion that the
%	indicator functions of the level sets 
%	\[\mgen\cap\bsigma^{-1}(s),\qquad s=0,1,\ldots,\beta,\] 
%	are	Riemann integrable functions on $ \Sigma $. This is due to the fact that the boundary of $\mgen$ in $ \Sigma $ has measure zero \cite{AloBanBer_cmp18,Alon,Alon_PhDThesis}.
%\end{rem}
The first part of Theorem \ref{thm: nodal statistics} is similar to
\cite[Thm.~3.4]{AloBanBer_cmp18}, up to the alternative definition of
$\bsigma$. Its proof, given below, follows from the Nodal--Magnetic
Connection, see \cite{Ber_apde13,Col_apde13} and, in the context of
quantum graphs, \cite[Thm.~2.1]{BerWey_ptrsa14}. To avoid introducing
magnetic Schr{\"o}dinger operators, we reformulate
\cite[Thm.~2.1]{BerWey_ptrsa14} as follows.

\begin{thm}\cite{BerWey_ptrsa14}\label{thm: nodal magnetic}
  Let $\Gamma_{\lv}$ be a standard graph with a simple eigenvalue $k_{n}$. Then
  \begin{enumerate}
  \item\label{item:secular_magnetic} $k_{n}$ has an analytic extension around $\av=0$, such that $k=k_{n}(\av)$ is a solution to
    \begin{equation} \label{eq:implicit relation}
      \det(1-U_{k \lv ; \av})=0.
    \end{equation}
    In a small neighborhood of $ \av=0 $, this extension satisfies
    \begin{equation}
      \label{eq:extension_derivative}
      \frac{\partial}{\partial k}\det(1-U_{k \lv ; \av})\ne0.
    \end{equation}
  \item \label{enu: critivcal point} $\av=0$ is a critical point of
    $k_{n}(\av)$, i.e.
    \begin{equation}
      \label{eq:critical_point_cond}
      \frac{\partial k_{n}}{\partial \alpha_{j}}(0) = 0,
    \end{equation}
    for all $j=1,\ldots,\beta$.
  \item \label{enu: Hessian and nodal surplus} Denote the Hessian of $k_{n}(\av)$ at $\av=0$ by $\hess k_{n}(0)$. If $n\in\G$, then
    \begin{equation}\label{eq: dim ker Hess}
      \dim{\ker{\left(\hess k_{n}(0)\right)}}=E-\beta,
    \end{equation}
    and the nodal surplus, $\sigma(n)$, is given by the Morse index of $k_{n}(0)$
    \begin{equation}
      \M(\hess k_{n}(\av)|_{\av=0})=\sigma(n).
    \end{equation}
  \end{enumerate}
\end{thm}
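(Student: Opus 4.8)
The plan is to treat the three parts by rather different means. Parts~(1) and~(2) are elementary facts about the analytic matrix family $U_{k\lv;\av}$, which I would prove directly from the secular equation; Part~(3) is the deep ``nodal--magnetic connection'' of \cite{Ber_apde13,Col_apde13,BerWey_ptrsa14}, which I would obtain by verifying that $\det(1-U_{k\lv;\av})=0$ is precisely the secular equation of the magnetic Laplacian on $\Gamma_\lv$ with edge fluxes $\av$, and then quoting the cited theorem (phrased there for $\lambda_n:=k_n^2$) after translating between $\lambda_n$ and $k_n$. In this sense the theorem is genuinely a reformulation, and most of the work is bookkeeping rather than new analysis.

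For Part~(1): since $k_n\ne 0$ is simple, the characterization in Lemma~\ref{lem: secular equation and isomorphism} places $\fr{k_n\lv}$ in $\mreg$, so $\nabla_\kv\det(1-U_\kv)$ is non-zero there and is a non-zero multiple of the normal $\hat n$, which by \cite[Thm.~1.1]{CdV_ahp15} can be taken with non-negative, not-all-zero entries. As $\lv\in\R_+^E$ has all entries positive, $\partial_k\det(1-U_{k\lv;0})\big|_{k=k_n}=\lv\cdot\nabla_\kv\det(1-U_\kv)\big|_{\fr{k_n\lv}}\ne 0$, and since $F(k,\av):=\det(1-U_{k\lv;\av})$ is real-analytic in $(k,\av)$ (polynomial in $e^{ik\ell_j}$ and $e^{\pm i\alpha_j}$), the analytic implicit function theorem produces the claimed unique analytic extension $k=k_n(\av)$ near $\av=0$, with $\partial_k F\ne 0$ in a neighbourhood by continuity. (When $k_n=0$, i.e.\ $n=1$, the point is singular in $\mg$; there one replaces $k_1$ by $\lambda_1$, analytic as the lowest simple eigenvalue of an analytic family, and everything below is trivial since $\sigma(1)=0$.) For Part~(2): let $T$ be the permutation matrix reversing bonds (swapping index $j$ with $j+E$). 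One checks $Te^{i\hat\kappa}T=e^{i\hat\kappa}$, $Te^{i\check\alpha}T=e^{-i\check\alpha}$, and $TST=S^{\mathrm T}$ (the last from the three cases of \eqref{eq: S matrix}, using that $\Gamma$ is undirected); combined with $\det(1-AB)=\det(1-BA)$, $\det(1-M)=\det(1-M^{\mathrm T})$ and commutativity of diagonal matrices, this yields $\det(1-U_{\kv;\av})=\det(1-U_{\kv;-\av})$ for all $\kv,\av$. Hence the secular equation is even in $\av$, so by the local uniqueness from Part~(1) the analytic branch satisfies $k_n(-\av)=k_n(\av)$ and therefore $\nabla_\av k_n(0)=0$.

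For Part~(3): I would recall that twisting $\Gamma_\lv$ by a flux $\alpha_e$ on each edge --- equivalently, twisting the Laplacian by a character of $H^1(\Gamma)$, cf.\ \cite{phiSar87,BerKuc_graphs} --- gives a self-adjoint operator $\Delta_\av$ whose non-zero eigenvalues are the $k^2$ with $k>0$ solving $\det(1-U_{k\lv;\av})=0$, because the flux multiplies the transfer phase along $e$ and $\hat e$ by $e^{\pm i\alpha_e}$, exactly the diagonal factor $e^{i\check\alpha}$ in \eqref{eq: Ualpha}. Hence $\lambda_n(\av):=k_n(\av)^2$ is the $n$-th eigenvalue of $\Delta_\av$, and the nodal--magnetic connection \cite[Thm.~2.1]{BerWey_ptrsa14} (see also \cite{Ber_apde13,Col_apde13}) gives, for $n\in\G$ (eigenfunction vanishing on no vertex, Lemma~\ref{lem: mreg}): $\av=0$ is a critical point of $\lambda_n$, $\dim\ker\hess_\av\lambda_n(0)=E-\beta$, and $\M(\hess_\av\lambda_n(0))=\sigma(n)$. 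The bound $\dim\ker\ge E-\beta$ is gauge invariance: along the $(V-1)$-dimensional family of ``exact'' flux changes (shifting each $\alpha_e$ by the difference of an arbitrary vertex potential across $e$) the operator $\Delta_\av$ is unitarily equivalent to $\Delta_0$, so $\lambda_n$ is constant there, and $V-1=E-\beta$; the reverse inequality is non-degeneracy of $\hess_\av\lambda_n(0)$ on the complementary $\beta$-dimensional space of true fluxes, which is precisely what $n\in\G$ provides. Passing to $k_n$: for $k_n\ne 0$, $\nabla_\av k_n(0)=0$ gives $\hess_\av\lambda_n(0)=2k_n\,\hess_\av k_n(0)$ with $k_n>0$, so the two Hessians share the same kernel and the same Morse index, yielding \eqref{eq: dim ker Hess} and $\M(\hess_\av k_n(0))=\sigma(n)$.

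The main obstacle is Part~(3): its real content --- that the flux-Hessian of the eigenvalue literally counts the nodal surplus --- is imported from \cite{BerWey_ptrsa14,Ber_apde13,Col_apde13}. A self-contained proof would have to reproduce their key variational identity, expressing $\hess_\av\lambda_n(0)$ (from second-order perturbation theory, including the first-order eigenfunction correction) as a quadratic form on the cycle space $H^1(\Gamma)$ whose signature equals the number of edges on which $f_n$ changes sign minus $(n-1)$; this is the crux and would be by far the hardest step. The remaining technical points --- fixing the gauge and bond orientations so that $\det(1-U_{k\lv;\av})=0$ really is the magnetic secular equation, and the separate treatment of $k_n=0$ --- are routine, as are Parts~(1) and~(2) themselves.
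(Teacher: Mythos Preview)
The paper does not give a proof of this theorem: it is stated as a reformulation of \cite[Thm.~2.1]{BerWey_ptrsa14} (with extra gauge parameters in the spirit of \cite{Col_apde13}), and is then used as a black box in the proof of Theorem~\ref{thm: nodal statistics}. Your proposal is therefore more detailed than anything the paper offers, and is correct. Your implicit-function-theorem argument for Part~(1) via $\fr{k_n\lv}\in\mreg$ and positivity of $\lv\cdot\hat n$ is exactly right; your time-reversal argument for Part~(2) is correct and uses the same identity $JSJ=S^T$ that the paper later invokes (in Lemma~\ref{lem: H is symmetric}, for a different purpose); and your handling of Part~(3) --- identifying $\det(1-U_{k\lv;\av})=0$ as the magnetic secular equation, quoting the nodal--magnetic connection for $\lambda_n=k_n^2$, and converting via $\hess\lambda_n(0)=2k_n\hess k_n(0)$ --- is the natural way to unpack the paper's one-line citation. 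Your explanation of the $(E-\beta)$-dimensional kernel via gauge invariance matches the paper's remark following the theorem.
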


\begin{rem}
  Magnetic Schr{\"o}dinger operators are introduced in
  \cite{BerKuc_graphs,GnuSmi_ap06,KotSmi_ap99} among other sources.
  Part~(\ref{item:secular_magnetic}) of Theorem~\ref{thm: nodal
    magnetic} gives a direct description of their simple eigenvalues.
  In a small departure from \cite[Thm.~2.1]{BerWey_ptrsa14}, we have
  $E-\beta$ ``extra'' parameters $\alpha$ (in the spirit of
  \cite{Col_apde13}) which accounts for the appearance of the kernel
  in \eqref{eq: dim ker Hess}.
\end{rem}

\begin{proof}[Proof of Theorem \ref{thm: nodal statistics}]
Fix $\lv$ and an eigenvalue $ k_{n} $ of $ \Gamma_{\lv} $ with $ n\in\G $. Then $ \kv=\fr{k_{n}\lv}\in\mgen $ and in particular $ \kv\in\mgen\cap\mg_{m} $ for some fixed $ m=1,2,\ldots 2E $, so that
\begin{equation}
	e^{i\theta_{m}(\kv;0)}=1,
	\qquad\text{and}\qquad
	e^{i\theta_{j}(\kv;0)}\ne1, \text{  for any  } j\ne m.
\end{equation}
To be able to take derivatives of $ \theta_{m}(k,\av) $ in both $ k $ and $ \av $, including the case $ \fr{k\lv}\in\partial\Omega $, extend $ \theta_{m} $, locally around $ (\kv,\av)=(\fr{k_{n}\lv},0) $, as an eigenphase of $ U_{\kv,\av} $. Abusing notation for brevity, $ \theta_{m}(k,\av):=\theta_{m}(\fr{k\lv},\av) $ is therefore analytic in $ (k,\av) $ around $ (k_{n},0) $.\\

Using the relation 
\begin{equation}
	\det(1-U_{\kv ; \av})=\prod_{j=1}^{2E}(1-e^{i \theta_j(\kv;\av)}),
\end{equation}
we can determine that the extension of $ k_{n} $ to 
$k_{n}(\av)$, as described in Theorem \ref{thm: nodal magnetic}, is a solution of  
\begin{equation}
	\label{eq:tilde_theta_condition}
	\theta_{m}(k;\av)=0,
	\qquad\text{and satisfies}\qquad
	\frac{\partial}{\partial k}\theta_{m}(k ;\av)\ne0,
\end{equation}
in some $ (k,\av) $ neighborhood of $ (k_{n}(0),0) $. Differentiating the first condition in
\eqref{eq:tilde_theta_condition} gives, for all $j$,
\begin{equation}
  \label{eq: theta dot}
  \frac{\partial}{\partial \alpha_{j}}\theta_{m}(k_{n};\av)
  = -\left(\frac{\partial}{\partial
      k}\theta_{m}(k_{n};\av)\right)
  \frac{\partial k_{n}(\av)}{\partial \alpha_{j}}.
\end{equation}
Substituting $\av=0$ and using \eqref{eq:critical_point_cond}, we
conclude that
\begin{equation}
  \label{eq: theta dot=0}
  \frac{\partial}{\partial \alpha_{j}}\theta_{m}(k_{n};0)
  = 0,\qquad j=1,2,\ldots E.
\end{equation}
Differentiating \eqref{eq: theta dot} with respect to $\alpha_{j'}$,
substituting $\av = 0$, using \eqref{eq:critical_point_cond} and
rearranging gives
\begin{equation}\label{eq: Hess k Hess theta}
  \hess k_{n}(\av)|_{\av=0}
  =
  -\frac{1}{\frac{\partial}{\partial k}\theta_{m}(k_{n};0)}
  \hess_{\av}\theta_{m}(\kv),\qquad \kv=\fr{k_{n}\lv}.
\end{equation}
As $\frac{\partial}{\partial k}\theta_{m}(k
;0) >0$ whenever $ \theta_{m}(k
;0) $ is simple (see
\cite[eq. (83)]{GnuSmi_ap06} for example), we may use Theorem \ref{thm:
  nodal magnetic}, part (\ref{enu: Hessian and nodal surplus}), to conclude that for $  \kv=\fr{k_{n}\lv}$
\begin{equation}
  \M(-\hess_{\av}\theta_{m}(\kv))=\M(\hess k_{n}(\av)|_{\av=0})=\sigma(n).
\end{equation}
This proves Theorem \ref{thm: nodal statistics}, part (\ref{enu: Sigma and sigma}).

It was shown\footnote{The definitions of $ \bsigma $ in
\cite{AloBanBer_cmp18} and in the present paper are slightly different
but agree when restricted
to $\mgen$} in \cite{AloBanBer_cmp18} that $\bsigma$ is locally
constant on $\mgen$.  The argument relies on the fact that the Morse
index can change only when an eigenvalue passes through 0, i.e.\ the
dimension of the kernel changes.  The latter cannot happen on $\mgen$
because the dimension of the kernel is fixed by \eqref{eq: dim ker
  Hess} (and the two types of Hessians are non-zero multiples of each
other by \eqref{eq: Hess k Hess theta}).

Since every connected component of $\mgen$ has boundary of positive
co-dimension in $\mreg$
\cite{AloBanBer_cmp18,Alon,Alon_PhDThesis}, the
indicator functions of $\mgen\cap\bsigma^{-1}(j)$ and $\mgen$ are
Riemann integrable.

Now assume that $\lv$ is rationally independent
and evaluate $P(\sigma=s)$ by definition \eqref{eq:prob_defn},
\begin{align*}
  P(\sigma=s)
  &:= \lim_{N\to\infty}
    \frac{\big|\sigma^{-1}(s)\cap[N]\big|}{\big|\G\cap[N]\big|} \\
  &= \lim_{N\to\infty}
    \frac{\Big|\set{n\le N}{\fr{k_{n}\lv}\in \mgen\cap\bsigma^{-1}(s)}\Big|}
    {\Big|\set{n\le N}{\fr{k_{n}\lv}\in \mgen}\Big|} \\
  &= \frac{\mu_{\lv}\left(\mgen\cap\bsigma^{-1}(s)\right)}{\mu_{\lv}\left(\mgen\right)},
\end{align*}
where $ [N]:=\{1,2,\ldots,N\} $ and we used Lemma~\ref{lem: mreg} and part~(\ref{enu: Sigma and
  sigma})- of the present theorem to get to the second line. In the last equality we apply Theorem~\ref{thm: equidistirbution} to the (Riemann integrable)
characteristic functions of $\mgen\cap\bsigma^{-1}(s)$ and $\mgen$.

If $\Gamma$ has no loops, then $\mgen$ is a set of full measure and
\eqref{eq: Pj without loops} follows.
\end{proof}

%%%%%%%%%%%%
\subsection{Hessian in terms of the unitary evolution matrices}\label{subsec:hessian}

Previously, we described the nodal surplus using the signature of
$\hess_{\av}\theta_{n}(\kv)$, the Hessian of
$\theta_{n}(\kv;\av)$ with respect to $\av$ at the point
$(\kv;0)$. We will now derive $\hess_{\av}\theta_{n}(\kv)$ explicitly
in terms of $U_{\kv}$ and its eigenpair $e^{i\theta_{n}}$ and
$\ba_{n}$. Note that $U_{\kv;\av}$ (see Definition \ref{def:The-unitary-evolution-alpha}) can be written as
\begin{equation}\label{eq: Zj alpha generators}
    U_{\kv;\av}=e^{i\sum_{j=1}^{E}{\alpha_{j}Z_{j}}}U_{\kv}.
\end{equation}
Recall that the diagonal matrices $\{Z_{j}\}_{j=1}^E$ were defined in \eqref{eq:Zmatrix_def} as follows,
\[(Z_{j})_{i,i'}=\begin{cases}
	1 & i=i'=j\\
	-1 & i=i'=\hat{j}\\
	0 & \mbox{elsewhere}
\end{cases}, \qquad i,i'=1,2,\ldots,2E.\]
We now define $ g(U) $:
\begin{defn}\label{def: Cayley}
  Let $ U $ be an $ n $-dimensional unitary matrix.  Consider the
  orthogonal decomposition $ \C^{n}=V_{1}\oplus V_{2} $ with
  $ V_{2}=\ker{(\id-U)} $ and $ V_{1}=V_{2}^{\perp} $. Let $ \id_{1} $
  and $ \id_{2} $ be the identity matrices on $ V_{1} $ and $ V_{2} $
  correspondingly, so that
  \[U =  
    \begin{pmatrix}
      U_{1,1} & U_{1,2}\\
      U_{2,1} & U_{2,2}
    \end{pmatrix} =\begin{pmatrix}
      U_{1,1} & 0\\
      0 & \id_{2}
    \end{pmatrix}
    \quad\mbox{and  }\det(\id_{1}-U_{1,1})\ne0.\\ \]  
  Then, the Moore--Penrose inverse of $ (\id-U) $ is defined by
  \begin{equation*}
    \left(\id-U\right)^{+} :=
    \begin{pmatrix}
      (\id_{1}-U_{1,1})^{-1} & 0\\
      0 & 0
    \end{pmatrix}.    
  \end{equation*}
  In particular, the range of $\left(\id-U\right)^{+}$ is orthogonal
  to $V_{2}=\ker{(\id-U)}$.  Finally, we define the self-adjoint
  matrix
  \begin{align*}	
    g(U) :=  i\left(\id+U\right)\left(\id-U\right)^{+} & =\begin{pmatrix}
      i\left(\id_{1}+U_{1,1}\right)(\id_{1}-U_{1,1})^{-1} & 0\\
      0 & 0
    \end{pmatrix}.
  \end{align*}
\end{defn}
\begin{rem}
  Given a self-adjoint operator $ A $, its Cayley
  transform (see, e.g., \cite[Sec.~13.1]{Schmudgen}) is the unitary operator
  $ C(A):=(A-i\id)(A+i\id)^{-1}$. The map $ g $ agrees with the inverse
  of the Cayley transform whenever $ (\id-U) $ is invertible,
  \cite[Thm.~13.5 and Cor.~13.7]{Schmudgen}. Namely,
  for any self-adjoint $ A $,
	\[g(C(A))=A.\] 
\end{rem}

\begin{prop}
  \label{prop: Hessian in terms of U}
  Let $e^{i\theta}$ be a simple eigenvalue of $U_{\kv}$ with
  normalized eigenvector $\ba$. Then the Hessian of $\theta$
  with respect to $\av$ at $\av=0$ is given by
  \begin{equation}\label{eq: Hessian in terms of...}
  	    \left(\hess_{\vec{\alpha}}\theta\right)_{j,j'}
  	= \ba^{*}Z_j g\left(e^{-i\theta}U_{\kv}\right)Z_{j'}\ba,\qquad j,j'=1,2,\ldots,E.  	
  \end{equation}
  In particular, if $\kv\in\mreg$ and $\ba$ is the normalized
  eigenvector of the eigenvalue 1,
  \begin{equation}
    \label{eq:sigma_Hessian_explicit}
    \bsigma(\kv) = \M(-\bH_\kv), \qquad
    \text{where }
    (\bH_\kv)_{j,j'} := \ba^{*}Z_j g\left(U_{\kv}\right)Z_{j'}\ba.
  \end{equation}
\end{prop}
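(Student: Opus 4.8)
The plan is to compute the Hessian of the eigenphase $\theta = \theta(\av)$ of $U_{\kv;\av}$ at $\av=0$ by standard analytic perturbation theory, exploiting the product structure $U_{\kv;\av} = e^{i\sum_j \alpha_j Z_j}U_{\kv}$ from \eqref{eq: Zj alpha generators}. Write $U = U_{\kv}$, $z = e^{i\theta}$ for the simple eigenvalue with unit eigenvector $\ba$, and let $A(\av) := e^{i\sum_j \alpha_j Z_j}U$, so $A(0)=U$ and $\partial_{\alpha_j}A(0) = iZ_jU$, $\partial_{\alpha_j}\partial_{\alpha_{j'}}A(0) = -\tfrac12(Z_jZ_{j'}+Z_{j'}Z_j)U$ (symmetrizing since the $Z_j$ commute, being diagonal; in fact $Z_jZ_{j'}$ is already symmetric). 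The eigenvalue $z(\av)$ depends analytically on $\av$ near $0$; first I would recover the first-order vanishing $\partial_{\alpha_j}\theta(0)=0$ (equivalently $\partial_{\alpha_j}z(0)=0$), which follows from $\partial_{\alpha_j}z(0) = \ba^* (iZ_jU)\ba = iz\,\ba^*Z_j\ba$ and the fact that $\ba^*Z_j\ba = |\ba_j|^2 - |\ba_{j+E}|^2$ — wait, this need not vanish in general, but the vanishing is guaranteed by Theorem~\ref{thm: nodal magnetic}(\ref{enu: critivcal point}) together with \eqref{eq: theta dot}, so I will simply invoke that rather than reprove it. (Alternatively it follows from the time-reversal symmetry $S = S^T$ forcing $\ba^*Z_j\ba=0$ at points of $\mg$; but citing the quoted theorem is cleanest.)

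With the first derivatives of $\theta$ vanishing, the second-order perturbation formula for a simple eigenvalue of a (non-self-adjoint but diagonalizable) analytic family gives
\[
  \partial_{\alpha_j}\partial_{\alpha_{j'}} z(0)
  = \ba^*\big(\partial_{\alpha_j}\partial_{\alpha_{j'}}A(0)\big)\ba
   + \ba^*\big(\partial_{\alpha_j}A(0)\big)\,(z - U)^{+}\,\big(\partial_{\alpha_{j'}}A(0)\big)\ba
   + (j \leftrightarrow j'),
\]
where $(z-U)^{+}$ is the reduced resolvent (the Moore–Penrose inverse of $z\,\id - U$, whose range is $\ba^\perp$). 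Substituting the derivatives of $A$ and using that $U\ba = z\ba$ and $\ba^*U = z\ba^*$, the terms collapse: $\ba^*Z_jZ_{j'}U\ba = z\,\ba^*Z_jZ_{j'}\ba$, and $\ba^*(iZ_jU)(z-U)^+(iZ_{j'}U)\ba = -|z|^2\,\ba^*Z_j U(z-U)^+ Z_{j'}\ba$... — here I would carefully push the $U$'s through: since $(z-U)^+$ commutes with $U$ and $\ba^* Z_j U = \ba^* Z_j U$, one rewrites $U(z-U)^+ = -\id + z(z-U)^+$ restricted to $\ba^\perp$ (this is the algebraic identity $U(zI-U)^{-1} = -I + z(zI-U)^{-1}$ on the complementary subspace), so after dividing by $z$ to pass from $\partial^2 z$ to $\partial^2\theta$ (using $z = e^{i\theta}$, $\partial^2\theta(0) = -iz^{-1}\partial^2 z(0)$ when $\partial\theta(0)=0$), everything reorganizes into $\ba^* Z_j\big[i(\id + z^{-1}U)(z^{-1}U - \id)^{+}\cdot(\text{something})\big]Z_{j'}\ba$. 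Matching this against $g(e^{-i\theta}U) = i(\id + e^{-i\theta}U)(\id - e^{-i\theta}U)^{+}$ from Definition~\ref{def: Cayley} (noting $e^{-i\theta}U$ has eigenvalue $1$ on $\ba$, so its $(\id - \cdot)^+$ is exactly the reduced resolvent there) yields \eqref{eq: Hessian in terms of...}. The specialization \eqref{eq:sigma_Hessian_explicit} is then immediate: at $\kv\in\mreg$, $\theta=0$, $z=1$, so $g(e^{-i\theta}U_{\kv}) = g(U_{\kv})$, and $\bsigma(\kv)=\M(-\hess_{\av}\theta)$ by Theorem~\ref{thm: nodal statistics}, \eqref{eq:sigma_def_new}.

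The main obstacle is purely bookkeeping: correctly handling the non-self-adjointness of $U_{\kv;\av}$ (the left eigenvector is $\ba^*$ because $U$ is unitary, so the biorthogonal formalism degenerates to the familiar one, but one must be careful that $\partial_{\alpha_j}A(0)=iZ_jU$ is not anti-self-adjoint), tracking the conversion between derivatives of $z=e^{i\theta}$ and of $\theta$ (only clean because the first derivatives vanish — this is exactly why invoking Theorem~\ref{thm: nodal magnetic}(\ref{enu: critivcal point}) first is essential), and verifying that the reduced resolvent of $U$ at $z$ equals the Moore–Penrose inverse $(\id - e^{-i\theta}U)^{+}$ appearing in $g$, including that its range is orthogonal to $\ba$ so that cross terms with the projection onto $\ba$ vanish. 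I would organize the computation so that the symmetrized second-derivative term and the resolvent term are combined using the resolvent identity $U(zI-U)^{-1}|_{\ba^\perp} = -I + z(zI-U)^{-1}|_{\ba^\perp}$ at the very end, which is what produces the $(\id + U)$ factor in $g$ out of the naive $(\id - U)^{-1}$-type expression.
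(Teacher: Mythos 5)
Your overall strategy (second--order Rayleigh--Schr\"odinger perturbation theory for the simple eigenvalue of the analytic family $U_{\kv;\av}=e^{i\sum_j\alpha_jZ_j}U_{\kv}$, with the reduced resolvent realized as the Moore--Penrose inverse, then converting from $z=e^{i\theta}$ to $\theta$ using the vanishing first derivatives) is viable and, after the algebraic identity $U(z\id-U)^{+}=-(\id-\ba\ba^{*})+z(z\id-U)^{+}$, it does reproduce the right expression --- this is essentially the same computation the paper performs by implicitly differentiating $(\id-e^{iM(t)}U)\ba(t)=0$ along a ray. However, there is a genuine gap: the perturbative formula for a mixed second partial is intrinsically symmetric in $(j,j')$, so what your computation actually yields is
\begin{equation*}
  \left(\hess_{\av}\theta\right)_{j,j'}
  =\tfrac12\Big(\ba^{*}Z_jg\big(e^{-i\theta}U_{\kv}\big)Z_{j'}\ba
  +\ba^{*}Z_{j'}g\big(e^{-i\theta}U_{\kv}\big)Z_{j}\ba\Big)
  =\Re\Big(\ba^{*}Z_jg\big(e^{-i\theta}U_{\kv}\big)Z_{j'}\ba\Big),
\end{equation*}
since the matrix $H_{j,j'}:=\ba^{*}Z_jg(e^{-i\theta}U_{\kv})Z_{j'}\ba$ is a priori only Hermitian. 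Upgrading this to the entrywise identity \eqref{eq: Hessian in terms of...} requires knowing that $H$ is \emph{real} symmetric, which is not a formal consequence of perturbation theory for a general unitary family: it is a special feature coming from time reversal. This is exactly why the paper devotes a separate lemma (Lemma \ref{lem: H is symmetric}) to proving reality of $H$, via the operator $T_{\kv}=Je^{-i\hat\kappa}$ and the relations $JSJ=S^{T}$, $JZ_jJ=-Z_j$, applied to the spectral decomposition of $U_{\kv}$. Your proposal contains no such step, and without it you cannot conclude \eqref{eq: Hessian in terms of...} (nor, strictly, that $\bH_\kv$ in \eqref{eq:sigma_Hessian_explicit} is a real symmetric matrix whose index is well defined and equals $\M(-\hess_{\av}\theta)$, since in general $\M(-\Re H)\ne\M(-H)$ for Hermitian $H$).

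Two smaller points. First, your parenthetical alternative for the vanishing of the first derivatives cites ``$S=S^{T}$'', which is false in general; the correct time-reversal relation is $JSJ=S^{T}$, and with it one does get $\ba^{*}Z_j\ba=0$ for every simple eigenpair at every $\kv$, so invoking \eqref{eq: theta dot=0} (as the paper itself does) or this corrected symmetry argument is fine. Second, your displayed second-order formula with ``$+(j\leftrightarrow j')$'' reads as if the term $\ba^{*}\partial_{\alpha_j}\partial_{\alpha_{j'}}A(0)\ba$ is also symmetrized, which would double-count it; the resolvent terms alone should be symmetrized (and, had the first-order terms not vanished, the factors would be $\partial_{\alpha_j}A(0)-\partial_{\alpha_j}z(0)$). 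These are repairable bookkeeping issues; the missing reality/symmetry lemma is the substantive omission.
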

The proof of the matrix equality \eqref{eq: Hessian in terms of...} has two steps:
\begin{enumerate}
	\item We show in Lemma \ref{lem: H is symmetric} that the matrix on the right-hand-side of \eqref{eq: Hessian in terms of...} is real symmetric.
	\item Having real symmetric matrices on both sides of \eqref{eq: Hessian in terms of...}, we prove the equality by showing that the associated real quadratic forms agree:
	\[\frac{d^{2}}{dt^{2}}\theta(t \vv)|_{t=0}:=\sum_{j,j'=1}^{E}\left(\hess_{\vec{\alpha}}\theta\right)_{j,j'} v_{j} v_{j'}=\sum_{j,j'=1}^{E}\left(ba^{*}Z_j g\left(e^{-i\theta}U_{\kv}\right)Z_{j'}\ba\right) v_{j} v_{j'},\]
	for every $ \vv\in\R^{E}=T_{0}\T^{E} $. The notation $ \theta(t\vv) $ stands for $ \theta $ at $ \kv $ with $ \av=t\vv $ for small enough $ t $. The notation $ T_{0}\T^{E} $ stands for the tangent space to $ \T^{E} $ at $ \av=0 $ which is the space on which the Hessian acts.  
\end{enumerate} 
\begin{lem}\label{lem: H is symmetric}
Let $e^{i\theta}$ be a simple eigenvalue of $U_{\kv}$ with normalized eigenvector $\ba$. Then the $ E\times E $ matrix $H$ defined by
\begin{equation}
H_{j,j'}:=\ba^{*}Z_{j}g\left(e^{-i\theta}U_{\kv}\right)Z_{j'}\ba,\qquad j,j'=1,2,\ldots,E,
\end{equation}
is real symmetric.
\end{lem}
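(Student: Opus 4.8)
We must show that the $E \times E$ matrix $H$ with entries $H_{j,j'} = \ba^{*} Z_j\, g(e^{-i\theta} U_{\kv})\, Z_{j'}\ba$ is real and symmetric, given that $e^{i\theta}$ is a simple eigenvalue of $U_{\kv}$ with normalized eigenvector $\ba$.

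**Plan.** The statement splits into two independent claims: symmetry ($H_{j,j'} = H_{j',j}$) and reality ($H_{j,j'} \in \R$). For symmetry, note that $Z_j$ and $Z_{j'}$ are real diagonal, hence Hermitian, and $g(M)$ is self-adjoint for any unitary $M$ by Definition~\ref{def: Cayley}. Writing $G := g(e^{-i\theta} U_{\kv})$, we have $\overline{H_{j,j'}} = \ba^{*} Z_{j'} G^{*} Z_j \ba = \ba^{*} Z_{j'} G Z_j \ba = H_{j',j}$ using $G = G^{*}$ and $Z_j = Z_j^{*}$. So $H$ is Hermitian. Combined with reality (below), Hermitian plus real entries gives real symmetric; but in fact it is cleaner to prove reality of every entry directly, after which Hermitian immediately upgrades to symmetric.

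**Reality — the main point.** This is where the structure of $U_{\kv}$ and of $g$ enters. The key observations are: (i) the scattering matrix $S$ is real (by \eqref{eq: S matrix}), so $\overline{U_{\kv}} = \overline{e^{i\hat\kappa}}\, S = e^{-i\hat\kappa} S$; (ii) let $J$ be the permutation matrix swapping index $m$ with $m+E$ (i.e.\ the bond-reversal involution $b \mapsto \hat b$). Then $J^2 = \id$, $J$ is real orthogonal, $J e^{i\hat\kappa} J = e^{i\hat\kappa}$ (since the diagonal entries of $e^{i\hat\kappa}$ repeat with period $E$), and a short check of \eqref{eq: S matrix} shows $J S J = S^{\mathsf T} = S^{-1}$. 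Hence
\begin{equation*}
  J\, \overline{U_{\kv}}\, J = J e^{-i\hat\kappa} S J = e^{-i\hat\kappa}\, J S J = e^{-i\hat\kappa} S^{-1} = (S e^{i\hat\kappa})^{-1} = U_{\kv}^{-1} = U_{\kv}^{*}.
\end{equation*}
So $\overline{U_{\kv}} = J U_{\kv}^{*} J$, which is a kind of time-reversal symmetry. Its consequence for $g$: because $g$ is built functionally from $U$ (through $(\id+U)(\id-U)^{+}$ and the spectral decomposition of the fixed-eigenvalue splitting), and because complex conjugation is a ring anti-automorphism while $J \cdot J$ is conjugation by an involution, one gets $\overline{g(e^{-i\theta} U_{\kv})} = J\, g(e^{i\theta} U_{\kv}^{*})\, J = J\, g\big((e^{-i\theta} U_{\kv})^{*}\big) J$. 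Now use $g(M^{*}) = -g(M)$ — which follows directly from Definition~\ref{def: Cayley}, since $(\id - M^{*})^{+} = ((\id-M)^{+})^{*}$ and $i(\id+M^{*})(\id-M^{*})^{+} = \big(i(\id+M)(\id-M)^{+}\big)^{*} \cdot(-1)\cdot$(adjusting the leading $i$), more precisely $g(M^{*}) = \overline{i}\,(\id+M^{*})(\id-M^{*})^{+} = -\,\overline{\,i(\id+M)(\id-M)^{+}\,}^{\,*} = -g(M)$ because $g(M)$ is self-adjoint. Thus $\overline{G} = J(-G)J = -JGJ$. Finally, since $Z_j$ is real and commutes with $J$ up to sign — in fact $J Z_j J = -Z_j$ by \eqref{eq:Zmatrix_def} — and since $J\ba$ is related to $\overline{\ba}$ (from $\overline{U_{\kv}}\,\overline{\ba} = e^{-i\theta}\overline{\ba}$ and $J U_{\kv}^{*} J \overline{\ba} = e^{-i\theta}\overline{\ba}$, i.e.\ $U_{\kv}^{*}(J\overline{\ba}) = e^{-i\theta}(J\overline{\ba})$, so $J\overline{\ba}$ is proportional to $\ba$ by simplicity), we can chase:
\begin{equation*}
  \overline{H_{j,j'}} = \overline{\ba}^{*}\,\overline{Z_j}\,\overline{G}\,\overline{Z_{j'}}\,\overline{\ba} = (J\overline{\ba})^{*} (J Z_j J)(J \overline G J)(J Z_{j'} J)(J\overline{\ba}),
\end{equation*}
and substituting $J\overline\ba = c\,\ba$ with $|c|=1$, $JZ_j J = -Z_j$, $JGJ = -\overline G{}^{*}$... at this level of bookkeeping the cleanest route is: $J \overline G J = -G$ from above, $JZ_jJ = -Z_j$, so $\overline{H_{j,j'}} = \bar c c\, \ba^{*}(-Z_j)(-G)(-Z_{j'})\ba = -\ba^{*}Z_j G Z_{j'}\ba = -H_{j,j'}$?? — that sign must be reconciled, and the resolution is that one of my sign claims ($JZ_jJ = -Z_j$ versus $J\overline G J$) carries the opposite sign; I will verify the signs carefully in the writeup. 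The correct net effect will be $\overline{H_{j,j'}} = H_{j,j'}$.

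**Expected obstacle.** The bookkeeping of signs in the time-reversal conjugation is the only real risk: one must get consistently right how $J$ conjugates each of $e^{i\hat\kappa}$, $S$, $Z_j$, and how that propagates through $g$ and through $\overline\ba \leftrightarrow \ba$. An alternative, possibly cleaner, strategy that sidesteps some of this: expand $g(e^{-i\theta}U_{\kv}) = \sum_{m: \theta_m \ne \theta} \cot\!\big(\tfrac{\theta - \theta_m}{2}\big)\ba_m\ba_m^{*}$ (the formula given right before Definition~\ref{def:The-unitary-evolution}), so that $H_{j,j'} = \sum_m \cot\!\big(\tfrac{\theta-\theta_m}{2}\big)(\ba^{*}Z_j\ba_m)(\ba_m^{*}Z_{j'}\ba)$; then symmetry in $(j,j')$ plus reality reduces to showing $\sum_m \cot(\cdots)(\ba^{*}Z_j\ba_m)(\ba_m^{*}Z_{j'}\ba)$ is real, which one can attack by pairing the eigenvector $\ba_m$ at eigenphase $\theta_m$ with $J\overline{\ba_m}$ at eigenphase $-\theta_m$ and using $\cot$ oddness. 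I would present whichever of these the authors' own proof of Lemma~\ref{lem: H is symmetric} most closely follows; both lead to $H = H^{\mathsf T} \in M_E(\R)$.
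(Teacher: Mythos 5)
Your Hermitian step ($\overline{H_{j,j'}}=H_{j',j}$ from self-adjointness of $g$ and of the $Z_j$'s) is correct and is exactly how the paper starts. The reality argument, which is the real content of the lemma, has a genuine gap: the identity you base it on, $J\,\overline{U_{\kv}}\,J=U_{\kv}^{*}$, is false. Indeed $J e^{-i\hat{\kappa}}SJ=e^{-i\hat{\kappa}}S^{T}$ while $U_{\kv}^{*}=S^{T}e^{-i\hat{\kappa}}$, and these differ because $S$ does not commute with $e^{i\hat{\kappa}}$; your step $(Se^{i\hat{\kappa}})^{-1}=U_{\kv}^{-1}$ silently assumes that it does. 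Consequently the eigenvector claim $J\overline{\ba}\propto\ba$ also fails: what is true is $Je^{-i\hat{\kappa}}\ba\propto\overline{\ba}$. This is precisely why the paper conjugates not with $J$ but with $T_{\kv}:=Je^{-i\hat{\kappa}}$, for which $T_{\kv}U_{\kv}T_{\kv}^{-1}=\overline{U_{\kv}}^{-1}$ does hold; uniqueness of the spectral decomposition then gives $T_{\kv}\ba\ba^{*}T_{\kv}^{-1}=\overline{\ba\ba^{*}}$ and $T_{\kv}\,g(e^{-i\theta}U_{\kv})\,T_{\kv}^{-1}=\overline{g(e^{-i\theta}U_{\kv})}$ with \emph{no} minus sign (the projectors are conjugated, the real cotangent coefficients are untouched), and together with $T_{\kv}^{-1}Z_{j}T_{\kv}=-Z_{j}$ the two minus signs cancel inside the trace, yielding $\overline{H_{j,j'}}=H_{j,j'}$. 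Your own bookkeeping ends at $\overline{H_{j,j'}}=-H_{j,j'}$, which you acknowledge and leave unresolved; the spurious sign comes from dropping the minus in $\overline{g(M)}=-g(\overline{M})$ before invoking $g(M^{*})=-g(M)$, but even after repairing that, the argument still rests on the false $J$-relation for the eigenvector step, so it does not close.

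The fallback you sketch — expanding $g(e^{-i\theta}U_{\kv})$ spectrally and pairing $\ba_{m}$ at eigenphase $\theta_{m}$ with $J\overline{\ba_{m}}$ at eigenphase $-\theta_{m}$, then using oddness of $\cot$ — is also unsound: $U_{\kv}$ is not a real matrix and its spectrum is not conjugation-symmetric (e.g.\ $\det U_{\kv}=e^{2i\sum_{e}\kappa_{e}}\det S$ is generally non-real), and the correct antiunitary symmetry, $\ba\mapsto T_{\kv}^{-1}\overline{\ba}$, maps each eigenspace of $U_{\kv}$ to itself (same eigenphase), not $\theta_{m}$ to $-\theta_{m}$. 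So passing through $T_{\kv}=Je^{-i\hat{\kappa}}$, as the paper does, is not optional bookkeeping but the missing ingredient of your proof.
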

\begin{proof} It will be convenient to write $H_{j,j'}$ as a trace:
\begin{equation}\label{eq: H as trace}
    H_{j,j'}=\tr\left(\ba\ba^{*} Z_{j} g(e^{-i\theta} U_{\kv}) Z_{j'}\right).
\end{equation}
The matrix $g\left(e^{-i\theta}U_{\kv}\right)$ is self-adjoint. Therefore, $H$ is self-adjoint by
\begin{equation}\label{eq: Hij=conj Hji}
    \overline{(H_{j,j'})}=\tr\left(\ba\ba^{*} Z_{j} g(e^{-i\theta} U_{\kv}) Z_{j'}\right)^{*}=\tr\left(Z_{j'} g(e^{-i\theta} U_{\kv}) Z_{j}\ba\ba^{*} \right)= H_{j',j}.
\end{equation}

It is less immediate to show that $H$ is real.  Intuitively, it is due to time-reversal symmetry of the problem. The operator implementing the time-reversal is $T_{\kv}:= Je^{-i\hat{\kappa}}$, with the ``edge-reversing" matrix $J$ given by
\begin{equation}\label{eq: J}
    \forall{i\le E}\qquad J_{i,i+E}=J_{i+E,i}=1, \text{  and zero elsewhere}.
\end{equation}
We now derive several commutation relations between $T_{\kv}$ and the matrices entering \eqref{eq: H as trace}.
From the definition, $J=J^{-1}$ and $J$ commutes with $e^{i\hat{\kappa}}$, so $T_{
\kv}^{-1}=Je^{i\hat{\kappa}}$.
The following relations hold,
\begin{equation}
    J S J = S^{T},
    %\\
    \label{eq: time reversal}
    \qquad
    T_{\kv}{U}_{\kv} T_{\kv}^{-1} = J S J e^{i\hat{\kappa}} = S^{T} e^{i\hat{\kappa}}= \overline{U}_{\kv}^{-1}.
\end{equation}
The first is obtained from  the definitions of $S$ and $J$, and the second by using the first and the fact that $ U_{\kv}^{T}=\overline{U}_{\kv}^{-1} $. The second relation encapsulates the time-reversal symmetry of a quantum graph on the level of the unitary evolution matrix $U_{\kv}$ \cite{KotSmi_ap99}.
Consider the spectral decomposition of $U_{\kv}$,
\begin{equation}
  \label{eq:spec_decomp_U}
    U_{\kv}= e^{i\theta}\ba\ba^{*} + \sum_{\varphi_j\neq\theta} e^{i\varphi_{j}}P_{j},
\end{equation}
where, by our assumptions, $e^{i\theta}$ is a simple eigenvalue and $P_{j}$ is the orthogonal projection onto the eigenspace of $e^{i\varphi_{j}}$. Substituting \eqref{eq:spec_decomp_U} into the two sides of \eqref{eq: time reversal} we have
\begin{equation}
\overline{U}_{\kv}^{-1}= e^{i\theta}\overline{\ba\ba^{*}} + \sum_{\varphi_j\neq\theta} e^{i\varphi_{j}}\overline{P}_{j},
\end{equation}
and
\begin{equation}
    T_{\kv}{U}_{\kv} T_{\kv}^{-1}=e^{i\theta}T_{\kv}\ba\ba^{*}T_{\kv}^{-1} + \sum_{\varphi_j\neq\theta} e^{i\varphi_{j}}T_{\kv}P_{j}T_{\kv}^{-1}.
\end{equation}
Note that both expressions are spectral decompositions since $\overline{P}_j$ is an orthogonal projector and the conjugation of $P_j$ by a unitary matrix $T_{\kv}$ remains an orthogonal projection. By uniqueness of spectral decomposition we have
\begin{equation}\label{eq: aa and aa bar}
    T_{\kv}\ba\ba^{*} T_{\kv}^{-1}=\overline{\ba\ba^{*}},
\end{equation}
and, for every $j$,
\begin{equation}\label{eq: Pj conj}
    T_{\kv}P_{j}T_{\kv}^{-1}=\overline{P_{j}}.
\end{equation}
The decomposition of $g\left(e^{-i\theta}U_{\kv}\right)$ is given by:
\begin{equation}\label{eq: g(U) explicit}
    g\left(e^{-i\theta}U_{\kv}\right)=\sum_{\varphi_j\neq\theta}i\frac{1+e^{i(\varphi_{j}-\theta)}}{1-e^{i(\varphi_{j}-\theta)}}
    P_{j}=\sum_{\varphi_j\neq\theta}
    \cot\left(\frac{\varphi_{j}-\theta}{2}\right)P_{j},
\end{equation}
where $\ba\ba^{*}$ disappears because $\ba$ is in the kernel of $1-e^{-i\theta}U_{\kv}$ and thus in the kernel of $(1-e^{-i\theta}U_{\kv})^+$, see the definition of $g$ and the Moore-Penrose inverse. Conjugating $g\left(e^{-i\theta}U_{\kv}\right)$ and applying \eqref{eq: Pj conj} gives:
\begin{equation}
    T_{\kv} g(e^{-i\theta} U_{\kv}) T_{\kv}^{-1}
    =\sum_{\varphi_j\neq\theta} \cot\left(\frac{\varphi_{j}-\theta}{2}\right) \overline{P_{j}}
    \label{eq: g(U) and g(U) bar}
    =\overline{g(e^{-i\theta} U_{\kv})}.
\end{equation}

Finally, notice that $Z_{j}$ and $e^{-i\hat{\kappa}}$ commute (both being diagonal) and that $J Z_{j} J = -Z_{j}$ by the definitions of $ J $ and $ Z_{j} $, therefore 
\begin{equation}\label{eq: Zj conjugated is -Zj}
    T_{\kv}^{-1} Z_{j} T_{\kv} = - Z_{j}.
\end{equation}
Consider the expression for $H_{j,j'}$ in \eqref{eq: H as trace} and substitute $Z_{j}$ and $Z_{j'}$ with \eqref{eq: Zj conjugated is -Zj}:
\begin{align*}
    H_{j,j'}
    &=\tr\left(\ba\ba^{*}\ T_{\kv}^{-1} Z_{j} T_{\kv}\ g(e^{-i\theta} U_{\kv})\ T_{\kv}^{-1} Z_{j'} T_{\kv}\right)\\
    &=\tr\left(T_{\kv}\ba\ba^{*}T_{\kv}^{-1} \ Z_{j}\ T_{\kv}g(e^{-i\theta} U_{\kv}) T_{\kv}^{-1}\ Z_{j'} \right)\\
    &=\tr\left(\overline{\ba\ba^{*}}\ Z_{j}\
    \overline{g(e^{-i\theta}U_{\kv})}\ Z_{j'}\right)
    =\overline{H_{j,j'}}.
\end{align*}
This completes the proof.
\end{proof}

\begin{proof}[Proof of Proposition \ref{prop: Hessian in terms of U}]
  We fix $\kv\in\T^{E}$ and consider $U=U_{\kv}$ with simple
  eigenvalue $ \theta $ and normalized eigenvector $ \ba $, where we
  omit the $\kv$-dependence for brevity. The fact that $ U_{\kv,\av} $ is analytic in $ \av $ with simple eigenvalue $ \theta $ at $ \av=0 $, means that both $ \theta $ and its normalized eigenvector $ \ba $ can be analytically extended to $ \theta(\av) $ and $ \ba(\av) $ around $ \av=0 $ (see \cite{Rellich1969perturbation} and \cite{Kato_book} for example). We denote the Hessian of $ \theta(\av) $ at $ \av=0 $ by
  $\hess_{\av}\theta=\hess_{\av}\theta(\kv)$.

We want to show that $\hess_{\av}\theta=H$, where the real symmetric $H$ is defined in
Lemma~\ref{lem: H is symmetric}.  Since $\hess_{\av}\theta$ is also real symmetric (by definition), the two matrices are equal if and only if their real quadratic forms agree.
\begin{equation}\label{ref: quadratic forms}
  \frac{d^{2}}{dt^{2}}\theta(t \vv)|_{t=0}
  :=\sum_{j,j'=1}^{E}\left(\hess_{\vec{\alpha}}\theta\right)_{j,j'}
  v_{j} v_{j'}
  =\sum_{j,j'=1}^{E}\left(\ba^{*}Z_j g\left(e^{-i\theta}U_{\kv}\right)Z_{j'}\ba\right) v_{j} v_{j'},
\end{equation}
for every $ \vv\in\R^{E} $. The notation $ \theta(t\vv) $ uses $ \av=t\vv $ for small enough $ t $. To do so, fix $\vv\in\R^{E}$ and denote $  $
\begin{equation}
    Z_{\vv}:=\sum_{j=1}^{E}{v_{j}Z_{j}},
\end{equation}
so that $ \theta(t\vv) $ is the eigenphase of $U_{\kv;t\vv} = e^{i t Z_{\vv}}U$, and \eqref{ref: quadratic forms} which we want to show can now be written as
	\begin{equation}\label{eq: Hess=H reformulated}
		\frac{d^{2}}{dt^{2}}\theta(t \vv)|_{t=0}=\ba^{*}Z_{\vv} g\left(e^{-i\theta}U_{\kv}\right)Z_{\vv}\ba.
	\end{equation}

% We abuse notation and write Denote  
%We will be calculating the directional derivatives of the eigenphase $\theta(\kv;\av)$ with respect to $\av$ in the direction $\vv$.
%Let $\theta(t) := \theta(\kv;t\vv)$ denote the eigenphase of $U_{\kv;t\vv} = e^{i t Z_{\vv}}U$
%with $\theta(0)=\theta$. It is analytic around $t=0$ and we
%have shown already in \eqref{eq: theta dot=0} that its first derivative is $\dot{\theta}:=\frac{d}{dt}\theta(t)\Big|_{t=0}=0$.  Denote its second derivative by
%\begin{equation}\label{eq: theta ddot using Hessian}
%  \ddot{\theta}:=\frac{d^2}{dt^2}\theta(t)\Big|_{t=0}
%  = \vv^{*} \left(\hess_{\vec{\alpha}}\theta\right) \vv.
%\end{equation}
%Expressing $\vv^{*}H\vv$ in terms of $ Z_{\vv}$, we want to show that
%\begin{equation}\label{eq: Hess=H reformulated}
%\ddot{\theta}=  \ba^{*}Z_{\vv}g(e^{-i\theta}U)Z_{\vv}\ba.
%\end{equation}
To prove \eqref{eq: Hess=H reformulated}, denote the real symmetric $ 2E\times 2E $ matrix 
\[ M(t):=t Z_\vv-\theta(t\vv)\id, \]
where $ \id $ is the identity matrix. We abbreviate $\ba(t):=\ba(t\vv)$. This is the
normalized continuation of $\ba$ defined (uniquely up to a phase) by
\begin{equation}\label{eq: a_t equation M}
    \forall t\in\R
    \qquad
    \left(\id-e^{i M(t)} U \right)\ba(t)=0,
\end{equation}
and the normalization condition $\|\ba(t)\|\equiv 1$. We only consider $t$ around $t=0$, so that $\theta(t\vv)$ remains simple and both $\theta(t\vv)$ and $\ba(t)$ are smooth in $ t $. We will denote $ t $ derivatives by tags. We can fix the phase of $ \ba(t) $ such that the derivative $ \ba'(0) $ is orthogonal to $ \ba=\ba(0) $ at $ t=0 $. To see that, choose an arbitrary smooth real function $ \varphi(t) $ with $ \varphi(0)=0 $. Taking derivative at $ t=0 $ of the normalization condition $ \|e^{i\varphi(t)}\ba(t)\|^{2}\equiv1 $ gives
\[\Re[\ba^{*}\cdot(i\varphi'(0)\ba+\ba'(0))]=\Re[\ba^{*}\cdot\ba'(0)]= 0.\] 
We can choose $ \varphi $ to cancel the imaginary part, $ \varphi'(0)=-\Im[\ba^{*}\cdot\ba'(0)] $ and redefine $ \ba(t) $ as $ e^{i\varphi(t)}\ba(t) $. It will be a solution to \eqref{eq: a_t equation M}  with $ \ba(0)=\ba $ which also satisfies  
\begin{equation}\label{eq: phase normalization}
	\ba^{*}\cdot\ba'(0)=0.
\end{equation} 
Differentiating \eqref{eq: a_t equation M} and then substituting $e^{i M(t)}U \ba(t)= \ba(t)$, gives
\begin{equation}\label{eq: first derivative}
    \left(\id-e^{i M(t)} U \right) \ba'(t)-iM'(t)\ba(t)=0.
\end{equation}
Differentiating \eqref{eq: first derivative} and rearranging gives,
\begin{equation}\label{eq: second derivative}
    \left(\id-e^{i M(t)} U \right)\ba''(t) - iM'(t)\left(\id+e^{i M(t)} U\right) \ba'(t)
    -iM''(t)\ba(t)=0.
\end{equation}
Multiplying by $i \ba(t)^{*}$ cancels the first term since $\ba(t)^{*}\left(1-e^{i M(t)} U \right)=0$, so
\begin{equation}\label{eq: second derivative clean}
  \ba(t)^{*}M'(t)\left(1+e^{i M(t)} U\right) \ba'(t)
    +\ba(t)^{*}M''(t)\ba(t)=0.
\end{equation}
To get an expression at $ t=0 $ that depends only on $ \ba $ and not
$\ba'(0)$, we need to solve \eqref{eq: first derivative} for
$\ba'(0)$.  We know a priori that a $\ba(t)$ exists, therefore
$iM'(0)\ba(0) \in \Ran\left(\id-e^{i M(0)} U \right)$.  Applying the
Moore--Penrose inverse to \eqref{eq: first derivative} at $t=0$ we
obtain the solution
\begin{equation}\label{eq: a_t explicit}
  \ba'(0)=\left(\id-e^{i M(0)} U \right)^{+}iM'(0)\ba(0),
\end{equation}
that is orthogonal to the kernel of $\id-e^{i M(t)} U$ by
Definition~\ref{def: Cayley}.  Since the latter kernel is spanned by
$\ba(0)$, this is precisely the solution we seek.

Substituting \eqref{eq: a_t explicit} in \eqref{eq: second derivative
  clean} and rearranging, we get
\begin{align}
  -\ba(0)^{*}M''(0)\ba(0)
  &= \ba(0)^{*}M'(0)i\left(1+e^{i M(0)} U\right)\left(1-e^{i M(0)} U\right)^{+}M'(0)\ba(0)\\
  &= \ba(0)^{*}M'(0) g\left(e^{i M(0)} U\right)M'(0) \ba(0).
    \label{eq: g exp M}
\end{align}
 The value and derivatives of $M(t)$ at $t=0$, using $	\frac{d}{dt}\theta(t \vv)|_{t=0}=0$ (by \eqref{eq: theta dot=0}), are
\begin{equation}\label{eq: initial conditions}
    M(0)=-\theta\id,\qquad M'(0)=Z_\vv, \quad\mbox{and}\quad M''(0)=-\id	\frac{d^{2}}{dt^{2}}\theta(t \vv)|_{t=0}.
\end{equation}
Substituting the values of $ M(0),M'(0) $ and $ M''(0) $ in the relation \eqref{eq: g exp M}, gives
\begin{equation}
\frac{d^{2}}{dt^{2}}\theta(t \vv)|_{t=0}=  \ba^{*}Z_{\vv}g(e^{-i\theta}U)Z_{\vv}\ba,
\end{equation}
establishing \eqref{eq: Hess=H reformulated} and thus completing the proof.
\end{proof}

%%%%%%%%%%%
\subsection{Proof of Theorem \ref{thm: sampling simplified} and its
  generalization to graphs with loops.}
\label{subsec: proof of main theorem}

\begin{proof}[Proof of Theorem \ref{thm: sampling simplified}]
  The graph $\Gamma$ has no loops, by the statement's assumption. Fix
  rationally independent $\lv\in\R^{E}_{+}$. The nodal surplus
  distribution of $\Gamma_{\lv}$ is described in Theorem \ref{thm:
    nodal statistics}, part \eqref{enu: Sigma and pj}, as
  $P(\sigma=s) = \mu_{\lv}(\bsigma^{-1}(s)) =
  \mu_{\lv}(\bsigma^{-1}(s) \cap \mgen)$.  According to
  Proposition~\ref{prop: Hessian in terms of U}, a point
  $\kv \in \mgen$ belongs to the set $\bsigma^{-1}(s)$ if and only if
  the matrix $\bH_\kv$ has $s$ positive eigenvalues. Recall the indicator $ I^+_s(A) $ (introduced in Theorem~\ref{thm: sampling simplified}) which equals $ 1 $ if the matrix $ A $ has exactly $ s $ positive eigenvalues and $ 0 $ otherwise. The function $f(\kv) := I^+_s(\bH_\kv)$ coincides with the indicator
  function for the set $\bsigma^{-1}(s) \cap \mgen$ on $ \mgen $ which is a set of full
  measure in $ \Sigma $ when the graph has no loops. In particular, $I^+_s(\bH_\kv)$ is measurable and
  \begin{equation*}
      \int_{\mg}I^+_s(\bH_\kv)\dd \mu_{\lv}=\mu_{\lv}\left(\bsigma^{-1}(s) \cap \mgen\right)=P(\sigma=s).
  \end{equation*}
  By substituting the function $ h(\kv) $ in Theorem~\ref{thm: BGmeasure as spectral measures} with $I^+_s(\bH_\kv)$ gives
  \begin{equation}
    \label{eq: Pj as integral}
    P(\sigma=s)
    = \int_{\T^{E}}\sum_{n=1}^{2E}
    I^+_s(\bH_{\kv-\theta_n})
    \frac{\ba_{n}^*\boldsymbol{L}\ba_{n}}{\tr(\boldsymbol{L})}
    \frac{\dd \kv}{(2\pi)^E}.
  \end{equation}
  We conclude, due to $U_{\kv-\theta_n} =
  e^{-i\theta_n} U_\kv$, that the matrix $\bH_{\kv-\theta_n}$ is precisely
  the matrix $\bH_n(\kv)$ we introduced in \eqref{eq:Hmatrix_def}.
\end{proof}

In the case of a graph with loops, unavoidably, there are
eigenfunctions supported on loops that appear with non-zero
frequency.  Such eigenfunctions are non-generic and should be excluded
from the statistics.  This is done by excluding a certain subspace,
$\Vas\subset\C^{2E}$, which is a common invariant subspace of all
$\{U_{\kv}\}_{\kv\in\T^{E}}$.

\begin{defn}
Let $\Gamma$ be a graph and let $\EL$ be its set of loops. For every
loop $e\in\EL$ define its anti-symmetric vector $\as_{e}\in\C^{2E}$ to
be equal to 1 on $e$, $-1$ on $\hat{e}$ and zero elsewhere. Consider the orthogonal decomposition $ \C^{2E}=\Vas\oplus\Vs $, with
\begin{align}
  \Vas:= & \Span{\{\as_{e}\}_{e\in\EL}}, \label{eq: Vas}\\
  \Vs:= & \Vas^{\perp}.\label{eq: Vs}
\end{align}
\end{defn}

It is a simple check (using \eqref{eq:Udef} and \eqref{eq: S matrix}) to see that for any $e\in\EL$ and any $\kv\in\T^{E}$,
\begin{equation}\label{eq: Uk as=as}
    U_{\kv}\as_{e}=e^{i\kappa_{e}}\as_{e},
\end{equation}
and therefore we can choose a basis of eigenvectors of $U_\kv$, all of which belong to either $\Vs$ or
$\Vas$.
% Therefore, $\{U_{\kv}\}_{\kv\in\T^{E}}$ have a common $\{\Vs,\Vas\}$ block diagonalization:
%  \begin{equation}\label{eq: block diag}
%     \left(U_{\kv}\right)_{(\Vs,\Vas)} =
%     \begin{pmatrix}
%       \left(U_{\kv}\right)_{\Vs} & 0\\[4pt]
%       0 & \diag{\{e^{i\kappa_{e}}\}_{e\in\EL}}
%     \end{pmatrix}.
%   \end{equation}
The generalization of Theorem \ref{thm: sampling simplified} can now be stated:

\begin{thm}
  \label{thm: generalized theorem}
  Let $\Gamma$ be a graph, possibly with loops.  Then, the nodal surplus
  distribution of $\Gamma_{\lv}$, for rationally independent $\lv$ is
  given by
  \begin{equation}
    \label{eq: generalization nodal stat}
    P\left(\sigma=s\right)
    = \frac{1}{\left(2\pi\right)^{E}}\int_{\T^{E}}\sum_{n\colon \ba_n
      \in \Vs} \idxH \frac{\ba_{n}^{*}\boldsymbol{L}\ba_{n}}{2L-\LL}\dd \kv,
\end{equation}
where $\boldsymbol{L}:=\diag\{\lv,\lv\}$, $L$ is the total length of
$\Gamma_{\lv}$, $\LL$ is the total length of its loops and $\idxH$ is
defined in Theorem \ref{thm: sampling simplified}.
\end{thm}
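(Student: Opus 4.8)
The plan is to imitate the proof of Theorem~\ref{thm: sampling simplified}, adding only the bookkeeping forced by loops: the generic set $\mgen$ no longer has full $\mu_\lv$-measure in $\Sigma$, and the non-generic eigenphase-layers must be identified and discarded. First I would invoke Theorem~\ref{thm: nodal statistics}, part~\eqref{enu: Sigma and pj}, which for a graph with loops reads $P(\sigma=s)=\mu_\lv(\mgen\cap\bsigma^{-1}(s))/\mu_\lv(\mgen)$, together with Lemma~\ref{lem: mreg}, equation~\eqref{eq:total_measure_generic}, giving $\mu_\lv(\mgen)=1-\LL/(2L)=(2L-\LL)/(2L)$; it then remains to turn the numerator into a torus integral. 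Since the indicator $h:=\mathbf 1_{\mgen\cap\bsigma^{-1}(s)}$ is Riemann integrable by Theorem~\ref{thm: nodal statistics}, part~\eqref{enu:Riemann_integrable}, Theorem~\ref{thm: BGmeasure as spectral measures} applies and rewrites $\mu_\lv(\mgen\cap\bsigma^{-1}(s))$ as $\frac{1}{(2\pi)^{E}}\int_{\T^{E}}\sum_{n=1}^{2E}h(\kv-\theta_n)\,\frac{\ba_n^{*}\boldsymbol L\ba_n}{\tr\boldsymbol L}\,\dd\kv$.

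The core of the argument is to rewrite this integrand. Exactly as in the loopless case, the eigenvalue-$1$ eigenvector of $U_{\kv-\theta_n}=e^{-i\theta_n}U_\kv$ is $\ba_n$, and by Proposition~\ref{prop: Hessian in terms of U}, equation~\eqref{eq:sigma_Hessian_explicit}, together with the definition~\eqref{eq:Hmatrix_def}, one has $\bH_{\kv-\theta_n}=\bH_n(\kv)$, so that for a.e.\ $\kv$ and every $n$, $h(\kv-\theta_n)=\mathbf 1[\kv-\theta_n\in\mgen]\,\idxH$. It therefore suffices to prove that $\mathbf 1[\kv-\theta_n\in\mgen]=\mathbf 1[\ba_n\in\Vs]$ for a.e.\ $\kv$ and every $n$. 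Since for a.e.\ $\kv$ all translates $\kv-\theta_n$ lie in $\mreg$ (the exceptional $\kv$ project into $\msing$, a $\mu_\lv$-null set), this reduces to showing that a regular point of $\Sigma$ is generic if and only if its eigenvalue-$1$ eigenvector lies in $\Vs$. One inclusion is direct: by~\eqref{eq: Uk as=as}, an eigenvalue-$1$ eigenvector of $U_{\kv'}$ that lies in $\Vas$ must be a scalar multiple of some $\as_e$ with $\kappa'_e=0$, and the corresponding graph eigenfunction is supported on the single loop $e$ and vanishes at its vertex, hence is non-generic; so $\{\kv'\in\mreg:\ba(\kv')\in\Vas\}\subseteq\mreg\setminus\mgen$. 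For the reverse I would close a measure count: for each loop $e$ the layer $\{\kappa_e=0\}$ lies in $\Sigma$ with eigenvalue-$1$ eigenvector the normalised $\as_e$, so applying Theorem~\ref{thm: BGmeasure as spectral measures} to $\mathbf 1_{\{\kappa_e=0\}}$ gives it $\mu_\lv$-mass $\|\as_e\|^{-2}\as_e^{*}\boldsymbol L\as_e/\tr\boldsymbol L=\ell_e/(2L)$; these layers are pairwise disjoint up to $\mu_\lv$-null sets, so their union has mass $\LL/(2L)$, which by Lemma~\ref{lem: mreg} equals $\mu_\lv(\mreg\setminus\mgen)$. Hence $\{\ba(\kv')\in\Vas\}$ exhausts $\mreg\setminus\mgen$ up to a $\mu_\lv$-null set, which is the claimed equivalence.

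Assembling the pieces and using $\tr\boldsymbol L=2L$, the numerator becomes $\frac{1}{(2\pi)^{E}}\int_{\T^{E}}\sum_{n\colon\ba_n\in\Vs}\idxH\,\frac{\ba_n^{*}\boldsymbol L\ba_n}{2L}\,\dd\kv$, and dividing by $\mu_\lv(\mgen)=(2L-\LL)/(2L)$ produces exactly~\eqref{eq: generalization nodal stat}. I expect the main obstacle to be precisely the middle paragraph: pinning down which eigenphase-layers of $\Sigma$ are non-generic, matching $\{\ba_n\in\Vas\}$ with $\mreg\setminus\mgen$ and verifying that the $\mu_\lv$-masses coincide, so that the sum over all $2E$ eigenphases may legitimately be restricted to those with $\ba_n\in\Vs$. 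Everything else — the reduction via Theorems~\ref{thm: nodal statistics} and~\ref{thm: BGmeasure as spectral measures}, the identification $\bH_{\kv-\theta_n}=\bH_n(\kv)$, and the measurability bookkeeping — is a faithful repetition of the proof of Theorem~\ref{thm: sampling simplified}.
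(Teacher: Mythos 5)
Your proposal is correct, and its skeleton coincides with the paper's: reduce via Theorem~\ref{thm: nodal statistics}(3) and Lemma~\ref{lem: mreg} to computing $\mu_{\lv}(\mgen\cap\bsigma^{-1}(s))$, convert to a torus integral by Theorem~\ref{thm: BGmeasure as spectral measures}, identify $\bH_{\kv-\theta_n}=\bH_n(\kv)$ via Proposition~\ref{prop: Hessian in terms of U}, and use $\tr\boldsymbol{L}=2L$ together with $\mu_{\lv}(\mgen)=(2L-\LL)/(2L)$ for the normalization. The one place where you genuinely diverge is the justification for restricting the sum to eigenvectors in $\Vs$: the paper simply invokes Lemma~\ref{lem: mreg and Sigma decomposition} (imported from earlier work), which states outright that $\mgen$ is a full-measure subset of $\mreg\setminus\mL$ and characterizes $\mL$ through fixed vectors in $\Vas$, so that $\chi_{\mathrm{sym}}(\ba(\kv))\,I^+_s(\bH_\kv)$ agrees a.e.\ with the indicator of $\mgen\cap\bsigma^{-1}(s)$. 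You instead rederive this a.e.\ identification: the easy inclusion (a $\Vas$-eigenvector at a regular point is proportional to some $\as_e$ with $\kappa_e=0$, whose eigenfunction vanishes at the loop's vertex, hence is non-generic), plus a measure count showing $\mu_{\lv}(\mL)=\sum_{e\in\EL}\ell_e/(2L)=\LL/(2L)=\mu_{\lv}(\mreg\setminus\mgen)$, the layer masses $\ell_e/(2L)$ being computed from Theorem~\ref{thm: BGmeasure as spectral measures} (and agreeing with the direct BG-density computation, since the normal to $\{\kappa_e=0\}$ is the $e$-th coordinate vector). This is a legitimate and rather self-contained alternative: it trades the citation of Lemma~\ref{lem: mreg and Sigma decomposition} for reliance on \eqref{eq:total_measure_generic} of Lemma~\ref{lem: mreg} and on the standard correspondence between $\Vas$-eigenvectors of $U_\kv$ and loop-supported eigenfunctions, and it only needs one inclusion plus measure equality rather than the full statement of the lemma. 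Two cosmetic points: the exceptional set of $\kv$ whose diagonal orbit meets $\msing$ (or the null set where your a.e.\ identity fails) should be called Lebesgue-null in $\T^E$ rather than $\mu_{\lv}$-null, the passage from a $\mu_{\lv}$-null subset of $\Sigma$ to a Lebesgue-null set of orbits being exactly an application of \eqref{eq: integral BG measure} to its indicator; and one should note (as the paper does after \eqref{eq: Uk as=as}) that for a.e.\ $\kv$ the eigenvalues are simple, so each $\ba_n$ automatically lies in either $\Vs$ or $\Vas$ and the restricted sum is well defined.
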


\begin{rem}
Observe that the $2L-\LL$ factor is equal to the partial trace,
\[
  \tr_{\Vs}\left(\boldsymbol{L}\right)
  :=
  \sum_{n\colon \ba_n
    \in \Vs} \ba_{n}^{*}\boldsymbol{L}\ba_{n}
  =\tr\left(\boldsymbol{L}\right)-\sum_{e\in\EL}\frac{\as_{e}^{*}\boldsymbol{L}\as_{e}}{2}
  =2L-\LL.
\]
The normalization in the second equality is due to $\|\as_{e}\|^{2}=2$. If we define the matrices $f(U_{\kv}):=\sum_{n=1}^{2E}\idxH\ba_{n}\ba_{n}^{*}$, then
\eqref{eq: nodal stat in the simplified thm} states that
\[
P\left(\sigma=s\right)=\int_{\T^{E}}\frac{\tr(f(U_{\kv})\boldsymbol{L})}{\tr(\boldsymbol{L})}\frac{\dd \kv}{\left(2\pi\right)^{E}},
\]
when $\Gamma$ has no loops. If $\Gamma$ has loops, \eqref{eq: generalization nodal stat} has a similar form, only restricted to $\Vs$,
\[
P\left(\sigma=s\right)=\int_{\T^{E}}\frac{\tr_{\Vs}(f(U_{\kv})\boldsymbol{L})}{\tr_{\Vs}(\boldsymbol{L})}\frac{\dd \kv}{\left(2\pi\right)^{E}}.
\]

\end{rem}

The following lemma characterizes the part of the secular manifold
which corresponds to eigenfunctions supported exclusively on a loop.

\begin{lem}\cite{Alon,Alon_PhDThesis,AloBanBer_cmp18,BerLiu_jmaa17}
  \label{lem: mreg and Sigma decomposition}
  Let $\Gamma$ be a graph and let $\EL$ be its set of loops. Let
  \begin{align*}
    \mL
    &:=\set{\kv\in\T^{E}}{\exists e\in\EL
      \text{ s.t. }\kappa_{e}=0} \\
    &=\set{\kv\in\mg}{\exists \as \in \Vas
      \text{ s.t. }U_\kv \as = \as}.
  \end{align*}
  Then $\mgen$ is a subset of $\mreg\setminus\mL$ of full measure.
\end{lem}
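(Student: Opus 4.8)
The plan is to split the statement into three claims and verify them in order: (i) the two descriptions of $\mL$ coincide; (ii) $\mgen\subseteq\mreg\setminus\mL$; (iii) $\mgen$ has full $\mu_{\lv}$-measure in $\mreg\setminus\mL$ (for any $\lv\in\R_{+}^{E}$; equivalently, full measure with respect to the Riemannian volume on $\mreg$, since on $\mreg$ all these measures are mutually equivalent, the BG density being positive and continuous there). Claim (i) is short. By \eqref{eq: Uk as=as}, $U_{\kv}\as_{e}=e^{i\kappa_{e}}\as_{e}$ for every loop $e\in\EL$, so if $\kappa_{e}=0$ then $\as_{e}$ is a fixed vector and $\kv\in\mg$ belongs to the right-hand set (in particular $\mL\subseteq\mg$). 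Conversely, each $\as_{e}$ is supported on the two bonds forming the loop $e$, and these supports are disjoint for distinct loops, so the $\as_{e}$ are linearly independent; hence a nonzero $\as=\sum_{e\in\EL}c_{e}\as_{e}$ with $U_{\kv}\as=\as$ satisfies $\sum_{e}c_{e}(e^{i\kappa_{e}}-1)\as_{e}=0$, which forces $\kappa_{e}=0$ for every $e$ with $c_{e}\neq0$, and at least one such $e$ exists.

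For (ii) I would use the standard correspondence, at a point $\kv\in\mreg$, between a vector in the eigenspace $\ker(1-U_{\kv})$ — one-dimensional by Remark~\ref{rem:Sigma_sets_properties} — and an eigenfunction of $\Gamma_{\lv}$; membership $\kv\in\mgen$ says precisely that this eigenfunction does not vanish at any vertex, a property that depends only on $\ba$ and $\kv$ (hence the $\lv$-independence of $\mgen$). Suppose $\kv\in\mgen\cap\mL$. By (i) there is a loop $e\in\EL$ with $\kappa_{e}=0$, so $\as_{e}\in\ker(1-U_{\kv})$, and by simplicity $\ker(1-U_{\kv})=\Span\{\as_{e}\}$. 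But the eigenfunction attached to $\as_{e}$ is supported on the loop $e$ and equals there $2i\sin(kx_{e})$ (using that $\kappa_{e}=0$ means $k\ell_{e}\in2\pi\Z$); it vanishes at both ends of the loop, i.e.\ at the loop's vertex, so it is non-generic, contradicting $\kv\in\mgen$. Therefore $\mgen\cap\mL=\emptyset$, and since $\mgen\subseteq\mreg$ (Lemma~\ref{lem: mreg}) we obtain $\mgen\subseteq\mreg\setminus\mL$.

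For (iii) I would compute $\mu_{\lv}(\mL)$ from the BG-density \eqref{eq: BG ds}. Write $\mL=\bigcup_{e\in\EL}\Lambda_{e}$ with $\Lambda_{e}:=\{\kv\in\T^{E}:\kappa_{e}=0\}$, a flat $(E-1)$-dimensional subtorus of $\T^{E}$ lying in $\mg$. Away from $\msing$, the simple eigenvalue $1$ on $\Lambda_{e}$ has normalized eigenvector $\as_{e}/\sqrt2$, so the normal-vector formula of \cite[Thm.~1.1]{CdV_ahp15}, $\hat{n}_{j}\propto|\ba_{j}|^{2}+|\ba_{\hat{j}}|^{2}$ over $j\in\{1,\dots,E\}$, yields $\hat{n}=(0,\dots,1,\dots,0)$ with the $1$ in position $e$, so $\hat{n}\cdot\lv=\lve$; and the Riemannian volume form restricts on the flat torus $\Lambda_{e}$ to $\prod_{j\neq e}d\kappa_{j}$, of total mass $(2\pi)^{E-1}$. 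Since $\mu_{\lv}(\msing)=0$ and the pairwise intersections $\Lambda_{e}\cap\Lambda_{e'}$ have dimension $\le E-2$ (hence are $\mu_{\lv}$-null), \eqref{eq: BG ds} gives
\[
\mu_{\lv}(\mL)=\sum_{e\in\EL}\frac{\pi}{L}\,\frac{1}{(2\pi)^{E}}\,\lve\,(2\pi)^{E-1}=\sum_{e\in\EL}\frac{\lve}{2L}=\frac{\LL}{2L}.
\]
As $\mu_{\lv}(\mreg)=1$ (Lemma~\ref{lem: secular equation and isomorphism}) and $\msing$ is null, $\mu_{\lv}(\mreg\setminus\mL)=1-\LL/(2L)$, which by Lemma~\ref{lem: mreg} equals $\mu_{\lv}(\mgen)$. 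Combined with the inclusion from (ii), this shows $\mgen$ has full measure in $\mreg\setminus\mL$.

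The bookkeeping in (i) and (iii) is routine; the genuine obstacle is (ii) — more precisely, invoking the eigenvector--eigenfunction dictionary at the level of detail needed: that on $\Lambda_{e}\setminus\msing$ the eigenvalue-$1$ eigenvector is exactly $\as_{e}$ (so the computed normal is correct) and that this vector corresponds to a loop eigenfunction vanishing at a vertex. Both facts are classical in quantum-graph theory, but one must also check that the ``exceptional'' part of $\Lambda_{e}$ — where $1$ is a multiple eigenvalue, coming either from a second vanishing loop coordinate or from an extra eigenvalue $1$ inside $\Vs$ — is contained in $\msing$, so that $\dim\msing\le E-2$ from Lemma~\ref{lem: secular equation and isomorphism} renders it negligible.
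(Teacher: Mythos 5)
Your argument is correct, but it is worth noting that the paper does not actually prove this lemma internally: it quotes \cite{BerLiu_jmaa17} and \cite[Prop.~A.1]{AloBanBer_cmp18} for the full-measure statement, and observes (as in your claim (i)) that the second description of $\mL$ is immediate from \eqref{eq: Uk as=as}. Your route is therefore genuinely different in structure. The disjointness $\mgen\cap\mL=\emptyset$ via the loop state $\as_{e}$ spanning the one-dimensional kernel at a regular point of $\{\kappa_e=0\}$ is the same mechanism the references use, and your BG-measure computation $\mu_{\lv}(\mL)=\LL/(2L)$ (normal in the $e$-direction on the flat subtorus, flat volume $(2\pi)^{E-1}$, overlaps and $\msing$ being null) is correct. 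The interesting move is in (iii): instead of proving the genericity statement of \cite{BerLiu_jmaa17} (that vanishing at a vertex occurs only on a null subset of $\mreg\setminus\mL$), you deduce full measure by comparing $\mu_{\lv}(\mreg\setminus\mL)=1-\LL/(2L)$ with the quoted value $\mu_{\lv}(\mgen)=1-\LL/(2L)$ from \eqref{eq:total_measure_generic}. Within this paper's logical structure that is legitimate, since Lemma \ref{lem: mreg} is an earlier quoted result; what it buys is a short, purely measure-theoretic argument, at the cost of hiding the hard analytic content inside Lemma \ref{lem: mreg} -- whose proof in the cited literature in fact proceeds through the very decomposition you are establishing, so as a standalone proof of the literature result your (iii) would be circular, whereas the paper's citation points at the genuine source.

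One step to tighten in (ii): the statement of Lemma \ref{lem: mreg} gives the equivalence $n\in\G\iff\fr{k_{n}\lv}\in\mgen$ only along the flow, not a pointwise characterization of $\mgen$ as ``the eigenfunction attached to $\ba$ does not vanish at a vertex.'' To conclude $\mreg\cap\mL\cap\mgen=\emptyset$ from the quoted statement alone, fix $\kv_{0}\in\mreg\cap\mL$ with $\kappa_{e}=0$ and choose $k>0$ and $\lv\in\R_{+}^{E}$ with $\ell_{e'}=(\kappa_{0,e'}+2\pi m_{e'})/k$, $m_{e'}\in\N$, so that $\fr{k\lv}=\kv_{0}$; then $k^{2}$ is a simple eigenvalue of $\Gamma_{\lv}$ whose eigenfunction is the loop state vanishing at the loop's vertex, hence its index is not in $\G$ and $\kv_{0}\notin\mgen$. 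You gesture at exactly this dictionary, so the gap is minor, but it is the one place where the argument needs the extra sentence.
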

The fact that $\mgen$ is a subset of $\mreg\setminus\mL$ of full measure follows from the results of \cite{BerLiu_jmaa17} and appears in the proof of \cite[Proposition A.1]{AloBanBer_cmp18} for example. The interpretation of $\mL$ as $\set{\kv\in\mg}{\exists \as \in \Vas   \text{ s.t. }U_\kv \as = \as}$ is straightforward from \eqref{eq: Uk as=as}.

\begin{proof}[Proof of Theorem \ref{thm: generalized theorem}.]
  Fix a rationally independent $\lv\in\R^{E}_{+}$. By Theorem \ref{thm:
    nodal statistics}, the nodal surplus distribution of
  $\Gamma_{\lv}$ is equal to
  \begin{equation}
    \label{eq: Pj in terms of Sigma in the proof}
    P(\sigma=s)
    = \frac{\mu_{\lv}(\mgen\cap\bsigma^{-1}(s))}{\mu_{\lv}\left(\mgen\right)},\qquad s=0,1,2,\ldots,\beta.
  \end{equation}
  Since $\tr\left(\boldsymbol{L}\right)=2L$, Lemma \ref{lem: mreg}
  gives
  $\frac{1}{\mu_{\lv}\left(\mgen\right)}=\frac{\tr\left(\boldsymbol{L}\right)}{2L-\LL}$. The
  set $\mgen\cap\bsigma^{-1}(s)$ is a full measure subset of
  $(\mreg\setminus\mL) \cap \bsigma^{-1}(s)$.  Let $\chi_{\mathrm{sym}}$
  be the indicator function of the subspace $\Vs$ and let $\ba(\kv)$ be the eigenvector of eigenvalue $1$ of $U_\kv$ for $\kv\in\mreg$. Then $\chi_{\mathrm{sym}}(\ba(\kv))$ is the indicator function of $\mreg\setminus\mL$ for $\kv\in\mreg$. Consider
  $f(\kv) := \chi_{\mathrm{sym}}(\ba(\kv)) I^+_s(\bH_\kv)$. This
  function differs from the indicator function of the set
  $\mgen\cap\bsigma^{-1}(s)$ on a set of measure zero.  The rest of the
  proof is identical to the proof of Theorem~\ref{thm: sampling simplified}.
\end{proof}

%%%%%%%%%%%%%%%%%%%%%%%%%%%%%%%%%%%%%%%%%%%%%%%%%%%%%%%
\section{The polytope of nodal surplus distributions and discrete
  graphs symmetries.}
\label{Edge lengths dependence and discrete graphs symmetries}

In general, the nodal surplus distribution depends on the choice of
lengths $\lv$ (Theorem \ref{thm: TOC} being a notable exception).
Recall the notation $\L(\Gamma)$ for the set of rationally independent
edge lengths of $\Gamma$ and the description of the distribution in
terms of the vector $\vec{P}_{\lv}\in\R_{\ge0}^{\beta+1}$ whose $j$-th
entry is $P(\sigma=j)$.  In Subsection \ref{subsec: We and convex
  hull}, we associated to every discrete graph $\Gamma$ without loops,
a set of vectors $ \{\vec{W}_e\}_{e\in\E} $ in
$ \R_{\ge0}^{\beta+1} $, such that
\begin{equation}\label{eq: Pl in terms of We no loops}
  \vec{P}_{\lv}=\frac{1}{L}\sum_{e\in\E}\ell_{j}\vec{W}_{e}, \qquad
  \text{for all } \lv\in\L(\Gamma),
\end{equation}
and therefore
\begin{equation}\label{eq: convex 1}
  \overline{\set{\vec{P_\lv}}{\lv\in\L(\Gamma)}}
  = \conv\set{\vec{W}_e}{e \in \E} =: \mathcal{P}(\Gamma),
\end{equation} 
where $\conv$ is the convex hull and the closure is taken with respect
to the standard topology on $ \R^{\beta+1} $.  We refer to the set
$\mathcal{P}(\Gamma)$ as the \emph{polytope of nodal surplus
  distributions}. In this section, using the tools constructed in
previous sections, we will provide an expression for the vectors
$\{\vec{W}_e\}_{e\in\E}$ that extends naturally to graphs with loops.
We will also show the effect of discrete graph symmetries on the
polytope $\mathcal{P}(\Gamma)$.  These results will serve as analytic
tools in the proof of Conjecture \ref{conj: Universality} to stowers
and mandarins in the next section.

\begin{prop}
  \label{prop:W_clean}
  Let $\Gamma$ be a graph, possibly with loops.  Define
  \begin{equation}
    \label{eq:ce_def}
    c_e :=
    \begin{cases}
      1 & e \text{ is a loop}\\
      2 & e \text{ is not a loop}
    \end{cases}, 
  \end{equation}
  and recall the measure $\mu_{e}$ from Definition \ref{defn:specialBGmeasures}.\\
  Then the vectors $\vec{W}_e$, $e\in\E$, defined by
  \begin{equation}\label{eq: wej generalized}
    W_{e,j} :=
    \frac{\mu_{e}\left(\mgen\cap\bsigma^{-1}(j)\right)}{\mu_{e}\left(\mgen\right)},
    \qquad j=0,1,\ldots,\beta,
  \end{equation} 

  have the following
  properties.
  \begin{enumerate}
  \item For all $\lv\in \L(\Gamma)$,
    \begin{equation}\label{eq: Pvec convex in We}
      \vec{P_\lv}=\frac{1}{\sum_{e\in\E}\lve c_{e}}\sum_{e\in\E}\lve c_{e} \vec{W}_{e},
    \end{equation}
    and, therefore, \eqref{eq: convex 1} holds.
  \item If $\Gamma$ has no loops, $\vec{W}_e$ defined by \eqref{eq:
      wej generalized} satisfy \eqref{eq: Wej def}.
  \end{enumerate}
\end{prop}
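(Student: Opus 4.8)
The plan is to deduce both statements from three facts already in hand: the expression \eqref{eq: Pj with loops} for $P(\sigma=s)$ in terms of the BG-measure, the decomposition $\mu_{\lv}=\frac1L\sum_{e\in\E}\lve\mu_{e}$ of \eqref{eq: BG mue}, and the value $\mu_{e}(\mgen)=c_{e}/2$, which I would establish first (this is the one genuinely delicate point, since $\lv=(0,\dots,1,\dots,0)$ is not a bona fide metric graph). Applying \eqref{eq: BG mue} to the measurable set $\mgen$, combining with $\mu_{\lv}(\mgen)=1-\LL/(2L)$ (Lemma~\ref{lem: mreg}, eq.~\eqref{eq:total_measure_generic}, the no-loop case being $\LL=0$), and clearing the factor $L$, one obtains
\[
  \sum_{e\in\E}\lve\,\mu_{e}(\mgen)=\sum_{e\in\E}\lve-\tfrac12\sum_{e\in\EL}\lve ,
\]
valid for every $\lv$ in the open set $\R_{+}^{E}$. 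Both sides are linear in $\lv$, so this identity holds for all $\lv\in\R^{E}$; comparing the coefficient of $\lve$ gives $\mu_{e}(\mgen)=1$ when $e\notin\EL$ and $\mu_{e}(\mgen)=\tfrac12$ when $e\in\EL$, i.e.\ $\mu_{e}(\mgen)=c_{e}/2$. (Each $\mu_{e}$ is a Borel probability measure on $\mg$ by Definition~\ref{def: BG}.)

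For part (1), fix $\lv\in\L(\Gamma)$. Expanding numerator and denominator of \eqref{eq: Pj with loops} by \eqref{eq: BG mue} and using $\mu_{e}(\mgen\cap\bsigma^{-1}(j))=\frac{c_{e}}{2}W_{e,j}$ --- the definition \eqref{eq: wej generalized} combined with $\mu_{e}(\mgen)=c_{e}/2$ --- gives
\[
  P(\sigma=j)=\frac{\sum_{e\in\E}\lve\,\mu_{e}(\mgen\cap\bsigma^{-1}(j))}
  {\sum_{e\in\E}\lve\,\mu_{e}(\mgen)}
  =\frac{\sum_{e\in\E}\lve c_{e}W_{e,j}}{\sum_{e\in\E}\lve c_{e}},
\]
which is \eqref{eq: Pvec convex in We}. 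Since the weights $w_{e}(\lv):=\lve c_{e}/\sum_{e'\in\E}\ell_{e'}c_{e'}$ are non-negative and sum to $1$, this presents $\vec{P}_{\lv}$ as the image of the simplex point $(w_{e}(\lv))_{e\in\E}$ under the affine map $(w_{e})_{e}\mapsto\sum_{e}w_{e}\vec{W}_{e}$, so $\vec{P}_{\lv}\in\conv\set{\vec{W}_{e}}{e\in\E}$. Conversely, $\L(\Gamma)$ is dense in $\R_{+}^{E}$ and $\lv\mapsto(w_{e}(\lv))_{e}$ is continuous with image dense in the standard simplex, so $\set{\vec{P}_{\lv}}{\lv\in\L(\Gamma)}$ is dense in $\conv\set{\vec{W}_{e}}{e\in\E}$; taking closures (a convex hull of finitely many points being closed) yields \eqref{eq: convex 1}.

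For part (2), assume $\Gamma$ has no loops, so $c_{e}\equiv2$, $\mu_{e}(\mgen)=1$, and $W_{e,s}=\mu_{e}(\mgen\cap\bsigma^{-1}(s))=\mu_{e}(\bsigma^{-1}(s))$, the last equality because $\mgen$ is a full-measure subset of $\mg\supseteq\bsigma^{-1}(s)$. Now apply Theorem~\ref{thm: BGmeasure as spectral measures} with $\lv$ the $e$-th standard basis vector, so $\boldsymbol{L}=\diag(\lv,\lv)$ has entry $1$ only in positions $e$ and $e+E$, whence $\tr\boldsymbol{L}=2$ and $\ba_{n}^{*}\boldsymbol{L}\ba_{n}=|(\ba_{n})_{e}|^{2}+|(\ba_{n})_{e+E}|^{2}$; take for $h$ the function $\kv\mapsto I^{+}_{s}(\bH_{\kv})$, which agrees $\mu_{e}$-almost everywhere with the indicator of $\bsigma^{-1}(s)$ by \eqref{eq:sigma_Hessian_explicit}, exactly as in the proof of Theorem~\ref{thm: sampling simplified}. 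This gives
\[
  W_{e,s}=\frac{1}{(2\pi)^{E}}\int_{\T^{E}}\sum_{n=1}^{2E}
  I^{+}_{s}\!\big(\bH_{\kv-\theta_{n}}\big)\,
  \frac{|(\ba_{n})_{e}|^{2}+|(\ba_{n})_{e+E}|^{2}}{2}\,\dd\kv,
\]
and the identity $\bH_{\kv-\theta_{n}}=\bH_{n}(\kv)$ (from $U_{\kv-\theta_{n}}=e^{-i\theta_{n}}U_{\kv}$ and the definition \eqref{eq:Hmatrix_def}, as in that proof) turns the right-hand side into \eqref{eq: Wej def}. Beyond the initial computation of $\mu_{e}(\mgen)$, everything here is bookkeeping with results already established; the only extra care needed is in the density argument of part~(1), where $\vec{P}_{\lv}$ is defined on $\L(\Gamma)$ only.
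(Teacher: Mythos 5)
Your proposal is correct and follows essentially the same route as the paper: combine \eqref{eq: Pj with loops} with \eqref{eq: BG mue} and the value $\mu_{e}(\mgen)=c_{e}/2$ for part (1), and for part (2) rerun the proof of Theorem~\ref{thm: sampling simplified} with Theorem~\ref{thm: BGmeasure as spectral measures} applied to $\lv=(0,\ldots,1,\ldots,0)$ and $h=I^{+}_{s}(\bH_{\kv})$. The only difference is that you spell out details the paper leaves implicit — the linearity-in-$\lv$ argument giving $\mu_{e}(\mgen)=c_{e}/2$ and the density/closure argument for \eqref{eq: convex 1} — and both are handled correctly.
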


\begin{proof}%[Proof of Proposition~\ref{prop:W_clean}]
  Combining \eqref{eq: Pj with loops} with
  \eqref{eq:total_measure_generic}, and subsequently using
  \eqref{eq: BG mue}, we get
  \begin{align}
    P(\sigma=s) 
    &= \frac{2L}{2L-\LL}\mu_{\lv}\left(\mgen\cap\bsigma^{-1}(s)\right)
      \nonumber\\
    \label{eq:P_expansion}
    &= \frac{2}{2L-\LL}
      \sum_{e\in\E}\ell_{e}\mu_{e}\left(\mgen\cap\bsigma^{-1}(s)\right).
  \end{align}
  Observe that $2L-\LL = \sum_{e\in\E}\ell_{e}c_e$ and also
  $\mu_{e}\left(\mgen\right)=\frac{c_{e}}{2}$ (for instance, using
  \eqref{eq:total_measure_generic}).  Equation \eqref{eq:P_expansion}
  becomes
  \begin{align*}
    P(\sigma=s) &= \frac{1}{2L-\LL}
    \sum_{e\in\E}\ell_{e}c_e
    \frac{\mu_{e}\left(\mgen\cap\bsigma^{-1}(s)\right)}{\mu_{e}\left(\mgen\right)}
    \\
    &= \frac{1}{\sum_{e\in\E}\ell_{e}c_e}
    \sum_{e\in\E}\ell_{e}c_e
    W_{e,s},
  \end{align*}
  establishing \eqref{eq: Pvec convex in We}.
  To show that $\vec{W}_e$ defined by \eqref{eq: wej generalized} satisfy \eqref{eq: Wej def} we mimic the proof in
  section~\ref{subsec: proof of main theorem} noting that
  $I^+_s(\bH_\kv)$ is the indicator function for the set
  $\bsigma^{-1}(s) \cap \mgen$ and using Theorem~\ref{thm: BGmeasure
    as spectral measures} with $\lv = (0,\ldots,1,\ldots, 0)$.
\end{proof}

\begin{rem}
  One can also understand $\vec{W}_e$ as the limit of $\vec{P}_\lv$ as
  $\lv \to (0,\ldots, 1, \ldots, 0)$ while remaining rationally
  independent.
\end{rem}

A useful implication of \eqref{eq: Pvec convex in We} is that every
$\vec{W}_{e}$ satisfies the following symmetry,
\begin{equation}\label{eq: Wej symmetry}
  W_{e,s}=W_{e,\beta-s},\qquad s=0,1,2,\ldots,\beta,
\end{equation}
since every $\vec{P}_{\lv}$ with $\lv\in\L(\Gamma)$ satisfies this
symmetry by \cite[Thm. 2.1]{AloBanBer_cmp18}.

While \eqref{eq: Wej symmetry} holds for every graph, discrete
symmetries of the graph $\Gamma$ further reduce the set of distinct
$\vec{W}_e$ vectors, which bounds the number of vertices of the polytope
$\mathcal{P}(\Gamma)$.  The definition below is adjusted to include
graphs with loops and multiple edges.

\begin{defn}
  \label{def: symetry group}
  Given a (discrete) graph $\Gamma=\Gamma(\E,\V)$ its symmetry group
  $\sym$ is the group of all graph automorphisms, i.e.\ invertible mappings
  $g$ that map vertices to vertices, edges to edges and preserve
  incidence: $e\in\E_{v}\iff g(e)\in\E_{g(v)}$. The orbit of an edge $e$ is
  denoted by
  \[
    [e]:=\set{e'\in\E}{\exists g\in\sym \text{ s.t. } g(e)=e'},
  \]
  and the set of such orbits is denoted by $\E/\sym$.
\end{defn}

\begin{thm}\label{thm: symmetry theorem}
Let $\Gamma$ be a graph and let $\sym$ be its symmetry group. Then, for any edge $e\in\E$ and graph automorphism $g\in\sym$:
\begin{equation}\label{eq: We equal Wge}
\vec{W}_{e}=\vec{W}_{g(e)}.
\end{equation}
In particular, if $ \Gamma $ is edge transitive\footnote{That is, any
  pair of edges $e,~e' $ has a graph automorphism $ g\in\sym $ such
  that $ g(e)=e' $.}, then $ \vec{P_\lv} $ is independent of
$ \lv\in\L(\Gamma) $ since the polytope
$\mathcal{P}(\Gamma)$ is a single point.
\end{thm}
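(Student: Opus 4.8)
The plan is to reduce the statement to showing that a graph automorphism $g$ induces an orthogonal transformation on $\C^{2E}$ that conjugates the whole family $\{U_{\kv}\}$ in a way compatible with the data entering $\vec W_e$. Concretely, let $g\in\sym$; since $g$ permutes the (undirected) edges, it induces a permutation $\pi$ of the $2E$ directed edges that commutes with the edge-reversal $b\mapsto\hat b$ and respects the head/tail incidence. Let $Q$ be the associated $2E\times2E$ permutation matrix. First I would check that $Q$ commutes with the bond scattering matrix $S$: this is immediate from the definition \eqref{eq: S matrix}, because the entry $S_{j,i}$ depends only on whether $i\xrightarrow[v]{}j$, whether $i=\hat j$, and on $\deg v$, all of which are preserved by a graph automorphism (note $\deg v=\deg g(v)$).

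Second, the diagonal part transforms by relabeling the edge coordinates: if $g$ sends edge $e$ to edge $g(e)$, then $Q\,e^{i\hat\kappa}\,Q^{-1}=e^{i\widehat{\kv'}}$ where $\kv'$ is the permutation of $\kv$ by $g^{-1}$ on $\E$. Hence $Q\,U_{\kv}\,Q^{-1}=U_{\kv'}$, i.e. conjugation by $Q$ intertwines $U_{\kv}$ with $U_{\kv'}$ for the permuted length/phase vector. Consequently, if $(e^{i\theta_n},\ba_n)$ is an eigenpair of $U_{\kv'}$ then $(e^{i\theta_n},Q\ba_n)$ is an eigenpair of $U_{\kv}$ with the same eigenphase, and $g$ (which sends self-adjoint $g(M)$ to $g(QMQ^{-1})=Qg(M)Q^{-1}$ by the spectral formula for $g$) commutes with everything in the integrand. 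Since $Q$ is a permutation matrix it also satisfies $Q^*=Q^{-1}=Q^T$, so $Q$ is real orthogonal; and it maps $\Vas$ to itself (as $g$ permutes loops), hence preserves $\Vs$.

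Third, I would plug this into the formula \eqref{eq: wej generalized} (equivalently \eqref{eq: Wej def}). The cleanest route is via the measures: the map $\kv\mapsto\kv'$ is a measure-preserving linear automorphism of $\T^E$ that pushes $\mu_{g(e)}$ to $\mu_e$, maps $\mgen$ to $\mgen$ (genericity of an eigenfunction is a graph-isomorphism invariant), and commutes with the oracle $\bsigma$ because $\bsigma(\kv)=\M(-\bH_\kv)$ depends only on the unitary $U_\kv$ up to the orthogonal conjugation by $Q$, which leaves the Morse index of $\bH_\kv$ unchanged (here one uses $Q Z_j Q^{-1}=Z_{g(j)}$ for the generators $Z_j$, and $Q\ba_nQ^{-1}$... more precisely $QZ_jQ^{-1}=Z_{g^{-1}(j)}$ so $\bH$ for $U_{\kv}$ at label relates to $\bH$ for $U_{\kv'}$ by a simultaneous row/column permutation, which preserves the signature). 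Therefore
\[
W_{e,s}=\frac{\mu_e(\mgen\cap\bsigma^{-1}(s))}{\mu_e(\mgen)}
=\frac{\mu_{g(e)}(\mgen\cap\bsigma^{-1}(s))}{\mu_{g(e)}(\mgen)}=W_{g(e),s},
\]
which is \eqref{eq: We equal Wge}. The final claim is then immediate: if $\Gamma$ is edge transitive there is a single orbit $[e]=\E$, so all $\vec W_e$ coincide, making $\conv\{\vec W_e\}$ a single point, and then \eqref{eq: Pvec convex in We} forces $\vec P_\lv$ to equal that point for every $\lv\in\L(\Gamma)$.

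The main obstacle, I expect, is the bookkeeping of directed-edge labels under $g$: one must be careful that the permutation $\pi$ of the $2E$ bonds is well-defined (an undirected automorphism fixing an edge setwise could in principle swap its two directions, e.g. for a loop or in the presence of symmetry exchanging the two endpoints of an edge), and that $Q$ nonetheless commutes with $S$ and conjugates $Z_j$ correctly in all these cases. Handling loops (where $e$ and $\hat e$ live "on the same edge") and multiple edges requires stating the automorphism as acting on the bi-directed graph from the start, as in Section~\ref{sec:secular_manifold}; once that is set up, each commutation relation is a short check from the definitions and the Morse-index invariance under orthogonal conjugation does the rest.
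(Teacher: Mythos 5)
Your proposal is correct in outline, but it takes a genuinely different route from the paper's proof. The paper never touches the bond space: it observes that for any $g\in\sym$ and any $\lv$, the metric graphs $\Gamma_{\lv}$ and $\Gamma_{g.\lv}$ are isometric (push eigenfunctions forward edge by edge), hence share spectrum, generic index set and nodal surplus sequence, so $\vec{P}_{\lv}=\vec{P}_{g.\lv}$; expanding both sides via the convex expansion \eqref{eq: Pvec convex in We} and using that this holds for all (dense, spanning) rationally independent $\lv$, the vanishing of the linear form $\sum_{e}\lve\left(\vec{W}_{e}-\vec{W}_{g(e)}\right)$ forces $\vec{W}_{e}=\vec{W}_{g(e)}$. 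You instead lift $g$ to a (signed) permutation $Q$ of the $2E$ bonds, check $QSQ^{-1}=S$ and $QU_{\kv}Q^{-1}=U_{g.\kv}$, and conclude that the induced coordinate permutation of $\T^{E}$ preserves $\mgen$ and the oracle $\bsigma$ while pushing $\mu_{e}$ to $\mu_{g(e)}$, so that $W_{e,s}=W_{g(e),s}$ follows directly from \eqref{eq: wej generalized}. The paper's argument buys brevity and robustness: no bookkeeping of directed edges, loops, or the structure of $\mgen$ is needed, only the isometry of the two metric graphs plus linearity in $\lv$. Your argument is more work but yields stronger, pointwise information: it essentially proves the remark the paper states after its proof, namely that $\mg$, $\mreg$ and $\mgen$ are invariant under the $\sym$-action on $\T^{E}$ and that the BG measures transform equivariantly, rather than deducing symmetry of the $\vec{W}_e$ indirectly. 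Two details you already flag would need care in a full write-up: the lift of $g$ to bonds may reverse the orientation of an edge (an edge fixed with endpoints swapped, or a loop), so $QZ_jQ^{-1}=\pm Z_{g(j)}$, and one must note that this only conjugates $\bH_{\kv}$ by a signed permutation, which indeed leaves the signature unchanged; and the invariance of $\mgen$ should not rest on the informal phrase that genericity is an isomorphism invariant, but on its explicit characterization from \cite{AloBanBer_cmp18} (regular points whose eigenvector of eigenvalue $1$ has non-vanishing vertex data), which is manifestly equivariant under $Q$. Neither point is a gap, just bookkeeping.
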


\begin{rem}
  The theorem implies there are at most $|\E/\sym|$ distinct
  $\vec{W}_{e}$'s.  The actual number of distinct $\vec{W}_{e}$'s can
  be much lower. For example, a graph with disjoint cycles and no
  particular symmetries has the same $\vec{P}_{\lv}$ for all
  $ \lv\in\L(\Gamma) $.  In this case the polytope
  $\mathcal{P}(\Gamma)$ is a single point.
\end{rem}

\begin{proof}
  Fix arbitrary $g\in \sym$ and $\lv\in\R_{+}^{E}$ and let $g.\lv$
  denote the permuted length vector, i.e.
  \[
    (g.\lv)_{e}:=\l_{g^{-1}(e)}.
  \]
Consider the metric graph $\Gamma_{g.\lv}$, namely $\Gamma$ with lengths $g.\lv$. Given a function $f$
on $\Gamma_{\lv}$, define the function $g.f$ on $\Gamma_{g.\lv}$ by its restrictions:
\begin{equation}
\left(g.f\right)|_{e}=f|_{{g^{-1}(e)}}.
\end{equation}

Clearly, $f$ is a $\Gamma_{\lv}$ eigenfunction of eigenvalue $k$ if and only if $g.f$ is a $\Gamma_{g.\lv}$ eigenfunction of the same eigenvalue $k$. In particular $\Gamma_{\lv}$ and $\Gamma_{g.\lv}$ share the same spectrum including multiplicity. Moreover, it is easy to see that $f$ is generic if and only if $g.f$ is, and if they are generic then they share the same nodal count. Therefore, $\Gamma_{\lv}$ and $\Gamma_{g.\lv}$ share the same nodal surplus sequence, and in particular the same nodal surplus distribution, $\vec{P}_{\lv}=\vec{P}_{g.\lv}$. That is,
\begin{align}\label{eq: Wvec and g}
0=\vec{P}_{\lv}-\vec{P}_{g.\lv} & =\frac{1}{L}\left(\sum_{e\in\E}\lve \vec{W}_{e}-\sum_{e'\in\E}\left(g.\lv\right)_{e'} \vec{W}_{e'}\right)\\
& =\frac{1}{L}\sum_{e\in\E}\lve\left( \vec{W}_{e}-\vec{W}_{g(e)}\right),\label{eq: We-Wge}
\end{align}
where moving to the second line we use $\left(g.\lv\right)_{e'} \vec{W}_{e'}=\l_{g^{-1}(e')} \vec{W}_{e'}=\l_{e''} \vec{W}_{g(e'')}$ and summing over all $e''$.
%As \eqref{eq: We-Wge} holds for any $\lv\in\L(\Gamma)$ then $\vec{W}_{e}=\vec{W}_{g(e)}$ for all $e\in\E$. As $g$ was chosen arbitrary, this holds for any $g\in\sym$. In particular, it follows that there are at most $|\E/\sym|$ distinct $\vec{W}_{e}$'s which means that the polytope $\conv\set{\vec{W}_{e}}{e\in\E}$ has at most $  |\E/\sym| $ vertices.
\end{proof}

\begin{rem}
  We showed in the proof that given a graph $\Gamma$, for any $\lv\in\R_{+}^{E}$ and $g\in\sym$, $\Gamma_{\lv}$ and $\Gamma_{g.\lv}$ share the same
  spectrum (including multiplicity) and the same index set $\G$. It follows that $\mg,~\mreg$ and $\mgen$ are each invariant
  under the action of $\sym$ on $\T^E$ by permutations.
\end{rem}
The entries of $\vec{W}_{e}$ are defined using the restriction of
$\mu_{e}$ to $\mgen$ (see \eqref{eq: wej generalized}). The next lemma
shows that both $\mu_{e}$ and its restriction to $\mgen$ are pull-back
measures\footnote{By pull-back of the uniform measure we mean that
  their push-forward is the uniform (Lebesgue) measure.} of the
uniform (Lebesgue) measure on $\T^{E-1}$ by some projection
$\boldsymbol{\pi}_{e}:\mg\rightarrow\T^{E-1}$. This lemma plays a role
in the proof of Theorem \ref{thm: stowers and mandarins} in
section~\ref{graph families}. 

\begin{lem}\label{lem: mj and pj} Given $e\in\E$, consider $\mu_{e}$ (see Definition \ref{def: mue}) and its restriction to $\mgen$. Define the canonical projection $\boldsymbol{\pi}_{e}:\mg\rightarrow\T^{E-1}$ by omitting the $e$-th coordinate. Then, for any Borel set $A\subset\T^{E-1}$,
\begin{equation}\label{eq: mj and pj}
\frac{\mu_{e}(\mgen\cap\boldsymbol{\pi}_{e}^{-1}(A))}{\mu_{e}(\mgen)}=\mu_{e}(\boldsymbol{\pi}_{e}^{-1}(A))=\frac{\vol\left(A\right)}{(2\pi)^{E-1}}.
\end{equation}
\end{lem}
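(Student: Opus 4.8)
The plan is to realise $\mu_{e}$ as the Riemannian surface measure of $\mreg$ weighted by the Jacobian of the coordinate projection $\boldsymbol{\pi}_{e}$, so that integrals against $\mu_{e}$ turn into fibre counts for $\boldsymbol{\pi}_{e}$, and then to show these fibre counts are a.e.\ constant. First I would record that, by Definition~\ref{def: BG} applied to $\lv=(0,\dots,1,\dots,0)$ (so that the total length is $1$), $\mu_{e}$ is supported on $\mreg$ with density $\tfrac{\pi}{(2\pi)^{E}}\,\hat{n}_{e}\,\dd s$. Since $\boldsymbol{\pi}_{e}$ is the restriction to $\mreg$ of the orthogonal projection of $\T^{E}$ onto $\{\kappa_{e}=0\}\cong\T^{E-1}$, the Jacobian of $\boldsymbol{\pi}_{e}|_{\mreg}$ at a point of $\mreg$ is precisely $\hat{n}_{e}$; hence the area formula for $\boldsymbol{\pi}_{e}\colon\mreg\to\T^{E-1}$ gives, for every Borel $E_{0}\subseteq\mreg$,
\begin{equation*}
  \mu_{e}(E_{0})=\frac{\pi}{(2\pi)^{E}}\int_{E_{0}}\hat{n}_{e}\,\dd s
  =\frac{\pi}{(2\pi)^{E}}\int_{\T^{E-1}}\#\bigl(E_{0}\cap\boldsymbol{\pi}_{e}^{-1}(\bx)\bigr)\,\dd\bx .
\end{equation*}
Using $\mu_{e}(\msing)=0$, this reads $\mu_{e}\bigl(\boldsymbol{\pi}_{e}^{-1}(A)\bigr)=\tfrac{\pi}{(2\pi)^{E}}\int_{A}N(\bx)\,\dd\bx$ with $N(\bx):=\#\bigl(\mreg\cap\boldsymbol{\pi}_{e}^{-1}(\bx)\bigr)$.

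Next I would prove that $N(\bx)=2$ for a.e.\ $\bx$. Fixing all coordinates but $\kappa_{e}$, the $\kappa_{e}$-dependence of $e^{i\hat\kappa}$ lives in a diagonal factor supported on the two positions $e$ and $\hat e$, so $U_{\kv}=A_{\bx}+e^{i\kappa_{e}}B_{\bx}$ with $\rank(B_{\bx})\le 2$, and $\kappa_{e}\mapsto\det(\id-U_{\kv})$ equals $f_{\bx}(e^{i\kappa_{e}})$ for a polynomial $f_{\bx}$ of degree $\le 2$. For a.e.\ $\bx$ we have $f_{\bx}\not\equiv 0$ (otherwise $\{\bx:f_{\bx}\equiv 0\}\times\T$ would be a positive-measure subset of $\mg$, which is Lebesgue-null in $\T^{E}$), and then $f_{\bx}$ has at most two zeros, so $N(\bx)\le 2$. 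On the other hand, $\mu_{e}$ is a probability measure and $\mu_{e}(\msing)=0$, so taking $A=\T^{E-1}$ above forces $\tfrac{1}{(2\pi)^{E-1}}\int_{\T^{E-1}}N\,\dd\bx=2$; combined with $N\le 2$ a.e.\ this yields $N\equiv 2$ a.e., and therefore $\mu_{e}\bigl(\boldsymbol{\pi}_{e}^{-1}(A)\bigr)=\vol(A)/(2\pi)^{E-1}$, which is the middle equality.

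For the first equality I would apply the area-formula identity with $E_{0}=\mgen\cap\boldsymbol{\pi}_{e}^{-1}(A)$, which reduces the claim to showing $\#\bigl(\mgen\cap\boldsymbol{\pi}_{e}^{-1}(\bx)\bigr)=c_{e}$ for a.e.\ $\bx$, with $c_{e}$ as in \eqref{eq:ce_def}; then $A=\T^{E-1}$ recovers $\mu_{e}(\mgen)=c_{e}/2$ (in agreement with Lemma~\ref{lem: mreg}) and the quotient becomes $\vol(A)/(2\pi)^{E-1}$. If $e$ is not a loop, Lemma~\ref{lem: mreg} gives $\mu_{e}(\mgen)=1=\mu_{e}(\mreg)$, so $\mg\setminus\mgen$ is $\mu_{e}$-null and, by the identity, has empty intersection with a.e.\ fibre; the count is then $N(\bx)=2=c_{e}$. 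If $e$ is a loop, \eqref{eq: Uk as=as} shows $\{\kappa_{e}=0\}\subseteq\mg$, and on (a.e.\ point of) this hyperplane the normal to $\mreg$ is $e_{e}$, so the restriction of $\mu_{e}$ to $\{\kappa_{e}=0\}$ has density $\tfrac{\pi}{(2\pi)^{E}}\dd s$ and total mass $\tfrac12$, which by Lemma~\ref{lem: mreg} accounts for all of the $\mu_{e}$-mass outside $\mgen$; hence $\mg\setminus(\mgen\cup\{\kappa_{e}=0\})$ is $\mu_{e}$-null, and for a.e.\ $\bx$ the fibre meets $\mg$ only in $(\bx,0)\in\mL\setminus\mgen$ (Lemma~\ref{lem: mreg and Sigma decomposition}) and in one point of $\mgen$; the count is $1=c_{e}$.

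I expect the only genuinely non-formal input to be the rank-two structure giving $N(\bx)\le 2$; the rest is bookkeeping with the area formula and with the already established properties of $\mg$, $\mgen$ and $\mL$, plus the routine check that the exceptional sets of $\bx$ (those where $f_{\bx}\equiv0$, where its two zeros collide at $z=1$, or where the extra fibre point lies in $\msing$) are Lebesgue-null, which follows from $\dim\msing\le E-2$ and the real-analyticity of $\det(\id-U_{\kv})$.
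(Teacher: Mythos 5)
Your argument is correct, and its skeleton (convert $\mu_e$ of a tube $\boldsymbol{\pi}_e^{-1}(A)$ into an integral of the fibre count, show the generic fibre of $\mg$ has two points, then handle $\mgen$ by removing the loop hyperplane) matches the paper's proof; the interesting difference is where the key counting fact comes from. The paper simply integrates the explicit density \eqref{eq: BG differetial forms} (which is the same as your ``Jacobian of $\boldsymbol{\pi}_e$ equals $\hat n_e$'' area-formula step) and then \emph{quotes} the fact $|\boldsymbol{\pi}_e^{-1}(\xv)|=2$ for a.e.\ $\xv$ from the proof of Proposition~3.1 in \cite{CdV_ahp15}; you instead prove it yourself: the $\kappa_e$-dependence of $U_\kv$ is a rank-$\le2$ perturbation, so $\det(\id-U_\kv)$ is a polynomial of degree $\le2$ in $e^{i\kappa_e}$, giving $N(\xv)\le 2$, and the normalization $\mu_e(\mg)=1$ forces the average of $N$ to be $2$, hence $N=2$ a.e. This buys a self-contained argument at the cost of leaning on the normalization of the BG measures; that is legitimate here (it follows from Theorem~\ref{thm: equidistirbution} with $h\equiv1$ for rationally independent $\lv$ and then \eqref{eq: BG mue} by density/linearity, since $\lv=(0,\dots,1,\dots,0)$ itself is not rationally independent), but you should say so explicitly, because in the literature the total mass is usually \emph{derived from} the two-point fibre count, so an unqualified ``$\mu_e$ is a probability measure'' looks circular. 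For the first equality the paper argues uniformly via $\mL$: for a.e.\ $\xv$ the fibre meets $\mL$ in exactly $2-c_e$ points, and $\mgen=\mg\setminus\mL$ up to measure zero (Lemma~\ref{lem: mreg and Sigma decomposition}), so the ratio is length-independent without splitting cases; your loop/non-loop case split with the mass accounting $\mu_e(\mgen)=c_e/2$ is equivalent, but note two small gaps to close: Lemma~\ref{lem: mreg} is stated for $\lv\in\R_+^E$, so $\mu_e(\mgen)=c_e/2$ again needs the linearity argument, and your claim that a.e.\ point of $\{\kappa_e=0\}$ lies in $\mreg$ with normal $e_e$ deserves the one-line justification that $\dim\msing\le E-2$ makes $\msing$ a null subset of that hyperplane, after which $\mg$ locally coincides with the hyperplane at its regular points.
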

Lemma \ref{lem: mj and pj} is visually demonstrated in Figure \ref{fig: projection measures}.

\begin{figure}[h!]
  \centering

  \includegraphics[width=1\textwidth]{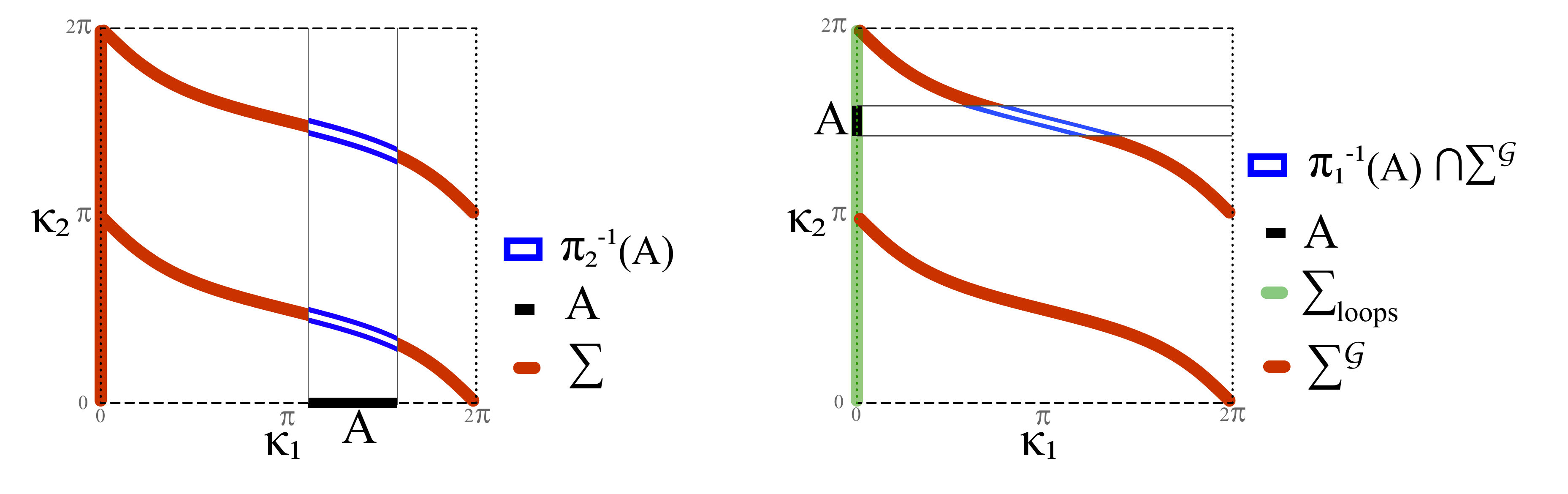}

\caption{(left) A set $A\subset\T$ and its preimage $\boldsymbol{\pi}_{2}^{-1}(A)$. It is visible that $\left|\boldsymbol{\pi}_{2}^{-1}(x)\right|=2$ for every $x\in\T$, except $x=0$. (right) A set $A\subset\T$ and its $\mgen$ preimage $\boldsymbol{\pi}_{1}^{-1}(A)\cap\mgen$. It is visible that $\left|\boldsymbol{\pi}_{1}^{-1}(x)\cap\mgen\right|=1$ for every $x\in\T$, except $x=0,\pi$.}

\label{fig: projection measures}
\end{figure}

\begin{proof}[Proof of Lemma \ref{lem: mj and pj}.]
Order the edges and compare \eqref{eq: BG mue} with \eqref{eq: BG differetial forms}. The density of $\mu_{e}$, when restricted to $\mreg$, is
\begin{equation}\label{eq: dmue}
    \dd\mu_{e}=\frac{\pi}{(2\pi)^{E}}(-1)^{e-1}\,d\kappa_{1}\wedge d\kappa_{2}\ldots \wedge\widehat{d\kappa_e}\ldots \wedge d\kappa_{E}.
\end{equation}
Given a Borel set $A\subset\T^{E-1}$, integrating \eqref{eq: dmue} over $\boldsymbol{\pi}_{e}^{-1}\left(A\right)$ results in
\begin{equation}\label{eq: mue(A) first eq}
    \mu_{e}\left(\boldsymbol{\pi}_{e}^{-1}\left(A\right)\right)=\frac{\pi}{(2\pi)^{E}}\int_{A}|\boldsymbol{\pi}_{e}^{-1}(\xv)|\dd\xv.
\end{equation}
The integral on the right-hand-side is well defined since $|\boldsymbol{\pi}_{e}^{-1}(\xv)|=2$ for (Lebesgue) almost every $\xv\in\T^{E-1}$ (see the proof of Proposition 3.1 in \cite{CdV_ahp15}). Substituting $|\boldsymbol{\pi}_{e}^{-1}(\xv)|=2$ in \eqref{eq: mue(A) first eq} proves the needed equality:
\begin{equation*}
\mu_{e}\left(\boldsymbol{\pi}_{e}^{-1}\left(A\right)\right)=\frac{\vol\left(A\right)}{(2\pi)^{E-1}}.
\end{equation*}

To show that the above holds also for the restriction of $\mu_{e}$ to $\mgen$, let us recall the definition of the set $\mL$ (see Lemma \ref{lem: mreg and Sigma decomposition}),
\begin{equation*}
    \mL:=\set{\kv\in\T^{E}}{\exists e\in\EL \text{ s.t. }\kappa_{e}=0}.
\end{equation*}
Notice that for almost every $\xv\in\T^{E-1}$,
\begin{equation*}
|\mL\cap\boldsymbol{\pi}_{e}^{-1}(\xv)|=\begin{cases} 1 & e\in\EL\\
0 & e\notin\EL\end{cases}.
\end{equation*}
We therefore have a constant $c_{e}$ such that $|\boldsymbol{\pi}_{e}^{-1}(\xv)\setminus\mL|=c_{e}$ for almost every $\xv\in\T^{E-1}$. Integrating \eqref{eq: dmue} over $\boldsymbol{\pi}_{e}^{-1}\left(A\right)\setminus\mL$ results in
\begin{equation}\label{eq: mue(A) second eq}
    \mu_{e}\left(\boldsymbol{\pi}_{e}^{-1}\left(A\right)\setminus\mL\right)=\frac{\pi c_{e}}{(2\pi)^{E}}\vol\left(A\right).
\end{equation}
Lemma \ref{lem: mreg and Sigma decomposition} states that $\mgen$ is equal to $\mg\setminus\mL$ up to a set of measure zero, so
\begin{equation}
\frac{\mu_{e}\left(\boldsymbol{\pi}_{e}^{-1}(A)\cap\mgen\right)}{\mu_{e}(\mgen)}=\frac{\mu_{e}\left(\boldsymbol{\pi}_{e}^{-1}(A)\setminus\mL\right)}{\mu_{e}(\boldsymbol{\pi}_{e}^{-1}(\T^{E-1})\setminus\mL)}=\frac{\vol\left(A\right)}{(2\pi)^{E-1}}.
\end{equation}
\end{proof}

\section{Mandarins and stowers - proof of Theorem \ref{thm: stowers and mandarins}}\label{graph families}\label{subsec: proof for stowers and mandarins}

In this section we prove Theorem \ref{thm: stowers and mandarins} by
approximating the nodal surplus distributions of stowers and mandarins
in terms of suitable binomial distributions, uniformly in $\beta$.
This approximation result is as follows.

\begin{prop}
  \label{prop: stowers and mandarins} Let $\Gamma$ be either a mandarin or a stower graph with $\beta>1$. Define $N_{\beta}:=\beta-1$ if $\Gamma$ is a stower, and $N_{\beta}:=\beta$ if $\Gamma$ is a mandarin. Let $X_{\beta}$ be a binomial random variable  $X_{\beta}\sim\Bin\left(N_{\beta},p=\frac{1}{2}\right)$. Then, for any rationally independent $\lv$, the nodal surplus distribution of $\Gamma_{\lv}$ satisfies
\begin{equation}\label{eq: P and X binom}
    \forall{t\in\R}\qquad P(X_{\beta}\le t-3)\le P(\sigma \le t)\le P(X_{\beta}\le t+3).
\end{equation}
\end{prop}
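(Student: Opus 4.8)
The plan is to compute the oracle $\bsigma$ of Theorem~\ref{thm: nodal statistics} explicitly on the secular manifold and read off that it is, up to a pointwise error of at most $3$ (in fact at most $1$), a sum of $N_\beta$ independent fair Bernoulli bits. First I would invoke Section~\ref{Edge lengths dependence and discrete graphs symmetries}: a mandarin is edge--transitive, so $\vec{P}_{\lv}=\vec{W}_{e}$ for every edge $e$ by Theorem~\ref{thm: symmetry theorem}; a stower has two edge--orbits, so $\vec{P}_{\lv}$ is a convex combination of $\vec{W}_{\mathrm{loop}}$ and $\vec{W}_{\mathrm{tail}}$. Since $\eqref{eq: P and X binom}$ is preserved under convex combinations of distributions, it suffices to prove it for each $\vec{W}_{e}$; and by Proposition~\ref{prop:W_clean} together with Lemma~\ref{lem: mj and pj} and Lemma~\ref{lem: mreg and Sigma decomposition}, $\vec{W}_{e}$ is the law of $\bsigma$ obtained by sampling $(\kappa_j)_{j\ne e}$ independently and uniformly on $\T$ and then choosing uniformly one of the (generically $c_e$) values of $\kappa_e$ making $\kv\in\mgen$, where $c_e=1$ if $e$ is a loop and $c_e=2$ otherwise. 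The crucial point is that the $E-1$ coordinates $\kappa_j$, $j\ne e$, are genuinely i.i.d.\ uniform; this is the source of the binomial. Since $\sigma$ is distributed as $\bsigma$ (Theorem~\ref{thm: nodal statistics}), a pointwise bound $|\bsigma-X_\beta|\le3$ for a suitable coupled $\Bin(N_\beta,\tfrac12)$ variable $X_\beta$ yields $\eqref{eq: P and X binom}$ at once.

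\textbf{Mandarin.}
Here $E=\beta+1$, and ordering the directed edges so that $\{1,\dots,E\}$ are oriented the same way makes the bond scattering matrix block--off--diagonal with both blocks equal to the vertex reflection $\rho=2P_{\ones}-I$; hence $\det(1-U_{\kv})$ factors and $\mg$ is, modulo $\msing$, the union of the ``symmetric sheet'' $\{\sum_j\tan(\kappa_j/2)=0\}$ and the ``antisymmetric sheet'' $\{\sum_j\cot(\kappa_j/2)=0\}$, where $\kappa_j:=k\ell_j\bmod2\pi$; on the first the eigenfunction equals a multiple of $\cos\!\big(k(x-\ell_j/2)\big)$ on edge $j$, on the second of $\sin\!\big(k(x-\ell_j/2)\big)$. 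On the symmetric sheet the number of interior zeros on edge $j$ is exactly $2\lfloor\tfrac{k\ell_j}{2\pi}+\tfrac12\rfloor=2\lfloor\tfrac{k\ell_j}{2\pi}\rfloor+2\cdot\mathbf{1}\{\kappa_j>\pi\}$, and bookkeeping the spectral counting function through the zeros and poles of $\sum_j\tan(\cdot/2)$ and $\sum_j\cot(\cdot/2)$ yields, for generic symmetric eigenvalues, $\bsigma(\kv)=\#\{j:\kappa_j>\pi\}-\mathbf{1}\{\sum_j\cot(\kappa_j/2)<0\}$; symmetrically $\bsigma(\kv)=\#\{j:\kappa_j<\pi\}-\mathbf{1}\{\sum_j\tan(\kappa_j/2)>0\}$ on the antisymmetric sheet. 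In the sampling description of $\vec{W}_{e}$ one has, on the symmetric sheet, $\tan(\kappa_e/2)=-\sum_{j\ne e}\tan(\kappa_j/2)$, hence $\mathbf{1}\{\kappa_e>\pi\}=\mathbf{1}\{\sum_{j\ne e}\tan(\kappa_j/2)>0\}$, so $\bsigma=\#\{j\ne e:\kappa_j>\pi\}+\delta$ with $|\delta|\le1$; the $\beta$ bits $\mathbf{1}\{\kappa_j>\pi\}$, $j\ne e$, being i.i.d.\ fair, $\#\{j\ne e:\kappa_j>\pi\}\sim\Bin(\beta,\tfrac12)$. The antisymmetric sheet gives the complement $\beta-\#\{j\ne e:\kappa_j>\pi\}$ as the main term (also $\Bin(\beta,\tfrac12)$), and averaging uniformly over the two sheets couples $\bsigma$ to a single $\Bin(\beta,\tfrac12)$ variable with $|\bsigma-X_\beta|\le1$, giving $\eqref{eq: P and X binom}$.

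\textbf{Stower.}
Restricting to $\Vs$ (Theorem~\ref{thm: generalized theorem}) removes the antisymmetric loop modes; the surviving eigenfunctions are symmetric on every loop, so a loop $e$ of length $\ell_e$ with flux $\alpha_e$ contributes to the central vertex condition the scalar $h_e(k,\alpha_e)=\frac{2k(\cos\alpha_e-\cos k\ell_e)}{\sin k\ell_e}$ and a tail $t'$ the flux--independent $k\tan(k\ell_{t'})$, so the secular equation on $\Vs$ is $G(k;\av):=\sum_{e\in\EL}h_e(k,\alpha_e)+\sum_{t'}k\tan(k\ell_{t'})=0$. Since $\alpha_e$ enters only $h_e$ and $\partial_{\alpha_e}k(0)=0$ (Theorem~\ref{thm: nodal magnetic}), the Hessian $\hess_{\av}k(0)$ is block--diagonal, vanishing on the tail directions and diagonal on the loop directions with $e$--th entry $-\partial^2_{\alpha_e}h_e(k,0)/G'(k)=\tfrac{2k}{G'(k)\sin k\ell_e}$, where $G(k):=G(k;0)$; writing $G'(k)=\tfrac{G(k)}{k}+k\,Q(k)$ with $Q>0$ shows $G'(k)>0$ on $\{G=0\}$, hence $\bsigma(\kv)=\M\big(\hess_{\av}k(0)\big)=\#\{e\in\EL:\kappa_e>\pi\}$ with $\kappa_e=k\ell_e\bmod2\pi$. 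Sampling $\vec{W}_{e}$ with $e$ a loop, the $\beta-1=N_\beta$ residues of the \emph{other} loops are i.i.d.\ uniform, so that count is $\Bin(N_\beta,\tfrac12)$, exceeded by $\bsigma$ by the single bit $\mathbf{1}\{\kappa_e>\pi\}\le1$; sampling $\vec{W}_{e}$ with $e$ a tail makes all $\beta$ loop residues i.i.d.\ uniform and gives $\bsigma\sim\Bin(\beta,\tfrac12)$ exactly. As $\Bin(\beta,\tfrac12)$ is within coupling distance $1$ of $\Bin(\beta-1,\tfrac12)$, both $\vec{W}_{e}$'s, and hence their convex combination $\vec{P}_{\lv}$, lie within coupling distance $1$ of $\Bin(N_\beta,\tfrac12)$, which is $\eqref{eq: P and X binom}$.

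\textbf{Main obstacle.}
The real work is the explicit identification of $\bsigma$ on each sheet: for the mandarin, pinning down the $O(1)$ contributions of \emph{both} the edge--wise zero count and the spectral counting function with a $\beta$--independent remainder, so that their difference is \emph{exactly} the stated formula; for the stower, establishing the $\Vs$--decoupling, computing $h_e$ and the sign of $G'$. Alongside this one must handle the excision of the measure--zero sets $\msing$, $\mg\setminus\mgen$ and $\mL$, verify from Lemma~\ref{lem: mj and pj} that the disintegration of $\mu_e|_{\mgen}$ is uniform both on the base $\T^{E-1}$ and over the $c_e$ fibre points, and confirm that the generic eigenfunction does not vanish at a vertex, so that the classical zero--count formulas apply. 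The ``$3$'' in $\eqref{eq: P and X binom}$ is generous: the argument above gives $|\bsigma-X_\beta|\le1$ in every case.
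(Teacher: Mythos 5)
Your overall skeleton coincides with the paper's: reduce via Theorem~\ref{thm: symmetry theorem} and the convex decomposition \eqref{eq: Pvec convex in We} to one representative $\vec W_e$ per edge orbit, use Lemma~\ref{lem: mj and pj} to see that the $E-1$ coordinates $\kappa_j$, $j\ne e$, are i.i.d.\ uniform (whence a $\Bin(N_\beta,\tfrac12)$ auxiliary count), and then compare $\bsigma$ pointwise with that count. Your stower computation is a legitimate self-contained alternative to the paper's citation of \cite{AloBan19}: the diagonal magnetic Hessian with entries $2k/(G'(k)\sin k\ell_e)$ and the observation $G'>0$ on $\{G=0\}$ do recover \eqref{eq: stower sigma}, and the resulting constant $1$ for stowers matches what the paper obtains.

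The genuine gap is in the mandarin case. The entire argument there rests on the sheet-resolved identities you assert for $\bsigma$ on $\{\sum_j\tan(\kappa_j/2)=0\}$ and $\{\sum_j\cot(\kappa_j/2)=0\}$, introduced only by ``bookkeeping the spectral counting function \ldots yields''. This is precisely the nontrivial input; the paper does not prove it either but imports it as Lemma~\ref{lem: explicit nodal stat} from \cite{AloBan19}, and, importantly, what that lemma supplies is weaker than what you use: it says that at each generic point \emph{either} \eqref{eq: mandain sigma plus} \emph{or} \eqref{eq: mandain sigma minus} holds, without telling you which branch occurs on which sheet. Your coupling to a single $\Bin(\beta,\tfrac12)$ variable with $|\bsigma-X_\beta|\le1$ needs (i) the branch-to-sheet assignment, and (ii) the refinement of Lemma~\ref{lem: mj and pj} that over a.e.\ base point the two fibre points carry equal $\mu_e$-weight and split one per sheet (true, and extractable from the density \eqref{eq: dmue} and the one-solution-per-sheet count, but you only flag it). If one only has the either/or version, a pointwise coupling of $\bsigma$ to the count fails on the $E-\bsigma$ branch, and one is forced into something like the paper's device: exploit the symmetry of both $\omega_{e_1}$ and $\xi_{e_1}$ about $\beta/2$ and bound $\bigl||\bx-\tfrac\beta2|-|\bsigma-\tfrac\beta2|\bigr|\le 3$, which is exactly how the constant $3$ in \eqref{eq: P and X binom} arises. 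So either supply a full proof of your sheet-specific formulas (the hard counting-function bookkeeping), or fall back on the cited lemma and replace your coupling step by the symmetrization argument \eqref{eq: We1 xie1 abs}.
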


Assuming Proposition~\ref{prop: stowers and mandarins} to be true, we
prove Theorem \ref{thm: stowers and mandarins}.

\begin{proof}[Proof of Theorem \ref{thm: stowers and mandarins}]
 Let $\Gamma_{\lv}$ be either a stower or a mandarin with rationally independent $\lv$ and first Betti number $\beta>1$, and let $\sigma$ be its nodal surplus random variable. We will consider  the $\beta\to\infty$ limit without adding $\beta$ superscript to $\Gamma_{\lv}$ and $\sigma$, to ease the reading.

 Let $X_{\beta}$ and $N_{\beta}$ be as in Proposition \ref{prop: stowers and mandarins} and observe that $\var\left(X_{\beta}\right)=N_{\beta}/4$ which grows like $\beta/4$. Denote the normalized random variables,
 \begin{equation*}
     \tilde{X_{\beta}}:=\frac{X_{\beta}-\frac{N_{\beta}}{2}}{\sqrt{\var\left(X_{\beta}\right)}},\quad\mbox{and}\qquad \tilde{\sigma}:=\frac{\sigma-\frac{\beta}{2}}{\sqrt{\var\left(X_{\beta}\right)}}.
 \end{equation*}
Notice that $\tilde{\sigma}$ is normalized with the variance of $X_{\beta}$. Let $\varepsilon_{\beta}:=\frac{4}{\sqrt{\var\left(X_{\beta}\right)}}$ and note that $|N_{\beta}-\beta|\le 1$. We can manipulate \eqref{eq: P and X binom} to show that for any $t\in\R$,
\begin{equation}\label{eq: Xbeta and sigma}
  P(\tilde{X_{\beta}}\le t-\varepsilon_{\beta})\le P(\tilde{\sigma}\le t)\le P(\tilde{X_{\beta}}\le t+\varepsilon_{\beta}).
\end{equation}
It is known that $\tilde{X_{\beta}}$ converges in distribution to
$N(0,1)$ (the standard normal random variable). Now consider
\eqref{eq: Xbeta and sigma} for a fixed $t\in\R$ and let
$\beta\rightarrow\infty$. Since  $\varepsilon_{\beta}\rightarrow 0$
and $N(0,1)$ has continuous distribution, we establish that
$\tilde{\sigma}$ also converges in distribution to $N(0,1)$. In particular,
\[
1=\lim_{\beta\rightarrow\infty}\var\left(\tilde{\sigma}\right)=\lim_{\beta\rightarrow\infty}\frac{\var\left(\sigma\right)}{\var\left(X_{\beta}\right)},
\]
which proves that $\var\left(\sigma\right)$ grows like $\beta/4$ and that the properly normalized random variable, $\frac{\sigma-\frac{\beta}{2}}{\sqrt{\var\left(\sigma\right)}}=\sqrt{\frac{\var\left(\sigma\right)}{\var\left(X_{\beta}\right)}}\tilde{\sigma}$, converges in distribution to $N(0,1)$ as needed.
\end{proof}

We now return to the proof of Proposition \ref{prop: stowers and mandarins}.
A crucial part of the proof is that for mandarins and stowers, the function $\bsigma$ can be expressed explicitly, as seen in Lemmas A.5 and A.9 in \cite{AloBan19}. We restate the needed results  from Lemmas A.5 and A.9 as follows:
\begin{lem}\cite{AloBan19}\label{lem: explicit nodal stat}
Let $\Gamma$ be a mandarin or a stower. There is a negligible set $B$ with $\dim{(B)}\le E-2$, such that
\begin{enumerate}
    \item If $\Gamma$ is a stower, then for any $\kv\in\mgen\setminus{B}$
    \begin{equation}\label{eq: stower sigma}
    \bsigma(\kv)=\Big|\set{e\in\EL }{\pi<\kappa_{e}<2\pi}\Big|.
\end{equation}
    \item If $\Gamma$ is a mandarin graph, then for any $\kv\in\mgen\setminus{B}$, either
\begin{equation}\label{eq: mandain sigma plus}
    \bsigma(\kv)=\Big|\set{e\in\E}{\pi<\kappa_{e}<2\pi}\Big|-C(\kv),
\end{equation}
or,
\begin{equation}\label{eq: mandain sigma minus}
    E-\bsigma(\kv)=\Big|\set{e\in\E}{\pi<\kappa_{e}<2\pi}\Big|-C(\kv),
\end{equation}
with some bounded error function $ C:\mgen\to\{-1,0,1\} $.
\end{enumerate}
\end{lem}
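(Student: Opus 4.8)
The plan for establishing Lemma~\ref{lem: explicit nodal stat} is to compute $\bsigma(\kv)$ for these two families through the magnetic characterisation, rather than through the matrices $\bH_\kv$ of Proposition~\ref{prop: Hessian in terms of U}. Fix $\kv=\fr{k_n\lv}\in\mgen$; by Theorem~\ref{thm: nodal magnetic} we have $\bsigma(\kv)=\M(\hess k_n(0))$, where $k_n(\av)$ is the analytic branch through $(k_n,0)$ of the magnetic secular set $\{\det(1-U_{k\lv;\av})=0\}$. For a stower or a mandarin this set is cut out, off a negligible locus $B$, by a small explicit \emph{vertex} secular function $F(k,\av)$ obtained by writing each edge restriction of an eigenfunction in closed form and imposing the Neumann--Kirchhoff conditions at the vertices (the two secular functions differ there by an $\av$-independent product of nonvanishing edge factors, so $k_n(\av)$ is equally the analytic branch of $\{F=0\}$); the flux on any spanning-tree edge can be gauged away, producing the $E-\beta$ directions in the kernel of the Hessian. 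Since $\av=0$ is a critical point of $k_n$ (Theorem~\ref{thm: nodal magnetic}(\ref{enu: critivcal point})), at the eigenvalue $\hess_\av k_n(0)=-(\partial_kF)^{-1}\hess_\av F|_{\av=0}$, so everything reduces to the inertia of one explicit symmetric matrix and the sign of $\partial_kF$, the latter being nonzero off $B$ by direct inspection.

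For a stower with central vertex $v_0$, loop set $\EL$ and tails $\{t\}$, putting $a=f(v_0)$ the continuity and current conditions at $v_0$ reduce to $a\cdot\big(\sum_{e\in\EL}\tfrac{\cos\alpha_e-\cos(k\ell_e)}{\sin(k\ell_e)}+\sum_t\tan(k\ell_t)\big)=0$; genericity forces $a\neq0$, so the bracket is $F(k,\av)$. Its $\av$-Hessian at $\av=0$ is \emph{diagonal}, with entry $-1/\sin(k\ell_e)$ for a loop $e$ and $0$ for a tail (the tails supply the $E-\beta$ kernel directions). As $\partial_kF(k,0)=\sum_{e\in\EL}\tfrac{\ell_e}{2}\sec^2(\tfrac{k\ell_e}{2})+\sum_t\ell_t\sec^2(k\ell_t)>0$ wherever defined, $\hess k_n(0)$ is diagonal with loop-entries of sign $\sgn(\sin\kappa_e)$, so
\[
\bsigma(\kv)=\M(\hess k_n(0))=\big|\{e\in\EL:\sin\kappa_e<0\}\big|=\big|\{e\in\EL:\pi<\kappa_e<2\pi\}\big|,
\]
which is \eqref{eq: stower sigma}. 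Here $B$ is the union of the loci where some $\sin(k\ell_e)=0$ (i.e.\ $\kappa_e\in\{0,\pi\}$) or some $\cos(k\ell_t)=0$; each is a hyperplane section of $\mg$ and so has dimension $\le E-2$.

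For a mandarin with vertices $u,v$, putting $a=f(u)$, $b=f(v)$ the conditions become a $2\times2$ homogeneous system in $(a,b)$ whose determinant is $F(k,\av)=P^2-|Q|^2$, with $P=\sum_j\cot(k\ell_j)$ and $Q(\av)=\sum_j e^{i\alpha_j}/\sin(k\ell_j)$. At $\av=0$ this factors as $-\big(\sum_j\tan(\tfrac{k\ell_j}{2})\big)\big(\sum_j\cot(\tfrac{k\ell_j}{2})\big)$, so $\mg$ splits, off $\msing$, into a ``symmetric'' layer and an ``antisymmetric'' layer according to which factor vanishes --- this is the origin of the two alternatives \eqref{eq: mandain sigma plus}, \eqref{eq: mandain sigma minus}. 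A short computation gives $\hess_\av F|_{\av=0}=2R\,D-2\,dd^{\mathsf{T}}$, where $D=\diag(1/\sin(k\ell_j))$, $d=D\ones$ and $R=\sum_j 1/\sin(k\ell_j)$: ``diagonal plus negative rank one'', with $\ones$ in its kernel. Crucially, Theorem~\ref{thm: nodal magnetic}(\ref{enu: Hessian and nodal surplus}) gives $\dim\ker\hess k_n(0)=E-\beta=1$ \emph{exactly}; hence the single eigenvalue that the rank-one subtraction sends to $0$ must be a former positive one, and Cauchy interlacing then determines the inertia completely: $\hess_\av F|_{\av=0}$ has the same number of negative eigenvalues as $2RD$, namely $\big|\{j:R\sin(k\ell_j)<0\}\big|$, and one fewer positive eigenvalue. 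Combining this with the sign of $\partial_kF$ at the eigenvalue --- which equals $-\sgn(R)$ on the symmetric layer and $+\sgn(R)$ on the antisymmetric layer, since $\partial_k\sum_j\tan(\tfrac{k\ell_j}{2})>0>\partial_k\sum_j\cot(\tfrac{k\ell_j}{2})$ --- and checking the four sign cases yields exactly \eqref{eq: mandain sigma plus} on the symmetric layer and \eqref{eq: mandain sigma minus} on the antisymmetric layer, with $C(\kv)\in\{-1,0,1\}$ governed by $\sgn(R(\kv))$ (one finds $C=1$ only when $R<0$ on the symmetric layer, $C=-1$ only when $R>0$ on the antisymmetric layer). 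Here $B$ collects $\{\sin(k\ell_j)=0\text{ for some }j\}$, $\{R=0\}$ and $\msing$, each of dimension $\le E-2$.

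The main obstacle is the mandarin inertia bookkeeping: a rank-one perturbation could a priori shift the Morse index by $0$ or $1$, and what makes the count unambiguous is the \emph{exact} kernel dimension handed to us by the Nodal--Magnetic theorem, which forces the single eigenvalue that the subtraction kills to be a former positive one. Once that is in place, the split into the two cases and the bound $|C|\le1$ are pure sign-chasing. A secondary, routine matter is confirming that the loci assembled into $B$ are genuine hyperplane sections of $\mg$ --- hence of codimension $\ge1$ there --- which uses that $\mg$ lies in no coordinate hyperplane and that $\dim\msing\le E-2$ by Lemma~\ref{lem: secular equation and isomorphism}.
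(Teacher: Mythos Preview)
Your argument is correct. Note that the paper does not itself prove this lemma --- it is quoted from \cite{AloBan19}, Lemmas~A.5 and A.9, with only a remark explaining why the exceptional set $B$ has the stated dimension --- so there is no in-text proof to compare against. Your route (explicit vertex secular function, implicit differentiation to reduce $\hess_\av k_n(0)$ to $-(\partial_kF)^{-1}\hess_\av F|_{\av=0}$, then a direct inertia count) is the natural one and is in the spirit of what \cite{AloBan19} carries out.

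One small remark on the mandarin bookkeeping: you invoke Theorem~\ref{thm: nodal magnetic}(\ref{enu: Hessian and nodal surplus}) to pin down $\dim\ker\hess_\av F|_{\av=0}=1$, but Cauchy interlacing already forces this on its own. Since $2RD$ has no zero eigenvalue (you have placed $\{R=0\}$ and $\{\sin\kappa_j=0\}$ into $B$) and $2RD-2dd^{\mathsf T}$ has $\ones$ in its kernel, interlacing for a rank-one negative-semidefinite perturbation yields exactly $p-1$ positive, $n$ negative, and one zero eigenvalue, where $(p,n)$ is the signature of $2RD$. So the Nodal--Magnetic input, while consistent, is not actually needed to disambiguate the index; it functions here only as a sanity check. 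Either way your conclusion $\M(\hess_\av F|_{\av=0})=\big|\{j:R\sin\kappa_j<0\}\big|$ stands, and the four-case sign analysis with $\sgn(\partial_kF)=\mp\sgn(R)$ on the symmetric/antisymmetric layers then delivers \eqref{eq: mandain sigma plus}--\eqref{eq: mandain sigma minus} with $C\in\{-1,0,1\}$ exactly as you state.
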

\begin{rem}
The set $\mg^{\text{gen}}$ defined in \cite{AloBan19} is a subset of
$\mgen$, reflecting a more restrictive definition of generic
eigenfunctions.  The set $B$ in the above lemma is the difference $\mgen\setminus\mg^{\text{gen}}$, which has $\dim\left(\mgen\setminus\mg^{\text{gen}}\right)\le E-2$ as seen in \cite{Alon,Alon_PhDThesis}.
\end{rem}
\begin{figure}
  \centering
    \subfloat{
  \includegraphics[width=0.55\textwidth]{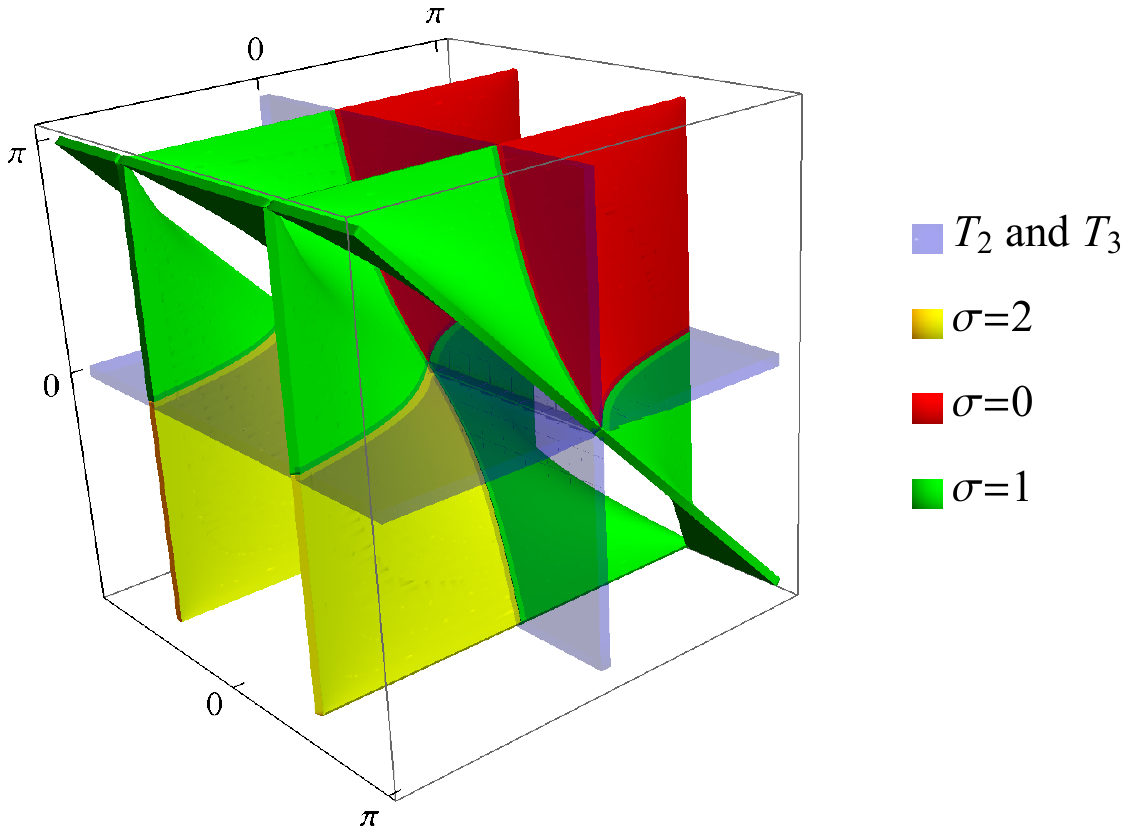}
}
\subfloat{
  \includegraphics[width=0.35\textwidth]{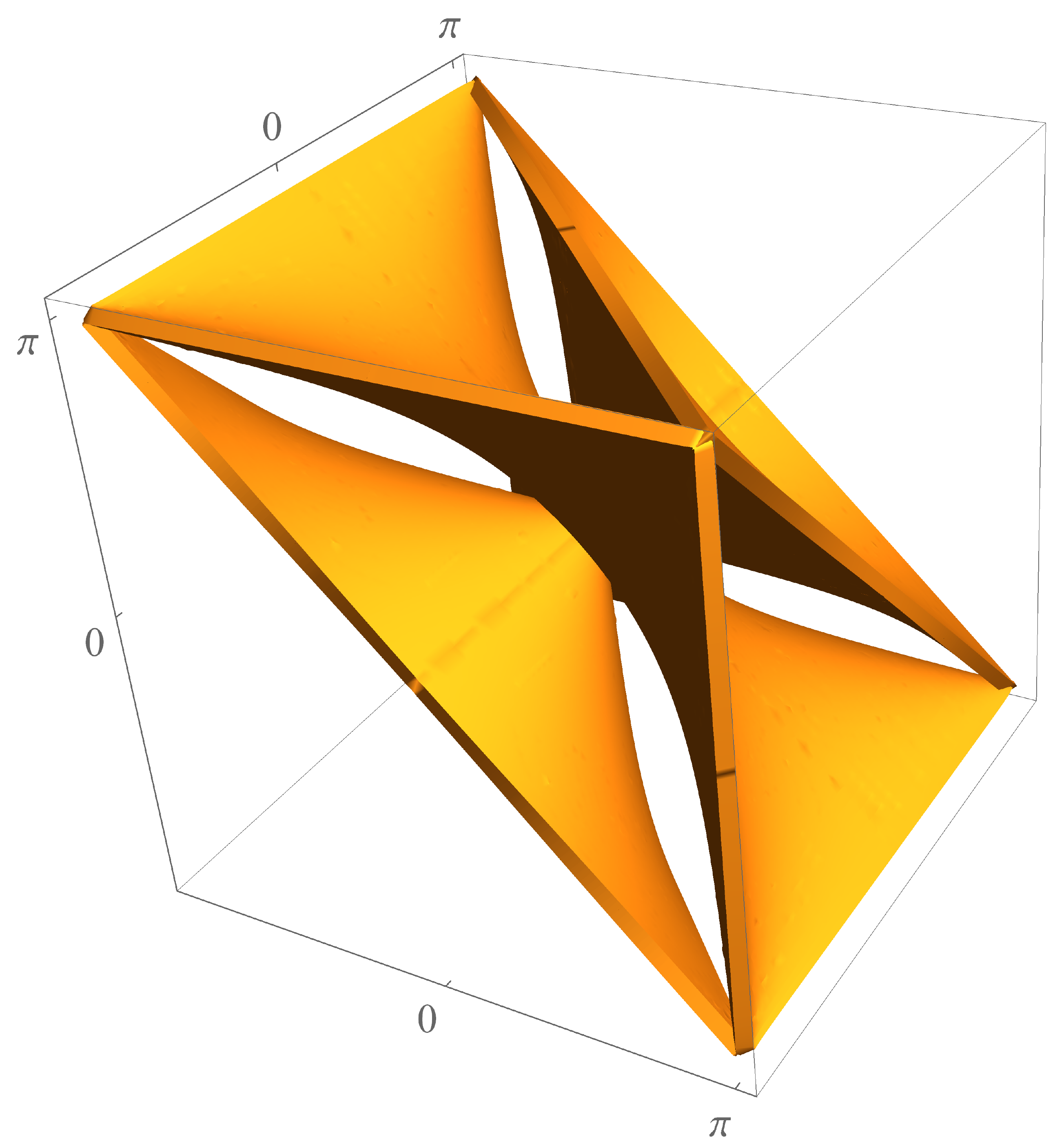}
}
\caption{On the left, the secular manifold $\mg$ of a stower with one tail $ e_{1}$ and two
  loops $ e_{2} $ and $ e_{3} $, presented in $(-\pi,\pi)^{3}$. The loops related part is the union of two sub-tori, $ \mL=T_{2}\cup T_{3} $, with $ T_{j}:=\set{\kv\in\T^{3}}{\kappa_{j}=0} $, presented in low opacity. The rest is colored according to the values of $\bsigma$. On the right, the secular manifold $\mg$ of a mandarin with three edges. In this case $\bsigma\equiv 1$. }

\label{fig:secular manifolds}
\end{figure}
Examples of $\mg$ and the function $\bsigma$ are shown in figure \ref{fig:secular manifolds}. The next step towards proving Proposition \ref{prop: stowers and mandarins} is introducing an auxiliary function $\bx$ which approximates $\bsigma$.
\begin{defn}\label{def: xi} Given a graph $ \Gamma $, we choose $ e_{1}\in\E $ and $ \widetilde{\E}\subset\E $ as follows. \begin{enumerate}
	\item If $ \Gamma $ is a stower (which is not a star) let $ e_{1} $ be a loop, and let $\widetilde{\E}:=\EL\setminus e_{1}$.
	\item If $ \Gamma $ is a mandarin, let $ e_{1} $ be any edge and let $\widetilde{\E}:=\E\setminus e_{1}$.
\end{enumerate} 	
	In both cases, $|\widetilde{\E}|= N_{\beta} $. Define the function $\bx:\mg\to\{0,1\ldots,N_{\beta}\}$ by
	\begin{equation}\label{eq: xi}
		\bx(\kv):=\Big|\set{e\in\widetilde{\E}}{\pi<\kappa_{e}<2\pi}\Big|.
	\end{equation}
For every edge $e$ we define a random variable $\xi_{e}$, taking values in $\{0,1,\ldots,N_{\beta}\}$, with probability
\begin{equation}\label{eq: xie def}
	P(\xi_{e}=j)=\frac{\mu_{e}\left(\mgen\cap\bx^{-1}\left(j\right)\right)}{\mu_{e}\left(\mgen\right)}.
\end{equation}
\end{defn}
\begin{lem}\label{lem: xie binomial}
If $e\notin\widetilde{\E}$ then $\xi_{e}\sim \mathop{Bin}\left(N_{\beta},p=\frac{1}{2}\right)$, namely
\begin{equation}
    P(\xi_{e}=j)=\binom{N_{\beta}}{j}2^{-N_{\beta}}, \qquad j=0,1,2,\ldots,N_{\beta}.
\end{equation}
\end{lem}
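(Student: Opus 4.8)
The plan is to transfer the computation from the secular manifold to the torus $\T^{E-1}$ via Lemma \ref{lem: mj and pj}, where the distribution of $\bx$ becomes manifestly binomial.

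First I would observe that, since $e\notin\widetilde{\E}$, the value $\bx(\kv)$ depends only on the coordinates $\kappa_{e'}$ with $e'\in\widetilde{\E}\subset\E\setminus\{e\}$, and in particular not on $\kappa_{e}$. Hence $\bx$ factors through the canonical projection $\boldsymbol{\pi}_{e}\colon\mg\to\T^{E-1}$ which omits the $e$-th coordinate: writing the coordinates of $\T^{E-1}$ as $(x_{e'})_{e'\in\E\setminus\{e\}}$ and defining $\widetilde{\bx}\colon\T^{E-1}\to\{0,1,\ldots,N_{\beta}\}$ by $\widetilde{\bx}(\xv):=\big|\set{e'\in\widetilde{\E}}{\pi<x_{e'}<2\pi}\big|$, we have $\bx=\widetilde{\bx}\circ\boldsymbol{\pi}_{e}$ on $\mg$. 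Consequently $\bx^{-1}(j)=\boldsymbol{\pi}_{e}^{-1}\big(\widetilde{\bx}^{-1}(j)\big)$ for every $j$, and the set $\widetilde{\bx}^{-1}(j)$, being a finite union of boxes, is Borel.

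Next I would apply Lemma \ref{lem: mj and pj} with $A=\widetilde{\bx}^{-1}(j)$. Combining \eqref{eq: xie def} with \eqref{eq: mj and pj} gives
\[
  P(\xi_{e}=j)
  = \frac{\mu_{e}\big(\mgen\cap\boldsymbol{\pi}_{e}^{-1}(\widetilde{\bx}^{-1}(j))\big)}{\mu_{e}(\mgen)}
  = \frac{\vol\big(\widetilde{\bx}^{-1}(j)\big)}{(2\pi)^{E-1}}.
\]
It remains to evaluate this normalized volume. Equip $\T^{E-1}$ with the uniform probability measure, under which the coordinates $x_{e'}$ are independent and uniform on $[0,2\pi)$. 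For each $e'\in\widetilde{\E}$ the event $\{\pi<x_{e'}<2\pi\}$ has probability $\tfrac12$, and $\widetilde{\bx}$ is exactly the number of these $N_{\beta}$ independent events that occur (the $E-1-N_{\beta}$ coordinates outside $\widetilde{\E}$ play no role). Hence, under this measure, $\widetilde{\bx}\sim\Bin(N_{\beta},\tfrac12)$, i.e. $\vol(\widetilde{\bx}^{-1}(j))/(2\pi)^{E-1}=\binom{N_{\beta}}{j}2^{-N_{\beta}}$, which yields the claim.

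There is no genuine obstacle here: the whole content is the factorization of $\bx$ through $\boldsymbol{\pi}_{e}$, which is available precisely because the distinguished edge $e$ is not among the edges counted by $\bx$ (this is exactly the hypothesis $e\notin\widetilde{\E}$). The only points requiring a word of care are that Lemma \ref{lem: mj and pj} applies to Borel sets $A$, which $\widetilde{\bx}^{-1}(j)$ is, and that the normalization by $\mu_{e}(\mgen)$ in \eqref{eq: xie def} is exactly the one appearing on the left-hand side of \eqref{eq: mj and pj}.
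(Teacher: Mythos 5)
Your proposal is correct and follows essentially the same route as the paper: since $e\notin\widetilde{\E}$, the level sets of $\bx$ are pulled back through the projection $\boldsymbol{\pi}_{e}$, Lemma \ref{lem: mj and pj} converts the $\mu_{e}$-measure of these sets into a normalized Lebesgue volume on $\T^{E-1}$, and that volume is evaluated as the binomial probability. The only difference is cosmetic: you make explicit the factorization $\bx=\widetilde{\bx}\circ\boldsymbol{\pi}_{e}$ and the Borel-measurability of the level sets, which the paper leaves implicit.
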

\begin{proof}
Consider the canonical projection $\boldsymbol{\pi}_{e}$ (see Lemma \ref{lem: mj and pj}) such that the entries of $\xv=\boldsymbol{\pi}_{e}^{-1}(\kv)$ are $x_{e'}=\kappa_{e'}$ for $e'\in\E\setminus{e}$. Since $e\notin\tilde{\E}$, the level sets of $\bx$ are given by
\begin{equation*}
\bx^{-1}(j)= \set{\kv\in\mg}{\left|\set{e'\in\tilde{\E}}{\pi<\kappa_{e'}<2\pi}\right|=j}  =\boldsymbol{\pi}_{e}^{-1}\left(A_{j}\right),
\end{equation*}
with
\begin{equation*}
A_{j}:= \set{\xv\in\T^{E-1}}{\left|\set{e'\in\tilde{\E}}{\pi<\xv_{e'}<2\pi}\right|=j}.
\end{equation*}
Applying Lemma \ref{lem: mj and pj} gives
\begin{align*}
    P(\xi_{e}=j)= & \frac{\mu_{e}\left(\mgen\cap\bx^{-1}\left(j\right)\right)}{\mu_{e}\left(\mgen\right)}\\
    = & \frac{\mu_{e}\left(\mgen\cap\boldsymbol{\pi}_{e}^{-1}\left(A_{j}\right)\right)}{\mu_{e}\left(\mgen\right)}\\
    = & \frac{\vol\left(A_{j}\right)}{(2\pi)^{E-1}}.
\end{align*}
It is a simple observation that
\begin{equation*}
    \frac{\vol\left(A_{j}\right)}{(2\pi)^{E-1}}=\binom{|\tilde{\E}|}{j}2^{-|\tilde{\E}|}=\binom{N_{\beta}}{j}2^{-N_{\beta}},
\end{equation*}
which proves the lemma.
\end{proof}
The proof of the proposition follows.

\begin{proof}[Proof of Proposition \ref{prop: stowers and mandarins}] Recall the random variables $\omega_{e}$, taking values in $\{0,1,\ldots,\beta\}$, with probability (see \eqref{eq: wej generalized}),
\begin{equation}\label{eq: P omega}
    P(\omega_{e}=j):=W_{e,j}=\frac{\mu_{e}\left(\mgen\cap\bsigma^{-1}(j)\right)}{\mu_{e}\left(\mgen\right)}.
\end{equation}
By \eqref{eq: Pvec convex in We}, the nodal surplus distribution of $\Gamma_{\lv}$, for $\lv$ rationally independent, satisfies
\begin{equation}
\forall t\in\R, \qquad   P(\sigma\le t)=\frac{1}{\sum_{e\in\E}\lve c_{e}}\sum_{e\in\E}\lve c_{e} P(\omega_{e}\le t),
\end{equation}
and so to prove the proposition, we will show that for every edge $e\in\E$,
\begin{equation}\label{eq: We and Xb ineq}
    \forall t\in\R,\qquad P(X_{\beta}<t-3)\le P(\omega_{e}<t)\le P(X_{\beta}<t+3).
\end{equation}
In fact, if $\sym$ is the symmetry group of $\Gamma$, then according to Theorem \ref{thm: symmetry theorem} it is enough to prove \eqref{eq: We and Xb ineq} for one representative edge per equivalence class in $\E/\sym$.
Recalling the choice of $e_1$ in Definition \ref{def: xi}, we have
the following.
\begin{enumerate}
    \item\label{enu: ETS mandarin or flower} If $\Gamma$ is a mandarin or a flower, then $\E/\sym=\{[e_{1}]\}$ and so it is enough to show that \eqref{eq: We and Xb ineq} holds for $\omega_{e_1}$.
    \item\label{enu: ETS stower not flower}If $\Gamma$ is a stower which is not a flower or a star, then $\E/\sym=\{[e_{1}],[e_{2}]\}$ where $e_{2}\notin\EL$, and so it is enough to prove \eqref{eq: We and Xb ineq} for $\omega_{e_1}$ and $\omega_{e_2}$.
\end{enumerate}

We proceed by proving the proposition for stowers. Assume that $\Gamma$ is a stower (possibly a flower). Recall that $\widetilde{\E}:=\EL\setminus e_{1}$ and compare \eqref{eq: xi} with \eqref{eq: stower sigma}, to get
\begin{equation}\label{eq: xi and sigma ineq stower}
    \left|\bsigma(\kv) - \bx(\kv)\right|\leq 1,
\end{equation} for every $\kv\in\mgen\setminus{B}$, for some $B$ of measure zero. If we neglect $B$, this gives

\begin{equation*}
    \mgen\cap\bx^{-1}\left((-\infty,t-1]\right) ~\subset~\mgen\cap\bsigma^{-1}\left((-\infty,t]\right) ~ \subset~\mgen\cap\bx^{-1}\left((-\infty,t+1]\right).
\end{equation*}

Given any $e\in\E$, compare $\omega_{e}$ and $\xi_{e}$ (see \eqref{eq: P omega} and \eqref{eq: xie def}) to conclude that
\begin{equation}
  \forall{t\in\R}\qquad P(\xi_{e}\le t-1)\le P(\omega_{e}\le t)\le P(\xi_{e}\le t+1).
\end{equation}
In particular, if $e\notin\widetilde{\E}$, then $\xi_{e}$ has the same probability distribution as $X_{\beta}$ by Lemma \ref{lem: xie binomial},
\begin{equation}
    \forall t\in\R,\qquad P(X_{\beta}\le t-1)\le P(\omega_{e}\le t)\le P(X_{\beta}\le t+1).
\end{equation}
This proves Proposition \ref{prop: stowers and mandarins} for stowers, as $e_{1}\notin\widetilde{\E}$ and if $e_{2}\notin\EL$ then also $e_{2}\notin\widetilde{\E}$.

We proceed with proving Proposition \ref{prop: stowers and mandarins} for mandarins. Let $\Gamma$ be a mandarin graph, choose $e_{1}$ to be any edge and define $ \widetilde{\E} $ and $ \xi $ correspondingly, as in Definition \ref{def: xi}. As before, Lemma \ref{lem: xie binomial} ensures that $\xi_{e_{1}}$ is binomial like $X_{\beta}$, and so it is left to prove
\begin{equation}\label{eq: We1 xie1 mandarin}
\forall{t\in\R}\qquad  P(\xi_{e_{1}}\le t-3)\le P(\omega_{e_{1}}\le t)\le P(\xi_{e_{1}}\le t+3).
\end{equation}
Notice that both $\xi_{e_{1}}$ and $\omega_{e_{1}}$ are symmetric around $\frac{\beta}{2}$: $\xi_{e_{1}}$ since it is binomial with $p=\frac{1}{2}$ and $N_{\beta}=\beta$, and $\omega_{e_{1}}$ due to \eqref{eq: Wej symmetry}. Under such symmetry, \eqref{eq: We1 xie1 mandarin} is equivalent to
\begin{equation}\label{eq: We1 xie1 abs}
  P\left(\left|\xi_{e_{1}}-\frac{\beta}{2}\right|\le t-3\right)\le P\left(\left|\omega_{e_{1}}-\frac{\beta}{2}\right|\le t\right)\le P\left(\left|\xi_{e_{1}}-\frac{\beta}{2}\right|\le t+3\right),
\end{equation}
for all $t\ge 0$. Let us now prove \eqref{eq: We1 xie1 abs}.

According to Lemma \ref{lem: explicit nodal stat}, given any $\kv\in\mgen\setminus B$, one of the following is true:
\begin{enumerate}
    \item Equation \eqref{eq: mandain sigma plus} holds. Comparing \eqref{eq: mandain sigma plus} and \eqref{eq: xi}, using $|C|\le 1$ and $\widetilde{\E}:=\E\setminus{e_{1}}$ gives $\left|\bx(\kv)-\bsigma(\kv)\right|\le 2$. We may write it as
    \begin{equation}\label{eq: xi and sigma ineq1 mandarin}
    \left|\left(\bx(\kv)-\frac{\beta}{2}\right)-\left(\bsigma(\kv)-\frac{\beta}{2}\right)\right|\le 2,
\end{equation}
    \item Equation \eqref{eq: mandain sigma minus} holds, and a similar argument gives $\left|\bx(\kv)-(E-\bsigma(\kv))\right|\le 2$. Recall that for mandarins $E=\beta+1$, so $\left|\bx(\kv)+(\bsigma(\kv)-\beta)\right|\le 3$. We may write it as
    \begin{equation}\label{eq: xi and sigma ineq2 mandarin}
\left|\left(\bx(\kv)-\frac{\beta}{2}\right)+\left(\bsigma(\kv)-\frac{\beta}{2}\right)\right|\le 3.
\end{equation}
\end{enumerate}
In both cases\footnote{Using $|a-b|<c~\Rightarrow~ -c<|a|-|b|<c$ (the ``dark side" of the  triangle inequality).}
\begin{equation}
    -3\le\left|\bx(\kv)-\frac{\beta}{2}\right|-\left|\bsigma(\kv)-\frac{\beta}{2}\right|\le 3.
\end{equation}
As this holds for all $\kv\in\mgen\setminus{B}$, we conclude that \eqref{eq: We1 xie1 abs} is true using the same arguments as before. This finishes the proof of the proposition.
\end{proof}

\subsection*{Acknowledgments}

We thank the referees for valuable comments. We thank Ronen Eldan for interesting
discussions.
The authors were supported by the Binational Science Foundation Grant (Grant No. 2016281). RB and LA were also supported by ISF (Grant No. 844/19). LA was also supported
by the Ambrose Monell Foundation and the Institute for Advanced Study. GB was also supported by the National Science Foundation (DMS-1815075).

\appendix
\section{Continuous eigenvalues of Unitary matrices}\label{sec: appendix numbering eigenvalues}
In sections \ref{sec:secular} and \ref{sec: nodal surplus distribution}, we consider an analytic family of unitary matrices and their eigenvalues. In this appendix we deal with the question of whether the eigenvalues of a continuous family of unitary matrices can be written as continuous functions. The general setting is as follows: Let $ \textbf{U}(N) $ be the unitary group of $ N\times N $ matrices and a consider a continuous family of matrices $ U_{x}\in\textbf{U}(N),~~x\in M $, for some (finite dimensional) manifold $ M $ with or without boundary. We say that the family is continuous if the map $ x\mapsto U_{x} $ is a continuous map from $ M $ to $ \textbf{U}(n) $. 

\begin{defn}\label{def: numbering}
	We say that the family $ U_{x} $ for $ x\in M $ has a \emph{continuous counterclockwise ordering} (CC ordering) of eigenvalues if there exist $ N $ continuous functions (eigenvalues)
	\[\lambda_{n}:M\to S^{1}, \qquad n=1,2,\ldots,N\]
	and $ N $ continuous functions (the phase gaps),
	\[a_{n}:M\to [0,2\pi], \qquad n=1,\ldots N,\]
	such that 
	\begin{enumerate}
		\item The eigenvalues of $ U_{x} $ at every $ x\in M $ are $ \{\lambda_{n}(x)\}_{n=1}^{N} $ with multiplicity. 
		\item At every $ x\in M $,
		\begin{align*}
			a_{1}+a_{2}+\ldots+a_{N}& = 2\pi\\
			\lambda_{2}=\lambda_{1}e^{ia_{1}},\quad\lambda_{3}=\lambda_{2}e^{ia_{2}}, &\ \ldots,\  \lambda_{1}=\lambda_{N}e^{ia_{N}}.
		\end{align*}
	\end{enumerate}
\end{defn}
\begin{rem}\label{rem: uniqunes of ordering}
	Notice that if $ \{(\lambda_{1}, \lambda_{2},\ldots, \lambda_{N}), (a_{1}, a_{2},\ldots, a_{N})\} $ is a CC ordering, then a cyclic permutation such as $ \{(\lambda_{2}, \ldots, \lambda_{N}, \lambda_{1}), (a_{2},\ldots, a_{N}, a_{1})\} $ is also a CC ordering. The inclusion of both eigenvalues and the phase gaps ensures that a CC ordering (if exists) is unique up to such cyclic permutation. Moreover, these cyclic permutations are distinct (as ordered tuples) at every point $ x\in M $, including the two cases where either all eigenvalues are equal, or all phase gaps are equal.  
\end{rem}
	\begin{rem}\label{rem: local ordering}
	Locally, the eigenvalues can always be ordered in such a fashion, but the question is whether the ordering extends globally. The local argument is as follows. Given $ x\in M $ choose $ t\in\R $ such that $ e^{it} $ is not an eigenvalue of $ U_{x} $. The set of eigenvalues depends continuously on the matrix entries\footnote{Given a matrix $ A $ with eigenvalue $ \lambda $ of algebraic multiplicity $ m $. For small enough $\epsilon>0$ there is a $ \delta $ such that any $ A' $ in a $ \delta $ neighborhood of $ A $ has exactly $ m $ eigenvalues (counting with algebraic multiplicity) in an $\epsilon  $ ball around $ \lambda $. To see that, apply Rouché's theorem to characteristic polynomials.}, so there exists a neighborhood $ \Omega_{x}  $ of $ x $ such that $ e^{it} $ is not an eigenvalue of any $ U_{x'} $ for any $ x'\in \Omega_{x}  $. Then, the eigenvalues of $ U_{x'} $ with $ x'\in\Omega_{x} $ can be written as $ \lambda_{n}(x')=e^{i\theta_{n}(x')} $, where $ \theta_{n}(x')\in(t,t+2\pi) $ are ordered increasingly and are continuous in $ x'\in \Omega_{x} $. Now the claim follows by setting $ a_{N}=\theta_{1}+2\pi-\theta_{N} $ and $ a_{n}=\theta_{n+1}-\theta_{n+1} $ for $ n<N $.
\end{rem}
Given $ \gamma $, a closed path in $ M $, let $ [\gamma] $ denote its homotopy equivalence class. For a continuous function $ f:M\to S^{1} $, denote its winding number along $ \gamma $ by $ \w([\gamma],f)\in\Z $. The induced homomorphism $ f_{*}:\pi_{1}(M)\to\Z $ is defined by 
\[ f_{*}([\gamma]):=\w([\gamma],f).\]  
\begin{prop}\label{prop: continuous eigenvalues}
	Given a continuous family of unitary matrices, $U_{x}\in \textbf{U}(N), ~ x\in M $, the function $ x\mapsto \det(U_{x}) $ is continuous from $ M $ to $ S^{1} $. There is a CC ordering of the eigenvalues if and only if
	\begin{equation}\label{eq: det U mod N}
		\det(U_{x})_{*}\equiv0~~\mod N.
	\end{equation}  
	In particular,
	\begin{enumerate}
		\item If $ M $ is simply connected, then there exist a continuous counterclockwise ordering of the eigenvalues.
		\item Let $ M=\T^{E},~~N=2E, ~~E>1 $ and $ S\in\textbf{U}(2E) $, as in Definition \ref{def:The-unitary-evolution}. If the family of unitary matrices has the form 
		\[U_{\kv}:=\diag(e^{i\kappa_{1}},e^{i\kappa_{2}},\ldots,e^{i\kappa_{N}},e^{i\kappa_{1}},e^{i\kappa_{2}},\ldots,e^{i\kappa_{N}})S\in \textbf{U}(2E),\quad\mbox{with}\quad \kv\in\T^{E},\]
		then there is no CC ordering of the eigenvalues.  
	\end{enumerate}
\end{prop}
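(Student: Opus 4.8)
The plan is to reduce the claim to elementary covering space theory by bundling together all the (locally existing) counterclockwise orderings of Definition~\ref{def: numbering}. We may assume $M$ is connected, treating its components separately otherwise. Continuity of $x\mapsto\det(U_{x})$ as a map $M\to S^{1}$ is immediate, since $\det$ is a polynomial in the matrix entries and $|\det(U_{x})|=1$. Let $Q\subset M\times(S^{1})^{N}\times[0,2\pi]^{N}$ consist of all triples $\big(x,(\lambda_{n})_{n=1}^{N},(a_{n})_{n=1}^{N}\big)$ for which $(\lambda_{n})$ enumerates the eigenvalues of $U_{x}$ with multiplicity and $(\lambda_{n}),(a_{n})$ satisfy condition~(2) of Definition~\ref{def: numbering} at the single point $x$, and let $p\colon Q\to M$ be the projection. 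By Remark~\ref{rem: uniqunes of ordering} every fiber of $p$ consists of exactly $N$ points, the $N$ pairwise distinct cyclic shifts of any one of them; by Remark~\ref{rem: local ordering} every point of $M$ lies in an open set over which $p$ has a continuous section; and the cyclic shift is a homeomorphism of $Q$ permuting these sheets. Hence $p$ is an $N$-sheeted covering of the manifold $M$, carrying a free, fiberwise transitive $\Z/N$-action by deck transformations. A continuous counterclockwise ordering of $\{U_{x}\}_{x\in M}$ is precisely a continuous global section of $p$, so the whole statement reduces to deciding when $p$ admits a section.

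The key step is to identify the monodromy of $p$ with the reduction of $\det(U)_{*}$ modulo $N$. Fix a loop $\gamma\colon[0,1]\to M$ and lift it to $Q$, obtaining continuous $\lambda_{n}(t),a_{n}(t)$ on $[0,1]$ with $\lambda_{n+1}(t)=\lambda_{n}(t)e^{ia_{n}(t)}$ cyclically and $\sum_{n}a_{n}(t)=2\pi$. By Remark~\ref{rem: uniqunes of ordering} there is a unique index $j\in\{0,1,\dots,N-1\}$, the monodromy, with $\lambda_{n}(1)=\lambda_{n+j}(0)$ and $a_{n}(1)=a_{n+j}(0)$ for all $n$, indices taken mod $N$. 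Choose a continuous real lift $\widetilde\theta_{1}(t)$ of $t\mapsto\arg\lambda_{1}(t)$, set $\widetilde\theta_{n}(t):=\widetilde\theta_{1}(t)+\sum_{k=1}^{n-1}a_{k}(t)$ for $1\le n\le N$, and extend $\widetilde\theta_{\bullet}(0)$ to all integer indices by $\widetilde\theta_{n+N}(0):=\widetilde\theta_{n}(0)+2\pi$, which is consistent because $\sum_{k=1}^{N}a_{k}(0)=2\pi$. Then $e^{i\widetilde\theta_{n}(t)}=\lambda_{n}(t)$, and using $a_{k}(1)=a_{k+j}(0)$ one checks that $\widetilde\theta_{n}(1)=\widetilde\theta_{n+j}(0)+2\pi p$ with a single integer $p$ independent of $n$; summing over $n=1,\dots,N$ and using the periodicity just imposed yields $\sum_{n=1}^{N}\widetilde\theta_{n}(1)-\sum_{n=1}^{N}\widetilde\theta_{n}(0)=2\pi(j+Np)$. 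On the other hand $t\mapsto\sum_{n}\widetilde\theta_{n}(t)$ is a continuous real lift of $t\mapsto\arg\det(U_{\gamma(t)})$, because $\prod_{n}\lambda_{n}(t)=\det(U_{\gamma(t)})$, so the same difference equals $2\pi\,\w([\gamma],\det U)$. Hence $j\equiv\w([\gamma],\det U)\pmod N$, i.e.\ the monodromy homomorphism of $p$ is $[\gamma]\mapsto\det(U)_{*}([\gamma])\bmod N$. Consequently $p$ admits a global section if and only if this homomorphism is trivial, which is exactly condition~\eqref{eq: det U mod N}; in the trivial case a section is obtained by lifting paths out of a fixed basepoint of $Q$ (well defined by triviality of the monodromy and continuous by path-lifting), and the converse direction is clear. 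This proves the main equivalence.

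The two special cases follow at once. If $M$ is simply connected, then $\pi_{1}(M)$ is trivial and \eqref{eq: det U mod N} holds vacuously, so a continuous counterclockwise ordering exists; this is the instance invoked in Lemma~\ref{lem: structure}. For $M=\T^{E}$ and $U_{\kv}=e^{i\hat{\kappa}}S$ as in Definition~\ref{def:The-unitary-evolution}, we have $\det(U_{\kv})=\det(e^{i\hat{\kappa}})\det(S)=e^{2i(\kappa_{1}+\cdots+\kappa_{E})}\det(S)$ with $\det(S)$ a nonzero constant, so $\det(U)$ winds exactly twice along the $j$-th coordinate loop $\gamma_{j}(t)=(0,\dots,t,\dots,0)$, $t\in[0,2\pi]$; thus $\det(U)_{*}(\gamma_{j})=2$, and since $N=2E$ with $E>1$ we get $2\not\equiv0\pmod{2E}$, so \eqref{eq: det U mod N} fails and no continuous counterclockwise ordering exists. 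I expect the monodromy computation of the second paragraph to be the only delicate point: one has to keep the cyclic-shift indexing, the periodic extension of $\widetilde\theta_{\bullet}(0)$, and the lift of $\arg\det U$ all consistent, and the identity $j\equiv\w([\gamma],\det U)\pmod N$ leaves little room for sloppiness. Everything else is routine covering-space theory.
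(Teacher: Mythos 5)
Your proposal is correct, and the heart of it --- the bookkeeping that relates the cyclic-shift discrepancy of a lifted ordering along a loop to the winding number of $\det(U_x)$, giving $j\equiv\w([\gamma],\det U)\pmod N$ --- is the same computation the paper performs (its Part I, and the closed-path contradiction $\w=mN+d$ in Part II). What you do differently is the packaging: you assemble all pointwise counterclockwise orderings into a space $Q$ with the projection $p\colon Q\to M$, observe that the fiber is exactly the $N$ distinct cyclic shifts (Remark \ref{rem: uniqunes of ordering}) and that local sections exist (Remark \ref{rem: local ordering}), so $p$ is an $N$-sheeted regular covering with free transitive $\Z/N$ deck action, and then a CC ordering is a global section, which exists iff the monodromy homomorphism is trivial; your single monodromy identity then delivers both directions of the ``if and only if'' at once. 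The paper instead builds the ordering by hand: it patches local orderings along a path by cyclic permutations, rules out nontrivial holonomy around closed paths by the winding contradiction, proves path-independence, and only then globalizes --- i.e.\ it reproves, in this special case, exactly the covering-space facts you invoke. Your route is shorter and more structural, at the cost of importing standard covering theory and of a few routine verifications you pass over quickly: that the fiber of $p$ is \emph{exactly} the $N$ cyclic shifts (pointwise uniqueness up to shift, which is the content behind the paper's remark), that the shifted local sections are disjoint open sheets so that $p$ really is a covering, and that the monodromy is a well-defined homomorphism into the deck group $\Z/N$; all of these are true and easy, so there is no gap, but they should be spelled out if this argument were to replace the appendix proof. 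The treatment of the two special cases is identical to the paper's.
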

\begin{proof}[Proof of Proposotion \ref{prop: continuous eigenvalues}]
	The proof consists of three parts. Part I, where we show that having a CC ordering leads to \eqref{eq: det U mod N}. Part II, where we assume that \eqref{eq: det U mod N} holds, and construct the needed CC ordering. Part III, where we prove (1) and (2). For the ease of reading, we first prove Part III, assuming Part I and Part II. That is, we need to show that in case (1) \eqref{eq: det U mod N} holds while in case (2) \eqref{eq: det U mod N} fails.
	
	Part III (assuming Part I and Part II) - 
	\begin{enumerate}
		\item If $ M $ is simply connected then $ \pi_{1}(M) $ is trivial and therefore any homomorphism from it is trivial, namely $ \det(U_{x})_{*}\equiv 0 $. 
		\item Consider the family discussed above. Namely, $ N=2E $ and $ M=\T^{E} $ and
		\[U_{\kv}:=\diag(e^{i\kappa_{1}},e^{i\kappa_{2}},\ldots,e^{i\kappa_{N}},e^{i\kappa_{1}},e^{i\kappa_{2}},\ldots,e^{i\kappa_{N}})S.\]
		Consider the closed path $ \gamma $ in $ \T^{E} $, defined by $ \gamma(t)=(t,0,0,\ldots,0) $ for $ t\in \R/2\pi\Z $. Then $ \det(U_{\gamma(t)})=e^{2it}\det(S) $ and so the winding number of $ \det(U_{\kv}) $ along $ \gamma $ is $ 2 $. If $ E>1 $ then $ 2 $ is not a multiple of $ 2E $ and therefore \eqref{eq: det U mod N} fails.
	\end{enumerate}

	Part I - Let $ U_{x} $ for $ x\in M $ be a continuous family of unitary matrices and	assume that there exist a continuous counterclockwise ordering of their eigenvalues. Consider the $ \lambda_{n} $'s and $ a_{n} $'s described in Definition \ref{def: numbering}. Notice that $ \w([\gamma],e^{ia_{n}})=0 $ for any continuous $ a_{n}:M\to[0,2\pi] $ and any $ [\gamma] $. In other words, the induced homomorhpism is trivial, $ (e^{ia_{n}})_{*}\equiv 0 $. Since the winding of a product of functions is the sum of the winding of the functions, then for any $ n\ge 2 $
	\[(\lambda_{n})_{*}=(e^{ia_{n-1}}\lambda_{n-1})_{*}=(\lambda_{n-1})_{*}+(e^{ia_{n-1}})_{*}=(\lambda_{n-1})_{*}=\ldots=(\lambda_{1})_{*}.\]
	Therefore, using $ \det(U_{x})=\prod_{n=1}^{N}\lambda_{n} $, we get
	\[\det(U_{x})_{*}=\sum_{n=1}^{N}(\lambda_{n})_{*}=N(\lambda_{1})_{*}\]
	which proves Part II.

	Part II - Let $ U_{x} $ for $ x\in M $ be a continuous family of unitary matrices and	assume that 
	\[\det(U_{x})_{*}\equiv0~~\mod N .\]
	To construct the CC ordering of the eigenvalues we use three steps. The first step is showing that any path $ \gamma:[0,1]\to M $ admits a unique CC ordering that depends only on the initial ordering of the eigenvalues at $ \gamma(0) $. The second step is to show that the ordering at the final point $ \gamma(1) $ actually depends only on the initial ordering at $ \gamma(0) $ and not on the path $ \gamma $. The third step is then to fix some initial point $ x_{0}\in M $ with initial ordering, and number the eigenvalues at any other point $ x\in M $ by a path\footnote{we may assume that $ M $ is a connected finite dimensional manifold and hence path connected.} going from $ x_{0} $ to $ x $. Then, the previous steps ensure that this is indeed a CC ordering.\\

	Constructing CC ordering along a path -  According to Remark \ref{rem: local ordering}, any $ x\in M $ has a neighborhood $ \Omega_{x} $ for which there is a CC ordering of the eigenvalues. Assume that two such neighborhoods $ \Omega_{x} $ and $ \Omega_{x'} $ intersect and that each neighborhood has its CC ordering. If both CC orderings agree on $ \Omega_{x}\cap\Omega_{x'} $ then by definition we have a CC ordering for the union $ \Omega_{x}\cup\Omega_{x'} $. Otherwise, they can only differ by a cyclic permutation on $ \Omega_{x}\cap\Omega_{x'} $, according to Remark \ref{rem: uniqunes of ordering}, in which case we may cyclically permute the CC ordering at $ \Omega_{x'} $ and get a CC ordering on $ \Omega_{x}\cup\Omega_{x'} $. Given a path $ \gamma:[0,1]\to M $, we can cover it by finitely many such neighborhoods, each with its CC ordering. Denote the neighborhoods along $ \gamma $ by $ \Omega_{j} $ for $ j=1,2,\ldots, J $, ordered increasingly\footnote{For this to make sense we should take the neighborhoods small enough such that each $ \set{t\in[0,1]}{\gamma(t)\in\Omega_{j}} $ is a connected open interval, and non of these intervals is completely contained in another.}. Applying the above procedure to every pair of $ \Omega_{j} $ and $ \Omega_{j+1} $, permuting the CC ordering of $ \Omega_{j+1} $ if needed, we get a CC ordering along $ \gamma $. It is now clear that the ordering of the eigenvalues at the final point $ \gamma(1) $ is uniquely determined by the path $ \gamma $ and the ordering at the initial point $ \gamma(0) $. \\
	
	CC ordering along a closed path  - Consider the case that $ \gamma $ is closed, $ \gamma(1)=\gamma(0) $. A priori, the initial and final ordering may differ by a cyclic permutation. We now show that the assumption of \eqref{eq: det U mod N}, namely $ \det(U_{x})_{*}\equiv0~~\mod N $, implies that the initial ordering is equal to the final ordering. Let 
	\[\lambda_{n}:[0,1]\to S^{1},\qquad a_{n}:[0,1]\to[0,2\pi],\qquad n=1,2,\ldots,N,\]
	be the CC ordering of the eigenvalues of $ U_{\gamma(t)} $ for $ t\in[0,1] $. Assume by contradiction that the final ordering differ from the initial ordering. Then, for some $ 1\le d\le N-1 $, 
\begin{align*}
	&\lambda_{1}(1)  = \lambda_{1+d}(0)\\
	&\qquad\vdots\\
	&\lambda_{N-d}(1)  = \lambda_{N}(0)\\
	&\lambda_{N-d+1}(1)  = \lambda_{1}(0)\\
	&\qquad\vdots\\
	&\lambda_{N}(1)  = \lambda_{d}(0).
\end{align*} 
	%	\begin{align*}
	%	\lambda_{1}(1)  = \lambda_{1+d}(0)&\qquad a_{1}(1)  = a_{1+d}(0)\\
	%	\vdots\quad&\qquad\vdots\\
	%	\lambda_{N-d}(1)  = \lambda_{N}(0)&\qquad a_{N-d}(1)  = a_{N}(0)\\
	%	\lambda_{N-d+1}(1)  = \lambda_{1}(0)&\qquad a_{N-d+1}(1)  = a_{1}(0)\\
	%	\vdots\quad&\qquad\vdots\\
	%	\lambda_{N}(1)  = \lambda_{d}(0)&\qquad a_{N}(1)  = a_{d}(0).\\
	%\end{align*}
By the Lifting Theorem, there is a continuous function $ \theta_{1}:[0,1]\to\R $ such that $ \lambda_{1}=e^{i\theta_{1}} $. Define inductively $ \theta_{n+1}:=\theta_{n}+a_{n} $, so that each $ \theta_{n}:[0,1]\to\R $ is continuous and satisfies $ \lambda_{n}=e^{i\theta_{n}} $. Due to the properties of the $ a_{n} $ functions, the $ \theta_{n} $ functions are ordered in a $ 2\pi $ interval for every $ t\in[0,1] $, 
	\[\theta_{1}\le\theta_{2}\le\ldots\le\theta_{N}=\theta_{1}+2\pi-a_{N}\le \theta_{1}+2\pi.\]
	We may deduce that 
	\begin{align*}
		&\theta_{1}(1)  = \theta_{1+d}(0)+2\pi m\\
		&\qquad\vdots\\
		&\theta_{N-d}(1)  = \theta_{N}(0)+2\pi m\\
		&\theta_{N-d+1}(1)  = \theta_{1}(0)+2\pi (m+1)\\
		&\qquad\vdots\\
		&\theta_{N}(1)  = \theta_{d}(0)+2\pi (m+1)
	\end{align*} 
	%\[(\theta_{1}(1),\ldots,\theta_{N}(1))=(\theta_{1+d}(0),\ldots,\theta_{N}(0),\theta_{1}(0)+2\pi,\ldots,\theta_{d}(0)+2\pi)+m2\pi,\]
	for some integer $ m\in\Z $. Since $ \det(U_{\gamma(t)})=e^{i\sum\theta_{n}(t)} $, then the winding number of $ \det(U_{x}) $ along $ \gamma $ is equal to
	\begin{align*}
		\w(\det(U_{x}),\gamma)&=  \frac{1}{2\pi}\left(\sum_{n=1}^{N}\theta_{n}(1)-\sum_{n=1}^{N}\theta_{n}(0)\right)\\
		&= mN + d.
	\end{align*}
	Since $ 1\le d\le N-1 $, this is a contradiction to \eqref{eq: det U mod N}. \\
	
	Path independence of the ordering - Consider an initial point $ x_{0}\in M $ with a fixed ordering and let $ x\in M $ be another point. Assume by contradiction that there are two paths $ \gamma $ and $ \gamma' $ from $ x_{0} $ to $ x $ whose CC orderings do not agree at the final point $ x $. Let $ \phi $ be the closed path obtained by concatenation of $ \gamma' $ in reverse with $ \gamma $. Namely, starting from $ \phi(0)=x $, going along $ \gamma' $ backwards to $ x_{0} $ (where both the CC ordering agree) and then along $ \gamma $ back to $ \phi(1)=x $. The CC ordering along $ \varphi $ is then defined by that of $ \gamma' $ (reversing is obtained by re-parameterization $ t\mapsto 1-t $) and that of $ \gamma $, which agree at the concatenation point $ x_{0} $, and leads to the disagreement between the initial ordering at $ x=\phi(0) $ and the final at $ x=\varphi(1) $. Contradiction. Hence, the ordering that $ x $ inherit from $ x_{0} $ is independent of the path between them.\\
	
	Constructing CC ordering on $ M $ - Fix an arbitrary initial point $ x_{0} $ and an initial ordering. Then any $ x\in M $ inherits a unique ordering, $ \{(\lambda_{1}(x),\ldots,\lambda_{N}(x)),(a_{1}(x),\ldots,a_{N}(x))\} $ which satisfies (1) and (2) of Definition \ref{def: numbering}. As this ordering agree with the (unique) CC ordering along any path going from $ x_{0} $ to $ x $, then $ \{(\lambda_{1}(x),\ldots,\lambda_{N}(x)),(a_{1}(x),\ldots,a_{N}(x))\} $ must be continuous along any path in $ M $, and hence continuous in $ M $.
\end{proof}
%%%%%%%%%%%%%%%%%%%%%%%%%%%%%%%%%%%%%%%%%%%%%%%%%%%%%%%%%%%%%%%%%%%%%%%%%%%%%%%%%%%%%%%%%%%%%%%%%%%%%%%%%%%%%%%%%%%%%%%%%%%%%%%%%%%%%%%%%%%%%%%%%%%%%%%%%%%%%%

\section{Proof of Lemma \ref{lem: upper bound We}.}\label{sec: appendix proof of upper bound}
Let us restate the lemma. Recall that given a random variable $X$ we defined $N(X)$ as a normal (Gaussian) random variable with mean $\mathbb{E}(X)$ and variance $\var(X)$ (assuming these exist). Consider the distance between two random variables $X,Y$ to be
\begin{equation}
    d_{KS}(X,Y):=\sup_{t\in\R}\left|P(X\le t)-P(Y\le t)\right|,
\end{equation}
as defined in \eqref{eq: DS dist defined}. 
\begin{rem}
	Notice that $ d_{KS}(X,Y) $ is a distance between the cumulative probability functions and is independent of the probability spaces on which $ X $ and $ Y $ are defined. The same is true for $ \var(X) $ and $ \var(Y) $. For this reason we do not specify at any point what are the probability spaces of these random variables.
\end{rem}
Lemma \ref{lem: upper bound We} can be written as follows. 
\begin{lem}
Fix positive integers $\beta,E\in\N$, and consider a set of random variables $\{\omega_{e}\}_{e=1}^{E}  $, all taking values in $\{0,1,\ldots,\beta\}$ symmetrically, namely
\[P(\omega_{e}=s)=P(\omega_{e}=\beta-s),\qquad s=0,1,\ldots,\beta,\]
 for every $ e=1,2,\ldots,E $. Let $ \sigma $ be another random variable that takes values in $\{0,1,\ldots,\beta\}$ symmetrically, and satisfies
\begin{equation}\label{eq: app convex eq}
    P(\sigma=s)=\frac{1}{L}\sum_{e\in\E}\lve P(\omega_{e}=s),\qquad s=0,1,\ldots,\beta,
\end{equation}
for some $\lv\in\R_{+}^{E}$ with $L=\sum_{e\in\E}\lve$. Then,
\begin{enumerate}
    \item $d_{KS}(\sigma,N(\sigma))$ is bounded by
    \begin{equation}\label{eq: appendix ineq 1}
        d_{KS}(\sigma,N(\sigma))\le \max_{e\in\E} d_{KS}(\omega_{e},N(\omega_{e}))+\varepsilon,
    \end{equation}
where $\varepsilon:=\sqrt{\frac{\max_{e\in\E}\var(\omega_{e})}{\min_{e\in\E}\var(\omega_{e})}}-1$.
    \item The variances satisfy
    \begin{equation}\label{eq: appendix ineq 2}
        \min_{e\in\E}\var\left(\omega_{e}\right)\le\var(\sigma)\le\max_{e\in\E}\var\left(\omega_{e}\right).
    \end{equation}
\end{enumerate}
\end{lem}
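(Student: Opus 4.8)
The plan is to treat the two assertions separately; assertion (2) is immediate, and assertion (1) reduces to a single elementary estimate about Gaussian CDFs. For (2), note that since $\sigma$ and every $\omega_e$ take values in $\{0,\dots,\beta\}$ symmetrically, they all share the common mean $m:=\beta/2$, so each variance is the second moment about $m$. Using \eqref{eq: app convex eq} in the middle step,
\[
\var(\sigma)=\mathbb{E}\Bigl[\bigl(\sigma-\tfrac\beta2\bigr)^2\Bigr]=\frac1L\sum_{e\in\E}\lve\,\mathbb{E}\Bigl[\bigl(\omega_e-\tfrac\beta2\bigr)^2\Bigr]=\frac1L\sum_{e\in\E}\lve\var(\omega_e),
\]
which exhibits $\var(\sigma)$ as a convex combination of the $\var(\omega_e)$'s (the weights $\lve/L$ sum to $1$), hence trapped between $\min_e\var(\omega_e)$ and $\max_e\var(\omega_e)$; this is \eqref{eq: appendix ineq 2}.

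For (1), write $v_e:=\var(\omega_e)$, $v:=\var(\sigma)$, $v_{\min}:=\min_e v_e$, $v_{\max}:=\max_e v_e$, and let $\Phi$ (resp.\ $\phi$) be the standard normal CDF (resp.\ density), so that $N(\omega_e)$ and $N(\sigma)$ have CDFs $t\mapsto\Phi\bigl((t-m)/\sqrt{v_e}\bigr)$ and $t\mapsto\Phi\bigl((t-m)/\sqrt v\bigr)$. Relation \eqref{eq: app convex eq} and the definition of $d_{KS}$ give, for every $t\in\R$,
\[
\Bigl|\,P(\sigma\le t)-\tfrac1L{\textstyle\sum_{e\in\E}}\lve\,\Phi\bigl(\tfrac{t-m}{\sqrt{v_e}}\bigr)\Bigr|\le\max_{e\in\E}d_{KS}\bigl(\omega_e,N(\omega_e)\bigr),
\]
so by the triangle inequality it suffices to bound $\sup_{t\in\R}\bigl|\tfrac1L\sum_{e\in\E}\lve\,\Phi(\tfrac{t-m}{\sqrt{v_e}})-\Phi(\tfrac{t-m}{\sqrt v})\bigr|$ by $\varepsilon$. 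Setting $u:=t-m$ and using $\Phi(-x)=1-\Phi(x)$, one checks this expression is odd in $u$, so its absolute value is even and it suffices to consider $u\ge0$. For $u\ge0$ the map $w\mapsto\Phi(u/\sqrt w)$ is non-increasing, so each $\Phi(u/\sqrt{v_e})$, their weighted average, and also $\Phi(u/\sqrt v)$ (since $v_{\min}\le v\le v_{\max}$ by (2)) all lie in $[\Phi(u/\sqrt{v_{\max}}),\Phi(u/\sqrt{v_{\min}})]$, whence the difference in question is at most $\Phi(u/\sqrt{v_{\min}})-\Phi(u/\sqrt{v_{\max}})$.

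It remains to show $\sup_{u\ge0}\bigl[\Phi(u/\sqrt{v_{\min}})-\Phi(u/\sqrt{v_{\max}})\bigr]\le\varepsilon$. Writing this difference as $\int_{u/\sqrt{v_{\max}}}^{u/\sqrt{v_{\min}}}\phi(y)\,dy$, the integrand is at most $\phi(u/\sqrt{v_{\max}})$ (as $\phi$ decreases on $[0,\infty)$ and $y\ge u/\sqrt{v_{\max}}\ge0$), while the interval of integration has length $\tfrac{u}{\sqrt{v_{\max}}}\bigl(\sqrt{v_{\max}/v_{\min}}-1\bigr)=\tfrac{u}{\sqrt{v_{\max}}}\,\varepsilon$; hence the integral is at most $\varepsilon\,z\phi(z)$ with $z:=u/\sqrt{v_{\max}}\ge0$, and $z\phi(z)=\tfrac{z}{\sqrt{2\pi}}e^{-z^2/2}\le\tfrac1{\sqrt{2\pi e}}<1$ (the maximum of $ze^{-z^2/2}$ being attained at $z=1$). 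Combining this with the previous display yields $d_{KS}(\sigma,N(\sigma))\le\max_e d_{KS}(\omega_e,N(\omega_e))+\varepsilon$, which is \eqref{eq: appendix ineq 1}. The main obstacle is precisely the estimate $z\phi(z)\le1$: it is elementary but it is what forces the error term to be of size $\sqrt{v_{\max}/v_{\min}}-1$, while everything else in the argument is monotonicity and the triangle inequality.
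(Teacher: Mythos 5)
Your proof is correct and follows essentially the same route as the paper's: part (2) via the convex-combination identity for the variance, and part (1) via a triangle inequality that splits off $\max_{e}d_{KS}(\omega_{e},N(\omega_{e}))$ followed by the same elementary Gaussian estimate ($\sup_{z\ge0}z\phi(z)=1/\sqrt{2\pi e}<1$) to bound the remaining Gaussian-versus-Gaussian discrepancy by $\varepsilon$. The only cosmetic difference is that you compare the mixture $\frac{1}{L}\sum_{e}\lve\Phi\big((t-m)/\sqrt{v_e}\big)$ directly with $\Phi\big((t-m)/\sqrt{v}\big)$ by sandwiching it between the extreme-variance Gaussian CDFs, whereas the paper bounds each $d_{KS}(N(\omega_{e}),N(\sigma))$ individually; the underlying computation is identical.
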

\begin{proof}
First, due to the symmetry, every $\omega_{e}$ has mean $\frac{\beta}{2}$ and so does $\sigma$. Together with \eqref{eq: app convex eq}, this leads to $\var{(\sigma)}=\frac{1}{L}\sum_{e\in\E}\lve\var{(\omega_{e})}$ from which \eqref{eq: appendix ineq 2} follows. We proceed with proving \eqref{eq: appendix ineq 1}. Using \eqref{eq: app convex eq}, we may write $d_{KS}(\sigma,N(\sigma))$ as
\begin{align*}
    d_{KS}(\sigma,N(\sigma))= & \sup_{t\in\R}|\frac{1}{L}\sum_{e\in\E}\lve P(\omega_{e}\le t)- P(N(\sigma)\le t)|\\
    \le & \sup_{t\in\R}\frac{1}{L}\sum_{e\in\E}\lve |P(\omega_{e}\le t)- P(N(\sigma)\le t)|\\
    \le & \frac{1}{L}\sum_{e\in\E}\lve d_{KS}(\omega_{e},N(\sigma))\\
    \le & \max_{e\in\E}d_{KS}(\omega_{e},N(\sigma))\\
    \le & \max_{e\in\E}d_{KS}(\omega_{e},N(\omega_{e}))+\max_{e\in\E}d_{KS}(N(\omega_{e}),N(\sigma)).
\end{align*}
We are left with showing that $\max_{e\in\E}d_{KS}(N(\omega_{e}),N(\sigma))\le\varepsilon:=\sqrt{\frac{\max_{e\in\E}\var{\omega_{e}}}{\min_{e\in\E}\var{\omega_{e}}}}-1$. To do so, recall that both $N(\omega_{e})$ and $N(\sigma)$ are normal with the same mean, $\frac{\beta}{2}$, and possibly different variances. Let $s_{1}^{2}$ be the smaller variance and $s_{2}^{2}$ be the larger variance, so that
\begin{align*}
    d_{KS}(N(\omega_{e}),N(\sigma)) = & \sup_{t\in\R}|\int_{-\infty}^{t}\frac{e^{-\frac{1}{2}(\frac{x-\frac{\beta}{2}}{s_{1}})^{2}}}{\sqrt{2\pi}}\frac{\dd x}{s_{1}}- \int_{-\infty}^{t}\frac{e^{-\frac{1}{2}(\frac{x-\frac{\beta}{2}}{s_{2}})^{2}}}{\sqrt{2\pi}}\frac{\dd x}{s_{2}}|\\
    = & \frac{1}{\sqrt{2\pi}}\sup_{t\in\R}|\int_{-\infty}^{\frac{t}{s_{1}}}e^{-\frac{x^{2}}{2}}\dd x-\int_{-\infty}^{\frac{t}{s_{2}}}e^{-\frac{x^{2}}{2}}\dd x|\\
    = & \frac{1}{\sqrt{2\pi}}\sup_{t\ge 0}\int_{\frac{t}{s_{2}}}^{\frac{t}{s_{1}}}e^{-\frac{x^{2}}{2}}\dd x\\
    \le & \frac{1}{\sqrt{2\pi}}\sup_{t\ge 0}e^{-\frac{(\frac{t}{s_{2}})^{2}}{2}}(\frac{t}{s_{1}}-\frac{t}{s_{2}})\\
    = & \frac{e^{-\frac{1}{2}}}{\sqrt{2\pi}}(\frac{s_{2}}{s_{1}}-1)\\
    \le & \frac{s_{2}}{s_{1}}-1.
\end{align*}
As the variance inequality \eqref{eq: appendix ineq 2} gives
\[\min_{e\in\E}\var{(\omega_{e})}\le s_{1}^{2}\le s_{2}^{2}\le\max_{e\in\E}\var{(\omega_{e})},\]
we may deduce that $\frac{s_{2}}{s_{1}}-1\le\sqrt{\frac{\max_{e\in\E}\var{\omega_{e}}}{\min_{e\in\E}\var{\omega_{e}}}}-1=\varepsilon$. This finishes the proof.

\end{proof}
\section{Images from the numerical experiments}\label{sec: appendix experiment}
The experiment described in Subsection \ref{subsec: exp setting} considered the following graphs:
\begin{enumerate}
\item Complete graphs of $n$ vertices for $n\in\{5,\ldots 12\}$.
\item Periodic ladder graphs of $n$ steps for $n\in\{6,10,14,18,22\}$.
\item Periodic square lattices of $n^{2}$ vertices for $n\in\{4,5,6,7\}$.
\item Random $5$-regular graphs of $n$ vertices for $n\in\{12,14,16,18,20\}$.
\item Random Erd\H{o}s-Rényi graphs with $n$ vertices and $p=0.75$, for $n\in\{9,10,11,12\}$.
\end{enumerate}
We emphasize again that the randomly chosen graphs are not an average over random samples of graphs but rather one sample per type (as we discuss in Remark \ref{rem: random graphs}). The figures in Subsection \ref{subsec: exp result} provide the needed evidence for convergence to Gaussian with variance that grows linear in $\beta$. We provide complement figures in which the convergence of the distributions can be seen visually. Each figure shows the convergence for one family of graphs. For every graph in such a family we present the distribution of the $\omega_{e}$ which maximize $d_{KS}\left(\omega_{e},N(\omega_{e})\right)$ (namely the ``worst" candidate). The different graphs in each family are labeled by their first Betti number $\beta$ (for the Erd\H{o}s-Rényi graphs we first sample a graph and then compute its $ \beta $).

Recall that $\omega_{e}$ is supported on $\{0,1,\ldots,\beta\}$ with probability given by the weights $W_{e,s}=P(\omega_{e}=s)$, satisfying the symmetry
\[
W_{e,s}=W_{e,\beta-s},\qquad s=0,1,2,\ldots,\beta,
\]
for all $e\in\E$. To compare the distributions of graphs with different $\beta$ we consider the normalized distributions $\frac{\omega_{e}-\frac{\beta}{2}}{\sqrt{\beta}}$. This is done by plotting $W_{e,s}$ against $\frac{s-\frac{\beta}{2}}{\sqrt{\beta}}$. The normalized distribution is symmetric around $0$ with variance $\frac{\var{\left(\omega_{e}\right)}}{\beta}$. The experiment showed that for all graphs and all edges, $W_{e,s}<10^{-4}$ when $|\frac{s-\frac{\beta}{2}}{\sqrt{\beta}}|> 1.5$. For better visualization we restrict the figures to $|\frac{s-\frac{\beta}{2}}{\sqrt{\beta}}|\le 1.5$. The figures clearly support the conjecture, as one can see how the normalized distributions converge to a Gaussian with variance of order one.
\begin{rem}
Although these are discrete distributions, we present them with a line graph for visualization. This is not to be confused with a probability density. In particular, the area under the curve is not 1 but roughly $\frac{1}{\sqrt{\beta}}$.
To emphasize that these are discrete distributions, while maintaining a visually clear picture, we add markers on the data points of the smallest and largest graphs of every family.
\end{rem}
\begin{rem}
	The complementary figures in this appendix are partitioned into  different families of graphs, for visualization purpose. However, the evidence for the validity of Conjecture \ref{conj: Universality} are the figures in Subsection \ref{subsec: exp result}, in which the convergence to Gaussian is measured quantitatively and is shown to decay uniformly with $ \beta $, regardless of how we partition the sampled graphs into families of growing $ \beta $. 
\end{rem}
%complete graphs

\begin{figure}[h!]
  \centering

  \includegraphics[width=0.7\textwidth]{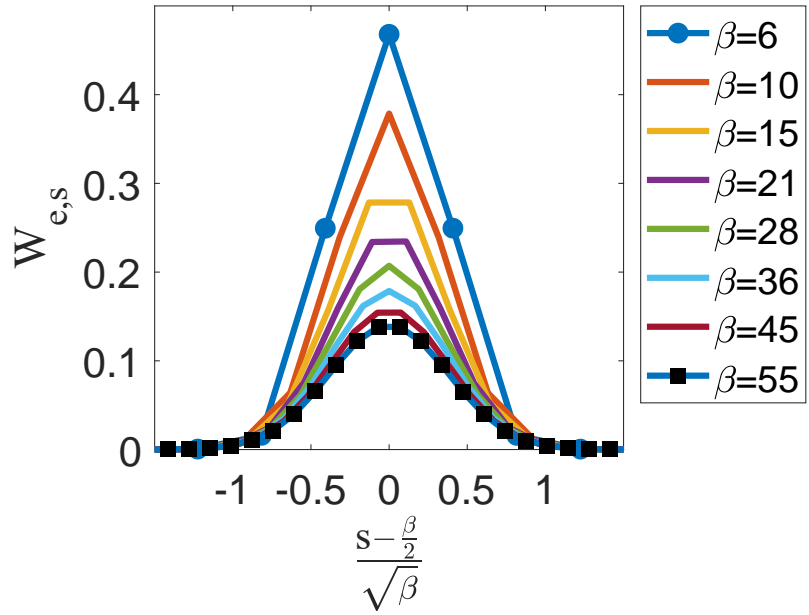}

\caption{Complete graphs of $n$ vertices, with $n$ ranging from $5$ to $12$.}

\label{fig:complete}
\end{figure}

% Ladder graphs
\begin{figure}[h!]
  \centering

  \includegraphics[width=0.7\textwidth]{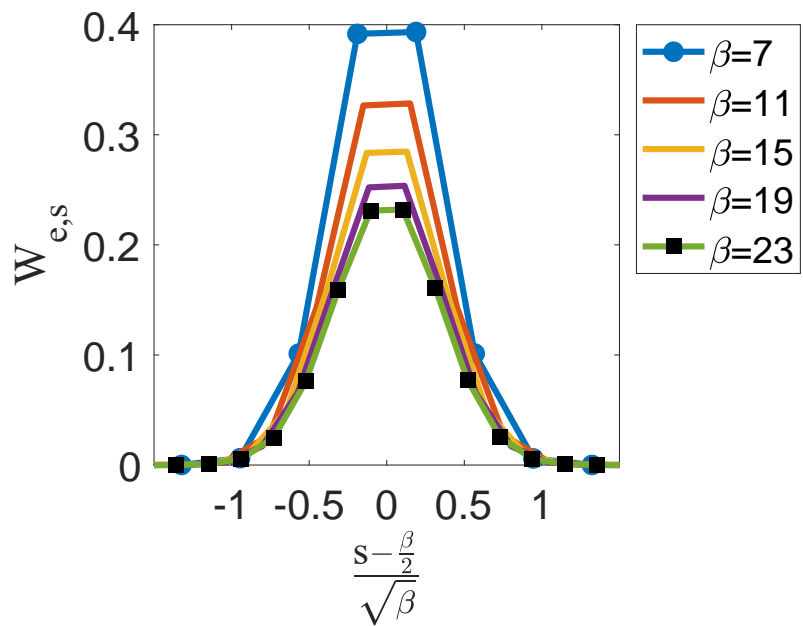}

\caption{Periodic ladder graphs of $n$ steps, with $n\in\{6,10,14,18,22\}$. }

\label{fig:ladder}
\end{figure}

% Z^{2} graphs

\begin{figure}[h!]
  \centering

  \includegraphics[width=0.7\textwidth]{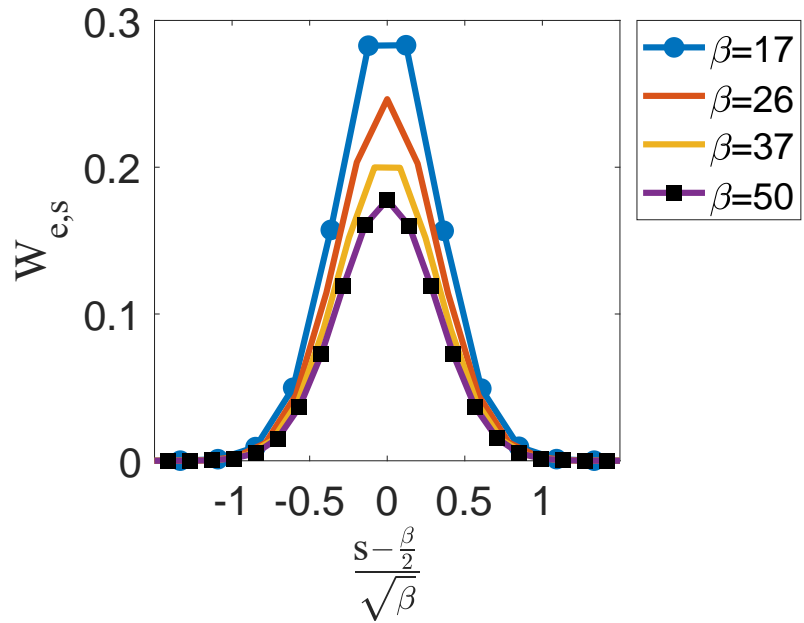}

\caption{Periodic square lattices of $n^{2}$ vertices for $n$ ranging between $4$ to $7$.}

\label{fig:Z2}
\end{figure}

% Erdos Renyi

\begin{figure}[h!]
  \centering

  \includegraphics[width=0.7\textwidth]{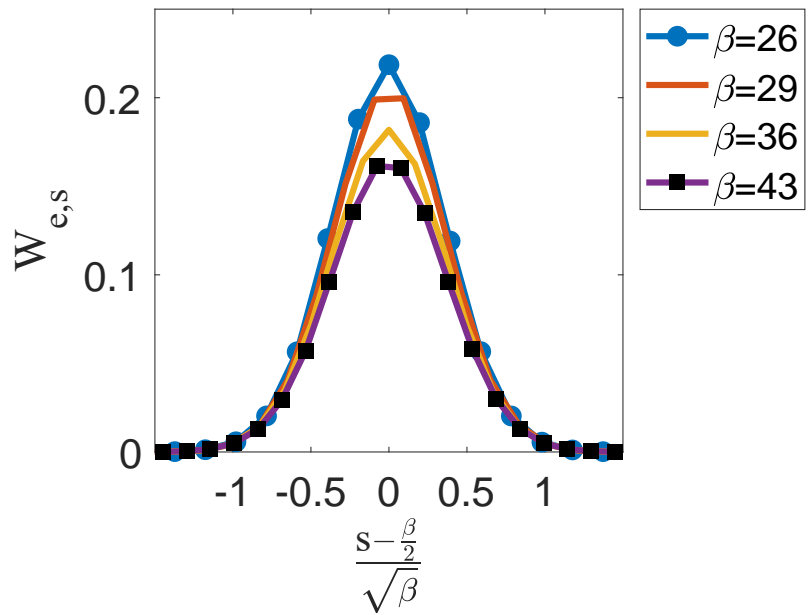}

\caption{Random Erd\H{o}s-Rényi graphs with $n$ vertices and $p=0.75$, for $n$ ranging between $9$ to $12$.}

\label{fig:ER}
\end{figure}

% Random regular

\begin{figure}[h!]
  \centering

  \includegraphics[width=0.7\textwidth]{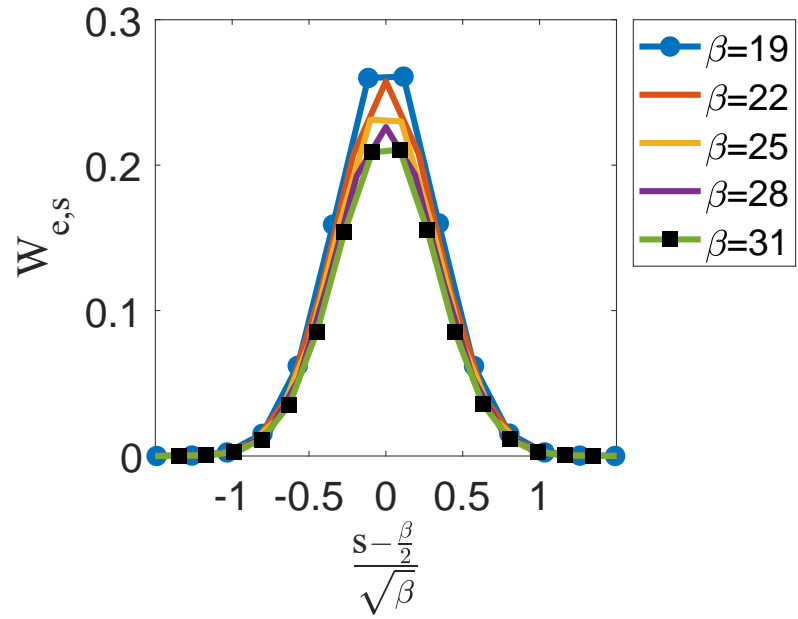}

\caption{Random $5$-regular graphs of $n$ vertices,\\
for $n\in\{12,14,16,18,20\}$. The $ \beta $ value was computed for each graph as discussed in Remark \ref{rem: random graphs}. }

\label{fig:Random regular}
\end{figure}
\clearpage

\bibliographystyle{siam}
\bibliography{additional,bib}

\begin{thebibliography}{10}

\bibitem{KurasovBook}
{\sc S.~Albeverio and P.~Kurasov}, {\em Singular perturbations of differential
  operators: solvable Schr{\"o}dinger-type operators}, vol.~271, Cambridge
  University Press, 2000.

\bibitem{Alon}
{\sc L.~Alon}, {\em Generic laplace eigenfunctions on metric graphs}, preprint,
  arXiv:2203.16111.

\bibitem{Alon_PhDThesis}
{\sc L.~Alon}, {\em Quantum graphs - Generic eigenfunctions and their nodal
  count and Neumann count statistics}, PhD thesis, Mathamtics Department,
  Technion - Israel Institute of Technology, 2020.
\newblock arXiv:2010.03004.

\bibitem{AloBan19}
{\sc L.~Alon and R.~Band}, {\em Neumann domains on quantum graphs}, in Annales
  Henri Poincar{\'e}, vol.~22, Springer, 2021, pp.~3391--3454.

\bibitem{AloBanBer_cmp18}
{\sc L.~Alon, R.~Band, and G.~Berkolaiko}, {\em Nodal statistics on quantum
  graphs}, Communications in Mathematical Physics,  (2018).

\bibitem{Ban_ptrsa14}
{\sc R.~Band}, {\em The nodal count {$\{0,1,2,3,\dots\}$} implies the graph is
  a tree}, Philos. Trans. R. Soc. Lond. A, 372 (2014), pp.~20120504, 24.
\newblock preprint {\tt arXiv:1212.6710}.

\bibitem{BanBer_prl13}
{\sc R.~Band and G.~Berkolaiko}, {\em Universality of the momentum band density
  of periodic networks}, Phys. Rev. Lett., 111 (2013), p.~130404.

\bibitem{BanBerSmi_ahp12}
{\sc R.~Band, G.~Berkolaiko, and U.~Smilansky}, {\em Dynamics of nodal points
  and the nodal count on a family of quantum graphs}, Annales Henri Poincare,
  13 (2012), pp.~145--184.

\bibitem{BanHarSep_jmaa19}
{\sc R.~Band, J.~Harrison, and M.~Sepanski}, {\em Lyndon word decompositions
  and pseudo orbits on q-nary graphs}, Journal of Mathematical Analysis and
  Applications, 470 (2019), pp.~135--144.

\bibitem{BanOreSmi_incoll08}
{\sc R.~Band, I.~Oren, and U.~Smilansky}, {\em Nodal domains on graphs---how to
  count them and why?}, in Analysis on graphs and its applications, vol.~77 of
  Proc. Sympos. Pure Math., Amer. Math. Soc., Providence, RI, 2008, pp.~5--27.

\bibitem{BarGas_jsp00}
{\sc F.~Barra and P.~Gaspard}, {\em On the level spacing distribution in
  quantum graphs}, J. Statist. Phys., 101 (2000), pp.~283--319.

\bibitem{BeliaevKereta13}
{\sc D.~Beliaev and Z.~Kereta}, {\em On the bogomolny--schmit conjecture},
  Journal of Physics A: Mathematical and Theoretical, 46 (2013), p.~455003.

\bibitem{BelMcaMui_ptrf20}
{\sc D.~Beliaev, M.~McAuley, and S.~Muirhead}, {\em On the number of excursion
  sets of planar {G}aussian fields}, Probab. Theory Related Fields, 178 (2020),
  pp.~655--698.

\bibitem{Ber_cmp08}
{\sc G.~Berkolaiko}, {\em A lower bound for nodal count on discrete and metric
  graphs}, Comm. Math. Phys., 278 (2008), pp.~803--819.

\bibitem{Ber_apde13}
\leavevmode\vrule height 2pt depth -1.6pt width 23pt, {\em Nodal count of graph
  eigenfunctions via magnetic perturbation}, Anal. PDE, 6 (2013),
  pp.~1213--1233.
\newblock preprint {\tt arXiv:1110.5373}.

\bibitem{BerKeaWin_cmp04}
{\sc G.~Berkolaiko, J.~P. Keating, and B.~Winn}, {\em No quantum ergodicity for
  star graphs}, Comm. Math. Phys., 250 (2004), pp.~259--285.

\bibitem{BerKuc_graphs}
{\sc G.~Berkolaiko and P.~Kuchment}, {\em Introduction to Quantum Graphs},
  vol.~186 of Mathematical Surveys and Monographs, AMS, 2013.

\bibitem{BerLiu_jmaa17}
{\sc G.~Berkolaiko and W.~Liu}, {\em Simplicity of eigenvalues and
  non-vanishing of eigenfunctions of a quantum graph}, J. Math. Anal. Appl.,
  445 (2017), pp.~803--818.
\newblock preprint {\tt arXiv:1601.06225}.

\bibitem{BerWey_ptrsa14}
{\sc G.~Berkolaiko and T.~Weyand}, {\em Stability of eigenvalues of quantum
  graphs with respect to magnetic perturbation and the nodal count of the
  eigenfunctions}, Philos. Trans. R. Soc. Lond. Ser. A Math. Phys. Eng. Sci.,
  372 (2014), pp.~20120522, 17.

\bibitem{BerWin_tams10}
{\sc G.~Berkolaiko and B.~Winn}, {\em Relationship between scattering matrix
  and spectrum of quantum graphs}, Trans. Amer. Math. Soc., 362 (2010),
  pp.~6261--6277.

\bibitem{BluGnuSmi_prl02}
{\sc G.~Blum, S.~Gnutzmann, and U.~Smilansky}, {\em Nodal domains statistics: A
  criterion for quantum chaos}, Phys. Rev. Lett., 88 (2002), p.~114101.

\bibitem{BS02}
{\sc E.~{Bogomolny} and C.~{Schmit}}, {\em {Percolation Model for Nodal Domains
  of Chaotic Wave Functions}}, Physical Review Letters, 88 (2002), p.~114102.

\bibitem{ColindeVerdiere_graphes}
{\sc Y.~Colin~de Verdi{\`e}re}, {\em Spectres de graphes}, vol.~4 of Cours
  Sp{\'e}cialis{\'e}s [Specialized Courses], Soci{\'e}t{\'e} Math{\'e}matique
  de France, Paris, 1998.

\bibitem{Col_apde13}
\leavevmode\vrule height 2pt depth -1.6pt width 23pt, {\em Magnetic
  interpretation of the nodal defect on graphs}, Anal. PDE, 6 (2013),
  pp.~1235--1242.
\newblock preprint {\tt arXiv:1201.1110}.

\bibitem{CdV_ahp15}
{\sc Y.~Colin~de Verdi\`{e}re}, {\em Semi-classical measures on quantum graphs
  and the {G}au\ss{} map of the determinant manifold}, Annales Henri
  Poincar\'{e}, 16 (2015), pp.~347--364.
\newblock also {\tt arXiv:1311.5449}.

\bibitem{Cou_ngwg23}
{\sc R.~Courant}, {\em Ein allgemeiner {S}atz zur {T}heorie der
  {E}igenfunktione selbstadjungierter {D}ifferentialausdr{\"u}cke}, Nach. Ges.
  Wiss. G{\"o}ttingen Math.-Phys. Kl.,  (1923), pp.~81--84.

\bibitem{CourantHilbert_volume1}
{\sc R.~Courant and D.~Hilbert}, {\em Methods of mathematical physics. {V}ol.
  {I}}, Interscience Publishers, Inc., New York, N.Y., 1953.

\bibitem{CoxJonMar_iumj17}
{\sc G.~Cox, C.~K. R.~T. Jones, and J.~L. Marzuola}, {\em Manifold
  decompositions and indices of {S}chr\"{o}dinger operators}, Indiana Univ.
  Math. J., 66 (2017), pp.~1573--1602.

\bibitem{DavGlaLeySta_laa01}
{\sc E.~B. Davies, G.~M.~L. Gladwell, J.~Leydold, and P.~F. Stadler}, {\em
  Discrete nodal domain theorems}, Linear Algebra Appl., 336 (2001),
  pp.~51--60.

\bibitem{ExnTur_jpa17}
{\sc P.~Exner and O.~r. Turek}, {\em Periodic quantum graphs from the
  {B}ethe-{S}ommerfeld perspective}, J. Phys. A, 50 (2017), pp.~455201, 32.

\bibitem{GnuAlt_pre05}
{\sc S.~Gnutzmann and A.~Altland}, {\em Spectral correlations of individual
  quantum graphs}, Phys. Rev. E (3), 72 (2005), pp.~056215, 14.

\bibitem{GnuSmi_ap06}
{\sc S.~Gnutzmann and U.~Smilansky}, {\em Quantum graphs: Applications to
  quantum chaos and universal spectral statistics}, Adv. Phys., 55 (2006),
  pp.~527--625.

\bibitem{GnuSmiWeb_wrm04}
{\sc S.~Gnutzmann, U.~Smilansky, and J.~Weber}, {\em Nodal counting on quantum
  graphs}, Waves Random Media, 14 (2004), pp.~S61--S73.

\bibitem{HarrHudg20}
{\sc J.~Harrison and T.~Hudgins}, {\em Complete dynamical evaluation of the
  characteristic polynomial of binary quantum graphs}.
\newblock preprint {\tt arXiv:2011.05213}, 2020.

\bibitem{HofKenMugPlu_21}
{\sc M.~Hofmann, J.~B. Kennedy, D.~Mugnolo, and M.~Pl{\"u}mer}, {\em On
  {P}leijel’s nodal domain theorem for quantum graphs}, in Annales Henri
  Poincar{\'e}, vol.~22, Springer, 2021, pp.~3841--3870.

\bibitem{Kato_book}
{\sc T.~Kato}, {\em Perturbation theory for linear operators}, Springer-Verlag,
  Berlin, second~ed., 1976.
\newblock Grundlehren der Mathematischen Wissenschaften, Band 132.

\bibitem{KelSch_jst20}
{\sc M.~Keller and M.~Schwarz}, {\em Courant's nodal domain theorem for
  positivity preserving forms}, J. Spectr. Theory, 10 (2020), pp.~271--309.

\bibitem{KolSar20}
{\sc A.~J. Koll{\'a}r and P.~Sarnak}, {\em Gap sets for the spectra of cubic
  graphs}, arXiv preprint arXiv:2005.05379,  (2020).

\bibitem{Kon_thesis12}
{\sc K.~Konrad}, {\em Asymptotic statistics of nodal domains of quantum chaotic
  billiards in the semiclassical limit}.
\newblock \emph{Senior Thesis} Dartmouth College, 2012.

\bibitem{KotSmi_prl97}
{\sc T.~Kottos and U.~Smilansky}, {\em Quantum chaos on graphs}, Phys. Rev.
  Lett., 79 (1997), pp.~4794--4797.

\bibitem{KotSmi_ap99}
\leavevmode\vrule height 2pt depth -1.6pt width 23pt, {\em Periodic orbit
  theory and spectral statistics for quantum graphs}, Ann. Physics, 274 (1999),
  pp.~76--124.

\bibitem{Lena_afourier19}
{\sc C.~L\'ena}, {\em Pleijel{\textquoteright}s nodal domain theorem for
  {Neumann} and {Robin} eigenfunctions}, Annales de l'Institut Fourier, 69
  (2019), pp.~283--301.

\bibitem{Mugnolo_book}
{\sc D.~Mugnolo}, {\em Semigroup methods for evolution equations on networks},
  Understanding Complex Systems, Springer, Cham, 2014.

\bibitem{Nas_thesis11}
{\sc M.~Nastasescu}, {\em The number of ovals of a random real plane curve}.
\newblock \emph{Senior Thesis} Princeton University, 2011.

\bibitem{NazSob_ajm09}
{\sc F.~Nazarov and M.~Sodin}, {\em On the number of nodal domains of random
  spherical harmonics}, Amer. J. Math., 131 (2009), pp.~1337--1357.

\bibitem{NazSod_JMPAG16}
{\sc F.~Nazarov and M.~Sodin}, {\em Asymptotic laws for the spatial
  distribution and the number of connected components of zero sets of gaussian
  random functions}, Journal of Mathematical Physics, Analysis, Geometry, 12
  (2016), pp.~205--278.

\bibitem{NazSod_jmp20}
\leavevmode\vrule height 2pt depth -1.6pt width 23pt, {\em Fluctuations in the
  number of nodal domains}, J. Math. Phys., 61 (2020), pp.~123302, 39.

\bibitem{phiSar87}
{\sc R.~Phillips and P.~Sarnak}, {\em Geodesics in homology classes}, Duke
  Mathematical Journal, 55 (1987), pp.~287--297.

\bibitem{Ple_cpam56}
{\sc {\AA}.~Pleijel}, {\em Remarks on {C}ourant's nodal line theorem}, Comm.
  Pure Appl. Math., 9 (1956), pp.~543--550.

\bibitem{Pol_pams09}
{\sc I.~Polterovich}, {\em Pleijel's nodal domain theorem for free membranes},
  Proc. Amer. Math. Soc., 137 (2009), pp.~1021--1024.

\bibitem{Rellich1969perturbation}
{\sc F.~Rellich and J.~Berkowitz}, {\em Perturbation theory of eigenvalue
  problems}, CRC Press, 1969.

\bibitem{Schmudgen}
{\sc K.~Schm\"{u}dgen}, {\em Unbounded self-adjoint operators on {H}ilbert
  space}, vol.~265 of Graduate Texts in Mathematics, Springer, Dordrecht, 2012.

\bibitem{Tan_jpa01}
{\sc G.~Tanner}, {\em Unitary-stochastic matrix ensembles and spectral
  statistics}, J. Phys. A, 34 (2001), pp.~8485--8500.

\bibitem{Bel_laa85}
{\sc J.~von Below}, {\em A characteristic equation associated to an eigenvalue
  problem on {$c^2$}-networks}, Linear Algebra Appl., 71 (1985), pp.~309--325.

\end{thebibliography}

\end{document}